\documentclass{amsart}
\long\def\sidebyside#1#2{%
 \hbox to\textwidth{\vtop{\hsize=.6\textwidth%

 \advance\hsize by -.32\columnsep
\parindent=0pt
\centering

 #1\vskip1sp}\hskip\columnsep\vtop{\hsize=.6\textwidth%
 \advance\hsize by -.32\columnsep
\parindent=0pt
\centering
#2

}\hfill}}

\usepackage{amsthm, amssymb, amsfonts}
\usepackage{psfrag}
\usepackage{graphicx}
%
\normalsize
\baselineskip = 13pt
\evensidemargin = .25in
\oddsidemargin = .25in
\textwidth = 6in
\topmargin = 5pt
\textheight = 8.25in
\footskip = 30pt
\parskip = 4pt
%
\theoremstyle{definition}
%
%
\newtheorem{Thm}{Theorem}
\newtheorem{corollary}[Thm]{Corollary}
\newtheorem{lemma}[Thm]{Lemma}
\newtheorem{definition}[Thm]{Definition}
\newtheorem{theorem}[Thm]{Theorem}
\newtheorem{remark}[Thm]{Remark}
\newtheorem{example}[Thm]{Example}

\newcommand {\Ab}{\mathbf{A}}  
\newcommand {\Bb}{\mathbf{B}}
\newcommand {\Cb}{\mathbb{C}}
\newcommand {\Cbf}{\mathbf{C}}

\newcommand {\op}{\mathbf{\oplus}} 
\newcommand {\om}{\mathbf{\ominus}} 
\newcommand {\od}{\mathbf{\otimes}}   

\newcommand {\asubM}{\|\ab\|_{\lower.1ex \hbox {\scriptsize {M}}}}
\newcommand {\hsubM}{\|\hb\|_{\lower.1ex \hbox {\scriptsize {M}}}}
\newcommand {\hsubMs}{\|\hb\|_{\lower.1ex \hbox {\scriptsize {M}}}^2}
\newcommand {\hasubM}{\|\hb_{\ab}\|_{\lower.1ex \hbox {\scriptsize {M}}}}
\newcommand {\hbsubM}{\|\hb_{\bb}\|_{\lower.1ex \hbox {\scriptsize {M}}}}
\newcommand {\hcsubM}{\|\hb_{\cb}\|_{\lower.1ex \hbox {\scriptsize {M}}}}
\newcommand {\AsubM}{\|\Ab\|_{\lower.1ex \hbox {\scriptsize {M}}}}
\newcommand {\AsubMs}{\|\Ab\|_{\lower.1ex \hbox {\scriptsize {M}}}^2}
\newcommand {\AosubM}{\|A_1\|_{\lower.1ex \hbox {\scriptsize {M}}}}
\newcommand {\AtsubM}{\|A_2\|_{\lower.1ex \hbox {\scriptsize {M}}}}
\newcommand {\BsubM}{\|\Bb\|_{\lower.1ex \hbox {\scriptsize {M}}}}
\newcommand {\CsubM}{\|\Cb\|_{\lower.1ex \hbox {\scriptsize {M}}}}
\newcommand {\CsubMf}{\|\Cbf\|_{\lower.1ex \hbox {\scriptsize {M}}}}
\newcommand {\CsubMs}{\|\Cb\|_{\lower.1ex \hbox {\scriptsize {M}}}^2}
\newcommand {\asupM}{\|\ab\|^{\lower.1ex \hbox {\scriptsize {M}}}}
\newcommand {\aspM}{\ab^{\lower.1ex \hbox {\scriptsize {M}}}}
\newcommand {\asbM}{\ab_{\lower.1ex \hbox {\scriptsize {M}}}}
\newcommand {\AsupM}{\|\Ab\|^{\lower.1ex \hbox {\scriptsize {M}}}}
\newcommand {\BsupM}{\|\Bb\|^{\lower.1ex \hbox {\scriptsize {M}}}}
\newcommand {\CsupM}{\|\Cb\|^{\lower.0ex \hbox {\scriptsize {M}}}}
\newcommand {\CsupMf}{\|\Cbf\|^{\lower.0ex \hbox {\scriptsize {M}}}}
\newcommand {\asubP}{\|\ab\|_{\lower.1ex \hbox {\scriptsize {P}}}}
\newcommand {\asbP}{\ab_{\lower.1ex \hbox {\scriptsize {P}}}}
\newcommand {\AsubP}{\|\Ab\|_{\lower.1ex \hbox {\scriptsize {P}}}}
\newcommand {\BsubP}{\|\Bb\|_{\lower.1ex \hbox {\scriptsize {P}}}}
\newcommand {\CsubP}{\|\Cb\|_{\lower.1ex \hbox {\scriptsize {P}}}}

\newcommand {\phue}{\lower.3ex \hbox {\scriptsize {UE}}}
\newcommand {\pheu}{\lower.3ex \hbox {\scriptsize {EU}}}
\newcommand {\phum}{\lower.3ex \hbox {\scriptsize {UM}}}
\newcommand {\phmu}{\lower.3ex \hbox {\scriptsize {MU}}}
\newcommand {\phme}{\lower.3ex \hbox {\scriptsize {ME}}}
\newcommand {\phem}{\lower.3ex \hbox {\scriptsize {EM}}}

%
%
\newcommand {\ph}{\phantom{K}}
\newcommand {\lowerkluma}{\lower1.5ex \hbox {\phantom{K}}}
\newcommand {\lowerklumb}{\lower3.5ex \hbox {\phantom{K}}}
\newcommand {\lowerklumc}{\lower7.0ex \hbox {\phantom{K}}}
\newcommand {\pp}{\lower.6ex \hbox {\footnotesize {$P_{1} P_{2}$}}}

\newcommand {\mA}{m_{^A}^{\phantom{O}}\!}

\newcommand {\mO}{m_{^O}^{\phantom{O}}\!}

\newcommand {\mP}{m_{^P}^{\phantom{O}}\!} 
\newcommand {\mPpm}{m_{^{P_{\pm}}}^{\phantom{O}}\!} 
 
\newcommand {\mQ}{m_{^Q}^{\phantom{O}}\!}

 \newcommand {\lowAA}{\lower.3ex \hbox {\scriptsize {$\Ab$}}}
 \newcommand {\lowBB}{\lower.3ex \hbox {\scriptsize {$\Bb$}}}
 \newcommand {\lowCC}{\lower.3ex \hbox {\scriptsize {$\Cb$}}}
 \newcommand {\lowA}{\lower.3ex \hbox {\scriptsize {$A$}}}
 \newcommand {\lowB}{\lower.3ex \hbox {\scriptsize {$B$}}}
 \newcommand {\lowC}{\lower.3ex \hbox {\scriptsize {$C$}}}
 \newcommand {\lowE}{\lower.3ex \hbox {\tiny       {$\rm E$}}}
 \newcommand {\lowM}{\lower.3ex \hbox {\tiny       {$\rm M$}}}
 \newcommand {\lowtM}{\lower.01ex \hbox {\tiny       {$\rm M$}}}
 \newcommand {\lowtE}{\lower.01ex \hbox {\tiny       {$\rm E$}}}
 \newcommand {\lowtU}{\lower.01ex \hbox {\tiny       {$\rm U$}}}

 \newcommand {\lowEC}{\lower.3ex \hbox {\scriptsize {$EC$}}}
 \newcommand {\lowER}{\lower.3ex \hbox {\scriptsize {$ER$}}}
 \newcommand {\lowU}{\lower.3ex \hbox {\tiny       {\rm U}}}
 \newcommand {\lowDU}{\lower.3ex \hbox {\scriptsize {\rm DU}}}
 \newcommand {\lowCU}{\lower.3ex \hbox {\scriptsize {\rm CU}}}
 \newcommand {\lowDM}{\lower.3ex \hbox {\scriptsize {\rm DM}}}
 \newcommand {\lowCM}{\lower.3ex \hbox {\scriptsize {\rm CM}}}
 \newcommand {\lowo}{\lower.3ex \hbox {\scriptsize {\rm 0}}}
 \newcommand {\lowf}{\lower.3ex \hbox {\scriptsize {\rm f}}}

\newcommand {\gammaR}{\gamma_{_{R}}^{\phantom{1}}} 
\newcommand {\gammaRs}{\gamma_{_{R}}^{2}}

\newcommand {\lowmbpa}{\lower.6ex \hbox {\footnotesize {$\ome\bb\ope\ab$}}}
 

\newcommand {\uvc}{\displaystyle\frac{\lower.6ex \hbox {$\ub\ccdot\vb$}}{c^2}}
\newcommand {\uvs}{\displaystyle\frac{\lower.6ex \hbox {$\ub\ccdot\vb$}}{s^2}}
\newcommand {\unpuvc}{ \lower.6ex \hbox {$1 + \uvc$} }
\newcommand {\unpuvs}{ \lower.6ex \hbox {$1 + \uvs$} }
\newcommand {\uvcbar}{\displaystyle\frac{\lower.6ex \hbox
            {$\ubar\ccdot\vb$}}{c^2}}
\newcommand {\unpuvcbar}{ \lower.6ex \hbox {$1 + \uvcbar$} }
\newcommand {\vwc}{\displaystyle\frac{\lower.6ex\hbox{$\vb\ccdot\wb$}}{c^2}}
\newcommand {\unpvwc}{ \lower.6ex \hbox {$1 + \vwc$} }
\newcommand {\subE}{\!\lower.1ex \hbox {\tiny E}}
\newcommand {\subG}{\!\lower.1ex \hbox {\tiny G}}
\newcommand {\subH}{\!\lower.1ex \hbox {\tiny H}}
\newcommand {\subEs}{\!\lower.1ex \hbox {\tiny {E,S}}}
\newcommand {\subEt}{\!\lower.1ex \hbox {\tiny {E,2}}}
\newcommand {\subEC}{\!\lower.1ex \hbox {\tiny EC}}
\newcommand {\subU}{\!\lower.1ex \hbox {\tiny U}}
\newcommand {\subM}{\!\lower.1ex \hbox {\tiny M}}
\newcommand {\subC}{\!\lower.1ex \hbox {\tiny C}}
\newcommand {\subbE}{\!\lower.01ex \hbox {\tiny E}}
\newcommand {\subbU}{\!\lower.01ex \hbox {\tiny U}}
\newcommand {\subbC}{\!\lower.01ex \hbox {\tiny C}}
\newcommand {\subbM}{\!\lower.01ex \hbox {\tiny M}}
\newcommand {\subbG}{\!\lower.01ex \hbox {\tiny G}}
\newcommand {\subbH}{\!\lower.01ex \hbox {\tiny H}}
\newcommand {\ope}{\op_{_{\subE}}\!\,}

\newcommand {\ome}{\om_{_{\subE}}\!\,}





\newcommand {\ccdot}{\mathbf{\cdot }} 

\newcommand {\ab}{\mathbf{a}}
\newcommand {\bb}{\mathbf{b}}
\newcommand {\cb}{\mathbf{c}}

\newcommand {\hb}{\mathbf{h}}

\newcommand {\ub}{\mathbf{u}}
\newcommand {\vb}{\mathbf{v}}

\newcommand {\wb}{\mathbf{w}}

\newcommand {\zerb}{\mathbf{0}}

\newcommand {\ubar}{\bar{\ub}}

\newcommand {\AAb}{\mathbb{A}}
\newcommand {\AAbt}{\mathbb{A}_3}

\newcommand {\Rb}{\mathbb{R}}

\newcommand {\Rn}{\Rb^n}

\newcommand {\Rstwou}{{\Rb}_{s=1}^2}
\newcommand {\Rstwo}{{\Rb}_{s}^2}

\newcommand {\Rcn}{{\Rb}_{c}^{n}}

\newcommand {\Rsn}{{\Rb}_{s}^{n}}

\newcommand {\Rct}{{\Rb}_{c}^{3}}

\newcommand {\Rt}{\Rb^3}

\newcommand {\Rtwo}{\Rb^2}

\newcommand {\gub}{\gamma_{\ub}^{\phantom{1}}}
\newcommand {\gvb}{\gamma_{\vb}^{\phantom{1}}}

\newcommand {\rmspan}{{\rm Span}}


\newcommand {\gammaaa}{\gamma_{11}^{\phantom{O}}}
\newcommand {\gammaab}{\gamma_{12}^{\phantom{O}}}
\newcommand {\gammaac}{\gamma_{13}^{\phantom{O}}}

\newcommand {\gammabc}{\gamma_{23}^{\phantom{O}}}

\newcommand {\gammaij}{\gamma_{ij}^{\phantom{O}}}

\newcommand {\timess}{\!\times\!}


\newcommand {\half}{\textstyle\frac{1}{2}}

\newcommand {\inn}{\hspace{-0.1cm}\in\hspace{-0.1cm}}


\newcommand {\subEA}{\!\lower.1ex \hbox {\tiny EA}}


 
 \newcommand {\gAa}{\gamma_{_{A_1}}^{\phantom{1}}}
 \newcommand {\gAb}{\gamma_{_{A_2}}^{\phantom{1}}}
 \newcommand {\gAc}{\gamma_{_{A_3}}^{\phantom{O}}}

\newcommand {\pprime}{{\prime\prime}}

\baselineskip = 13pt
\textwidth = 5in
\textheight = 7.8in
\begin{document}
\begin{center}
\huge{
On the Study of \\
Hyperbolic Triangles and Circles \\
by Hyperbolic Barycentric Coordinates \\
in \\
Relativistic Hyperbolic Geometry \\
     }
\end{center}
\begin{center}
Abraham A. Ungar\\
Department of Mathematics\\
North Dakota State University\\
Fargo, ND 58105, USA\\
Email: Abraham.Ungar@ndsu.edu\\
\end{center}

\begin{quotation}
{\bf Abstract}
Barycentric coordinates are commonly used in Euclidean geometry.
Following the adaptation of barycentric coordinates for use in hyperbolic geometry
in recently published books on analytic hyperbolic geometry,
known and novel results concerning triangles and circles in the hyperbolic geometry
of Lobachevsky and Bolyai are discovered.
Among the novel results are the hyperbolic counterparts of important
theorems in Euclidean geometry. These are:
(1) the Inscribed Gyroangle Theorem,
(ii) the Gyrotangent-Gyrosecant Theorem,
(iii) the Intersecting Gyrosecants Theorem, and
(iv) the Intersecting Gyrochord Theorem.
Here in gyrolanguage, the language of analytic hyperbolic geometry,
we prefix a gyro to any term that describes a
concept in Euclidean geometry and in associative algebra
to mean the analogous concept in hyperbolic geometry and nonassociative algebra.
Outstanding examples are {\it gyrogroups} and {\it gyrovector spaces}, and
Einstein addition being both {\it gyrocommutative} and {\it gyroassociative}.
The prefix ``gyro'' stems from ``gyration'', which is the mathematical abstraction
of the special relativistic effect known as ``Thomas precession''.
\end{quotation}

\section{Introduction} \label{secint}

A barycenter in astronomy is the point between two objects where they balance
each other. It is the center of gravity where two or more celestial bodies
orbit each other.
In 1827 M\"obius published a book whose title,
{\it Der Barycentrische Calcul},
translates as
{\it The Barycentric Calculus}. The word {\it barycentric} means
center of gravity, but the book is entirely geometrical and, hence, called
by Jeremy Gray \cite{gray93},
{\it M\"obius's Geometrical Mechanics}.
The 1827 M\"obius book is best remembered for introducing a new system
of coordinates, the {\it barycentric coordinates}.\index{barycentric coordinates}
The historical contribution of M\"obius' barycentric coordinates to vector analysis
is described in \cite[pp.~48--50]{crowe94}.

Commonly used as a tool in the study of Euclidean geometry, barycentric coordinates
have been adapted for use as a tool in the study of the hyperbolic geometry of
Lobachevsky and Bolyai as well, in several recently published books
\cite{mybook02,mybook03,mybook06,mybook05}.

Relativistic hyperbolic geometry is a model of analytic hyperbolic geometry
in which Einstein addition plays the role of vector addition.
Einstein addition is a binary operation in the ball of vector spaces, which
is neither commutative nor associative. However,
Einstein addition is both gyrocommutative and gyroassociative, giving rise to
gyrogroups and gyrovector spaces. The latter, in turn,
form the algebraic setting for relativistic hyperbolic geometry, just as
vector spaces form the algebraic setting for the standard model of
Euclidean geometry.

Relativistic hyperbolic geometry admits the notion of
relativistic hyperbolic barycentric coordinates,
just as Euclidean geometry admits the notion of Euclidean barycentric coordinates.
Relativistic hyperbolic barycentric coordinates and
classical Euclidean barycentric coordinates
share remarkable analogies. In particular, they are both covariant. Indeed,
Relativistic barycentric coordinate representations are covariant with respect to the
Lorentz coordinate transformation group, just as
classical, Euclidean barycentric coordinate representations are covariant with respect to the
Galilean coordinate transformation group.
The remarkable analogies suggest that hyperbolic barycentric coordinates
can prove useful in the study of hyperbolic geometry,
just as Euclidean barycentric coordinates
prove useful in the study of Euclidean geometry.

Indeed, following the adaptation of Euclidean barycentric coordinates for use in
hyperbolic geometry, where they are called {\it gyrobarycentric coordinates},
we employ here the technique of gyrobarycentric coordinates
to rediscover and discover known and new results in hyperbolic geometry.
An introduction to hyperbolic barycentric coordinates and their application in
hyperbolic geometry is found in \cite{ungar13s}.
Some familiarity with relativistic hyperbolic geometry as studied in \cite{ungar13s}
is assumed. Relativistic hyperbolic geometry is studied extensively
in \cite{mybook02,mybook03,mybook06,mybook05};
see also
\cite{mybook01,mybook04,walterrev2002,rassiasrev2008,rassiasrev2010}
and
\cite{quasi91,service,ahlfors,ungarrassias07,incenter08,ungar12s}.

Among the novel results in hyperbolic geometry that are discovered here are
the following outstanding results:

\begin{enumerate}
\item  
The Inscribed Gyroangle Theorem,
which is the hyperbolic counterpart of the well-known
Inscribed Angle Theorem in Euclidean geometry
(Sects.~\ref{dknke2}\,--\,\ref{dknke2s}).
\item  
The Gyrotangent-Gyrosecant Theorem,
which is the hyperbolic counterpart of the well-known
Tangent-Secant Theorem in Euclidean geometry (Sect.~\ref{slila}).
\item  
The Intersecting Gyrosecants Theorem,
which is the hyperbolic counterpart of the well-known
Intersecting Secants Theorem in Euclidean geometry (Sect.~\ref{slila2}).
\item  
The Intersecting Gyrochords Theorem,
which is the hyperbolic counterpart of the well-known
Intersecting Chords Theorem in Euclidean geometry (Sect.~\ref{slila8}).
\end{enumerate}

The prefix ``gyro'' that we extensively use stems from the term
``gyration'' \cite{ungarsmale12},
which is the mathematical abstraction
of the special relativistic effect known as ``Thomas precession''.

The use of nonassociative algebra and
hyperbolic trigonometry (gyrotrigonometry) involves
staightforward, but complicates calculations.
Hence, computer algebra, like Mathematica, for algebraic manipulations
is an indispensable tool in this work.

\section{Einstein Addition} \label{secein02}

Our journey into the fascinating world of relativistic hyperbolic geometry
begins in Einstein addition and passes through important novel theorems
that capture remarkable analogies between Euclidean and hyperbolic geometry.
Einstein addition, in turn, is the binary operation that stems from
Einstein's composition law of relativistically
admissible velocities that he introduced in his
1905 paper \cite{einstein05} \cite[p.~141]{einsteinfive}
that founded the special theory of relativity.

Let $c$ be an arbitrarily fixed positive constant and let
$\Rn=(\Rn,+,\ccdot)$ be the Euclidean $n$-space,
$n=1,2,3,\ldots,$
equipped with the common vector addition, +, and inner product, $\ccdot$.
The home of all $n$-dimensional Einsteinian velocities is the $c$-ball
\begin{equation} \label{eqcball}
\Rcn    = \{\vb\in\Rn: \|\vb\| < c \}
\end{equation}
It is the open ball of radius $c$, centered at the
origin of $\Rn$, consisting of all vectors $\vb$
in $\Rn$ with norm smaller than $c$.

Einstein addition and scalar multiplication play in the ball $\Rcn$
the role that vector addition and scalar multiplication play in the
Euclidean $n$-space $\Rn$.

\begin{definition}\label{defeinsadd}
Einstein addition is a binary operation, $\op$,
in the $c$-ball $\Rcn$ given by the equation,
{\rm
\cite{mybook01},
\cite[Eq.~2.9.2]{urbantkebookeng},\cite[p.~55]{moller52},\cite{fock},
}
\begin{equation} \label{eq01}
{\ub}\op{\vb}=\frac{1}{\unpuvc}
\left\{ {\ub}+ \frac{1}{\gub}\vb+\frac{1}{c^{2}}\frac{\gamma _{{\ub}}}{%
1+\gamma _{{\ub}}}( {\ub}\ccdot{\vb}) {\ub} \right\}
\end{equation} 
for all $\ub,\vb\in\Rcn   $,
where $\gub$ is the Lorentz gamma factor given by the equation
\begin{equation} \label{v72gs}
\gvb = \frac{1}{\sqrt{1-\displaystyle\frac{\|\vb\|^2}{c^2}}}
\end{equation}
where $\ub\ccdot\vb$ and $\|\vb\|$
are the inner product and the norm
in the ball, which the ball $\Rcn  $ inherits from its space $\Rn$.
\end{definition}

A frequently used identity that follows immediately from \eqref{v72gs} is
\begin{equation} \label{rugh1ds}
\frac{\vb^2}{c^2} =
\frac{\|\vb\|^2}{c^2} = \frac{\gamma_\vb^2 - 1}{\gamma_\vb^2}
\end{equation}

A nonempty set with a binary operation is called a {\it groupoid} so that,
accordingly, the pair $(\Rcn,\op)$ is an
{\it Einstein groupoid}.

In the Newtonian limit of large $c$, $c\rightarrow\infty$, the ball $\Rcn   $
expands to the whole of its space $\Rn$, as we see from \eqref{eqcball},
and Einstein addition $\op$ in $\Rcn   $
reduces to the ordinary vector addition $+$ in $\Rn$,
as we see from \eqref{eq01} and \eqref{v72gs}.

In applications to velocity spaces, $\Rn=\Rt$ is the Euclidean 3-space,
which is the space of all classical, Newtonian velocities, and
$\Rcn  =\Rct\subset\Rt$ is the $c$-ball of $\Rt$ of all relativistically
admissible, Einsteinian velocities.
The constant $c$ represents in special relativity the
vacuum speed of light.
Since we are interested in geometry, we allow $n$ to be
any positive integer and, sometimes, replace $c$ by $s$.

We naturally use the abbreviation
$\ub\om\vb=\ub\op(-\vb)$ for Einstein subtraction, so that,
for instance, $\vb\om\vb = \zerb$,
$\om\vb = \zerb\om\vb=-\vb$.
Einstein addition and subtraction satisfy the equations
\begin{equation} \label{eq01a}
\om(\ub\op\vb) = \om\ub\om\vb
\end{equation}
and
\begin{equation} \label{eq01b}
\om\ub\op(\ub\op\vb) = \vb
\end{equation}
for all $\ub,\vb$ in the ball $\Rcn$,
in full analogy with vector addition and subtraction in $\Rn$. Identity
\eqref{eq01a} is called the {\it gyroautomorphic inverse property} of
Einstein addition, and Identity
\eqref{eq01b} is called the {\it left cancellation law} of Einstein addition.
We may note that
Einstein addition does not obey the naive right counterpart of the
left cancellation law \eqref{eq01b} since, in general,
\begin{equation} \label{eq01c}
(\ub\op\vb)\om\vb \ne \ub
\end{equation}
However, this seemingly lack of a {\it right cancellation law} of
Einstein addition is repaired, for instance, in \cite[Sect.~1.9]{mybook05}.

Finally, as demonstrated in \cite{ungar13s} and in
\cite{mybook01,mybook02,mybook03,mybook04,mybook06,mybook05},
Einstein addition admits scalar multiplication, giving
rise to Einstein gyrovector spaces. These, in turn, form the
algebraic setting for relativistic model of hyperbolic geometry, just as
vector spaces form the
algebraic setting for the standard model of Euclidean geometry.
A brief description of the road from Einstein addition to
gyrogroups and gyrovector spaces, necessary for a fruitful reading of this chapter,
is found in \cite{ungar13s}.

\section{Gyrocircles}\label{dkrymv}
\index{gyrocircle}

Assuming familiarity with Einstein gyrovector spaces, as studied in \cite{ungar13s},
and particularly with the concepts of
{\it gyrobarycentric independence}, {\it gyroflats} and {\it hyperbolic span}
in \cite[Def.~13]{ungar13s}, the gyrocircle definition follows.

\begin{definition}\label{dfgyrocircle}
{\bf (Gyrocircle).}
\index{gyrocircle), def.}
{\it
Let $S=\{A_1,A_2,A_3\}$ be a gyrobarycentrically independent set
in an Einstein gyrovector space $(\Rsn,\op,\od)$, $n\ge 2$, and let
\begin{equation} \label{dambdy}
\AAbt = A_1\op\rmspan\{\om A_1\op A_2,\om A_1\op A_3\} \subset\Rn
\,.
\end{equation}
The locus of a point in $\AAbt\cap\Rsn$ which is at a
constant gyrodistance $r$
from a fixed point $O\inn\AAbt\cap\Rsn$ is a gyrocircle $C(r,O)$ with
gyrocenter $O$ and gyroradius $r$.
}
\end{definition}

  
\begin{figure}[t]  
 \centering         
%
%
 \includegraphics[width=9cm]{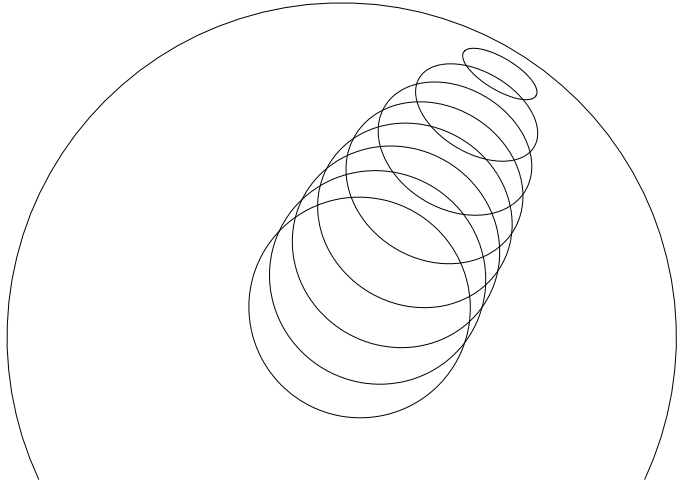}
\caption[Gyrocircles]{
A sequence of gyrocircles with gyroradius $\frac{1}{3}$ in an Einstein gyrovector plane $\Rstwou$
with gyrocenters approaching the boundary of the open unit disc $\Rstwou$
is shown.
The center of the disc is conformal. Hence, the
gyrocircle with gyrocenter at the center of the disc coincides with a Euclidean circle.
The Euclidean circle is increasingly flattened
as its gyrocenter approaches the boundary of the disc.
\label{fig266am}}
\end{figure}

The gyrocircle $C(r,O)$ with gyroradius $r$, $0<r<s$, and gyrocenter $O\in\Rstwo$
in an Einstein gyrovector plane $(\Rstwo,\op,\od)$ is the set of all points
$P\in\Rstwo$ such that $\|\om P\op O\|=r$.
It is given by the equation
\begin{equation} \label{ekufn1}
C(r,O,\theta) = O~\op \begin{pmatrix} r\cos\theta \\ r\sin\theta \end{pmatrix}
\,,
\end{equation}
$0\le\theta < 2\pi$. Indeed, by the left cancellation law we have
\begin{equation} \label{ekufn2}
\| \om O \op C(r,O,\theta) \| = 
\left\| \begin{pmatrix} r\cos\theta \\ r\sin\theta \end{pmatrix} \right\| = r
\,.
\end{equation}

A sequence of gyrocircles of gyroradius $\frac{1}{3}$ in an Einstein gyrovector plane $\Rstwou$
with gyrocenters approaching the boundary of the open unit disc $\Rstwou$
is shown in Fig.~\ref{fig266am}.
The center of the disc in Fig.~\ref{fig266am} is conformal,
as explained in \cite[Sect.~6.2]{mybook05}.
Accordingly, a gyrocircle with gyrocenter at the center of the disc is identical
to a Euclidean circle. This Euclidean circle is increasingly flattened in the Euclidean sense
when the gyrocircle gyrocenter approaches the boundary of the disc.

\section{Gyrotriangle Circumgyrocenter}\label{dknce1}
\index{circumgyrocenter}

\index{circumgyrocenter, gyrotriangle}
\index{circumgyroradius, gyrotriangle}
\index{circumgyrocircle, gyrotriangle}
\begin{definition}\label{defhsnmeg2}
{\bf (Gyrotriangle Circumgyrocircle, Circumgyrocenter, Circumgyroradius).}
{\it
Let $A_1A_2A_3$ be a gyrotriangle
in an Einstein gyrovector space $(\Rsn,\op,\od)$, $n\ge2$,
and let $\AAbt$ be the set
\begin{equation} \label{dambdx}
\AAbt = A_1\op\rmspan\{\om A_1\op A_2,\om A_1\op A_3\} \subset\Rn
\,.
\end{equation}
The circumgyrocenter of the gyrotriangle is the point $O$,
$O \in \AAbt \cap \Rsn$,
equigyrodistant from the three gyrotriangle vertices.
The gyrodistance from $O$ to each vertex $A_k$, $k=1,2,3$,
of the gyrotriangle
is the gyrotriangle circumgyroradius, and the gyrocircle with
gyrocenter $O$ and gyroradius $r$ in $\AAbt\cap\Rsn$ is the
gyrotriangle circumgyrocircle.
}
\end{definition}

 
\begin{figure}[t]  
 \centering         
\psfrag{O}[]{$\phantom{O}$}
\psfrag{A1}[]{$A_1$}
\psfrag{A2}[]{$A_2$}
\psfrag{A3}[]{$A_3$}
\psfrag{O}[]{$O$}
\psfrag{pa1}{$a_{23}$}
\psfrag{pa2}[]{$a_{31}=a_{13}$}
\psfrag{pa3}{$a_{12}$}
\psfrag{text1}[]{$\ab_{23}=\om A_2 \op A_3$}
\psfrag{text2}[]{$\ab_{31}=\om A_3 \op A_1$}
\psfrag{text3}[]{$\ab_{12}=\om A_1 \op A_2$}
\psfrag{al1}[]{$\alpha_1$}
\psfrag{al2}[]{$\alpha_2$}
\psfrag{al3}[]{$\alpha_3$}
\psfrag{formula01}{$a_{12}=\|\ab_{12}\|=\|\om A_1 \op A_2\|$}
\psfrag{formula02}{$a_{13}=\|\ab_{31}\|=\|\om A_3 \op A_1\|$}
\psfrag{formula03}{$a_{23}=\|\ab_{23}\|=\|\om A_2 \op A_3\|$}
\psfrag{formula04}[]{$\cos\alpha_1=\frac{\om A_1\op A_2}{\|\om A_1\op A_2\|}
\ccdot \frac{\om A_1\op A_3}{\|\om A_1\op A_3\|}$}
\psfrag{formula05}[]{$\cos\alpha_2=\frac{\om A_2\op A_1}{\|\om A_2\op A_1\|}
\ccdot \frac{\om A_2\op A_3}{\|\om A_2\op A_3\|}$}
\psfrag{formula06}[]{$\cos\alpha_3=\frac{\om A_3\op A_1}{\|\om A_3\op A_1\|}
\ccdot \frac{\om A_3\op A_2}{\|\om A_3\op A_2\|}$}
 \psfrag{fig241}{$\hspace{1.8cm}
O = \frac{
\sum_{k=1}^{3} m_k \gamma_{_{A_k}}^{\phantom{1}} A_k
}{
\sum_{k=1}^{3} m_k \gamma_{_{A_k}}^{\phantom{1}}
}
$}
 \includegraphics[width=10cm]{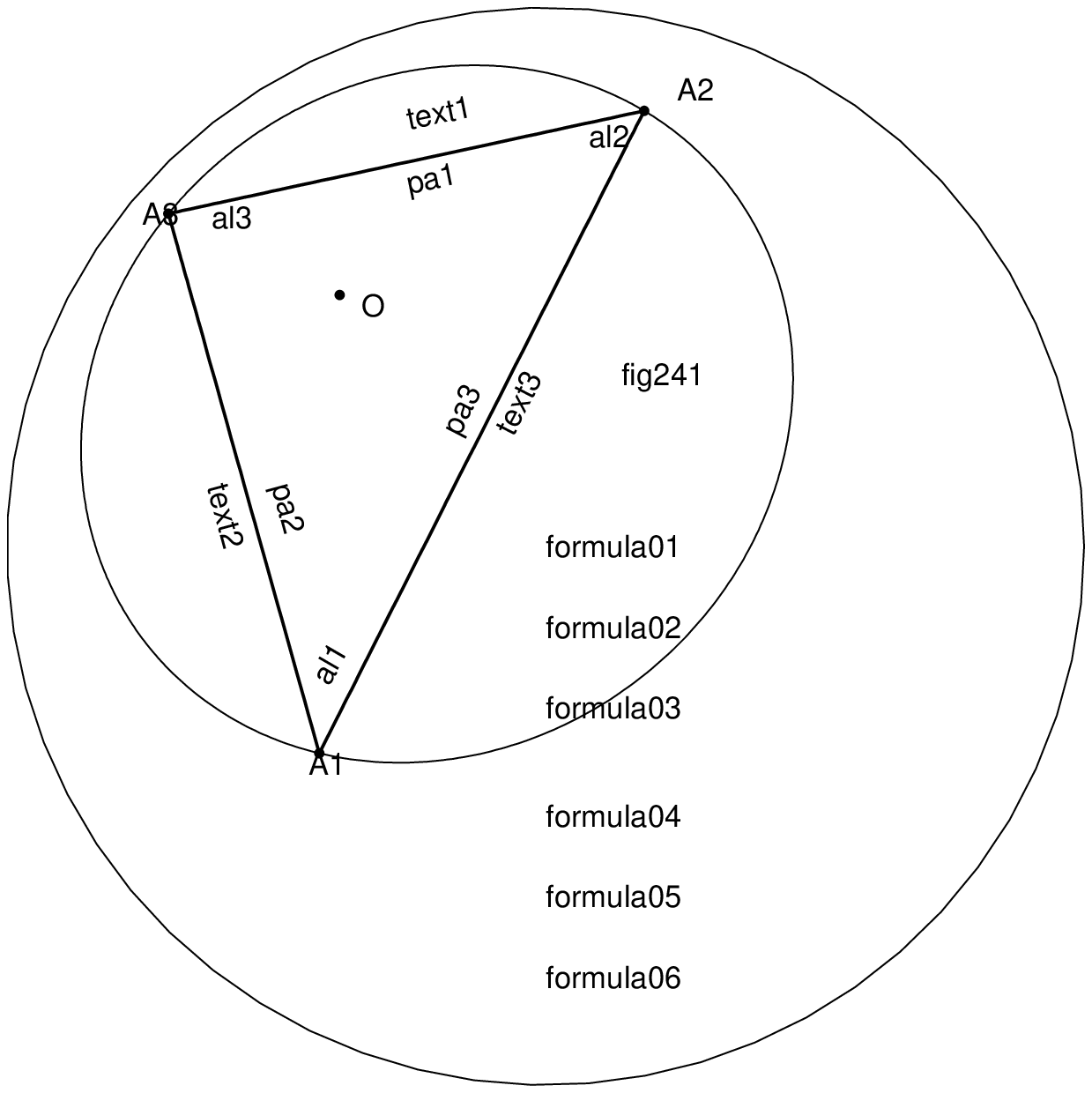}
\caption[The gyrotriangle circumgyrocircle, circumgyrocenter]{
The circumgyrocircle, and the
circumgyrocenter $O$, of gyrotriangle $A_1A_2A_3$ in an
Einstein gyrovector space $(\Rsn,\op,\od)$, $n=2$, is shown along with its associated
index notation.
Here $\|\om A_1 \op O\|$ $=$ $\|\om A_2 \op O\|$ $=$ $\|\om A_3 \op O\|$,
where $O$ is the gyrotriangle circumgyrocenter, given by its
gyrobarycentric representation \eqref{eicksfnein},
with respect to the gyrobarycentrically independent set $S=\{A_1,A_2,A_3\}$.
The Euclidean counterpart of this figure is shown in
Fig.~\ref{fig241euc1m}, p.~\pageref{fig241euc1m}.
\label{fig241en2am}}
\end{figure}

It should be noted that not every gyrotriangle in $\Rsn$ possesses
a circumgyrocenter.

Let $A_1A_2A_3$ be a gyrotriangle in an Einstein gyrovector space
$(\Rsn,\op,\od)$ that possesses a circumgyrocircle, and let
$O \in \AAbt\cap\Rsn$
be the circumgyrocenter of the gyrotriangle,
as shown in Fig.~\ref{fig241en2am}.
Then, $O$ possesses a gyrobarycentric representation,
\index{circumgyrocenter}
\begin{equation} \label{hdsjb01}
O = \frac{
 m_1 \gamma_{_{A_1}}^{\phantom{1}} A_1 +
 m_2 \gamma_{_{A_2}}^{\phantom{1}} A_2 +
 m_3 \gamma_{_{A_3}}^{\phantom{1}} A_3
}{
m_1 \gamma_{_{A_1}}^{\phantom{1}} +
m_2 \gamma_{_{A_2}}^{\phantom{1}} +
m_3 \gamma_{_{A_3}}^{\phantom{1}}
}
\,,
\end{equation}
with respect to the gyrobarycentrically independent set $S=\{A_1,A_2,A_3\}$.
The gyrobarycentric coordinates $m_1,m_2$ and $m_3$ are to be determined
in \eqref{hdsjb07} below, in terms of gamma factors of the gyrotriangle sides and,
alternatively in \eqref{hdsjb11}, in terms of the gyrotriangle gyroangles.

Following the
{\it Gyrobarycentric representation Gyrocovariance Theorem},
\cite[Theorem 4.6, pp.~90-91]{mybook05},
with a left gyrotranslation by $X=\om A_1$,
and using the index notation
\begin{equation} \label{indexnotation}
\ab_{ij} = \om A_i \op A_j \,,
\hspace{1.2cm}
a_{ij}=\|\ab_{ij}\|\,,
\hspace{1.2cm}
\gamma_{ij}^{\phantom{O}} = \gamma_{\ab_{ij}}^{\phantom{O}} =\gamma_{a_{ij}}^{\phantom{O}}
\,,
\end{equation}
noting that $a_{ij}=a_{ji}$, $\gamma_{ij}^{\phantom{O}}=\gamma_{ji}^{\phantom{O}}$,
$\ab_{ii}=\zerb$, $a_{ii}=0$ and $\gamma_{ii}^{\phantom{O}}=1$.
we have
\begin{equation} \label{hdsjb02}
 \begin{split}
 \gamma_{_{\om A_1 \op O}}^{\phantom{1}} &= \frac{
m_1 \gamma_{_{\om A_1 \op A_1}}^{\phantom{1}} +
m_2 \gamma_{_{\om A_1 \op A_2}}^{\phantom{1}} +
m_3 \gamma_{_{\om A_1 \op A_3}}^{\phantom{1}}
}{\mO}
 \\[8pt]  &= \frac{
m_1 +
m_2 \gamma_{12}^{\phantom{1}} +
m_3 \gamma_{13}^{\phantom{1}}
}{\mO}
\,,
 \end{split}
\end{equation}
where by \cite[Eq.~(4.27), p.~90]{mybook05} and the
{\it Gyrobarycentric representation Gyrocovariance Theorem},
\cite[Theorem 4.6, pp.~90-91]{mybook05},
the circumgyrocenter gyrobarycentric representation constant $\mO>0$
with respect to the set of the gyrotriangle vertices
is given by the equation
\index{circumgyrocenter constant}
\begin{equation} \label{yjuhdg}
m_O^2 = m_1^2 + m_2^2 + m_3^2 + 2(
m_1m_2\gammaab + m_1m_3\gammaac +  m_2m_3\gammabc)
\,.
\end{equation}

Similarly, by the
{\it Gyrobarycentric representation Gyrocovariance Theorem},
\cite[Theorem 4.6, pp.~90-91]{mybook05},
with left gyrotranslations
by $X=\om A_1$, by $X=\om A_2$, and by $X=\om A_3$, we have, respectively,
\begin{equation} \label{hdsjb03}
\begin{split}
\gamma_{_{\om A_1 \op O}}^{\phantom{1}} &= \frac{
m_1 +
m_2 \gamma_{12}^{\phantom{1}} +
m_3 \gamma_{13}^{\phantom{1}}
}{\mO}
\\[1pt]
\gamma_{_{\om A_2 \op O}}^{\phantom{1}} &= \frac{
m_1 \gamma_{12}^{\phantom{1}} +
m_2 +
m_3 \gamma_{23}^{\phantom{1}}
}{\mO}
\\[1pt]
\gamma_{_{\om A_3 \op O}}^{\phantom{1}} &= \frac{
m_1 \gamma_{13}^{\phantom{1}} +
m_2 \gamma_{23}^{\phantom{1}} +
m_3
}{\mO}
\,.
\end{split}
\end{equation}

The condition that the circumgyrocenter $O$ is equigyrodistant from its gyrotriangle vertices
$A_1,A_2$, and $A_3$ implies
\begin{equation} \label{hdsjb04}
\gamma_{_{\om A_1 \op O}}^{\phantom{1}} =
\gamma_{_{\om A_2 \op O}}^{\phantom{1}} =
\gamma_{_{\om A_3 \op O}}^{\phantom{1}}
\,.
\end{equation}

Equations \eqref{hdsjb03} and \eqref{hdsjb04}, along with the normalization
condition $m_1+m_2+m_3=1$, yield the following system of three equations for
the three unknowns $m_1,m_2$, and $m_3$,
\begin{equation} \label{hdsjb05}
\begin{split}
m_1+m_2+m_3 &=1 \\
m_1 + m_2 \gamma_{12}^{\phantom{1}} +  m_3 \gamma_{13}^{\phantom{1}}
&=
m_1 \gamma_{13}^{\phantom{1}} +  m_2 \gamma_{23}^{\phantom{1}} + m_3 \\
 m_1 \gamma_{12}^{\phantom{1}} + m_2 + m_3 \gamma_{23}^{\phantom{1}}
&=
m_1 \gamma_{13}^{\phantom{1}} +  m_2 \gamma_{23}^{\phantom{1}} + m_3
\,,
\end{split}
\end{equation}
which can be written as the matrix equation,
\begin{equation} \label{hdsjb06}
\begin{pmatrix} 1 & 1  & 1 \\[4pt]
1- \gamma_{13}^{\phantom{1}} & \gamma_{12}^{\phantom{1}} - \gamma_{23}^{\phantom{1}}
& \gamma_{13}^{\phantom{1}} - 1 \\[4pt]
\gamma_{12}^{\phantom{1}} - \gamma_{13}^{\phantom{1}} & 1 - \gamma_{23}^{\phantom{1}}
& \gamma_{23}^{\phantom{1}} - 1
\end{pmatrix}
\begin{pmatrix} m_1 \\[4pt] m_2 \\[4pt] m_3 \end{pmatrix}
=
\begin{pmatrix} 1 \\[4pt] 0 \\[4pt] 0 \end{pmatrix}
\,.
\end{equation}

Solving \eqref{hdsjb06} for the unknowns $m_1,m_2$, and $m_3$, we have
\begin{equation} \label{hdsjb07}
\begin{split}
m_1 &= \frac{1}{D}
(\phantom{-} \gamma_{12}^{\phantom{1}} + \gamma_{13}^{\phantom{1}} - \gamma_{23}^{\phantom{1}} -1)
(\gamma_{23}^{\phantom{1}} -1)
\\[6pt]
m_2 &= \frac{1}{D}
(\phantom{-} \gamma_{12}^{\phantom{1}} - \gamma_{13}^{\phantom{1}} + \gamma_{23}^{\phantom{1}} -1)
(\gamma_{13}^{\phantom{1}} -1)
\\[6pt]
m_3 &= \frac{1}{D}
(         -  \gamma_{12}^{\phantom{1}} + \gamma_{13}^{\phantom{1}} + \gamma_{23}^{\phantom{1}} -1)
(\gamma_{12}^{\phantom{1}} -1)
\,,
\end{split}
\end{equation}
where $D$ is the determinant of the $3\timess3$ matrix in \eqref{hdsjb06},
\begin{equation} \label{eufjsd}
\begin{split}
 D &= 2( \gamma_{12}^{\phantom{1}} \gamma_{13}^{\phantom{1}}
      + \gamma_{12}^{\phantom{1}} \gamma_{23}^{\phantom{1}}
      + \gamma_{13}^{\phantom{1}} \gamma_{23}^{\phantom{1}})
\\ & \phantom{-}
      - (\gamma_{12}^2-1) - (\gamma_{13}^2-1) - (\gamma_{23}^2-1)
      - 2(\gamma_{12}^{\phantom{1}} + \gamma_{13}^{\phantom{1}} + \gamma_{23}^{\phantom{1}})
\\ &=
1+2\gammaab\gammaac\gammabc-\gamma_{12}^2-\gamma_{13}^2-\gamma_{23}^2
 -2(\gammaab-1)(\gammaac-1)(\gammabc-1)
\,.
\end{split}
\end{equation}

Gyrotrigonometric substitutions into the extreme right-hand side of \eqref{eufjsd}
from \cite[Sect.~7.12]{mybook05}
give the gyrotrigonometric representation of $D$,
\begin{equation} \label{hadomf}
\begin{split}
D &= \frac{16F^2}{\sin^2\alpha_1 \sin^2\alpha_2 \sin^2\alpha_3}
-
\frac{16F}{\sin^2\alpha_1 \sin^2\alpha_2 \sin^2\alpha_3} \sin^2\tfrac{\delta}{2}
\\ &=
\frac{16F}{\sin^2\alpha_1 \sin^2\alpha_2 \sin^2\alpha_3}
(F-\sin^2\tfrac{\delta}{2})
\\ &=
\frac{16F}{\sin^2\alpha_1 \sin^2\alpha_2 \sin^2\alpha_3}
\sin\tfrac{\delta}{2}
\\ & \times \{
\sin(\alpha_1+\tfrac{\delta}{2})
\sin(\alpha_2+\tfrac{\delta}{2})
\sin(\alpha_3+\tfrac{\delta}{2})
-\sin\tfrac{\delta}{2}
\}
\,,
\end{split}
\end{equation}
where $\delta=\pi-\alpha_1-\alpha_2-\alpha_3$ is the defect of gyrotriangle $A_1A_2A_3$,
and where
\begin{equation} \label{grfde}
F=F(\alpha_1,\alpha_2,\alpha_3) =
\sin\tfrac{\delta}{2}
\sin(\alpha_1+\tfrac{\delta}{2})
\sin(\alpha_2+\tfrac{\delta}{2})
\sin(\alpha_3+\tfrac{\delta}{2})
\,.
\end{equation}

The extreme right-hand side of \ref{hadomf} is the product of two factors
the first of which is positive.
The second factor is positive if and only if the
gyrotriangle circumgyrocenter $O\in\Rn$ lies inside the ball $\Rsn$, as we will see in
Theorem \ref{thmtivhvn}, p.~\pageref{thmtivhvn}.
This factor vanishes if and only if $O$ lies on the boundary of the ball $\Rsn$,
and it is negative if and only if $O$ lies outside the closure of the ball $\Rsn$,
as we will see in Theorem \ref{thmtivhvn}.

The circumgyrocenter $O$ of gyrotriangle $A_1A_2A_3$ is given by \eqref{hdsjb01}
where the gyrobarycentric coordinates $m_1,m_2$, and $m_3$
are given by \eqref{hdsjb07}.
Since in gyrobarycentric coordinates only ratios of coordinates are
relevant, the gyrobarycentric coordinates, $m_1,m_2$, and $m_3$ in
\eqref{hdsjb07} can be simplified by removing their common nonzero factor $1/D$.
\index{circumgyrocenter}

Gyrobarycentric coordinates, $m_1,m_2$, and $m_3$, of the circumgyrocenter $O$
of gyrotriangle $A_1A_2A_3$ are thus given by the equations
\begin{equation} \label{hdsjb07bb}
\begin{split}
m_1^\prime &=
(\phantom{-} \gamma_{12}^{\phantom{1}} + \gamma_{13}^{\phantom{1}} - \gamma_{23}^{\phantom{1}} -1)
(\gamma_{23}^{\phantom{1}} -1)
\\[4pt]
m_2^\prime &=
(\phantom{-} \gamma_{12}^{\phantom{1}} - \gamma_{13}^{\phantom{1}} + \gamma_{23}^{\phantom{1}} -1)
(\gamma_{13}^{\phantom{1}} -1)
\\[4pt]
m_3^\prime &=
(         -  \gamma_{12}^{\phantom{1}} + \gamma_{13}^{\phantom{1}} + \gamma_{23}^{\phantom{1}} -1)
(\gamma_{12}^{\phantom{1}} -1)
\,.
\end{split}
\end{equation}

Hence, by \eqref{yjuhdg} along with the gyrobarycentric coordinates
in \eqref{hdsjb07bb}, we have
\begin{equation} \label{gkdnseu}
\begin{split}
m_O^2 &= (1+2\gammaab\gammaac\gammabc-\gamma_{12}^2-\gamma_{13}^2-\gamma_{23}^2)
\\ &\times
\{1+2\gammaab\gammaac\gammabc-\gamma_{12}^2-\gamma_{13}^2-\gamma_{23}^2
-2(\gammaab-1)(\gammaac-1)(\gammabc-1)\}
\\ & \hspace{-0.6cm} =
 \{ (\gammaab+\gammaac+\gammabc-1)^2
-2 (\gamma_{12}^2+\gamma_{13}^2+\gamma_{23}^2-1) \}
\\ &\times
(1+2\gammaab\gammaac\gammabc - \gamma_{12}^2 - \gamma_{13}^2 - \gamma_{23}^2)
.
\end{split}
\end{equation}

In order to emphasize comparative patterns that the
gyrotriangle circumgyrocircle and the
gyrotetrahedron circumgyrosphere possess, we present the first equation
in \eqref{gkdnseu} in the determinantal form
\begin{equation} \label{drekc}
m_O^2 = D_3 (D_3 - H_3)
\,,
\end{equation}
where $D_3$ is the determinant
\begin{equation} \label{dethkcf}
D_3 ~=~ \left|
\begin{matrix}
1 & \gammaab & \gammaac  \\[6pt]
\gammaab & 1 & \gammabc  \\[6pt]
\gammaac & \gammabc & 1
\end{matrix}
\right|
\end{equation}
and where
\begin{equation} \label{dethkdg}
H_3 = 2(\gammaab-1)(\gammaac-1)(\gammabc-1)
\,.
\end{equation}

The gyrotriangle
$A_1A_2A_3$ in Fig.~\ref{fig241en2am} possesses a circumgyrocenter
if and only if $m_O^2>0$.

The factor $D_3$ of $m_O^2$ in \eqref{gkdnseu} and in \eqref{drekc},
\begin{equation} \label{mizsk}
D_3 =1+2\gammaab\gammaac\gammabc-\gamma_{12}^2-\gamma_{13}^2-\gamma_{23}^2
= \left|
\begin{matrix} 
1 & \gammaab & \gammaac  \\[6pt]
\gammaab & 1 & \gammabc  \\[6pt]
\gammaac & \gammabc & 1
\end{matrix}
\right|
~>~0
\,,
\end{equation}
is positive for any
gyrotriangle $A_1A_2A_3$ in an Einstein gyrovector space.
Hence, as we see from \eqref{gkdnseu}, $m_O^2>0$ if and only if
the points $A_1,~A_2$, and $A_3$ obey the {\it circumgyrocircle existence condition}
\index{circumgyrocircle existence condition}
\begin{subequations} \label{rjksd}
\begin{equation} \label{rjksda}
(\gammaab+\gammaac+\gammabc-1)^2 ~>~ 2 (\gamma_{12}^2+\gamma_{13}^2+\gamma_{23}^2-1)
\end{equation}
or, equivalently, the circumgyrocircle existence existence condition
\begin{equation} \label{rjksdb}
4(\gammaab-1)(\gammaac-1) ~>~ (\gammaab+\gammaac-\gammabc-1)^2
\,,
\end{equation}
or, equivalently, the circumgyrocircle existence condition
\begin{equation} \label{rjksdc}
D_3~>~H_3
\,.
\end{equation}
\end{subequations}

Gamma factors of gyrotriangle side gyrolengths
are related to its gyroangles
by the $AAA$ to $SSS$ {\it Conversion Law} \cite[Theorem 6.5, p.~137]{mybook05}
\begin{equation} \label{jsf04st}
\begin{split}
\gamma_{23}^{\phantom{O}} &= \frac{\cos\alpha_1+\cos\alpha_2\cos\alpha_3}{\sin\alpha_2\sin\alpha_3}\\[4pt]
\gamma_{13}^{\phantom{O}} &= \frac{\cos\alpha_2+\cos\alpha_1\cos\alpha_3}{\sin\alpha_1\sin\alpha_3}\\[4pt]
\gamma_{12}^{\phantom{O}} &= \frac{\cos\alpha_3+\cos\alpha_1\cos\alpha_2}{\sin\alpha_1\sin\alpha_2}
\,.
\end{split}
\end{equation}

Substituting these from \eqref{jsf04st}
into \eqref{hdsjb07bb} we obtain
\begin{equation} \label{hdsjb09}
\begin{split}
m_1^\prime &= F^\prime \sin( \frac{         -  \alpha_1 + \alpha_2 + \alpha_3}{2} ) \sin\alpha_1
\\[4pt]
m_2^\prime &= F^\prime \sin( \frac{\phantom{-} \alpha_1 - \alpha_2 + \alpha_3}{2} ) \sin\alpha_2
\\[4pt]
m_3^\prime &= F^\prime \sin( \frac{\phantom{-} \alpha_1 + \alpha_2 - \alpha_3}{2} ) \sin\alpha_3
\,,
\end{split}
\end{equation}
where the common factor $F^\prime = F^\prime(\alpha_1,\alpha_2,\alpha_3)$
in \eqref{hdsjb09} is given by the equation
\begin{equation} \label{hdsjb10}
F^\prime=2^3\frac{
\cos^2( \frac{\alpha_1 + \alpha_2 + \alpha_3}{2} )
\cos(\frac{-\alpha_1+\alpha_2+\alpha_3}{2} )
\cos(\frac{ \alpha_1-\alpha_2+\alpha_3}{2} )
\cos(\frac{ \alpha_1+\alpha_2-\alpha_3}{2} )
}{
\sin\alpha_1 \sin\alpha_2 \sin\alpha_3
}
\,.
\end{equation}

Since in gyrobarycentric coordinates only ratios of coordinates are
relevant, the gyrobarycentric coordinates,
$m_1^\prime,m_2^\prime$, and $m_3^\prime$ in
\eqref{hdsjb09} can be simplified by removing a common nonzero factor.
Hence, convenient gyrobarycentric coordinates,
$m_1^{\prime\prime},m_2^{\prime\prime}$, and $m_3^{\prime\prime}$, of the circumgyrocenter $O$
of gyrotriangle $A_1A_2A_3$, expressed in terms of the gyrotriangle gyroangles
are given by the equations
\index{circumgyrocenter}
\begin{equation} \label{hdsjb11}
\begin{split}
m_1^{\prime\prime} &=-\sin( \frac{         -  \alpha_1 + \alpha_2 + \alpha_3}{2} ) \sin\alpha_1
=\cos(\alpha_1+\tfrac{\delta}{2}) \sin\alpha_1
\\[4pt]
m_2^{\prime\prime} &=-\sin( \frac{\phantom{-} \alpha_1 - \alpha_2 + \alpha_3}{2} ) \sin\alpha_2
=\cos(\alpha_2+\tfrac{\delta}{2}) \sin\alpha_2
\\[4pt]
m_3^{\prime\prime} &=-\sin( \frac{\phantom{-} \alpha_1 + \alpha_2 - \alpha_3}{2} ) \sin\alpha_3
=\cos(\alpha_3+\tfrac{\delta}{2}) \sin\alpha_3
\,,
\end{split}
\end{equation}
where $\delta=\pi-\alpha_1-\alpha_2-\alpha_3$ is the defect of gyrotriangle $A_1A_2A_3$.

The circumgyrocenter $O$,
\eqref{hdsjb01}, lies in the interior of its gyrotriangle $A_1A_2A_3$ if and only if
its gyrobarycentric coordinates are all positive or all negative.
Hence, we see from the
gyrobarycentric coordinates \eqref{hdsjb11} of $O$
that
\begin{enumerate}
\item
the circumgyrocenter $O$ lies in the
interior of its gyrotriangle $A_1A_2A_3$ if and only if the largest gyroangle
of the gyrotriangle has measure less than the sum of the measures of the other
two gyroangles.
This result is known in hyperbolic geometry; see, for instance,
\cite[p.~132]{kelly81}, where the result is proved synthetically.
Similarly, we also see from the gyrobarycentric coordinates
\eqref{hdsjb11} of $O$ in \eqref{hdsjb01} that
\item
the circumgyrocenter $O$ lies in the interior of its gyrotriangle $A_1A_2A_3$
if and only if all the three gyroangles
$\alpha_1+\delta/2$, $\alpha_2+\delta/2$ and $\alpha_3+\delta/2$
are acute.
\end{enumerate}

Expressing Inequality \eqref{rjksd} gyrotrigonometrically,
by means of \eqref{jsf04st}, it can be shown
from \eqref{gkdnseu} that $m_O^2>0$ if and only if
\begin{equation} \label{hurdms}
\begin{split}
\cos\frac{ 3\alpha_1-\alpha_2-\alpha_3 } {2}
+
\cos\frac{-\alpha_1+3\alpha_2-\alpha_3 } {2}
&+
\cos\frac{-\alpha_1 - \alpha_2+3\alpha_3} {2}
\\[4pt]
&>
3\cos\frac{\alpha_1+\alpha_2+\alpha_3}{2}
\end{split}
\end{equation}
or, equivalently, if and only if
\begin{equation} \label{hurdmst}
\sin(2\alpha_1+\tfrac{\delta}{2}) +
\sin(2\alpha_2+\tfrac{\delta}{2}) +
\sin(2\alpha_3+\tfrac{\delta}{2}) > 3\sin\tfrac{\delta}{2}
\end{equation}

Inequality \eqref{hurdmst} is an elegant condition for the existence of
a circumgyrocenter.
In a different approach, we will discover below in \eqref{hurdmsts} the
circumgyrocenter existence condition\index{circumgyrocenter existence condition}
in a different elegant form.

Gyrotrigonometric substitutions into the second equation in \eqref{gkdnseu}
from \cite[Sect.~7.12]{mybook05}
yield the following gyrotrigonometric expression for the
constant $\mO>0$ of the circumgyrocenter gyrobarycentric representation \eqref{hdsjb01}:
\begin{equation} \label{hdshsn}
\begin{split}
m_O^2 &= \frac{4^2F^2}{\sin^2\alpha_1 \sin^2\alpha_2 \sin^2\alpha_3}
\left\{
\frac{4^2F^2}{\sin^2\alpha_1 \sin^2\alpha_2 \sin^2\alpha_3}
-\frac{4^2F\sin^2\tfrac{\delta}{2}}{\sin^2\alpha_1 \sin^2\alpha_2 \sin^2\alpha_3}
\right\}
\\[6pt] &=
\frac{4^4F^3}{\sin^4\alpha_1 \sin^4\alpha_2 \sin^4\alpha_3}
(F-\sin^2\tfrac{\delta}{2})
\\[6pt] &=
\frac{4^4F^3\sin\tfrac{\delta}{2}}
{\sin^4\alpha_1 \sin^4\alpha_2 \sin^4\alpha_3}
\{
\sin(\alpha_1+\tfrac{\delta}{2})
\sin(\alpha_2+\tfrac{\delta}{2})
\sin(\alpha_3+\tfrac{\delta}{2})
-\sin\tfrac{\delta}{2}
\}
\,.
\end{split}
\end{equation}

Hence, $m_O^2>0$ if and only if
$F>\sin^2(\delta/2)$ or, equivalently,
\begin{equation} \label{hurdmsts}
\sin(\alpha_1+\tfrac{\delta}{2})
\sin(\alpha_2+\tfrac{\delta}{2})
\sin(\alpha_3+\tfrac{\delta}{2})
>\sin\tfrac{\delta}{2}
\,.
\end{equation}

One may demonstrate directly that the two
circumgyrocenter existence conditions
\eqref{hurdmst} and \eqref{hurdmsts}
are equivalent.

Formalizing the main results of this section, we have the following theorem:

\begin{theorem}\label{thmtivhvn}
{\bf (Circumgyrocenter Theorem).}\index{circumgyrocenter theorem}
Let $S=\{ A_1,A_2,A_3\}$ be a gyrobarycentrically independent set of three points
in an Einstein gyrovector space $(\Rsn,\op,\od)$.
The circumgyrocenter\index{circumgyrocenter}
$O\inn\Rn$ of gyrotriangle $A_1A_2A_3$, shown in Fig.~\ref{fig241en2am},
possesses the gyrobarycentric representation
\begin{equation} \label{eicksfnein}
O = \frac{
 m_1 \gamma_{_{A_1}}^{\phantom{1}} A_1 +
 m_2 \gamma_{_{A_2}}^{\phantom{1}} A_2 +
 m_3 \gamma_{_{A_3}}^{\phantom{1}} A_3
}{
m_1 \gamma_{_{A_1}}^{\phantom{1}} +
m_2 \gamma_{_{A_2}}^{\phantom{1}} +
m_3 \gamma_{_{A_3}}^{\phantom{1}}
}
\end{equation}
with respect to the set $S=\{A_1,A_2,A_3\}$, with
gyrobarycentric coordinates $(m_1:m_2:m_3)$ given by
\begin{equation} \label{hfjdv07bb}
\begin{split}
m_1 &=
(\phantom{-} \gamma_{12}^{\phantom{1}} + \gamma_{13}^{\phantom{1}} - \gamma_{23}^{\phantom{1}} -1)
(\gamma_{23}^{\phantom{1}} -1)
\\[4pt]
m_2 &=
(\phantom{-} \gamma_{12}^{\phantom{1}} - \gamma_{13}^{\phantom{1}} + \gamma_{23}^{\phantom{1}} -1)
(\gamma_{13}^{\phantom{1}} -1)
\\[4pt]
m_3 &=
(         -  \gamma_{12}^{\phantom{1}} + \gamma_{13}^{\phantom{1}} + \gamma_{23}^{\phantom{1}} -1)
(\gamma_{12}^{\phantom{1}} -1)
\end{split}
\end{equation}
or, equivalently, by the gyrotrigonometric gyrobarycentric coordinates
\begin{equation} \label{fjvdnw1}
\begin{split}
m_1 &=   \cos(\alpha_1+\tfrac{\delta}{2}) \sin\alpha_1
\\[4pt]
m_2 &=   \cos(\alpha_2+\tfrac{\delta}{2}) \sin\alpha_2
\\[4pt]
m_3 &=   \cos(\alpha_3+\tfrac{\delta}{2}) \sin\alpha_3
\,.
\end{split}
\end{equation}

The circumgyrocenter gyrobarycentric representation constant $\mO$ with
respect to the set $S=\{A_1,A_2,A_3\}$
is an elegant product of two factors, given by the equation
\index{circumgyrocenter constant}
\begin{equation} \label{gkdnsfh}
\begin{split}
m_O^2 = &\{ (\gammaab+\gammaac+\gammabc-1)^2
-2 (\gamma_{12}^2+\gamma_{13}^2+\gamma_{23}^2-1) \}
\\[4pt] & \times
(1+2\gammaab\gammaac\gammabc - \gamma_{12}^2 - \gamma_{13}^2 - \gamma_{23}^2)
=D_3(D_3-H_3)
\,,
\end{split}
\end{equation}
where $D_3$ and $H_3$ are given by \eqref{dethkcf} and \eqref{dethkdg}.

The circumgyrocenter\index{circumgyrocenter} lies in the ball, $O\in\Rsn$,
if and only if $m_O^2>0$ or, equivalently, if and only if
one of the following mutually equivalent inequalities, each of which is
a circumgyrocircle existence condition,\index{circumgyrocircle existence condition} is satisfied:
\begin{equation} \label{hurdmsw}
\sin(2\alpha_1+\tfrac{\delta}{2}) +
\sin(2\alpha_2+\tfrac{\delta}{2}) +
\sin(2\alpha_3+\tfrac{\delta}{2}) > 3\sin\tfrac{\delta}{2}
\end{equation}
\begin{equation} \label{hurdmstc}
\sin(\alpha_1+\tfrac{\delta}{2})
\sin(\alpha_2+\tfrac{\delta}{2})
\sin(\alpha_3+\tfrac{\delta}{2})
>\sin\tfrac{\delta}{2}
\end{equation}
\begin{equation} \label{hurdvk1}
(\gammaab+\gammaac+\gammabc-1)^2 ~>~ 2 (\gamma_{12}^2+\gamma_{13}^2+\gamma_{23}^2-1)
\end{equation}
\begin{equation} \label{hurdvk2}
4(\gammaab-1)(\gammaac-1) ~>~ (\gammaab+\gammaac-\gammabc-1)^2
\,.
\end{equation}

\end{theorem}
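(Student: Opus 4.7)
The theorem essentially organizes the extended computation that already appears in the section, so my plan is to consolidate that derivation into a clean four-step argument followed by a verification of the equivalences. First, I would posit that the circumgyrocenter $O$ admits a gyrobarycentric representation of the form \eqref{eicksfnein} with unknown coefficients $m_1,m_2,m_3$, and apply the Gyrobarycentric Representation Gyrocovariance Theorem through left gyrotranslations by $\om A_1$, $\om A_2$, $\om A_3$; this produces the three expressions for $\gamma_{_{\om A_i \op O}}$ displayed in \eqref{hdsjb03}. Second, I would impose the defining equigyrodistance condition \eqref{hdsjb04} together with the normalization $m_1+m_2+m_3=1$, reducing the problem to the $3{\times}3$ linear system \eqref{hdsjb06}; Cramer's rule on this system yields the coordinates \eqref{hdsjb07}, and clearing the common factor $1/D$ gives the gamma-factor form \eqref{hfjdv07bb}.

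Third, I would compute $m_O^2$ by substituting \eqref{hfjdv07bb} into the gyrobarycentric constant identity \eqref{yjuhdg}; the resulting expansion rearranges into the two-factor form \eqref{gkdnseu}, which is exactly $D_3(D_3-H_3)$ with $D_3$ and $H_3$ as in \eqref{dethkcf} and \eqref{dethkdg}. Fourth, to pass from the gamma-factor coordinates to the gyrotrigonometric coordinates \eqref{fjvdnw1}, I would substitute the $AAA\!\to\!SSS$ Conversion Law \eqref{jsf04st} into \eqref{hfjdv07bb}, collect the common factor $F'$ of \eqref{hdsjb10}, and then use sum-to-product identities with the defect $\delta=\pi-\alpha_1-\alpha_2-\alpha_3$ to recognize $-\sin\!\bigl((-\alpha_i+\alpha_j+\alpha_k)/2\bigr)=\cos(\alpha_i+\delta/2)$.

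For the four equivalent existence conditions, the anchor is the factorization $m_O^2=D_3(D_3-H_3)$. The key preliminary observation is that $D_3>0$ for every gyrotriangle: applying \eqref{jsf04st} gives the identity $D_3=16F/(\sin^2\alpha_1\sin^2\alpha_2\sin^2\alpha_3)$, and since $\delta\in(0,\pi)$ and $\alpha_i+\delta/2\in(0,\pi)$, each sine factor of $F$ in \eqref{grfde} is positive. Consequently $m_O^2>0$ is equivalent to $D_3>H_3$, which is \eqref{hurdvk1}; expanding the square in \eqref{hurdvk1} and cancelling terms reduces it to \eqref{hurdvk2}. Substituting \eqref{jsf04st} into $D_3-H_3$ and simplifying yields \eqref{hurdmsw}, while the alternative factorization \eqref{hdshsn} of $m_O^2$ as a multiple of $F-\sin^2(\delta/2)$ yields \eqref{hurdmstc}.

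The main obstacle is the block of gyrotrigonometric simplifications: turning the raw coordinates \eqref{hfjdv07bb} into the compact form $\cos(\alpha_k+\delta/2)\sin\alpha_k$, and independently confirming that the sum-form condition \eqref{hurdmsw} and the product-form condition \eqref{hurdmstc} describe the same half-space, both require careful repeated use of sum-to-product and half-angle identities in which the defect $\delta$ threads through every step. These manipulations are routine in principle but genuinely long, and, as the author notes in the introduction, are best checked with a computer algebra system; everything else in the proof is linear algebra applied to the system \eqref{hdsjb06} and bookkeeping on the factorization of $m_O^2$.
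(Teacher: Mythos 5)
Your proposal follows essentially the same route as the paper: the paper's ``proof'' of Theorem \ref{thmtivhvn} is precisely the derivation \eqref{hdsjb01}--\eqref{hurdmsts} that you consolidate, namely the gyrocovariance computation of the gamma factors, the linear system \eqref{hdsjb06} solved for \eqref{hdsjb07}, the factorization $m_O^2=D_3(D_3-H_3)$ via \eqref{yjuhdg}, and the $AAA\!\to\!SSS$ substitution for the gyrotrigonometric forms. The only quibble is that $D_3=16F^2/(\sin^2\alpha_1\sin^2\alpha_2\sin^2\alpha_3)$ (with $F^2$, not $F$, as in \eqref{hdshsn}), which does not affect your conclusion that $D_3>0$.
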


\section{Triangle Circumcenter}\label{dkncd2}
\index{circumgyrocenter}

In this section the gyrotriangle circumgyrocenter in
Fig.~\ref{fig241en2am}, p.~\pageref{fig241en2am},
will be translated into its Euclidean counterpart in
Fig.~\ref{fig241euc1m}.

 
\begin{figure}[t]  
 \centering         
\psfrag{O}[]{$\phantom{O}$}
\psfrag{A1}[]{$A_1$}
\psfrag{A2}[]{$A_2$}
\psfrag{A3}[]{$A_3$}
\psfrag{O}[]{$O$}
\psfrag{pa1}{$a_{23}$}
\psfrag{pa2}[]{$a_{31}=a_{13}$}
\psfrag{pa3}{$a_{12}$}
\psfrag{text1}[]{$\ab_{23}= -  A_2  +  A_3$}
\psfrag{text2}[]{$\ab_{31}= -  A_3  +  A_1$}
\psfrag{text3}[]{$\ab_{12}= -  A_1  +  A_2$}
\psfrag{al1}[]{$\alpha_1$}
\psfrag{al2}[]{$\alpha_2$}
\psfrag{al3}[]{$\alpha_3$}
\psfrag{formula01}{$a_{12}=\|\ab_{12}\|=\| -  A_1  +  A_2\|$}
\psfrag{formula02}{$a_{13}=\|\ab_{31}\|=\| -  A_3  +  A_1\|$}
\psfrag{formula03}{$a_{23}=\|\ab_{23}\|=\| -  A_2  +  A_3\|$}
\psfrag{formula04}[]{$\cos\alpha_1=\frac{ -  A_1 +  A_2}{\| -  A_1 +  A_2\|}
\ccdot \frac{ -  A_1 +  A_3}{\| -  A_1 +  A_3\|}$}
\psfrag{formula05}[]{$\cos\alpha_2=\frac{ -  A_2 +  A_1}{\| -  A_2 +  A_1\|}
\ccdot \frac{ -  A_2 +  A_3}{\| -  A_2 +  A_3\|}$}
\psfrag{formula06}[]{$\cos\alpha_3=\frac{ -  A_3 +  A_1}{\| -  A_3 +  A_1\|}
\ccdot \frac{ -  A_3 +  A_2}{\| -  A_3 +  A_2\|}$}
\psfrag{fig241}{$\hspace{0.8cm}
O = \frac{
\sum_{k=1}^{3} m_k A_k
}{
\sum_{k=1}^{3} m_k
}
$}
 \includegraphics[width=10cm]{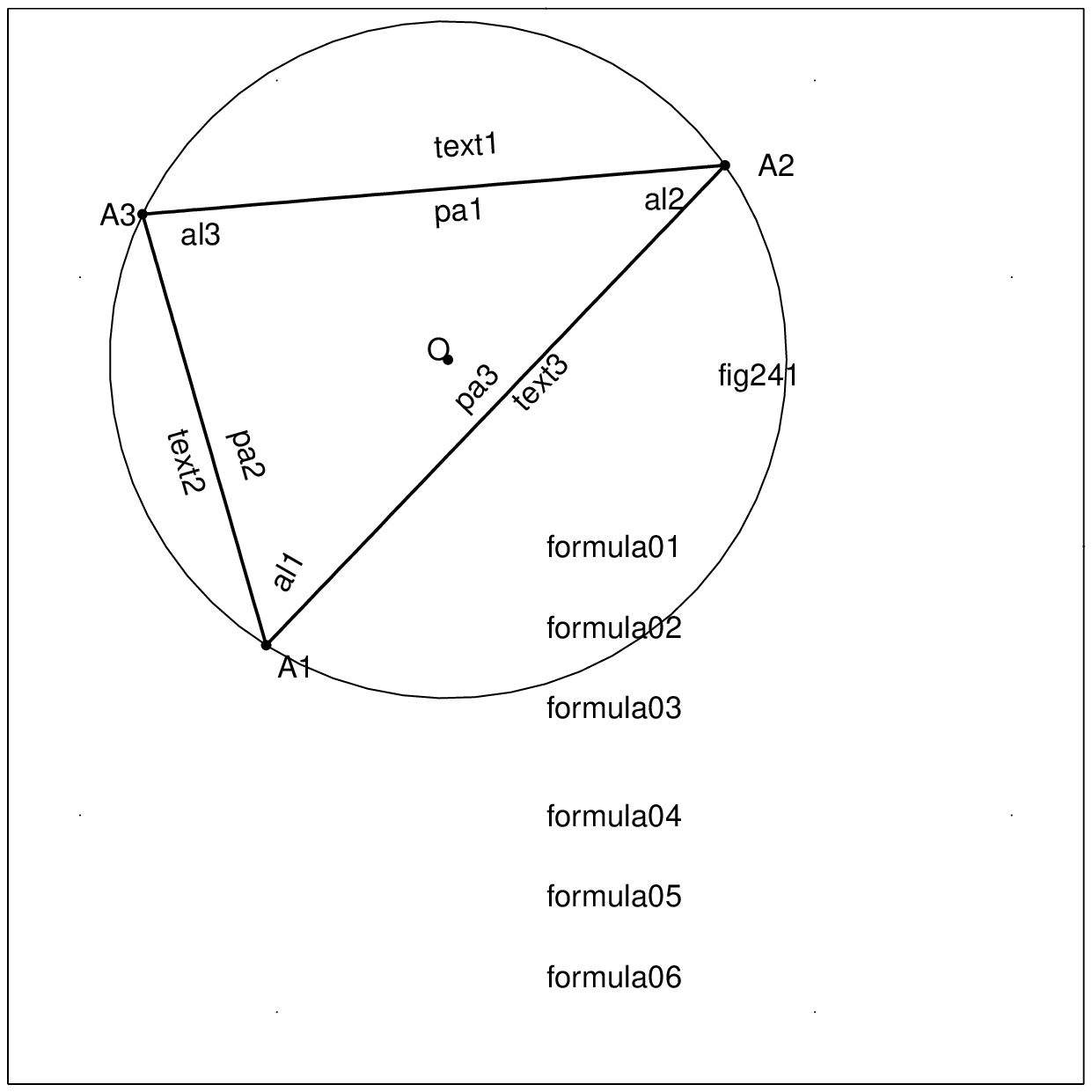}
\caption[A triangle circumcircle]{
The circumcircle, and
the circumcenter $O$, of triangle $A_1A_2A_3$ in a
Euclidean vector space $\Rn$, $n=2$, is shown along with its standard
notation.
Here $R=\| -  A_1  +  O\|$ $=$ $\| -  A_2  +  O\|$ $=$ $\| -  A_3  +  O\|$,
where $R$ is the triangle circumradius\index{circumradius}\index{circumcenter}
and $O$ is the triangle circumcenter, given by its
barycentric coordinate representation
\eqref{rukdis}
with respect to the barycentrically independent set $S=\{A_1,A_2,A_3\}$.
The hyperbolic counterpart of this figure is shown in
Fig.~\ref{fig241en2am}, p.~\pageref{fig241en2am}.
\label{fig241euc1m}}
\end{figure}

The gyrobarycentric representation \eqref{eicksfnein}
with gyrotrigonometric gyrobarycentric coordinates $(m_1:m_2:m_3)$
given by \eqref{fjvdnw1} remains invariant in form under the
Euclidean limit $s\rightarrow\infty$, so that it is valid in
Euclidean geometry as well, where $\delta=0$.
Hence, in the transition from hyperbolic geometry, where $\delta>0$,
to Euclidean geometry, where $\delta=0$,
the gyrobarycentric coordinates \eqref{fjvdnw1} reduce to the
barycentric coordinates
\begin{equation} \label{hkdfn3}
\begin{split}
m_1 &= \cos\alpha_1 \sin\alpha_1 = \half \sin2\alpha_1
\\[6pt]
m_2 &= \cos\alpha_2 \sin\alpha_2 = \half \sin2\alpha_2
\hspace{1.2cm} (Euclidean~Geometry)
\\[6pt]
m_3 &= \cos\alpha_3 \sin\alpha_3 = \half \sin2\alpha_3
\,.
\end{split}
\end{equation}

Hence, finally, a trigonometric barycentric representation of the circumcenter $O$
of triangle $A_1A_2A_3$ in $\Rn$, Fig.~\ref{fig241euc1m},
with respect to the barycentrically independent set $S=\{A_1,A_2,A_3\}\subset\Rn$
is given by Result \eqref{rukdis} of the following corollary of Theorem \ref{thmtivhvn},
which recovers a well-known result in Euclidean geometry \cite{kimberlingweb}.

\begin{corollary}\label{vdmfcb}
Let $\alpha_k$, $k=1,2,3$ and $O$ be the angles and circumcenter
of a triangle $A_1A_2A_3$ in a Euclidean space $\Rn$.
Then,
\begin{equation} \label{rukdis}
O = \frac{
\sin2\alpha_1 A_1 + \sin2\alpha_2 A_2 + \sin2\alpha_3 A_3
}{
\sin2\alpha_1     + \sin2\alpha_2     + \sin2\alpha_3
}
\,.
\end{equation}
\end{corollary}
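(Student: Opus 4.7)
The plan is to obtain the corollary as the Euclidean limit ($s\to\infty$) of the Circumgyrocenter Theorem \ref{thmtivhvn}. In this limit the ball $\Rsn$ expands to the whole space $\Rn$, the Lorentz gamma factors $\gamma_{A_k}$ all tend to $1$, Einstein addition $\op$ degenerates to ordinary vector addition $+$, and the gyroangle sum $\alpha_1+\alpha_2+\alpha_3$ tends to $\pi$, so the defect $\delta=\pi-\alpha_1-\alpha_2-\alpha_3$ tends to $0$. Consequently, the gyrobarycentric representation \eqref{eicksfnein} reduces to the ordinary barycentric representation
\begin{equation*}
O=\frac{m_1 A_1+m_2 A_2+m_3 A_3}{m_1+m_2+m_3},
\end{equation*}
and the gyrotrigonometric gyrobarycentric coordinates \eqref{fjvdnw1} reduce to
\begin{equation*}
m_k=\cos(\alpha_k+\tfrac{\delta}{2})\sin\alpha_k\ \longrightarrow\ \cos\alpha_k\sin\alpha_k=\tfrac{1}{2}\sin 2\alpha_k,
\end{equation*}
which is exactly \eqref{hkdfn3}.

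Next, I invoke the standard fact that in barycentric coordinates only ratios are relevant, so the common scalar factor $1/2$ can be discarded. This yields the barycentric coordinates $(\sin 2\alpha_1:\sin 2\alpha_2:\sin 2\alpha_3)$, and substituting into the Euclidean barycentric representation gives precisely formula \eqref{rukdis}, proving the corollary.

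There is essentially no obstacle: all of the hard work — solving the linear system \eqref{hdsjb06} for the barycentric weights, converting from gamma-factor expressions to gyrotrigonometric ones via the $AAA$-to-$SSS$ Conversion Law \eqref{jsf04st}, and establishing the representation constant $m_O$ — has already been performed in Theorem \ref{thmtivhvn}. The only ingredient needed is the observation that both the gyrobarycentric representation formula and its gyrotrigonometric weights are form-invariant under the Newtonian/Euclidean limit, so specializing to $\delta=0$ with $\gamma_{A_k}=1$ passes directly to the Euclidean circumcenter. The mild conceptual point worth noting is that in Euclidean geometry the circumcenter always exists, so the existence conditions \eqref{hurdmsw}--\eqref{hurdvk2} are automatically satisfied in the limit (they become the degenerate equality $\delta=0$) and no case analysis is needed.
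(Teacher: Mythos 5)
Your proposal is correct and follows essentially the same route as the paper: the paper obtains Corollary~\ref{vdmfcb} precisely by observing that the gyrotrigonometric representation \eqref{eicksfnein} with coordinates \eqref{fjvdnw1} is form-invariant under the Euclidean limit $s\to\infty$, where $\delta=0$ and the $m_k$ reduce to $\cos\alpha_k\sin\alpha_k=\tfrac{1}{2}\sin2\alpha_k$ as in \eqref{hkdfn3}, after which the common factor $\tfrac{1}{2}$ is dropped by homogeneity. No substantive difference from the paper's argument.
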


Theorem \ref{thmtivhvn} and its Corollary \ref{vdmfcb} form an elegant example
that illustrates the result that
\begin{enumerate}
\item \label{avigd1}
gyrotrigonometric gyrobarycentric coordinates of a point in
an Einstein gyrovector space $\Rsn$ survive unimpaired in
Euclidean geometry, where they form
\item \label{avigd2}
trigonometric barycentric coordinates of a point in
a corresponding Euclidean vector space $\Rn$.

The converse is, however, not valid since
\item \label{avigd3}
trigonometric barycentric coordinates of a point in a Euclidean vector space $\Rn$
may embody the Euclidean condition that the
triangle angle sum in $\pi$, so that they need not survive in
hyperbolic geometry.
\end{enumerate}

\section{Gyrotriangle Circumgyroradius}\label{drymv2}
\index{circumgyroradius}

The circumgyroradius $R$ of gyrotriangle $A_1A_2A_3$ with circumgyrocenter $O$
in an Einstein gyrovector space $(\Rsn,\op,\od)$,
shown in Figs.~\ref{fig241en2am} and \ref{fig241en5m},
is given by
\begin{equation} \label{vdknr}
R = \|\om A_1 \op O\| = \|\om A_2 \op O\| = \|\om A_3 \op O\|
\,.
\end{equation}
By \eqref{hdsjb02}, the circumgyroradius $R$ satisfies the equation
\begin{equation} \label{fskvm1}
\gamma_{_{R}}^{\phantom{1}} =
\gamma_{_{\|\om A_1 \op O\|}}^{\phantom{1}} =
\gamma_{_{\om A_1 \op O}}^{\phantom{1}} = \frac{
m_1+m_2\gammaab + m_3\gammaac}{\mO}
\,,
\end{equation}
where $m_1,m_2$ and $m_3$ are given by \eqref{hfjdv07bb}, and where $\mO$
is given by \eqref{gkdnsfh}.

Hence, following \eqref{fskvm1}, \eqref{hfjdv07bb} and \eqref{gkdnsfh},
with the notation for
$D_3$ and $H_3$ in \eqref{dethkcf}\,--\,\eqref{dethkdg},
we have
\begin{equation} \label{fskvm2}
\begin{split}
\gamma_R^2 &= \frac{
1+2\gammaab\gammaac\gammabc - \gamma_{12}^2-\gamma_{13}^2-\gamma_{23}^2
}{
1+2\gammaab\gammaac\gammabc - \gamma_{12}^2-\gamma_{13}^2-\gamma_{23}^2
- 2(\gammaab-1)(\gammaac-1)(\gammabc-1)
}
\\[6pt] &= \frac{
2\gammaab\gammaac\gammabc - (\gamma_{12}^2+\gamma_{13}^2+\gamma_{23}^2-1)
}{
(\gammaab+\gammaac+\gammabc-1)^2 - 2(\gamma_{12}^2+\gamma_{13}^2+\gamma_{23}^2-1)
}
\\[6pt] &
= \frac{D_3}{D_3-H_3}
\,.
\end{split}
\end{equation}

Interestingly, from \eqref{fskvm2} and \eqref{drekc} we obtain an elegant
relationship between
(i) the constant $\mO$ of the gyrobarycentric representation
of the circumgyrocenter $O$ of a gyrotriangle $A_1A_2A_3$
with respect to the gyrotriangle vertices, and
(ii) the gamma factor of the circumgyroradius $R$,
\begin{equation} \label{turkb}
\mO \gamma_R^{\phantom{O}} = D_3
\,.
\end{equation}

Following \eqref{fskvm2} we have, by \eqref{rugh1ds},
\begin{equation} \label{fskvm3}
R^2 = s^2 \frac{\gamma_R^2-1}{\gamma_R^2} = 2s^2 \frac{
(\gammaab-1) (\gammaac-1) (\gammabc-1)
}{
1+2\gammaab\gammaac\gammabc - \gamma_{12}^2-\gamma_{13}^2-\gamma_{23}^2
}
=s^2 \frac{H_3}{D_3}
\,.
\end{equation}

Hence, finally, the circumgyroradius $R$ of gyrotriangle $A_1A_2A_3$ in
Figs.~\ref{fig241en2am} and \ref{fig241en5m}
is given by
\begin{equation} \label{fskvm4}
R = \sqrt{2} s ~\sqrt{ \frac{
(\gammaab-1) (\gammaac-1) (\gammabc-1)
}{
1+2\gammaab\gammaac\gammabc - \gamma_{12}^2-\gamma_{13}^2-\gamma_{23}^2
}}
~~,
\end{equation}
implying
\begin{equation} \label{fskvm5}
\sqrt{ \frac{
(\gammaab+1) (\gammaac+1) (\gammabc+1)}{2}} ~R
= s~ \sqrt{ \frac{
(\gamma_{12}^2-1) (\gamma_{13}^2-1) (\gamma_{23}^2-1)
}{
1+2\gammaab\gammaac\gammabc - \gamma_{12}^2-\gamma_{13}^2-\gamma_{23}^2
}}
~~.
\end{equation}

  
\begin{figure}[t]  
 \centering         
%
\psfrag{O}[]{$\phantom{O}$}
\psfrag{A1}[]{$A_1$}
\psfrag{A2}[]{$A_2$}
\psfrag{A3}[]{$A_3$}
\psfrag{O}[]{$O$}
\psfrag{pa1}{$a_{23}$}
\psfrag{pa2}[]{$a_{31}=a_{13}$}
\psfrag{pa3}{$a_{12}$}
\psfrag{text1}[]{$\ab_{23}=\om A_2 \op A_3$}
\psfrag{text2}[]{$\ab_{31}=\om A_3 \op A_1$}
\psfrag{text3}[]{$\ab_{12}=\om A_1 \op A_2,~~~\gamma_{12}=\gamma_{\ab_{12}}$}
\psfrag{al1}[]{$\alpha_1$}
\psfrag{al2}[]{$\alpha_2$}
\psfrag{al3}[]{$\alpha_3$}
 \psfrag{formula00}[]{$R=\|\om A_1\op O\|=\|\om A_2\op O\|=\|\om A_3\op O\|$}
%
 \includegraphics[width=9cm]{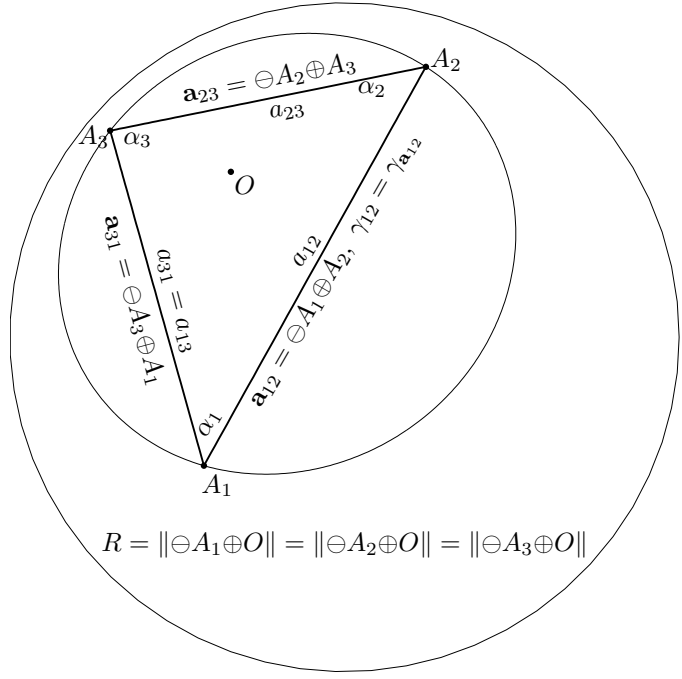}
\caption[The gyrotriangle circumgyrocircle, circumgyroradius]{
The circumgyrocircle of gyrotriangle $A_1A_2A_3$ in an Einstein gyrovector space
$(\Rsn,\op,\od)$ is shown for $n=2$. Its gyrocenter, $O$,
is the gyrotriangle circumgyrocenter,
given by its gyrobarycentric representation \eqref{eicksfnein}, p.~\pageref{eicksfnein},
and its gyroradius $R$ is the
gyrotriangle circumgyroradius, given by each of the equations
$R=\|\om A_k\op O\|$, $k=1,2,3$.
The gyrocircle is a flattened Euclidean circle, as shown in Fig.~\ref{fig266am}.
\label{fig241en5m}}
\end{figure}

Identity \eqref{fskvm5} captures a remarkable analogy between the
law of gyrosines and the law of sines.
Indeed, following \eqref{fskvm5}, the law of gyrosines \cite[Theorem 6.9, p.~140]{mybook05}
for gyrotriangle $A_1A_2A_3$ in Fig.~\ref{fig241en5m} is linked to the
circumgyroradius $R$ of the gyrotriangle by the equation
\index{law of gyrosines, circumgyroradius}
\begin{equation} \label{fskvm6}
\frac{\gammabc a_{23}}{\sin\alpha_1} =
\frac{\gammaac a_{13}}{\sin\alpha_2} =
\frac{\gammaab a_{12}}{\sin\alpha_3} =
\sqrt{ \frac{
(\gammaab+1) (\gammaac+1) (\gammabc+1)}{2}} ~R
\,,
\end{equation}
called the {\it extended law of gyrosines}.
\index{law of gyrosines, extended}

Following the gamma-gyrotrigonometric identity \cite[Eq.~(7.154), p.~189]{mybook05},
the extended law of gyrosines \eqref{fskvm6} can be written as
\begin{equation} \label{fskvm6a}
\frac{\gammabc a_{23}}{\sin\alpha_1} =
\frac{\gammaac a_{13}}{\sin\alpha_2} =
\frac{\gammaab a_{12}}{\sin\alpha_3} =
2\frac{
\sin(\alpha_1+\tfrac{\delta}{2})
\sin(\alpha_2+\tfrac{\delta}{2})
\sin(\alpha_3+\tfrac{\delta}{2})
}{
\sin\alpha_1 \sin\alpha_2 \sin\alpha_3
}
~R
\,,
\end{equation}
where $\delta=\pi-(\alpha_1+\alpha_2+\alpha_3)$ is the defect of
gyrotriangle $A_1A_2A_3$.

In the Euclidean limit of large $s$, $s\rightarrow\infty$, gamma factors tend to 1
and gyrotriangle defects tend to 0. Hence, in that limit,
the extended law of gyrosines \eqref{fskvm6} tends to the
well-known extended law of sines \cite[p.~87]{maor98},
\begin{equation} \label{fskvm7}
\frac{a_{23}}{\sin\alpha_1} =
\frac{a_{13}}{\sin\alpha_2} =
\frac{a_{12}}{\sin\alpha_3} = 2R
\hspace{1.2cm} {\rm (Euclidean~Geometry)}
\,.
\end{equation}

Formalizing the results in \eqref{fskvm6}\,--\,\eqref{fskvm6a} we have
the following theorem:
\index{gyrosines law, extended}
\begin{theorem}\label{thmhfkbc}
{\bf (Extended Law of Gyrosines).}\index{law of gyrosines, extended}
Let $A_1A_2A_3$ be a gyrotriangle in an Einstein gyrovector space $(\Rsn\op,\od)$
with gyroangles $\alpha_1,\alpha_2,\alpha_3$, side-gyrolengths $a_{23},a_{13},a_{12}$,
and circumgyroradius $R$, Fig.~\ref{fig241en5m}.

Then
\begin{equation} \label{fskwn6}
\frac{\gammabc a_{23}}{\sin\alpha_1} =
\frac{\gammaac a_{13}}{\sin\alpha_2} =
\frac{\gammaab a_{12}}{\sin\alpha_3} =
\sqrt{ \frac{
(\gammaab+1) (\gammaac+1) (\gammabc+1)}{2}} ~R
\end{equation}
and
\begin{equation} \label{fskwn6a}
\frac{\gammabc a_{23}}{\sin\alpha_1} =
\frac{\gammaac a_{13}}{\sin\alpha_2} =
\frac{\gammaab a_{12}}{\sin\alpha_3} =
2\frac{
\sin(\alpha_1+\tfrac{\delta}{2})
\sin(\alpha_2+\tfrac{\delta}{2})
\sin(\alpha_3+\tfrac{\delta}{2})
}{
\sin\alpha_1 \sin\alpha_2 \sin\alpha_3
}
~R
\,.
\end{equation}
\end{theorem}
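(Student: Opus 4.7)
The plan is to deduce both \eqref{fskwn6} and \eqref{fskwn6a} by stitching the law of gyrosines (Theorem 6.9 of \cite{mybook05}) together with the circumgyroradius formula \eqref{fskvm5}. The law of gyrosines already supplies
\[
\frac{\gammabc a_{23}}{\sin\alpha_1} \;=\; \frac{\gammaac a_{13}}{\sin\alpha_2} \;=\; \frac{\gammaab a_{12}}{\sin\alpha_3} \;=:\; K,
\]
so the only remaining content of the theorem is the identification of $K$ with the two expressions involving the circumgyroradius $R$.

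For the first identification I would rewrite the right-hand side of \eqref{fskvm5} using $\gamma_{ij}^2-1 = \gamma_{ij}^2 a_{ij}^2/s^2$ from \eqref{rugh1ds} to obtain
\[
\sqrt{\tfrac{(\gammaab+1)(\gammaac+1)(\gammabc+1)}{2}}\,R \;=\; \frac{\gammaab\gammaac\gammabc\, a_{12}a_{13}a_{23}}{s^2\sqrt{D_3}}.
\]
It then suffices to show $\sin\alpha_1 = s^2\sqrt{D_3}/(\gammaab\gammaac\,a_{12}a_{13})$. I would derive this from the hyperbolic law of gyrocosines, which gives $\cos\alpha_1 = (\gammaab\gammaac-\gammabc)/(\sqrt{\gammaab^2-1}\sqrt{\gammaac^2-1})$, via $\sin^2\alpha_1 = 1-\cos^2\alpha_1$; a short expansion produces $(\gammaab^2-1)(\gammaac^2-1)-(\gammaab\gammaac-\gammabc)^2 = D_3$ in the numerator, and a second application of \eqref{rugh1ds} converts the denominator into $\gammaab^2\gammaac^2 a_{12}^2 a_{13}^2/s^4$. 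Symmetry (or the same argument at the other two vertices) ensures all three ratios yield the common value, completing \eqref{fskwn6}.

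For \eqref{fskwn6a} I would invoke the gamma-gyrotrigonometric identity \cite[Eq.~(7.154), p.~189]{mybook05}, which asserts
\[
\sqrt{\tfrac{(\gammaab+1)(\gammaac+1)(\gammabc+1)}{2}} \;=\; \frac{2\sin(\alpha_1+\tfrac{\delta}{2})\sin(\alpha_2+\tfrac{\delta}{2})\sin(\alpha_3+\tfrac{\delta}{2})}{\sin\alpha_1\sin\alpha_2\sin\alpha_3}.
\]
Substituting this into \eqref{fskwn6} yields \eqref{fskwn6a} immediately. The main obstacle is the algebraic identity $(\gammaab^2-1)(\gammaac^2-1)-(\gammaab\gammaac-\gammabc)^2 = D_3$ that links the gyrocosine-based expression for $\sin^2\alpha_1$ to the symmetric quantity $D_3$; once this matching is in place, both parts of the theorem collapse into routine rearrangement.
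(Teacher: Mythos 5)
Your proposal is correct and follows essentially the same route as the paper: the paper also obtains \eqref{fskwn6} by combining the law of gyrosines with the circumgyroradius identity \eqref{fskvm5}, and then passes to \eqref{fskwn6a} via the gamma-gyrotrigonometric identity \cite[Eq.~(7.154), p.~189]{mybook05}. The only difference is that you make explicit (via the law of gyrocosines and the identity $(\gamma_{12}^2-1)(\gamma_{13}^2-1)-(\gamma_{12}^{\phantom{1}}\gamma_{13}^{\phantom{1}}-\gamma_{23}^{\phantom{1}})^2=D_3$) the verification that the common ratio equals $\sqrt{(\gamma_{12}^{\phantom{1}}+1)(\gamma_{13}^{\phantom{1}}+1)(\gamma_{23}^{\phantom{1}}+1)/2}\,R$, a step the paper leaves implicit.
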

\index{law of gyrosines, extended}

Interestingly, the gyrotriangle circumgyroradius $R$ has an elegant representation
in terms of its gyrotriangle gyroangles. Indeed, expressing the gamma factors
in \eqref{fskvm3} in terms of the gyrotriangle gyroangles $\alpha_k$, $k=1,2,3$,
takes the gyrotrigonometric form
\begin{equation} \label{fudsk}
\begin{split}
\frac{R^2}{s^2} &= \frac{\cos\frac{\alpha_1+\alpha_2+\alpha_3}{2}
}{
\cos\frac{ \alpha_1-\alpha_2-\alpha_3}{2}
\cos\frac{-\alpha_1+\alpha_2-\alpha_3}{2}
\cos\frac{-\alpha_1-\alpha_2+\alpha_3}{2}
}
\\[8pt] &= \frac{\sin\tfrac{\delta}{2}}{
\sin(\alpha_1+\tfrac{\delta}{2})
\sin(\alpha_2+\tfrac{\delta}{2})
\sin(\alpha_3+\tfrac{\delta}{2})
}
\\[8pt] &= \frac{F(\alpha_1,\alpha_2,\alpha_3)
}{
\sin^2(\alpha_1+\tfrac{\delta}{2})
\sin^2(\alpha_2+\tfrac{\delta}{2})
\sin^2(\alpha_3+\tfrac{\delta}{2})
}
\end{split}
\end{equation}
in any Einstein gyrovector space $(\Rsn,\op,\od)$, where
$F=F(\alpha_1,\alpha_2,\alpha_3)$ is given by
\eqref{grfde} \cite[Eq.~(7.144), p.~187]{mybook05}.
Equation \eqref{fudsk} can be written conveniently as
\begin{equation} \label{fudskf}
\frac{R}{s} = \frac{\sqrt{F}}{
\sin(\alpha_1+\tfrac{\delta}{2})
\sin(\alpha_2+\tfrac{\delta}{2})
\sin(\alpha_3+\tfrac{\delta}{2})
}
\end{equation}

Following \eqref{fudskf}, the first equation in \cite[Eq.~(7.143), p.~187]{mybook05}
and \cite[Eq.~(7.150), p.~188]{mybook05}
we have the equation
\begin{equation} \label{dagim4}
\sin(\alpha_3+\tfrac{\delta}{2}) = \frac{
\gammaab a_{12}} {R(\gammaab+1)}
\,,
\end{equation}
which will prove useful in \eqref{dugit01}, p.~\pageref{dugit01}.

In the Euclidean limit, $s\rightarrow\infty$, the gyrotriangle defect
tends to 0, $\delta\rightarrow0$, so that each side of
\eqref{fudsk} tends to 0.

An important relation that results from \eqref{fudsk} is formalized in the
following Theorem:

\begin{theorem}\label{thmhdfhc}
Let $\alpha_k$, $k=1,2,3$, and $R$ be the gyroangles and circumgyroradius of
a gyrotriangle $A_1A_2A_3$ in an Einstein gyrovector space $(\Rsn\op,\od)$.
Then
\begin{equation} \label{kmdnbe}
s^2\sin\tfrac{\delta}{2} = R^2
\sin(\alpha_1+\tfrac{\delta}{2})
\sin(\alpha_2+\tfrac{\delta}{2})
\sin(\alpha_3+\tfrac{\delta}{2})
\end{equation}
and
\begin{equation} \label{kmdnbe00}
s \sqrt{F(\alpha_1,\alpha_2,\alpha_3)}
=
R  \sin(\alpha_1+\tfrac{\delta}{2})
\sin(\alpha_2+\tfrac{\delta}{2}) \sin(\alpha_3+\tfrac{\delta}{2})
\end{equation}
where $F(\alpha_1,\alpha_2,\alpha_3)$ is given by
\eqref{grfde} \cite[Eq.~(7.144), p.~187]{mybook05}.
\end{theorem}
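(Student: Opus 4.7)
The proof is essentially an algebraic rearrangement of the circumgyroradius formula \eqref{fudsk} that was already established. The plan is to read the two equations \eqref{kmdnbe} and \eqref{kmdnbe00} directly off the two displayed forms of $R^2/s^2$ in \eqref{fudsk}, using the definition \eqref{grfde} of $F(\alpha_1,\alpha_2,\alpha_3)$ to move between them.

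First I would recall from \eqref{fudsk} the middle expression
\[
\frac{R^2}{s^2} \;=\; \frac{\sin\tfrac{\delta}{2}}{
\sin(\alpha_1+\tfrac{\delta}{2})\,
\sin(\alpha_2+\tfrac{\delta}{2})\,
\sin(\alpha_3+\tfrac{\delta}{2})}.
\]
Clearing the denominator on the right (each factor $\sin(\alpha_k+\delta/2)$ is positive, since the gyrotriangle gyroangles and defect are positive and $\alpha_k+\delta/2<\pi/2$ is not needed — only strict positivity — and since $R^2>0$ forces the right-hand side to be positive) and multiplying through by $s^2$ immediately yields \eqref{kmdnbe}.

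Next I would take the third expression in \eqref{fudsk},
\[
\frac{R^2}{s^2} \;=\; \frac{F(\alpha_1,\alpha_2,\alpha_3)}{
\sin^2(\alpha_1+\tfrac{\delta}{2})\,
\sin^2(\alpha_2+\tfrac{\delta}{2})\,
\sin^2(\alpha_3+\tfrac{\delta}{2})},
\]
which follows from the middle expression together with the definition \eqref{grfde} of $F$ upon multiplying the numerator and denominator by $\sin(\alpha_1+\tfrac{\delta}{2})\sin(\alpha_2+\tfrac{\delta}{2})\sin(\alpha_3+\tfrac{\delta}{2})$. Rearranging gives
\[
R^2\,\sin^2(\alpha_1+\tfrac{\delta}{2})\,\sin^2(\alpha_2+\tfrac{\delta}{2})\,\sin^2(\alpha_3+\tfrac{\delta}{2}) \;=\; s^2\,F(\alpha_1,\alpha_2,\alpha_3).
\]
Taking the positive square root of both sides, and noting that $R,s>0$, $F>0$ (equivalently, that a circumgyrocircle exists, which is implicit in the theorem), and each $\sin(\alpha_k+\delta/2)>0$, yields \eqref{kmdnbe00}.

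There is no substantive obstacle here: the entire content of Theorem \ref{thmhdfhc} has already been packaged in \eqref{fudsk}, and both target identities are the same formula written in two different algebraic normalizations. The only thing worth remarking on is the consistency check: equating the right-hand sides of \eqref{kmdnbe} and \eqref{kmdnbe00} after dividing by $R\sin(\alpha_1+\tfrac{\delta}{2})\sin(\alpha_2+\tfrac{\delta}{2})\sin(\alpha_3+\tfrac{\delta}{2})$ recovers exactly the definition \eqref{grfde} of $F$, so the two formulas are genuinely equivalent restatements and \eqref{kmdnbe00} follows from \eqref{kmdnbe} via the definition of $F$ alone, without any further gyrotrigonometric machinery.
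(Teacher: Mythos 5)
Your proposal is correct and follows essentially the same route as the paper, which also derives \eqref{kmdnbe} immediately from \eqref{fudsk} and then obtains \eqref{kmdnbe00} from \eqref{kmdnbe} together with the definition \eqref{grfde} of $F$. The extra care you take with positivity of the sine factors and of $F$ is a harmless elaboration of the same argument.
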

\begin{proof}
Identity \eqref{kmdnbe} follows immediately from \eqref{fudsk}, and
Identity \eqref{kmdnbe00} follows from \eqref{kmdnbe} and the
definition of $F$.
\end{proof}

The Euclidean limit, $s\rightarrow\infty$, of the
left-hand side of each of the equations \eqref{kmdnbe} and \eqref{kmdnbe00}
of Theorem \ref{thmhdfhc}
is indeterminate, being a limit of type $\infty\ccdot0$.
In that limit, the gyrotriangle defect $\delta$ tends to zero,
so that the right-hand side of \eqref{kmdnbe} tends to
$R^2\sin\alpha_1\sin\alpha_2\sin\alpha_3$.

Theorem \ref{thmhdfhc} gives rise to useful results,
as indicated by the following theorem.

\begin{theorem}\label{thmdmsk1}
Let $A_1A_2A_3$ be a gyrotriangle that possesses a circumgyroradius $R$
in an Einstein gyrovector space $(\Rsn\op,\od)$,
Fig.~\ref{fig241en2am}, p.~\pageref{fig241en2am}.
Then, in the gyrotriangle index notation \eqref{indexnotation},
\begin{equation} \label{dkotie01}
\begin{split}
a_{12} &= \frac{
2R \sin(\alpha_1+\tfrac{\delta}{2})
\sin(\alpha_2+\tfrac{\delta}{2})
\sin(\alpha_3+\tfrac{\delta}{2})
}{
\cos\alpha_3+\cos\alpha_1\cos\alpha_2
}
\\[8pt]
a_{13} &= \frac{
2R \sin(\alpha_1+\tfrac{\delta}{2})
\sin(\alpha_2+\tfrac{\delta}{2})
\sin(\alpha_3+\tfrac{\delta}{2})
}{
\cos\alpha_2+\cos\alpha_1\cos\alpha_3
}
\\[8pt]
a_{23} &= \frac{
2R \sin(\alpha_1+\tfrac{\delta}{2})
\sin(\alpha_2+\tfrac{\delta}{2})
\sin(\alpha_3+\tfrac{\delta}{2})
}{
\cos\alpha_1+\cos\alpha_2\cos\alpha_3
}
\end{split}
\end{equation}
and
\begin{equation} \label{dkotie02}
\begin{split}
\gammaab a_{12} &= \frac{
2R \sin(\alpha_1+\tfrac{\delta}{2})
\sin(\alpha_2+\tfrac{\delta}{2})
\sin(\alpha_3+\tfrac{\delta}{2})
}{
\sin\alpha_1\sin\alpha_2
}
\\[8pt]
\gammaac a_{13} &= \frac{
2R \sin(\alpha_1+\tfrac{\delta}{2})
\sin(\alpha_2+\tfrac{\delta}{2})
\sin(\alpha_3+\tfrac{\delta}{2})
}{
\sin\alpha_1\sin\alpha_3
}
\\[8pt]
\gammabc a_{23} &= \frac{
2R \sin(\alpha_1+\tfrac{\delta}{2})
\sin(\alpha_2+\tfrac{\delta}{2})
\sin(\alpha_3+\tfrac{\delta}{2})
}{
\sin\alpha_2\sin\alpha_3
}
\,,
\end{split}
\end{equation}
where $a_{ij}=\|\om A_i \op A_j\|$, $1\le i,j\le3$.
\end{theorem}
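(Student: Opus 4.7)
The plan is to derive both sets of identities directly from tools already established in the excerpt, without any new geometric construction. The key observation is that the right-hand sides of \eqref{dkotie02} are precisely the common value of the ratios appearing in the extended law of gyrosines \eqref{fskwn6a} of Theorem \ref{thmhfkbc}, multiplied through by the appropriate $\sin\alpha_i$ factors; and the identities \eqref{dkotie01} are obtained from \eqref{dkotie02} by dividing by the corresponding gamma factors, which are given gyrotrigonometrically by the $AAA\!\to\!SSS$ Conversion Law \eqref{jsf04st}.

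First I would prove \eqref{dkotie02}. From the extended law of gyrosines \eqref{fskwn6a},
\[
\frac{\gammaab a_{12}}{\sin\alpha_3}
= 2R\,\frac{\sin(\alpha_1+\tfrac{\delta}{2})\sin(\alpha_2+\tfrac{\delta}{2})\sin(\alpha_3+\tfrac{\delta}{2})}{\sin\alpha_1\sin\alpha_2\sin\alpha_3},
\]
and clearing the $\sin\alpha_3$ on the left by the one on the right yields the third line of \eqref{dkotie02} immediately; the other two lines follow by the same manipulation applied to the other two ratios in \eqref{fskwn6a}. This step is essentially a rewriting of Theorem \ref{thmhfkbc} and requires no computation.

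Next I would pass from \eqref{dkotie02} to \eqref{dkotie01}. For each side $a_{ij}$ I divide the corresponding line of \eqref{dkotie02} by $\gamma_{ij}$, using the $AAA\!\to\!SSS$ Conversion Law \eqref{jsf04st}. For instance,
\[
a_{12} \;=\; \frac{\gammaab a_{12}}{\gammaab}
\;=\; \frac{2R\,\sin(\alpha_1+\tfrac{\delta}{2})\sin(\alpha_2+\tfrac{\delta}{2})\sin(\alpha_3+\tfrac{\delta}{2})}{\sin\alpha_1\sin\alpha_2}\cdot\frac{\sin\alpha_1\sin\alpha_2}{\cos\alpha_3+\cos\alpha_1\cos\alpha_2},
\]
and the two $\sin\alpha_1\sin\alpha_2$ factors cancel to give the third line of \eqref{dkotie01}. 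The other two lines of \eqref{dkotie01} follow by the same cancellation applied to $a_{13}=\gammaac a_{13}/\gammaac$ and $a_{23}=\gammabc a_{23}/\gammabc$.

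There is no real obstacle here; everything is algebraic rearrangement of previously established identities. The only point requiring a bit of care is the implicit hypothesis that the circumgyroradius $R$ exists (so that $\sin(\alpha_i+\delta/2)>0$ for each $i$ and the quantities on the right-hand sides are well defined), which is subsumed under the assumption of the theorem that $A_1A_2A_3$ possesses a circumgyroradius; and the observation that the denominators $\cos\alpha_k+\cos\alpha_i\cos\alpha_j=\gamma_{ij}\sin\alpha_i\sin\alpha_j$ are strictly positive, which follows from \eqref{jsf04st} since $\gamma_{ij}>1$ and $\sin\alpha_i,\sin\alpha_j>0$.
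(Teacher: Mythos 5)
Your proof is correct and follows essentially the same route as the paper's: both reduce the theorem to an algebraic rearrangement of previously established identities, the paper citing \eqref{kmdnbe00} together with the $AAA$-to-$SSS$ conversion law, while you start from the equivalent extended law of gyrosines \eqref{fskwn6a} (itself just the first line of \eqref{dkotie02} divided by $\sin\alpha_3$) and then obtain \eqref{dkotie01} by dividing by $\gamma_{ij}$ via \eqref{jsf04st}. The only blemish is a harmless labeling slip: the $a_{12}$ and $\gamma_{12}a_{12}$ identities you derive are the first lines of \eqref{dkotie01} and \eqref{dkotie02}, not the third.
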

\begin{proof}
The identities in \eqref{dkotie01} and \eqref{dkotie02} follow
from Identity \eqref{kmdnbe00} of Theorem \ref{thmhdfhc} and from
the $AAA$ to $SSS$ Conversion Law \cite[Theorem 6.8, p.~140]{mybook05}.
\end{proof}

Comparing the first equation in each of \eqref{dkotie01} and \eqref{dkotie02}
yields the gyrotriangle identity
\begin{equation} \label{dkotie02d5}
\cos\alpha_3 + \cos\alpha_1\cos\alpha_2 = \gammaab \sin\alpha_1\sin\alpha_2
\hspace{1.2cm} {\rm (Hyperbolic~Geometry)}
\end{equation}
that any gyrotriangle $A_1A_2A_3$
in an Einstein gyrovector space $(\Rsn,\op,\od)$ obeys.

With $\delta=0$, Theorem \ref{thmdmsk1} specializes to the following
corresponding theorem in Euclidean geometry:

\begin{theorem}\label{thmdmsk1euc}
Let $A_1A_2A_3$ be a triangle with a circumradius $R$
in a Euclidean vector space $\Rn$, Fig.~\ref{fig241euc1m}, p.~\pageref{fig241euc1m}.
Then, in the triangle index notation,
\begin{equation} \label{dkotie01euc}
\begin{split}
a_{12} &= \frac{
2R \sin\alpha_1\sin\alpha_2\sin\alpha_3
}{
\cos\alpha_3+\cos\alpha_1\cos\alpha_2
}
\\[8pt]
a_{13} &= \frac{
2R \sin\alpha_1\sin\alpha_2\sin\alpha_3
}{
\cos\alpha_2+\cos\alpha_1\cos\alpha_3
}
\\[8pt]
a_{23} &= \frac{
2R \sin\alpha_1\sin\alpha_2\sin\alpha_3
}{
\cos\alpha_1+\cos\alpha_2\cos\alpha_3
}
\end{split}
\end{equation}
and
\begin{equation} \label{dkotie02euc}
\begin{split}
a_{12} &= \frac{
2R \sin\alpha_1\sin\alpha_2\sin\alpha_3
}{
\sin\alpha_1\sin\alpha_2
}
\\[8pt]
a_{13} &= \frac{
2R \sin\alpha_1\sin\alpha_2\sin\alpha_3
}{
\sin\alpha_1\sin\alpha_3
}
\\[8pt]
a_{23} &= \frac{
2R \sin\alpha_1\sin\alpha_2\sin\alpha_3
}{
\sin\alpha_2\sin\alpha_3
}
\,,
\end{split}
\end{equation}
where $a_{ij}=\|-A_i + A_j\|$, $1\le i,j\le3$.
\end{theorem}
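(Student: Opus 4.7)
The plan is to obtain Theorem \ref{thmdmsk1euc} as the Euclidean specialization of Theorem \ref{thmdmsk1}. In the Newtonian limit $s\to\infty$, gamma factors tend to $1$ and the gyrotriangle defect $\delta$ tends to $0$, so that $\sin(\alpha_k+\tfrac{\delta}{2})\to\sin\alpha_k$ for $k=1,2,3$, while the gyrotriangle index notation $a_{ij}=\|\om A_i\op A_j\|$ reduces to the Euclidean notation $a_{ij}=\|-A_i+A_j\|$. Under this limit the two batches of identities \eqref{dkotie01} and \eqref{dkotie02} from Theorem \ref{thmdmsk1} transform, factor by factor, into \eqref{dkotie01euc} and \eqref{dkotie02euc}, respectively; this is exactly the passage already advertised in the remark following Theorem \ref{thmhdfhc}, where it is noted that each $\sin(\alpha_k+\tfrac{\delta}{2})$ on the right-hand sides becomes $\sin\alpha_k$.

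Concretely, I would first apply the limit to \eqref{dkotie02}: the left-hand prefactor $\gammaab$ (etc.) tends to $1$, and the three $\sin(\alpha_k+\tfrac{\delta}{2})$ in the numerator tend to $\sin\alpha_k$, giving \eqref{dkotie02euc}. The resulting right-hand sides collapse by cancellation of $\sin\alpha_i\sin\alpha_j$ to the classical Law of Sines form $a_{ij}=2R\sin\alpha_k$, which is the expected sanity check. Next, the identical limit applied to \eqref{dkotie01} yields \eqref{dkotie01euc} directly, with no prefactor to track.

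Since \eqref{dkotie01euc} and \eqref{dkotie02euc} share a common numerator but exhibit different-looking denominators, a small consistency check closes the argument. This follows from the Euclidean angle-sum identity $\alpha_1+\alpha_2+\alpha_3=\pi$, which yields $\cos\alpha_3=-\cos(\alpha_1+\alpha_2)=\sin\alpha_1\sin\alpha_2-\cos\alpha_1\cos\alpha_2$, hence $\cos\alpha_3+\cos\alpha_1\cos\alpha_2=\sin\alpha_1\sin\alpha_2$, and cyclic permutations. This is precisely the $\gammaab\to 1$ collapse of the hyperbolic companion identity \eqref{dkotie02d5}. No real obstacle is expected: all the substantive work has already been carried out in Theorem \ref{thmdmsk1} via the $AAA$ to $SSS$ Conversion Law of \cite[Theorem 6.8, p.~140]{mybook05}, and the remaining step is bookkeeping of an elementary limit together with the Euclidean angle-sum identity.
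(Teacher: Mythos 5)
Your proposal is correct and follows essentially the same route as the paper, which obtains Theorem \ref{thmdmsk1euc} simply by setting $\delta=0$ (equivalently, taking the Euclidean limit $s\to\infty$ with gamma factors tending to $1$) in Theorem \ref{thmdmsk1}. Your consistency check via $\cos\alpha_3+\cos\alpha_1\cos\alpha_2=\sin\alpha_1\sin\alpha_2$ is precisely the observation the paper records in \eqref{dukin1} immediately after the theorem.
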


As expected from Identities \eqref{dkotie01euc} and \eqref{dkotie02euc}
of Theorem \ref{thmdmsk1euc},
the triangle angles obey the triangle identity $\alpha_1+\alpha_2+\alpha_3=\pi$,
so that
\begin{equation} \label{dukin1}
\cos\alpha_3 + \cos\alpha_1\cos\alpha_2 = \sin\alpha_1\sin\alpha_2
\hspace{1.2cm} {\rm (Euclidean~Geometry)}
\end{equation}
for any triangle $A_1A_2A_3$ in the Euclidean space $\Rn$.
Indeed, the triangle identity \eqref{dukin1} is the
Euclidean counterpart of the gyrotriangle identity \eqref{dkotie02d5}.

\section{Triangle Circumradius}\label{drymf2}
\index{circumradius}

The circumradius $R_{euc}$ of triangle $A_1A_2A_3$ with circumcenter $O$ in
a Euclidean space $\Rn$,
shown in Fig.~\ref{fig335eucm}, p.~\pageref{fig335eucm},
is given by
\begin{equation} \label{vdknreuc}
R_{euc} = \|-A_1 + O\| = \|-A_2 + O\| = \|-A_3 + O\|
\,.
\end{equation}
We will determine the triangle circumradius in a Euclidean space $\Rn$
as the Euclidean limit of the circumgyroradius of gyrotriangle $A_1A_2A_3$
in an Einstein gyrovector space $(\Rsn,\op,\od)$, that is,
$R_{euc}=\lim_{s\rightarrow \infty} R$.
Accordingly, let $A_1A_2A_3$ be a gyrotriangle that possesses a
circumgyrocircle in the Einstein gyrovector space, and let $R$ be
the gyrotriangle circumgyroradius.

Then, $R$ is given by \eqref{fskvm3}, which can be written equivalently as
\begin{equation} \label{adamtk01}
R^2 = \frac{2s^2(\gammabc-1)}{D}
\end{equation}
where
\begin{equation} \label{adamtk02}
\begin{split}
D = &2\left\{1+\frac{\gammabc-1}{\gammaac-1}+\frac{\gammabc-1}{\gammaab-1}
+(\gammabc-1) \right\}
\\[8pt]
&-\left\{\frac{\gammaab-1}{\gammabc-1}+\frac{\gammabc-1}{\gammaab-1}
+\frac{(\gammabc-1)^2}{(\gammaab-1)(\gammaac-1)}\right\}
\,,
\end{split}
\end{equation}
as one can check by straightforward algebra.

Expressing $R$ by \eqref{adamtk01}\,--\,\eqref{adamtk02},
rather than \eqref{fskvm3}, enables its
Euclidean limit to be determined manifestly.

Indeed, we have the following Lemma about Euclidean limits:
\begin{lemma}\label{lemwindk}
Let $A_i,A_j\in\Rsn\subset\Rn$ be two distinct points
in an Einstein gyrovector space $(\Rsn,\op,\od)$,
and let
\begin{equation} \label{kurcud}
\begin{split}
a_{ij}^{ein} &= \|\om A_i \op A_j\|
\\[8pt]
\gamma_{ij}^{\phantom{O}} &= \gamma_{\om A_i \op A_j}^{\phantom{O}}
\\[8pt]
a_{ij} &= \|-A_i + A_j\|
\,.
\end{split}
\end{equation}
Then,
\begin{equation} \label{tushiat}
\lim_{s\rightarrow \infty} s^2(\gammaij-1) = \half a_{ij}^2
\end{equation}
and
\begin{equation} \label{tushiats}
\lim_{s\rightarrow \infty} s^2(\gamma_{ij}^2-1) = a_{ij}^2
\,.
\end{equation}
\end{lemma}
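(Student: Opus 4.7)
The plan is to reduce both limits to a direct Taylor expansion of the Lorentz gamma factor, once we know that the Einstein-subtraction gyrolength $a_{ij}^{ein}=\|\om A_i \op A_j\|$ converges to its Euclidean analog $a_{ij}=\|-A_i+A_j\|$ in the Newtonian limit $s\to\infty$. The latter fact is built into the excerpt's discussion right after \eqref{v72gs}: as $s\to\infty$ the ball $\Rsn$ expands to all of $\Rn$, and Einstein addition \eqref{eq01} reduces term-by-term to ordinary vector addition, so $\om A_i\op A_j \to -A_i+A_j$ in $\Rn$. By continuity of the Euclidean norm, $a_{ij}^{ein}\to a_{ij}$, and for $s$ large enough the quantity $a_{ij}^{ein}/s$ is arbitrarily small.

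First I would prove \eqref{tushiats}, which is the cleaner of the two. From the definition \eqref{v72gs} with $c$ replaced by $s$ we have
\begin{equation*}
\gamma_{ij}^2 \;=\; \frac{1}{1-(a_{ij}^{ein})^2/s^2},
\qquad\text{hence}\qquad
\gamma_{ij}^2-1 \;=\; \frac{(a_{ij}^{ein})^2/s^2}{1-(a_{ij}^{ein})^2/s^2}.
\end{equation*}
Multiplying by $s^2$ yields $s^2(\gamma_{ij}^2-1)=(a_{ij}^{ein})^2/(1-(a_{ij}^{ein})^2/s^2)$. Since $a_{ij}^{ein}\to a_{ij}$ and $(a_{ij}^{ein})^2/s^2\to 0$ as $s\to\infty$, the right-hand side tends to $a_{ij}^2$, which is \eqref{tushiats}.

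For \eqref{tushiat} I would factor $\gamma_{ij}^2-1=(\gamma_{ij}-1)(\gamma_{ij}+1)$, so that
\begin{equation*}
s^2(\gamma_{ij}-1) \;=\; \frac{s^2(\gamma_{ij}^2-1)}{\gamma_{ij}+1}.
\end{equation*}
Using \eqref{tushiats} for the numerator and observing that $\gamma_{ij}\to 1$ (since $a_{ij}^{ein}/s\to 0$), the denominator tends to $2$, and the quotient tends to $\tfrac{1}{2}a_{ij}^2$, as required. Alternatively, one can expand $(1-x)^{-1/2}=1+\tfrac{1}{2}x+O(x^2)$ at $x=(a_{ij}^{ein})^2/s^2$ and read off the leading coefficient directly.

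The only genuine content beyond a one-line Taylor expansion is the convergence $a_{ij}^{ein}\to a_{ij}$, so that is the step on which the argument rests; once it is in hand, both limits follow immediately. Since this convergence is already quoted in the text immediately after \eqref{v72gs}, there is no real obstacle.
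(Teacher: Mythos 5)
Your proposal is correct and follows essentially the same route as the paper: both arguments reduce to the identity $s^2(\gamma_{ij}^2-1)=\gamma_{ij}^2(a_{ij}^{ein})^2$ together with the factorization $\gamma_{ij}^2-1=(\gamma_{ij}-1)(\gamma_{ij}+1)$ and the limits $\gamma_{ij}\to1$, $a_{ij}^{ein}\to a_{ij}$. The only (immaterial) difference is that you prove \eqref{tushiats} first and deduce \eqref{tushiat} from it, whereas the paper argues in the opposite order; you are also somewhat more explicit than the paper in justifying the convergence $a_{ij}^{ein}\to a_{ij}$.
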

\begin{proof}
In the Euclidean limit, $s\rightarrow\infty$, gamma factors tend to 1. Hence,
by \eqref{rugh1ds}, p.~\pageref{rugh1ds},
\begin{equation} \label{mkdem2}
\begin{split}
\lim_{s\rightarrow \infty} s^2(\gammaij-1)
&=
\lim_{s\rightarrow \infty} s^2(\gammaij-1) \timess\half
\lim_{s\rightarrow \infty} (\gammaij+1)
\\[2pt] &=
\half \lim_{s\rightarrow \infty} s^2(\gamma_{ij}^2-1)
\\[2pt] &=
\half \lim_{s\rightarrow \infty} \gamma_{ij}^2(a_{ij}^{ein})^2
\\[2pt] &=
\half a_{ij}^2
\,,
\end{split}
\end{equation}
as desired.
The proof of \eqref{tushiats} is similar.
\end{proof}

Following Lemma \ref{lemwindk} we have the Euclidean limit
\begin{equation} \label{adamtk04}
\lim_{s\rightarrow \infty} D = 2\left\{1+\frac{a_{23}^2}{a_{13}^2}
+\frac{a_{23}^2}{a_{12}^2} + 0\right\}
-
\left\{\frac{a_{12}^2}{a_{13}^2} + \frac{a_{13}^2}{a_{12}^2}
+ \frac{a_{23}^4}{a_{12}^2a_{13}^2} \right\}
\,.
\end{equation}
Hence, by \eqref{adamtk01}, \eqref{mkdem2}\,--\,\eqref{adamtk04},
and straightforward algebra,
\begin{equation} \label{adamtk05}
R_{euc}^2 := \lim_{s\rightarrow \infty} R^2 = \frac{
a_{12}^2 a_{13}^2 a_{23}^2
}{
2(a_{12}^2 a_{13}^2 + a_{12}^2 a_{23}^2 + a_{13}^2 a_{23}^2)
-(a_{12}^4 + a_{13}^4 + a_{23}^4))
}
\,,
\end{equation}
where $a_{ij}=\|-A_i+A_j\|$.

Formalizing the result of this section we obtain the following theorem.
\begin{theorem}\label{thmdksew} 
The circumradius $R$ of a triangle $A_1A_2A_3$ in a Euclidean space
$\Rn$, $n\ge2$, is given by the equation
\begin{equation} \label{adamtk06}
R^2 = \frac{
a_{12}^2 a_{13}^2 a_{23}^2
}{
2(a_{12}^2 a_{13}^2 + a_{12}^2 a_{23}^2 + a_{13}^2 a_{23}^2)
-(a_{12}^4 + a_{13}^4 + a_{23}^4))
}
\,,
\end{equation}
where
\begin{equation} \label{adamtk07}
a_{ij} = \|-A_i + A_j \|
\,,
\end{equation}
$1\le i,j \le3$, are the sidelengths of the triangle.
\end{theorem}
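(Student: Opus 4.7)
The plan is to obtain the Euclidean circumradius formula as the Euclidean limit $s\to\infty$ of the gyrotriangle circumgyroradius identity \eqref{fskvm3}, exactly in the spirit the excerpt has been developing. Since \eqref{fskvm3} expresses $R^2$ as a ratio whose numerator $2s^2(\gammaab-1)(\gammaac-1)(\gammabc-1)$ and denominator $D_3 = 1+2\gammaab\gammaac\gammabc-\gamma_{12}^2-\gamma_{13}^2-\gamma_{23}^2$ both vanish as $s\to\infty$ (each $\gamma_{ij}\to1$), a direct substitution gives $0/0$. The core technical step is therefore to rearrange \eqref{fskvm3} into a form whose numerator and denominator have a common order of vanishing that we can cancel, so that the limit becomes determinate.

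The cleanest way to do this is to factor out $(\gammaab-1)(\gammaac-1)$ from both numerator and denominator of \eqref{fskvm3}. This produces the expression $R^2 = 2s^2(\gammabc-1)/D$ of \eqref{adamtk01}, where $D$, given by \eqref{adamtk02}, is a sum of ratios of the form $(\gammabc-1)/(\gammaab-1)$, $(\gammaab-1)/(\gammabc-1)$, etc., together with a single term $(\gammabc-1)$ that vanishes in the limit. Verifying that this rearrangement is algebraically correct is a direct calculation, but it is the first step I would carry out carefully.

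Next, I would invoke Lemma \ref{lemwindk}: the asymptotics $s^2(\gamma_{ij}-1)\to \tfrac12 a_{ij}^2$ let us compute $\lim_{s\to\infty} s^2(\gammabc-1) = \tfrac12 a_{23}^2$ for the numerator, and, in the denominator, each ratio of the type $(\gamma_{ij}-1)/(\gamma_{k\ell}-1)$ tends to $a_{ij}^2/a_{k\ell}^2$ since we can multiply numerator and denominator by $s^2$ and pass to the limit in each factor separately. The stray $(\gammabc-1)$ summand in $D$ contributes $0$. Collecting these yields \eqref{adamtk04}, and then \eqref{adamtk01} turns into
\begin{equation*}
R_{euc}^2 = \frac{a_{23}^2}{2\bigl(1+\tfrac{a_{23}^2}{a_{13}^2}+\tfrac{a_{23}^2}{a_{12}^2}\bigr)-\bigl(\tfrac{a_{12}^2}{a_{13}^2}+\tfrac{a_{13}^2}{a_{12}^2}+\tfrac{a_{23}^4}{a_{12}^2 a_{13}^2}\bigr)}.
\end{equation*}
Clearing denominators by multiplying top and bottom by $a_{12}^2 a_{13}^2$ converts this into the symmetric form \eqref{adamtk06}.

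The main obstacle is purely bookkeeping: the rearrangement leading to \eqref{adamtk01}--\eqref{adamtk02} is the one step where a miscalculation would silently destroy the argument, so I would double-check it by clearing the ratios and verifying equality with the original expression \eqref{fskvm3}. Once that is in hand, Lemma \ref{lemwindk} makes the limit entirely mechanical, and the final symmetrization into \eqref{adamtk06} is elementary algebra. An alternative proof via the extended law of sines combined with Heron's formula is available in Euclidean geometry, but the limit argument is the one that fits the paper's narrative of deriving Euclidean identities as degenerations of their gyro-counterparts.
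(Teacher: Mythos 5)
Your proposal follows the paper's own proof essentially verbatim: the paper likewise rewrites \eqref{fskvm3} as \eqref{adamtk01}--\eqref{adamtk02} by dividing out $(\gammaab-1)(\gammaac-1)$, applies Lemma \ref{lemwindk} to obtain the limit \eqref{adamtk04}, and clears denominators to reach \eqref{adamtk06}. The approach and all intermediate steps are correct and identical to the paper's.
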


\section{The Gyrocircle Through Three Points}\label{dkngkv}

\begin{figure}[t]  
 \sidebyside {       
\psfrag{A1}{$A_1$}
\psfrag{A2}{$A_2$}
\psfrag{A3}{$A_3$}
\psfrag{O}{$O$}
 \includegraphics[width=0.5\textwidth]{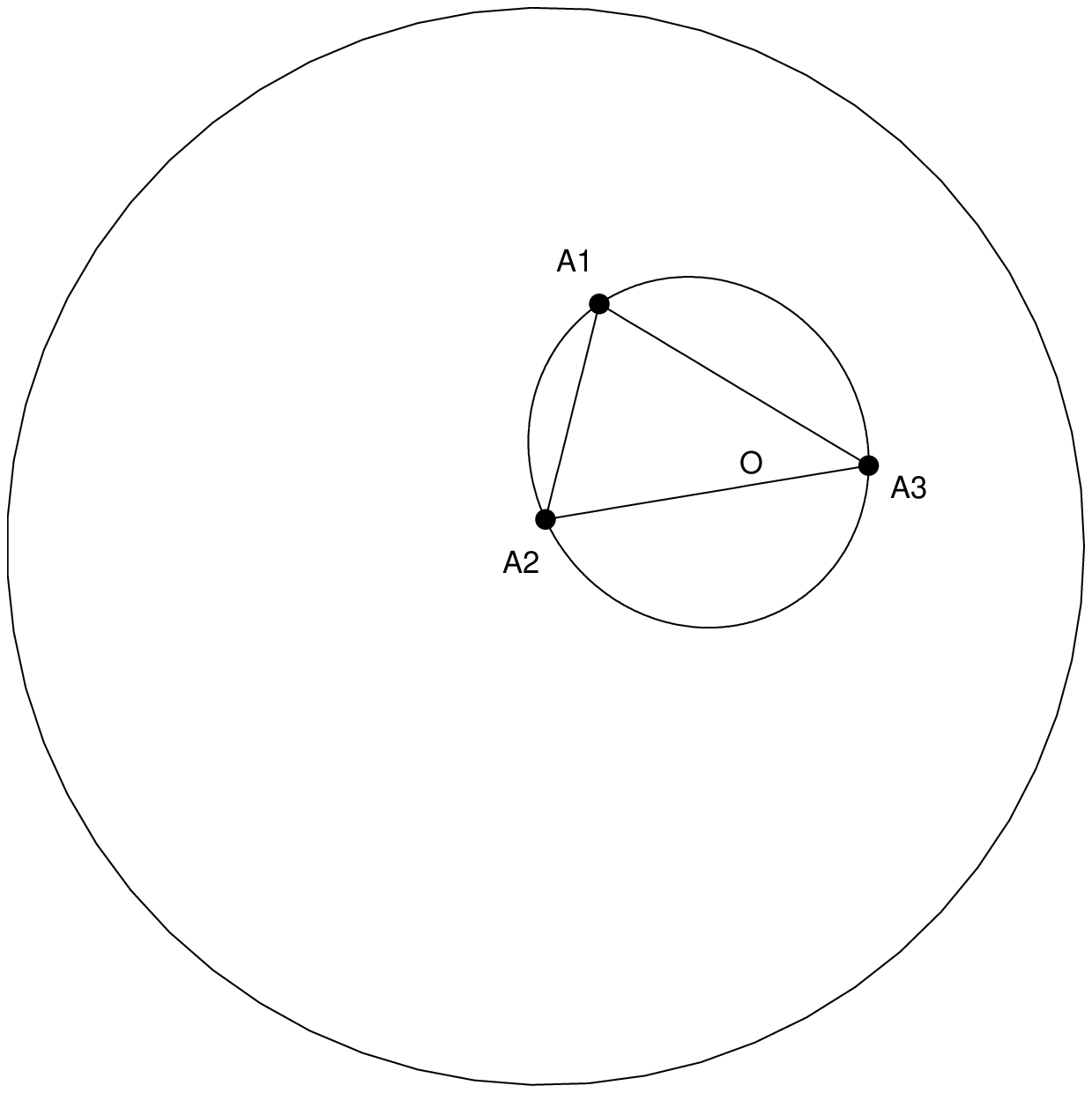}
\caption[A gyrotriangle that possesses a circumgyrocircle]{
Here $A_1,~A_2$ and $A_3$ are arbitrarily selected three points of an
Einstein gyrovector space $(\Rsn,\op,\od)$ that satisfy the
circumgyrocircle existence condition \eqref{rjksds}. Accordingly, there exists a unique
gyrocircle that passes through these points.
\label{fig269a1m}}}
  {
\psfrag{A1}{$A_1$}
\psfrag{A2}{$A_2$}
\psfrag{A3}{$A_3$}
 \includegraphics[width=0.5\textwidth]{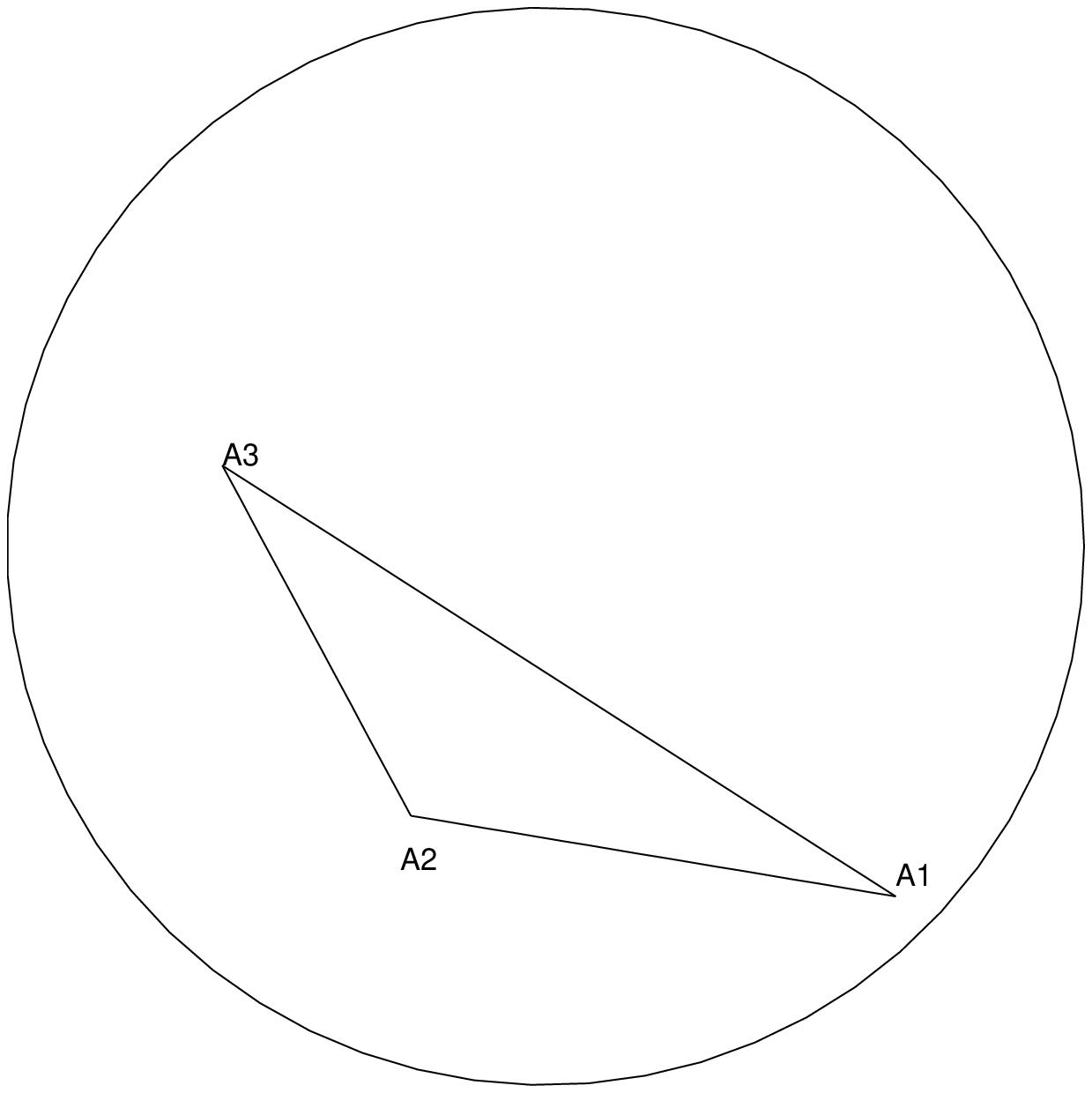}
\caption[A gyrotriangle that does not possess a circumgyrocircle]{
Here $A_1,~A_2$ and $A_3$ are arbitrarily selected three points of an
Einstein gyrovector space $(\Rsn,\op,\od)$ that do not satisfy the
circumgyrocircle existence condition \eqref{rjksds}. Accordingly, there exists no
gyrocircle that passes through these points.
\label{fig269b1m}} }
\end{figure}


\begin{theorem}\label{thmhdknc}
{\bf (The Gyrocircle Through Three Points).}\index{gyrocircle, three points}
Let $A_1,~A_2$ and $A_3$ be any three points that form a
gyrobarycentrically independent set in an Einstein gyrovector space
$(\Rsn,\op,\od)$, shown in Figs.~\ref{fig269a1m}\,--\,\ref{fig269b1m}.
There exists a unique gyrocircle that passes through these
points\index{circumgyrocircle existence condition}
if and only if gyrotriangle $A_1A_2A_3$
obeys the circumgyrocircle existence condition, \eqref{rjksd},
\begin{equation} \label{rjksds}
(\gammaab+\gammaac+\gammabc-1)^2 ~>~ 2 (\gamma_{12}^2+\gamma_{13}^2+\gamma_{23}^2-1)
\end{equation}
or, equivalently,
if and only if gyrotriangle $A_1A_2A_3$
obeys the gyrotrigonometric
circumgyrocircle existence condition, \eqref{hurdmst},
\begin{equation} \label{rjksht}
\sin(2\alpha_1+\tfrac{\delta}{2}) +
\sin(2\alpha_2+\tfrac{\delta}{2}) +
\sin(2\alpha_3+\tfrac{\delta}{2}) > 3\sin\tfrac{\delta}{2}
\,,
\end{equation}
where we use the index notation \eqref{indexnotation}
for gyrotriangle $A_1A_2A_3$.

When a gyrocircle through the three points exists,
it is the unique gyrocircle with gyrocenter $O$ given by,
\eqref{eicksfnein},
\begin{equation} \label{eicksfneins}
O = \frac{
 m_1 \gamma_{_{A_1}}^{\phantom{1}} A_1 +
 m_2 \gamma_{_{A_2}}^{\phantom{1}} A_2 +
 m_3 \gamma_{_{A_3}}^{\phantom{1}} A_3
}{
m_1 \gamma_{_{A_1}}^{\phantom{1}} +
m_2 \gamma_{_{A_2}}^{\phantom{1}} +
m_3 \gamma_{_{A_3}}^{\phantom{1}}
}
\,,
\end{equation}
where
\begin{equation} \label{hfjdv07bbs}
\begin{split}
m_1 &=
(\phantom{-} \gamma_{12}^{\phantom{1}} + \gamma_{13}^{\phantom{1}} - \gamma_{23}^{\phantom{1}} -1)
(\gamma_{23}^{\phantom{1}} -1)
\\[4pt]
m_2 &=
(\phantom{-} \gamma_{12}^{\phantom{1}} - \gamma_{13}^{\phantom{1}} + \gamma_{23}^{\phantom{1}} -1)
(\gamma_{13}^{\phantom{1}} -1)
\\[4pt]
m_3 &=
(         -  \gamma_{12}^{\phantom{1}} + \gamma_{13}^{\phantom{1}} + \gamma_{23}^{\phantom{1}} -1)
(\gamma_{12}^{\phantom{1}} -1)
\end{split}
\end{equation}
and with gyroradius $R$ given by, \eqref{fskvm4},
\begin{equation} \label{fskvm4s}
R = \sqrt{2} s ~\sqrt{ \frac{
(\gammaab-1) (\gammaac-1) (\gammabc-1)
}{
1+2\gammaab\gammaac\gammabc - \gamma_{12}^2-\gamma_{13}^2-\gamma_{23}^2
}}
=s\sqrt{\frac{H_3}{D_3}}
~~.
\end{equation}
\end{theorem}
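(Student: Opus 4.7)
My plan is to observe that Theorem \ref{thmhdknc} is essentially a repackaging of Theorem \ref{thmtivhvn} (the Circumgyrocenter Theorem), together with the circumgyroradius formula \eqref{fskvm4} derived in Section \ref{drymv2}. The three gyrobarycentrically independent points $A_1, A_2, A_3$ form a gyrotriangle, and a gyrocircle that passes through all three of them is, by Definitions \ref{dfgyrocircle} and \ref{defhsnmeg2}, precisely a circumgyrocircle of this gyrotriangle. So the task reduces to transcribing the relevant conclusions of Theorem \ref{thmtivhvn} and supplying the uniqueness argument.

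For the existence direction, the equivalence between the biconditional ``a circumgyrocenter lies in $\Rsn$'' and the inequality \eqref{rjksds} is already recorded in Theorem \ref{thmtivhvn} as \eqref{hurdvk1}, and the equivalence with \eqref{rjksht} is recorded there as \eqref{hurdmsw}. Thus both existence criteria are in hand. The formulas \eqref{eicksfneins} and \eqref{hfjdv07bbs} for the gyrocenter are copied verbatim from \eqref{eicksfnein} and \eqref{hfjdv07bb}, and the gyroradius formula \eqref{fskvm4s} is copied from \eqref{fskvm4}. So the content of the theorem is obtained simply by collecting these earlier results.

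What remains is the \emph{uniqueness} clause, which is not explicitly stated in Theorem \ref{thmtivhvn}. The argument is as follows. Any gyrocircle through $A_1, A_2, A_3$ has, by Definition \ref{dfgyrocircle}, a gyrocenter $O \in \AAbt \cap \Rsn$ equigyrodistant from the three points. Imposing the equigyrodistance conditions $\gamma_{\om A_1 \op O} = \gamma_{\om A_2 \op O} = \gamma_{\om A_3 \op O}$, via the expressions \eqref{hdsjb03}, together with the normalization $m_1 + m_2 + m_3 = 1$, yields the linear system \eqref{hdsjb06} for the gyrobarycentric coordinates of $O$. The determinant $D$ of this system, given by \eqref{eufjsd}, is nonzero precisely when the existence conditions \eqref{rjksds}--\eqref{rjksht} are satisfied, by virtue of the identities \eqref{hadomf} and the factored form \eqref{gkdnseu} of $m_O^2$. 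Hence $(m_1:m_2:m_3)$ is uniquely determined, forcing $O$ and thus $R = \|\om A_1 \op O\|$ to be unique.

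The only subtle point in the plan is ensuring that the condition $m_O^2 > 0$ (i.e.\ existence conditions \eqref{rjksds}--\eqref{rjksht}) is both \emph{necessary} and \emph{sufficient} for the solution of \eqref{hdsjb06} to determine a bona fide gyrocircle, i.e., to produce an $O$ inside the ball $\Rsn$; but this is exactly the content of the factorization $m_O^2 = D_3(D_3 - H_3)$ in \eqref{gkdnsfh} together with the positivity of $D_3$ noted in \eqref{mizsk}. When the existence condition fails, the would-be circumgyrocenter lies outside $\Rsn$, so no gyrocircle through the three points exists, giving the ``only if'' direction. The rest is bookkeeping of already-proved identities, and requires no new calculation.
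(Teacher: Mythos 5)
Your proposal is correct and follows essentially the same route as the paper, which likewise proves the theorem by identifying any gyrocircle through the three points with the circumgyrocircle of gyrotriangle $A_1A_2A_3$ and then citing Theorem \ref{thmtivhvn} together with \eqref{fskvm4} and the existence discussion surrounding \eqref{rjksd}. Your explicit uniqueness argument via the nonsingularity of the linear system \eqref{hdsjb06} is a welcome detail the paper leaves implicit; just note that $D=D_3-H_3$ is nonzero whenever the existence condition holds (since then $D>0$), but it can also be nonzero (negative) when the condition fails, so ``nonzero precisely when the existence conditions are satisfied'' is a slight overstatement that does not affect the validity of the argument.
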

\begin{proof}
The gyrocircle in the Theorem, if exists, is the circumgyrocircle of
gyrotriangle $A_1A_2A_3$.
The gyrocenter $O$ of the gyrocircle is, therefore, given by
\eqref{eicksfneins}\,--\,\eqref{hfjdv07bbs}, as we see from
Theorem \ref{thmtivhvn}, p.~\pageref{thmtivhvn}; and
the gyroradius, $R$, of the gyrocircle is given by
\eqref{fskvm4}, p.~\pageref{fskvm4}.

Finally, the circumgyrocircle of gyrotriangle $A_1A_2A_3$ exists
if and only if the gyrotriangle satisfies the
circumgyrocircle existence condition\index{circumgyrocircle existence condition} \eqref{rjksds},
or, equivalently, \eqref{eicksfneins},
as explained in the paragraph of Inequality \eqref{rjksd}, p.~\pageref{rjksd},
and in \eqref{hurdms}\,--\,\eqref{hurdmst}.
\end{proof}

\begin{example}\label{exhkd1}
If the three points $A_1,~A_2$ and $A_3$ in Theorem \ref{thmhdknc} are not distinct,
a gyrocircle through these points is not unique. Indeed, in this case we have
\begin{equation} \label{rjksds1}
(\gammaab+\gammaac+\gammabc-1)^2 ~=~ 2 (\gamma_{12}^2+\gamma_{13}^2+\gamma_{23}^2-1)
\,,
\end{equation}
as one can readily check, thus violating the
circumgyrocircle existence condition\index{circumgyrocircle existence condition} \eqref{rjksds}.
\end{example}

\begin{example}\label{exhkd2}
If the three points $A_1,~A_2$ and $A_3$ in Theorem \ref{thmhdknc} are distinct
and gyrocollinear, there is no gyrocircle through these points. Hence, in this case
the circumgyrocircle existence condition\index{circumgyrocircle existence condition} \eqref{rjksds}
must be violated.
Hence, these points must satisfy the inequality
\begin{equation} \label{rjksds2}
(\gammaab+\gammaac+\gammabc-1)^2 ~\le~ 2 (\gamma_{12}^2+\gamma_{13}^2+\gamma_{23}^2-1)
\,.
\end{equation}
\end{example}

\begin{example}\label{exhkd3}
Let the three points $A_1,~A_2$ and $A_3$ in Theorem \ref{thmhdknc} be the vertices
of an  equilateral gyrotriangle\index{gyrotriangle,  equilateral}
with side gyrolengths $a$. Then, $\gammaab=\gammaac=\gammabc=\gamma_a^{\phantom{O}}$,
so that the
circumgyrocircle existence condition\index{circumgyrocircle existence condition} \eqref{rjksds}
reduces to
\begin{equation} \label{rjksds3}
\gamma_a^{\phantom{O}} > 1
\,,
\end{equation}
which is satisfied by any side gyrolength $a$, $0<a<s$.
Hence, by Theorem \ref{thmhdknc}, any equilateral gyrotriangle in
an Einstein gyrovector space possesses a circumgyrocircle.
Moreover, the circumgyrocenter of an equilateral gyrotriangle lies on
the interior of the gyrotriangle since the circumgyrocenter gyrobarycentric coordinates
$m_k$, $k=1,2,3$, in \eqref{hfjdv07bbs} are all positive.

More about the unique gyrocircle that passes through three given points
of an Einstein gyrovector space is studied in Sect.~\ref{slila3}.
\end{example}

A generalization of results in this section
from the gyrotriangle circumgyrocircle to the
gyrotetrahedron circumgyrosphere is presented in Chap.~10.
Remarkably, the pattern that $D_3$ and $H_3$ exhibit in this section
remains the same pattern in Chap.~10, exhibited by $D_4$ and $H_4$.

\section{The Inscribed Gyroangle Theorem I}\label{dknke2}
\index{inscribed gyroangle}

The Inscribed Gyroangle Theorem appears in two distinct, interesting
versions, each of which reduces in the Euclidean limit of large $s$
to the well-known Inscribed Angle Theorem in Euclidean geometry.
Version I is presented in this section, and
Version II is presented in sect.~\ref{dknke2s}.

 
\begin{figure}[t]  
 \centering         
%
 \psfrag{A1}{$A_1$}
 \psfrag{A2}{$A_2$}
 \psfrag{A3}{$A_3$}
 \psfrag{C}{$C$}
 \psfrag{O}{$O$}
 \psfrag{Af}{$\theta$}
 \psfrag{B1}{$\phi$}
 \psfrag{B2}{$\phi$}
 \psfrag{M12}{$M_{12}$}
 \psfrag{r1}{$R$}
 \psfrag{r2}{$R$}
 \psfrag{text1}{$\ab_{13}$}
 \psfrag{text2}{$\ab_{23}$}
 \psfrag{text3a}[]{$\half\od\ab_{12}$}
 \psfrag{text3b}{$\half\od\ab_{12}$}
 \psfrag{text4}{$\ab_{12}$}
 \psfrag{formula00}[]{$\boxed{\sin\theta=\displaystyle\frac{
2\gamma_R
}{
\sqrt{(\gammaac+1)(\gammabc+1)}
} \sin\phi}$}
 \psfrag{formula01}{$\ab_{12}:=\om A_1\op A_2$}
 \psfrag{formula02}{$\half\od\ab_{12}=\om A_1\op M_{12}$}
 \psfrag{formula03}{$\gamma_{13}=\gamma_{\ab_{13}}$}
 \psfrag{formula04}{$\gamma_{23}=\gamma_{\ab_{23}}$}
 \includegraphics[width=9cm]{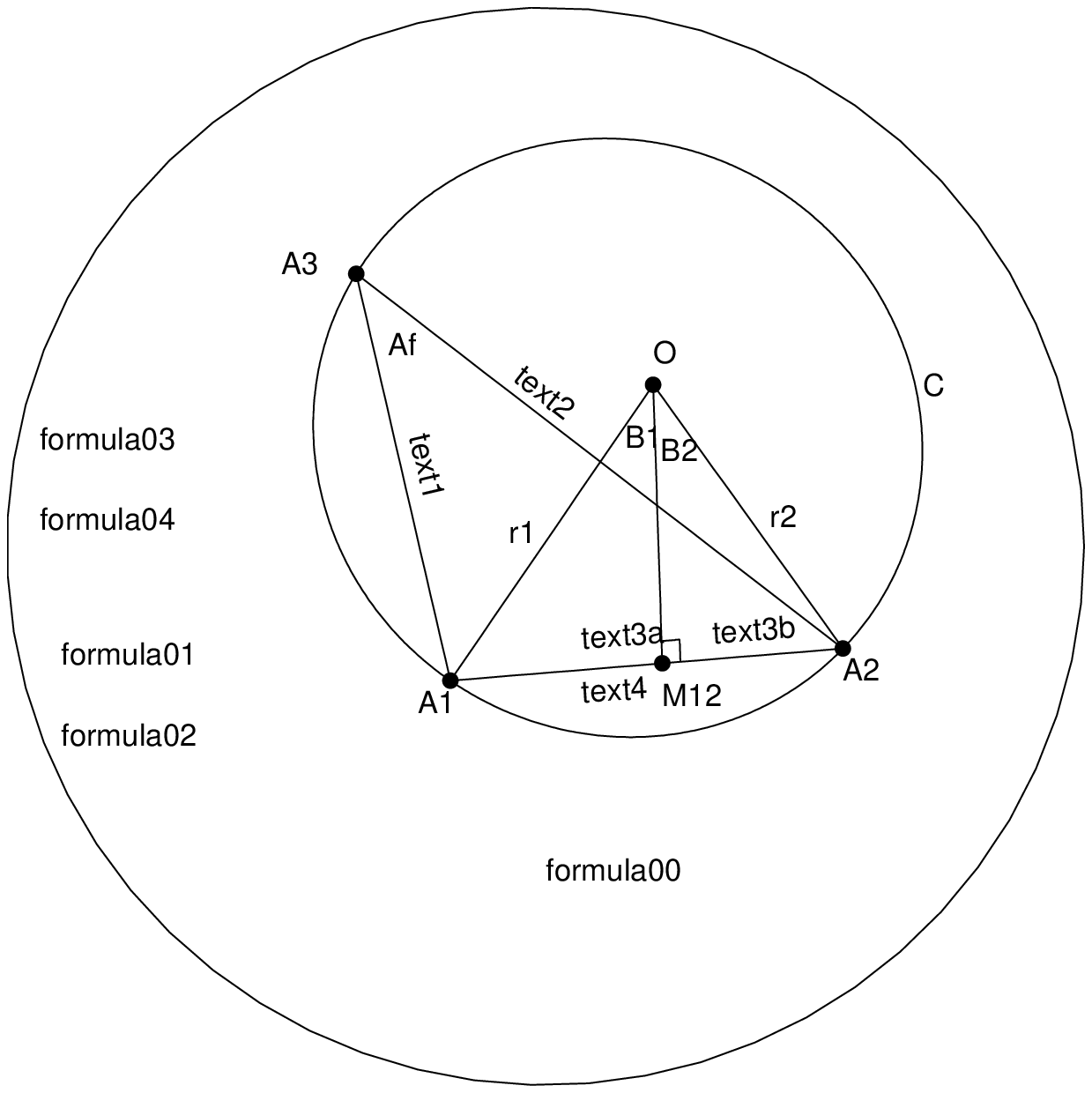}
\caption[Illustration of the Inscribed Gyroangle Theorem I]{
The Inscribed Gyroangle Theorem I. Gyroangle
$\theta=\angle A_1A_3A_2$
is inscribed in a gyrocircle $C$ of gyroradius $R$
(the circumgyroradius of gyrotriangle $A_1A_2A_3$) centered at
$O$ in an Einstein gyrovector plane $(\Rstwo,\op,\od)$, and
$\phi=\angle A_1 OM_{12} = \angle A_2 OM_{12}$, where $M_{12}$ is the
gyromidpoint of the gyrosegment $A_1A_2$.
Accordingly, $2\phi = \angle A_1OA_2$
is a gyrocentral gyroangle, and both $\theta$ and $2\phi$
subtend on the same gyroarc on the gyrocircle $C$.
The relationship between $\theta$ and $\phi$, \eqref{khdnkm}, is shown.
In the Euclidean limit of large $s$, $s\rightarrow\infty$,
gamma factors tend to 1 and, hence, the relationship between $\theta$ and $\phi$ in
Euclidean geometry reduces to $\sin\theta=\sin\phi$ or, equivalently, $\theta=\phi$
(if $A_3$ and $O$ lie on the same side of gyrosegment $A_1A_2$)
and $\theta=\pi-\phi$
(if $A_3$ and $O$ lie on opposite sides of gyrosegment $A_1A_2$;
see Fig.~\ref{fig267ciibm}).
\label{fig267cm}}
\end{figure}

Fig.~\ref{fig267cm} presents a gyrotriangle $A_1A_2A_3$ and its
circumgyrocircle with gyrocenter $O$ at the
gyrotriangle circumgyrocenter,\index{circumgyrocenter}
given by \eqref{eicksfnein}, p.~\pageref{eicksfnein},
and with gyroradius $R$,
given by the gyrotriangle circumgyroradius\index{circumgyroradius}
\eqref{fskvm4}, p.~\pageref{fskvm4}.
The gamma factor $\gamma_R$ of $R$ is given by \eqref{fskvm2}, p.~\pageref{fskvm2}.

A gyrocentral gyroangle\index{gyroangle, gyrocentral}
of a gyrocircle is a gyroangle whose vertex is located at
the gyrocenter of the gyrocircle. For instance,
gyroangle $\angle A_1OA_2$ in Figs.~\ref{fig267cm}\,--\,\ref{fig267ciibm}
is gyrocentral.

An inscribed gyroangle of a gyrocircle is a gyroangle whose vertex is on
the gyrocircle and whose sides each intersects the gyrocircle at another point.
For instance,
gyroangle $\angle A_1A_3A_2$ in Figs.~\ref{fig267cm}\,--\,\ref{fig267ciibm}
is inscribed.
The inscribed gyroangle theorem gives a relation between
inscribed gyroangles and a gyrocentral gyroangle of a gyrocircle
that subtend on the same gyroarc of the gyrocircle
in an Einstein gyrovector space.

\begin{theorem}\label{thmfdknb}
{\bf (The Inscribed Gyroangle Theorem I).}
Let $\theta$ be a gyroangle inscribed in a gyrocircle of gyroradius $R$, and let
$2\phi$ be the gyrocentral gyroangle of the gyrocircle
such that both $\theta$ and $2\phi$
subtend on the same gyroarc $\widehat{A_1A_2}$
on the gyrocircle, as shown in Fig.~\ref{fig267cm}.
Then, in the notation of Fig.~\ref{fig267cm} and in
\eqref{indexnotation}, p.~\pageref{indexnotation},
\begin{equation} \label{khdnkm}
\sin\theta=\displaystyle\frac{ 2\gamma_R }{
\sqrt{(\gammaac+1)(\gammabc+1)}
} \sin\phi
\,.
\end{equation}
\end{theorem}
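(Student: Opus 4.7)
The plan is to combine the extended law of gyrosines (Theorem~\ref{thmhfkbc}) applied to gyrotriangle $A_1A_2A_3$ with elementary right-gyrotriangle gyrotrigonometry inside the isoceles gyrotriangle $OA_1A_2$.

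First, I would observe that the inscribed gyroangle $\theta=\angle A_1A_3A_2$ is exactly the gyrotriangle gyroangle $\alpha_3$ of $A_1A_2A_3$, and that $R$ is its circumgyroradius. The third equality in Theorem~\ref{thmhfkbc} therefore gives
\begin{equation*}
\sin\theta \;=\; \frac{\gamma_{12}\,a_{12}}{R\sqrt{\tfrac{1}{2}(\gamma_{12}+1)(\gamma_{13}+1)(\gamma_{23}+1)}} .
\end{equation*}
This reduces the task to expressing $\gamma_{12}a_{12}$ in terms of $\gamma_R$, $R$, $\gamma_{12}$, and the gyrocentral gyroangle $2\phi$.

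Next, I would exploit the isoceles gyrotriangle $OA_1A_2$, whose two equal sides have gyrolength $R$, whose base has gyrolength $a_{12}$, and whose apex gyroangle is $2\phi$. Let $M_{12}$ be the gyromidpoint of $A_1A_2$, as shown in Fig.~\ref{fig267cm}. By the reflective SSS-symmetry of the two half-gyrotriangles $OA_1M_{12}$ and $OA_2M_{12}$, the gyrosegment $OM_{12}$ bisects the apex gyroangle and is gyro-perpendicular to $A_1A_2$. Thus $OM_{12}A_1$ is a right gyrotriangle with gyroangle $\phi$ at $O$, opposite leg $m:=\|\om A_1\op M_{12}\|$, and hypotenuse $R$. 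The Einstein right-gyrotriangle sine formula then yields
\begin{equation*}
\sin\phi \;=\; \frac{\gamma_m\,m}{\gamma_R\,R} .
\end{equation*}

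Finally, I would compute $\gamma_m m$ from the gyromidpoint identity $\half\od\ab_{12} = \frac{\gamma_{12}}{1+\gamma_{12}}\ab_{12}$, which gives $m=\gamma_{12}a_{12}/(1+\gamma_{12})$; substituting into \eqref{rugh1ds} then produces $\gamma_m^2=(1+\gamma_{12})/2$, and multiplying yields
\begin{equation*}
\gamma_m\,m \;=\; \frac{\gamma_{12}\,a_{12}}{\sqrt{2(\gamma_{12}+1)}} .
\end{equation*}
Solving the right-gyrotriangle relation for $\gamma_{12}a_{12}$ and inserting the result into the first display for $\sin\theta$, the factor $\sqrt{\gamma_{12}+1}$ cancels cleanly against the corresponding factor under the radical coming from the extended law of gyrosines, leaving
\begin{equation*}
\sin\theta \;=\; \frac{2\gamma_R}{\sqrt{(\gamma_{13}+1)(\gamma_{23}+1)}}\,\sin\phi,
\end{equation*}
as asserted. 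The main obstacle I anticipate is the step asserting that $OM_{12}\perp A_1A_2$ and that $OM_{12}$ bisects the apex gyroangle in the isoceles gyrotriangle $OA_1A_2$: this reflective symmetry is intuitively clear but needs to be grounded in the gyro-SSS congruence and gyrobarycentric machinery of~\cite{mybook05}. Once that step is justified, the remaining manipulations are the essentially routine algebraic simplifications sketched above.
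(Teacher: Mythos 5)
Your proposal is correct and follows essentially the same route as the paper's own proof: apply the extended law of gyrosines \eqref{fskvm6} to gyrotriangle $A_1A_2A_3$ to express $\sin\theta$ in terms of $\gamma_{12}a_{12}$, use the gyromidpoint identity $\gamma_{\half\od a_{12}}(\half\od a_{12})=\gamma_{12}a_{12}/(\sqrt{2}\sqrt{1+\gamma_{12}})$ together with the right gyrotriangle $A_1M_{12}O$ to express $\sin\phi$ in terms of the same quantity, and eliminate $\gamma_{12}a_{12}$. The perpendicularity and gyroangle-bisection at $M_{12}$ that you flag as needing justification is likewise taken as read in the paper (it simply sets $\phi:=\angle A_1OM_{12}=\angle A_2OM_{12}$ and invokes the right gyroangled gyrotriangle $A_1M_{12}O$), so your argument is, if anything, more explicit about that step.
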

\begin{proof}
Under the conditions of the theorem, as described in Fig.~\ref{fig267cm}, let
$M_{12}$ be the gyromidpoint of gyrosegment $A_1A_2$, implying
\begin{equation} \label{ryugf1}
\phi := \angle A_1 OM_{12} = \angle A_2 OM_{12} = \half\angle A_1 OA_2
\end{equation}
so that $2\phi$ is the gyrocentral gyroangle $\angle A_1OA_2$ shown in Fig.~\ref{fig267cm}.

Furthermore, let
\begin{equation} \label{ryugf2}
\ab_{12} = \om A_1 \op A_2
\end{equation}
so that \cite[p.~100]{mybook05},
\begin{equation} \label{ryugf3}
\om A_1 \op M_{12} = \half\od\ab_{12}
\end{equation}
and
\begin{equation} \label{ryugf4}
\gamma_{\half\od\ab_{12}}^{\phantom{O}} (\half\od\ab_{12}) = \displaystyle\frac{
\gammaab \ab_{12}
}{
\sqrt{2} \sqrt{1+\gammaab}
}
\,.
\end{equation}
Taking magnitudes of both sides of \eqref{ryugf4} and noting the
homogeneity property
$(V9)$ of Einstein gyrovector spaces \cite[Sect.~2.7]{mybook06},
we have
\begin{equation} \label{ryugf4h}
\gamma_{\half\od a_{12}}^{\phantom{O}} (\half\od a_{12}) = \displaystyle\frac{
\gammaab a_{12}
}{
\sqrt{2} \sqrt{1+\gammaab}
}
\,.
\end{equation}

Applying the extended law of gyrosines\index{law of gyrosines, extended}
\eqref{fskvm6}, p.~\pageref{fskvm6}, to
gyrotriangle $A_1A_2A_3$ and its circumgyroradius $R$ in  Fig.~\ref{fig267cm}, we have
\begin{equation} \label{ryugf5}
\frac{\gammaab a_{12}}{\sin\theta} =
\sqrt{ \frac{
(\gammaab+1) (\gammaac+1) (\gammabc+1)}{2}} ~R
\,,
\end{equation}
implying
\begin{equation} \label{ryugf6}
\sin\theta = \displaystyle\frac{
\sqrt{2} \gammaab a_{12}
}{
\sqrt{
(1+\gammaab) (1+\gammaac) (1+\gammabc)} ~R
}
\,.
\end{equation}

Applying the elementary gyrosine definition in gyrotrigonometry,
illustrated in \cite[Fig.~6.5, p.~147]{mybook05},
to the right gyroangled gyrotriangle $A_1M_{12}O$ in  Fig.~\ref{fig267cm}, we
obtain the first equation in \eqref{ryugf7},
\begin{equation} \label{ryugf7}
\sin\phi = \displaystyle\frac{
\gamma_{\half\od a_{12}}^{\phantom{O}} (\half\od a_{12})
}{
\gamma_R R
}
=
\displaystyle\frac{
\gammaab a_{12}
}{
\sqrt{2}\sqrt{1+\gammaab}
~\gamma_R R
}
\,.
\end{equation}
The second equation in \eqref{ryugf7} follows from \eqref{ryugf4h}.

Finally, the desired identity \eqref{khdnkm} follows by eliminating
the factor $\gammaab a_{12}$ between
\eqref{ryugf6} and \eqref{ryugf7}.
\end{proof}

\section{The Inscribed Gyroangle Theorem II}\label{dknke2s}
\index{inscribed gyroangle}

 
\begin{figure}[t]  
 \centering         
%
 \psfrag{A1}{$A_1$}
 \psfrag{A2}{$A_2$}
 \psfrag{A3}{$A_3$}
 \psfrag{C}{$C$}
 \psfrag{O}{$O$}
 \psfrag{Te}{$\theta$}
 \psfrag{Ph}{$2\phi$}
 \psfrag{Ep}{$\epsilon$}
 \psfrag{r1}{$R$}
 \psfrag{r2}{$R$}
 \psfrag{formula00}[]{$\boxed{\theta+\half\delta_{A_1A_2A_3} = \phi+\half\delta_{A_1A_2O}}$}
 \psfrag{formula01}{$\delta_{A_1A_2A_3}$}
 \psfrag{formula02}{$\delta_{A_1A_2O}$}
 \psfrag{formula03}{$\epsilon=\angle OA_1A_2=\angle OA_2A_1$}
%
 \includegraphics[width=9cm]{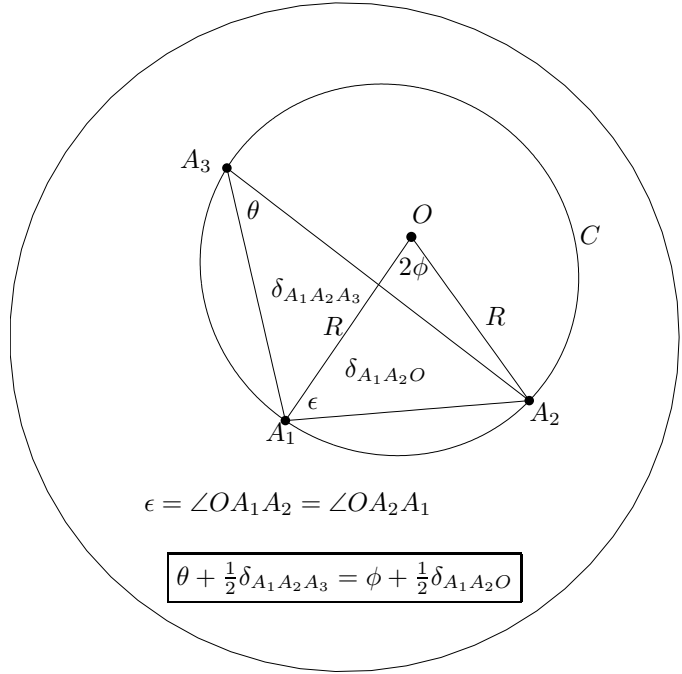}
\caption[Illustration of the Inscribed Gyroangle Theorem II, case 1]{
The Inscribed Gyroangle Theorem II. 
Unlike the Inscribed Gyroangle Theorem described in Fig.~\ref{fig267cm},
here the relation between the inscribed gyroangle $\theta$ and the
gyrocentral gyroangle $2\phi$ is expressed in terms of
gyrotriangle defects.
The latter vanish in Euclidean geometry, reducing the relation between
$\theta$ and $\phi$ to the equation $\theta=\phi$.
\label{fig267ciiam}}
\end{figure}

 
\begin{figure}[t]  
 \centering         
%
 \psfrag{A1}{$A_1$}
 \psfrag{A2}{$A_2$}
 \psfrag{A3}{$A_3$}
 \psfrag{C}{$C$}
 \psfrag{O}{$O$}
 \psfrag{Te}{$\theta$}
 \psfrag{Ph}{$2\phi$}
 \psfrag{Ep}{$\epsilon$}
 \psfrag{r1}{$R$}
 \psfrag{r2}{$R$}
 \psfrag{formula00}[]{$\boxed{\theta+\half\delta_{A_1A_2A_3}
                        = \pi - (\phi+\half\delta_{A_1A_2O})}$}
 \psfrag{formula01}{$\delta_{A_1A_2A_3}$}
 \psfrag{formula02}{$\delta_{A_1A_2O}$}
 \psfrag{formula03}{$\epsilon=\angle OA_1A_2=\angle OA_2A_1$}
%
 \includegraphics[width=9cm]{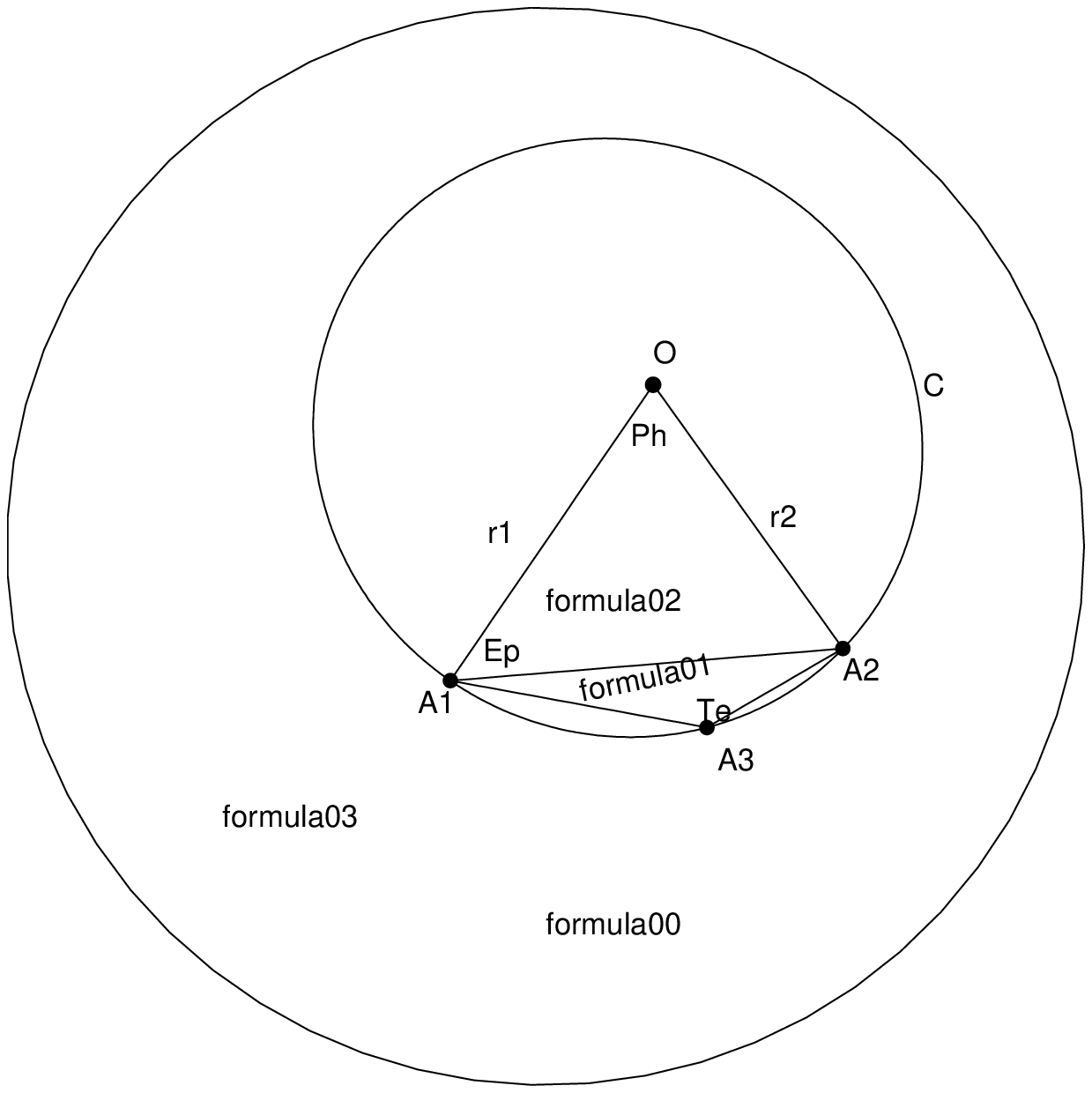}
\caption[Illustration of the Inscribed Gyroangle Theorem II, case 2]{
The Inscribed Gyroangle Theorem II. 
This figure is similar to Fig.~\ref{fig267ciiam}
except that here the points $O$ and $A_3$ lie on opposite sides of
the chord $A_1A_2$ of gyrocircle $C$ in the gyroplane through the points
$A_1,A_2$ and $A_3$.
As in Fig.~\ref{fig267ciiam}
the relation between the inscribed gyroangle $\theta$ and the
gyrocentral gyroangle $2\phi$ is expressed in terms of
gyrotriangle defects.
The latter vanish in Euclidean geometry, reducing the relation between
$\theta$ and $\phi$ to the equation $\theta=\pi-\phi$.
\label{fig267ciibm}}
\end{figure}

\index{inscribed gyroangle theorem II}
\begin{theorem}\label{thmfdknb2}
{\bf (The Inscribed Gyroangle Theorem II).}
Let $\theta$ be a gyroangle inscribed in a gyrocircle with
gyrocenter $O$, and let
$2\phi$ be the gyrocentral gyroangle of the gyrocircle
such that both $\theta$ and $2\phi$
subtend on the same gyroarc $\widehat{A_1A_2}$
on the gyrocircle, as shown in Fig.~\ref{fig267ciiam}.
Furthermore, in the notation in Fig.~\ref{fig267ciiam},
let $\delta_{A_1A_2A_3}$ be the defect of gyrotriangle $A_1A_2A_3$ and, similarly,
let $\delta_{A_1A_2O}$ be the defect of gyrotriangle $A_1A_2O$.
Then,
\begin{equation} \label{kfhsd0}
\sin(\theta + \half\delta_{A_1A_2A_3}) = \sin(\phi + \half\delta_{A_1A_2O})
\,,
\end{equation}
that is, either
\begin{subequations} \label{kfhsd}
\begin{equation} \label{kfhsda}
\theta + \half\delta_{A_1A_2A_3} = \phi + \half\delta_{A_1A_2O}
\,,
\end{equation}
as in Fig.~\ref{fig267ciiam}, or
\begin{equation} \label{kfhsdb}
\theta + \half\delta_{A_1A_2A_3} = \pi - (\phi + \half\delta_{A_1A_2O})
\,,
\end{equation}
as in Fig.~\ref{fig267ciibm}.
\end{subequations}
\end{theorem}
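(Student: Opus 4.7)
The plan is to introduce the base gyroangle $\epsilon := \angle OA_1A_2 = \angle OA_2A_1$ of the isosceles gyrotriangle $A_1A_2O$, which is isosceles because the equidistance $\|\om A_1\op O\| = \|\om A_2\op O\| = R$ together with the gyro SSS-to-AAA conversion force its two base gyroangles to coincide. The theorem will then reduce to showing that both sides of \eqref{kfhsd0} equal $\cos\epsilon$, with the two possibilities \eqref{kfhsda} and \eqref{kfhsdb} arising from a configurational case split.

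The gyrocentric side is immediate. The three gyroangles of gyrotriangle $A_1A_2O$ are $\epsilon,\epsilon,2\phi$, so its defect is $\delta_{A_1A_2O}=\pi-2\epsilon-2\phi$, whence
\begin{equation*}
\phi+\tfrac{1}{2}\delta_{A_1A_2O} \;=\; \tfrac{\pi}{2}-\epsilon,
\qquad
\sin\!\bigl(\phi+\tfrac{1}{2}\delta_{A_1A_2O}\bigr) \;=\; \cos\epsilon.
\end{equation*}

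For the inscribed side, all three sub-gyrotriangles $OA_1A_2$, $OA_1A_3$, $OA_2A_3$ are isosceles with equal sides of gyrolength $R$, hence each has equal base gyroangles, which I denote $\epsilon_{12}=\epsilon$, $\epsilon_{13}$, and $\epsilon_{23}$. Gyroangles at a common vertex add or subtract according to the cyclic order of the gyrorays meeting there, since the gyroangle of the paper is defined via the inner product of normalized gyrovectors and so inherits the Euclidean angle-addition at a point. In case 1 (Fig.~\ref{fig267ciiam}), where $O$ and $A_3$ lie on the same side of the gyrochord $A_1A_2$, the gyroray $A_iO$ lies inside the gyroangle at $A_i$ for each $i=1,2,3$, giving
\begin{equation*}
\alpha_1=\epsilon+\epsilon_{13},\qquad
\alpha_2=\epsilon+\epsilon_{23},\qquad
\alpha_3=\epsilon_{13}+\epsilon_{23}.
\end{equation*}
Since $\theta=\alpha_3$ and $\delta_{A_1A_2A_3}=\pi-\alpha_1-\alpha_2-\alpha_3$, one has the identity
\begin{equation*}
\theta+\tfrac{1}{2}\delta_{A_1A_2A_3} \;=\; \tfrac{\pi}{2}+\tfrac{1}{2}(\alpha_3-\alpha_1-\alpha_2),
\end{equation*}
into which the case 1 substitution gives $\pi/2-\epsilon$, establishing \eqref{kfhsda}. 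In case 2 (Fig.~\ref{fig267ciibm}), the gyrorays $A_1O$ and $A_2O$ fall outside the gyroangles at $A_1$ and $A_2$, so the first two relations flip to $\alpha_1=\epsilon_{13}-\epsilon$ and $\alpha_2=\epsilon_{23}-\epsilon$, while $\alpha_3=\epsilon_{13}+\epsilon_{23}$ still holds because the gyrosegment from $A_3$ to $O$ crosses the gyrochord $A_1A_2$ in its interior, so the gyroray $A_3O$ points into the gyroangle at $A_3$. The same substitution now yields $\theta+\tfrac{1}{2}\delta_{A_1A_2A_3}=\pi/2+\epsilon=\pi-(\phi+\tfrac{1}{2}\delta_{A_1A_2O})$, establishing \eqref{kfhsdb}. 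In either case the sine of both sides of \eqref{kfhsd0} equals $\cos\epsilon$.

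The main obstacle is not the algebra, which is a one-line substitution in each case, but the configurational bookkeeping that decides whether a given gyroray $A_iO$ falls inside or outside the gyroangle at $A_i$. This is exactly what produces the two alternatives \eqref{kfhsda} and \eqref{kfhsdb}, and it rests on the elementary but essential fact that gyroangles at a common vertex decompose additively in Einstein gyrovector spaces, exactly as Euclidean angles do at a point.
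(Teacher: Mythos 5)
Your strategy is genuinely different from the paper's. The paper never decomposes the inscribed gyroangle into base gyroangles of isosceles sub-gyrotriangles; it computes the single quantity $\gammaab a_{12}/(R(\gammaab+1))$ twice --- once via \eqref{dagim4} applied to gyrotriangle $A_1A_2A_3$, giving $\sin(\theta+\half\delta_{A_1A_2A_3})$ in \eqref{dugit01}, and once via a chain of gyrotrigonometric identities applied to the isosceles gyrotriangle $A_1A_2O$, giving $\sin(\phi+\half\delta_{A_1A_2O})$ in \eqref{dugit15} --- so that the sine identity \eqref{kfhsd0} drops out configuration-free, and the dichotomy \eqref{kfhsda}/\eqref{kfhsdb} is only invoked afterwards. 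Your route is the classical synthetic proof of the Inscribed Angle Theorem transplanted to the gyro-setting; it has the virtue of actually determining which of \eqref{kfhsda}, \eqref{kfhsdb} holds in which configuration rather than merely asserting it, and your treatment of the gyrocentral side, $\phi+\half\delta_{A_1A_2O}=\tfrac{\pi}{2}-\epsilon$, coincides with the paper's own \eqref{dugit04}--\eqref{dugit05}.

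There is, however, a genuine gap precisely at the point you flag as the crux. The claim that in case 1 (``$O$ and $A_3$ on the same side of the gyrochord $A_1A_2$'') the gyroray $A_iO$ lies inside the gyroangle at $A_i$ for each $i=1,2,3$ is false: it holds only when $O$ lies in the interior of gyrotriangle $A_1A_2A_3$. Already in the Euclidean specialization, take $A_1=(0,0)$, $A_2=(1,0)$, $A_3=(-1,1)$; then $O=(\tfrac{1}{2},\tfrac{3}{2})$ lies on the same side of $A_1A_2$ as $A_3$, yet the rays $A_2O$ and $A_3O$ fall outside the angles at $A_2$ and $A_3$, so that $\alpha_2=\epsilon-\epsilon_{23}$ and $\alpha_3=\epsilon_{13}-\epsilon_{23}$ rather than your asserted $\epsilon+\epsilon_{23}$ and $\epsilon_{13}+\epsilon_{23}$. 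In that sub-case the signs still cancel and $\alpha_3-\alpha_1-\alpha_2=-2\epsilon$, so the theorem survives, but this cancellation is exactly what must be verified in each of the remaining sub-cases of case 1 (where $O$ lies outside the gyrotriangle across the gyroside $A_1A_3$ or $A_2A_3$): one needs that $\epsilon_{13}$ carries the same sign in the decompositions of $\alpha_1$ and of $\alpha_3$, that $\epsilon_{23}$ carries the same sign in $\alpha_2$ and in $\alpha_3$, and that $\epsilon$ enters both $\alpha_1$ and $\alpha_2$ with a plus sign. Your proof neither enumerates these sub-cases nor verifies the sign matching, and this is precisely the bookkeeping the paper's computational route avoids. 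A smaller point: the appeal to ``Euclidean angle-addition at a point'' needs one more line, since the Einstein model is not conformal away from the center of the ball; additivity of gyroangles at a vertex $V$ should be justified by left gyrotranslating by $\om V$, which carries gyrorays from $V$ to Euclidean rays from the origin while preserving betweenness.
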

\begin{proof}
In the  gyrotriangle index notation
\eqref{indexnotation}, p.~\pageref{indexnotation},
with $\alpha_3=\theta$ and $\delta=\delta_{A_1A_2A_3}$
to conform with the notation for gyrotriangle $A_1A_2A_3$ in
Figs.~\ref{fig267ciiam}\,--\,\ref{fig267ciibm},
we have by \eqref{dagim4}, p.~\pageref{dagim4},
\begin{equation} \label{dugit01}
\sin(\theta+\half\delta_{A_1A_2A_3})
= \dfrac{\gammaab a_{12}} {R(\gammaab+1)}
\,.
\end{equation}
Here $\delta_{A_1A_2A_3}$ is the defect of gyrotriangle $A_1A_2A_3$
and $R$ is the circumgyroradius of gyrotriangle $A_1A_2A_3$ in
Figs.~\ref{fig267cm}\,--\,\ref{fig267ciiam}.
Accordingly,
\begin{equation} \label{dugit02}
R = \|\om O\op A_1\| = \|\om O\op A_2\|
\,.
\end{equation}

The expression $\gammaab a_{12}/(R(\gammaab+1))$ is expressed in \eqref{dugit01}
in terms of gyroangles of gyrotriangle $A_1A_2A_3$, shown in
Figs.~\ref{fig267ciiam}\,--\,\ref{fig267ciibm}.
We now wish to express it in terms of gyroangles of gyrotriangle $A_1A_2O$,
also shown in
Figs.~\ref{fig267ciiam}\,--\,\ref{fig267ciibm}.
Accordingly, let
\begin{equation} \label{dugit03}
\begin{split}
\epsilon &= \angle A_1A_2O = \angle A_2A_1O
\\[4pt]
a_{12} &= \|\om A_1\op A_2\|
\\[4pt]
\gammaab &= \gamma_{a_{12}}^{\phantom{1}}
\end{split}
\end{equation}
be parameters of gyrotriangle $A_1A_2O$
and let $\delta=\delta_{A_1A_2O}$ be the defect of the gyrotriangle,
as shown in
Figs.~\ref{fig267ciiam}\,--\,\ref{fig267ciibm}.

Then,
\begin{equation} \label{dugit04}
\delta = \delta_{A_1A_2O} = \pi - 2\phi - 2\epsilon
\,,
\end{equation}
so that
\begin{equation} \label{dugit05}
\epsilon = \frac{\pi}{2} - (\phi+\frac{\delta}{2})
\,,
\end{equation}
implying
\begin{equation} \label{dugit06}
\epsilon + \frac{\delta}{2} = \frac{\pi}{2} - \phi
\,,
\end{equation}
so that
\begin{equation} \label{dugit07}
\sin(\epsilon + \frac{\delta}{2}) = \cos\phi
\,.
\end{equation}

Applying the first equation in \cite[Eq.~(7.143), p.~187]{mybook05}
to gyrotriangle $A_1A_2O$ in Fig.~\ref{fig267ciiam}, we have
\begin{equation} \label{dugit08}
\frac{1}{s}\gammaab a_{12} = \frac{
2\sqrt{F}}{\sin^2\epsilon}
\end{equation}
where, by \eqref{grfde}, p.~\pageref{grfde}, and by \eqref{dugit07},
\begin{equation} \label{dugit09}
\begin{split}
F &= \sin\tfrac{\delta}{2} \sin^2(\epsilon+\tfrac{\delta}{2})
\sin(2\phi+\tfrac{\delta}{2})
\\[4pt] &=
\sin\tfrac{\delta}{2} \cos^2\phi \sin(2\phi+\tfrac{\delta}{2})
\,.
\end{split}
\end{equation}

Applying \cite[Eq.~(7.150), p.~188]{mybook05}
to gyrotriangle $A_1A_2O$
in Fig.~\ref{fig267ciiam}, we have
\begin{equation} \label{dugit10}
\gammaab+1 = \frac{
2\sin^2(\epsilon+\tfrac{\delta}{2}) }{\sin^2\epsilon}
\,.
\end{equation}
Following \eqref{dugit08}, \eqref{dugit10} and \eqref{dugit07} we have
\begin{equation} \label{dugit11}
\frac{1}{s}\frac{\gammaab a_{12}}{\gammaab+1}
=\frac{\sqrt{F}}{\sin^2(\epsilon+\frac{\delta}{2})}
=\frac{\sqrt{F}}{\cos^2\phi}
\,.
\end{equation}

We now turn to calculate $R/s$. Applying
the $AAA$ to $SSS$ {\it Conversion Law} \cite[Theorem 6.5, p.~137]{mybook05}
to gyrotriangle $A_1A_2O$
in Fig.~\ref{fig267ciiam}, we have
\begin{equation} \label{dugit12}
\begin{split}
\gammaR &= \frac{
\cos\epsilon+\cos\epsilon\cos2\phi
}{
\sin\epsilon\sin2\phi
}
=\frac{\cos\epsilon}{\sin\epsilon}
 \frac{1+\cos2\phi}{\sin2\phi}
\\[8pt] &=
\cot\epsilon\frac{2}{\sin2\phi}\frac{1+\cos2\phi}{2}
= \cot\epsilon\frac{\cos^2\phi}{\sin\phi\cos\phi}
\\[8pt] &=
\cot\epsilon\cot\phi
\,.
\end{split}
\end{equation}

By \eqref{rugh1ds}, p.~\pageref{rugh1ds}, by \eqref{dugit12}
and by \eqref{dugit09} we have
\begin{equation} \label{dugit13}
\begin{split}
\frac{R^2}{s^2} &= \frac{\gammaRs-1}{\gammaRs}
= \frac{
\cot^2\epsilon\cot^2\phi-1}{\cot^2\epsilon\cot^2\phi}
\\[8pt] &=
\frac{\sin \tfrac{\delta}{2} \sin(2\phi+\tfrac{\delta}{2})}
{\cos^2\phi \sin^2(\phi+\tfrac{\delta}{2})}
= \frac{F}{\cos^4\phi\sin^2(\phi+\tfrac{\delta}{2})}
\,,
\end{split}
\end{equation}
where $\delta=\delta_{A_1A_2O}$ is the defect of gyrotriangle $A_1A_2O$
given by \eqref{dugit04}, thus obtaining the equation
\begin{equation} \label{dugit14}
\frac{R}{s} = \frac{\sqrt{F}}{\cos^2\phi\sin(\phi+\tfrac{\delta}{2})}
\,.
\end{equation}

Equations \eqref{dugit11} and \eqref{dugit14}, along with the notation
for $\delta$ in \eqref{dugit04}, imply
\begin{equation} \label{dugit15}
\frac{
\gammaab a_{12}
}{
R(\gammaab+1)
}
= \frac{
\frac{1}{s} \frac{\gammaab a_{12}}{\gammaab+1}
}{
\frac{R}{s}
}
=\frac{
\frac{\sqrt{F}}{\cos^2\phi}
}{
\frac{\sqrt{F}}{\cos^2\phi \sin(\phi+\tfrac{\delta}{2})}
}
= \sin(\phi+\tfrac{\delta}{2})
= \sin(\phi+\half\delta_{A_1A_2O}).
\end{equation}

Finally, \eqref{dugit01} and \eqref{dugit15} imply
\begin{equation} \label{dugit16}
\sin(\theta+\half\delta_{A_1A_2A_3})
=
\sin(\phi+\half\delta_{A_1A_2O})
\,,
\end{equation}
which, in turn, implies the results of the Theorem
in \eqref{kfhsd0} and in \eqref{kfhsda}\,--\,\eqref{kfhsdb}.
\end{proof}

\begin{remark}
In the proof of Theorem \ref{thmfdknb2} we take for granted that
the gyroline through the points $A_1$ and $A_2$ has two sides in the
gyroplane through the points $A_1,A_2$ and $A_3$ so that
in Fig.~\ref{fig267ciiam} the points $A_3$ and $O$ lie on the same side
of the gyroline while
in Fig.~\ref{fig267ciibm} these points lie on opposite sides
of the gyroline.
The need for a careful study
of the very intuitive idea that every line in a Euclidean plane
has "two sides" was pointed out by
Millman and Parker in \cite[p.~63]{millmanparker91}, resulting in what they
call the {\it Plane Separation Axiom} (PSA).
An analogous {\it Gyroplane Separation Axiom} (GPSA)
for a gyroplane in an Einstein gyrovector space,
is studied by S\"onmez and Ungar in \cite{sonmezungar13}.
\label{remuvbhy}
\index{Axiom, plane separation (PSA)}
\index{Axiom, gyroplane separation (GPSA)}
\end{remark}

\section{Gyrocircle Gyrotangent Gyrolines}\label{tangent46}
\index{gyrotangent, gyrocircle}

Employing the Inscribed Gyroangle Theorem II, Theorem \ref{thmfdknb2}, p.~\pageref{thmfdknb2},
we prove the following theorem.

 
\begin{figure}[t]  
 \centering         
 \psfrag{PA}{$A$}
 \psfrag{PO}{$O$}
 \psfrag{PP}{$P$}
 \psfrag{PQ}{$Q$}
 \psfrag{PT}{$T$}
 \psfrag{PH}{$2\phi$}
 \psfrag{TE}{$\theta$}
 \includegraphics[width=9cm]{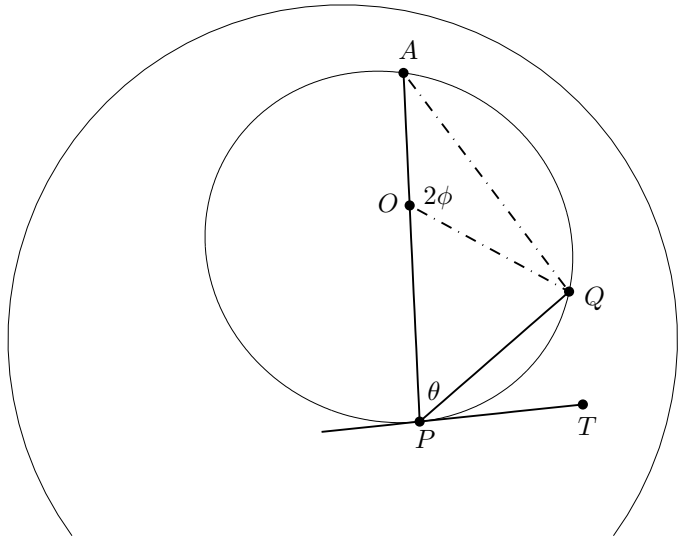}
\caption[A circumgyrocircle, circumgyroradius and a gyrotangent gyroline]{
The circumgyrocircle of gyrotriangle $APQ$ with circumgyrocenter $O$
in an Einstein gyrovector plane. The circumgyrocircle is shown along with
(i) its gyrodiameter $AP$ that extends its gyroradius $OP$, and
(ii) its gyrotangent gyroline at the gyrotangency point $P$.
When $Q$ approaches $P$, the gyroline through $P$ and $Q$
approaches the gyrotangent gyroline through $P$ and $T$.
Shown are also an inscribed gyroangle $\theta$ and a gyrocentral gyroangle $2\phi$,
which are related by the Inscribed Gyroangle Theorem II,
Theorem \ref{thmfdknb2}, p.~\pageref{thmfdknb2}.
\label{fig337m}}
\end{figure}

\begin{theorem}\label{hdgeb5}
Let $C$ be a gyrocircle with gyrocenter $O$
and let $L_{^{PT}}$ be a gyrotangent gyroline of $C$ at a
tangency point $P$ in an Einstein gyrovector space $(\Rsn,\op,\od)$,
shown in Fig.~\ref{fig337m}.
Then, the gyrotangent gyroline $L_{^{PT}}$ is perpendicular to the
gyroradius $OP$ terminating at $P$.
\end{theorem}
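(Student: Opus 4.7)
The plan is to apply the Inscribed Gyroangle Theorem II (Theorem~\ref{thmfdknb2}) to a gyrotriangle inscribed in $C$ one of whose sides is a gyrodiameter through $P$, and then let the remaining vertex approach $P$ along $C$, so that the gyrosecant through $P$ and that vertex converges to the gyrotangent $L_{^{PT}}$ (cf.\ Fig.~\ref{fig337m}). First, I would introduce the point $A$ on $C$ antipodal to $P$, so that $AP$ is a gyrodiameter and the gyrocenter $O$ lies on the gyrosegment $AP$. For any third point $Q\in C$ distinct from $A$ and $P$, the set $\{A,P,Q\}$ is gyrobarycentrically independent and forms the inscribed gyrotriangle $APQ$ of the figure.

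Next, I would apply the Inscribed Gyroangle Theorem II to $APQ$, taking the gyrochord to be $AP$ and the inscribed vertex to be $Q$, so that the inscribed gyroangle is $\theta=\angle AQP$. The corresponding gyrocentral gyroangle $\angle AOP$ equals $\pi$ (that is, $2\phi=\pi$), because $A$, $O$, $P$ are gyrocollinear, and the degenerate gyrotriangle $APO$ has vanishing defect $\delta_{APO}=0$. Both cases \eqref{kfhsda} and \eqref{kfhsdb} of the theorem therefore collapse to the hyperbolic Thales-type identity
\begin{equation*}
\angle AQP + \tfrac12\,\delta_{APQ} \;=\; \tfrac{\pi}{2}.
\end{equation*}
Combining this with the angle-sum relation $\angle QAP+\angle APQ+\angle AQP=\pi-\delta_{APQ}$ in gyrotriangle $APQ$, I would eliminate $\angle AQP$ to obtain
\begin{equation*}
\angle APQ \;=\; \tfrac{\pi}{2} - \tfrac12\,\delta_{APQ} - \angle QAP.
\end{equation*}

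Finally, I would let $Q\to P$ along $C$: the right-hand side tends to $\pi/2$ because $\angle QAP\to 0$ and $\delta_{APQ}\to 0$ (the degenerating gyrotriangle has shrinking area, hence vanishing defect), while the left-hand side $\angle APQ$ tends to the angle $\angle APT$ at $P$ between the gyrodiameter $AP$ and the gyrotangent gyroline $L_{^{PT}}$. Since the gyroray from $P$ through $O$ lies along the gyroray from $P$ through $A$, this yields $\angle OPT=\pi/2$, as required. The main obstacle is this concluding limit step: one must justify that the angle $\angle APQ$ and the defect $\delta_{APQ}$ depend continuously on $Q$ along $C$, and that $L_{^{PT}}$ is in fact the limiting position of the gyrosecant gyrolines through $P$ and $Q$ as $Q\to P$, a fact that is typically adopted as the very definition of the gyrotangent gyroline.
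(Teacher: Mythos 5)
Your proof is correct, and it rests on the same two pillars as the paper's own argument: the Inscribed Gyroangle Theorem II applied to a gyrotriangle $APQ$ with $A$ antipodal to $P$, followed by the limit $Q\to P$ in which the gyrosecant through $P$ and $Q$ converges to the gyrotangent $L_{^{PT}}$. The difference lies in how the inscribed-angle theorem is deployed. The paper applies it directly to the gyroangle at $P$, writing $\angle APQ = \tfrac12\angle AOQ + \tfrac12\delta_{AOQ} - \tfrac12\delta_{APQ}$ and passing to the limit in all three terms (chain \eqref{kengf7}). You instead apply it to the gyroangle at $Q$ subtending the gyrodiameter $AP$, which gives the exact hyperbolic Thales relation $\angle AQP + \tfrac12\delta_{APQ} = \tfrac{\pi}{2}$ --- precisely the paper's semi-gyrocircle identity \eqref{kurish1} of Sect.~\ref{semigy7d8} --- and then transfer to the gyroangle at $P$ via the defect/angle-sum identity. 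Your route costs one extra algebraic step but yields slightly cleaner limits, since the only quantities that must be shown to vanish are $\angle QAP$ and $\delta_{APQ}$, both of which manifestly do; the paper instead needs $\angle AOQ\to\pi$ together with two vanishing defects. The limiting facts you flag as the main obstacle --- continuity of gyroangles and defects in $Q$, the vanishing defect of a degenerate gyrotriangle, and the characterization of the gyrotangent as the limiting position of gyrosecants --- are exactly the assumptions the paper also takes for granted in steps (1) and (3) of \eqref{kengf7}, so your proof is at the same level of rigor as the original.
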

\begin{proof}
The proof is given by the following chain of equations,
which are numbered for subsequent derivation.
Let $Q$ be a point close to $P$ on gyrocircle $C$,
shown in Fig.~\ref{fig337m}. Then,
\begin{equation} \label{kengf7}
\begin{split}
\angle APT~~
&
\overbrace{=\!\!=\!\!=}^{(1)} \hspace{0.2cm}
\lim_{Q\rightarrow P} \angle APQ
\\&
\overbrace{=\!\!=\!\!=}^{(2)} \hspace{0.2cm}
\lim_{Q\rightarrow P} \left\{
\half \angle AOQ + \half \delta_{AOQ} - \half \delta_{APQ}
\right\}
\\&
\overbrace{=\!\!=\!\!=}^{(3)} \hspace{0.2cm}
\half \angle AOP + \half \delta_{AOP} - \half \delta_{APP}
\\&
\overbrace{=\!\!=\!\!=}^{(4)} \hspace{0.2cm}
\frac{\pi}{2}
\,.
 \end{split}
 \end{equation}
Derivation of the numbered equalities in \eqref{kengf7} follows:
\begin{enumerate}
\item \label{mddns1}
This limit is clear from Fig.~\ref{fig337m}.
Indeed, when $Q$ approaches $P$, the gyroline through $P$ and $Q$
approaches the gyrotangent gyroline through $P$ and $T$.
\item \label{mddns2}
Follows from (1) by the Inscribed Gyroangle Theorem II,
Theorem \ref{thmfdknb2}, p.~\pageref{thmfdknb2}, with
$\theta=\angle APQ$.
\item \label{mddns3}
Follows from (2) by obvious limits as $Q\rightarrow P$,
noting that the defect of a degenerate gyrotriangle (that is, a
gyrotriangle whose vertices are gyrocollinear) vanishes.
\item \label{mddns4}
Follows from (3) since $\angle AOP = \pi$.
\end{enumerate}
\end{proof}

\section{Semi-Gyrocircle Gyrotriangles}\label{semigy7d8}
\index{semi-gyrocircle gyrotriangle}

 
\begin{figure}[t]  
 \centering         
 \psfrag{A1}{$A_1$}
 \psfrag{A2}{$A_2$}
 \psfrag{A3}{$A_3$}
 \psfrag{O}{$O$}
 \psfrag{al1}{$\alpha_1$}
 \psfrag{al2}{$\alpha_2$}
 \psfrag{al3}{$\theta$}
 \psfrag{formula01}{$\theta + \half \delta_{A_1A_2A_3}= \frac{\pi}{2}$}
 \psfrag{formula02}{$\theta = \alpha_1+\alpha_2$}
 \includegraphics[width=9cm]{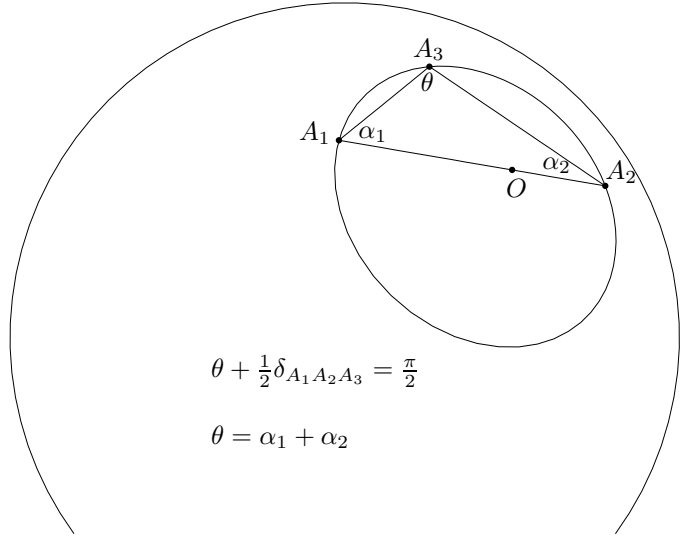}
\caption[The semi-gyrocircle gyrotriangle]{
The Semi-Gyrocircle Gyrotriangle.
\label{fig319enm}}
\end{figure}

In the special case when the gyrocircle gyrochord $A_1A_2$ in
Figs.~\ref{fig267ciiam}\,--\,\ref{fig267ciibm} is the gyrocircle gyrodiameter,
as shown in Fig.~\ref{fig319enm}, we have
$2\phi=\pi$ and $\delta_{A_1A_2O}=0$ so that,
by Theorem \ref{thmfdknb2}, \eqref{kfhsda},
\begin{equation} \label{kurish1}
\theta + \half \delta_{A_1A_2A_3}= \frac{\pi}{2}
\,.
\end{equation}

By definition, the gyrotriangular defect of gyrotriangle $A_1A_2A_3$
in Fig.~\ref{fig319enm} is given by the equation
\begin{equation} \label{kurish1p}
\delta_{A_1A_2A_3}=\pi-(\alpha_1+\alpha_2+\theta)
\,.
\end{equation}
Hence, by \eqref{kurish1}\,--\,\eqref{kurish1p},
\begin{equation} \label{kurish2}
\theta = \alpha_1+\alpha_2
\,.
\end{equation}

In Euclidean geometry triangle defects vanish, so that \eqref{kurish1}
reduces to the well known result according to which $\theta = \pi/2$
in Euclidean geometry.


\section{The Gyrotangent--Gyrosecant Theorem} \label{slila}

A gyrotangent\index{gyrotangent}
gyroline of a gyrocircle is a gyroline that intersects the gyrocircle in
exactly one point. The point of contact is called the point of
tangency.\index{tangency point}

A gyrosecant\index{gyrosecant}
gyroline of a gyrocircle is a gyroline that intersects the gyrocircle in
two different points.
The gyrosegment that links these points is a
gyrochord\index{gyrochord} of the gyrocircle.

By Theorem \ref{hdgeb5}, p.~\pageref{hdgeb5},
the gyroradius of a gyrocircle drawn to the point of tangency of a
gyrotangent gyroline of the gyrocircle is
perpendicular to the gyrotangent gyroline.
Accordingly, the gyrotangent gyrosegment $A_1P$ of the gyrocircle
in Fig.~\ref{fig304enm}, with the tangency point $A_1$ is
perpendicular to the gyroradius $OA_1$ drawn from the
gyrocircle gyrocenter $O$ to the tangency point $A_1$.
In full analogy with the well-known
tangent--secant theorem\index{tangent--secant theorem}
of Euclidean geometry we present the
gyrotangent--gyrosecant theorem\index{gyrotangent--gyrosecant theorem}
of hyperbolic geometry.

 
\begin{figure}[t]  
 \centering         
 \psfrag{O}{$O$}
 \psfrag{P}{$P$}
 \psfrag{A1}{$A_1$}
 \psfrag{A2}{$A_2$}
 \psfrag{A3}{$A_3$}
 \psfrag{d1}{$d_1$}
 \psfrag{d2}{$d_2$}
 \psfrag{d3}{$d_3$}
 \psfrag{d23}{$d_{23}$}
 \psfrag{rr}{$R$}
 \psfrag{formula00}{$\angle OA_1P=\tfrac{\pi}{2}$}
 \psfrag{formula01}[]{$d_1=\|\om A_1\op P\|$}
 \psfrag{formula02}[]{$d_2=\|\om A_2\op P\|$}
 \psfrag{formula03}[]{$d_3=\|\om A_3\op P\|$}
 \psfrag{formula04}[]{$d_{23}=\|\om A_2\op A_3\|=d_3\om d_2$}
 \psfrag{formula05}[]{$R=\|\om A_k\op O\|,~k=1,2,3$}
 \includegraphics[width=9cm]{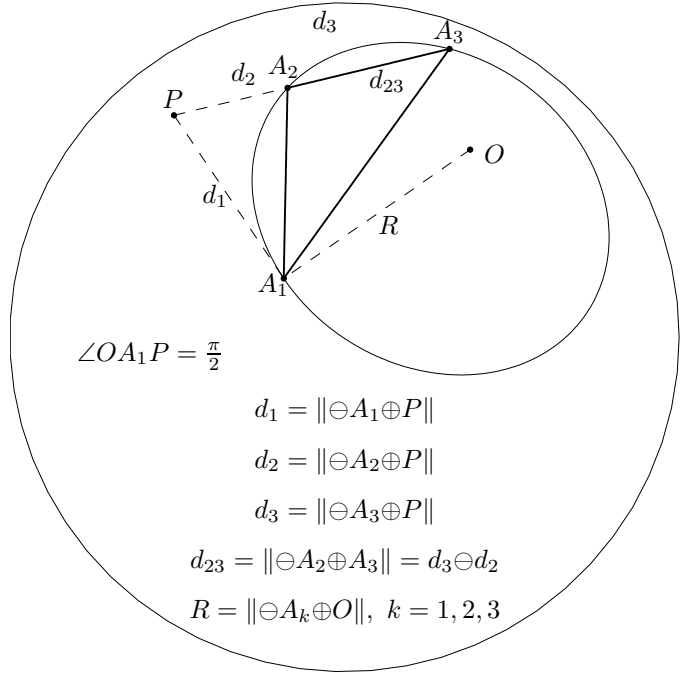}
\caption[Illustration of the Gyrotangent--Gyrosecant Theorem]{
Illustrating the Gyrotangent--Gyrosecant Theorem \ref{thmtnscnt},
a gyrotriangle $A_1A_2A_3$
in an Einstein gyrovector space $(\Rsn,\op,\od)$ is shown for $n=2$, along with
its circumgyrocircle and its circumgyrocenter $O$ and circumgyroradius $R=|OA_1|$.
The gyrotangent gyrosegment
$PA_1$ is tangent to the circumgyrocircle at the tangency point $A_1$.
Gyrosecant gyrosegment $PA_3$ intersects the circumgyrocircle at the points $A_2$ and $A_3$.
Gyrodistances $d_k$, $k=1,2,3$ between various points that illustrate
the gyrotangent--gyrosecant Theorem are also shown.
Thus, in particular, $d_{23}$ is the gyrodistance between $A_2$ and $A_3$.
Owing to the gyrotriangle equality\index{gyrotriangle equality}
$d_3=d_2\op d_{23}$, so that
$d_{23}$ equals the gyrodifference between $d_3$ and $d_2$.
\label{fig304enm}}
\end{figure}

\index{gyrotangent--gyrosecant theorem, I}
\begin{theorem}\label{thmtnscnt}
{\bf (The Gyrotangent--Gyrosecant Theorem, I).}
Let $A_1,A_2,A_3\in\Rsn$ be three distinct points that lie on a gyrocircle
in an Einstein gyrovector space $\Rsn=(\Rsn,\op,\od)$ such that
the gyrosecant gyroline of the gyrocircle through the points $A_2,A_3$
and the gyrotangent gyroline of the gyrocircle with the tangency point $A_1$
share a point $P$ not on the gyrocircle, as shown in Fig.~\ref{fig304enm}.

Furthermore, let
\begin{equation} \label{napofm1}
\begin{split}
d_1 &= \|\om A_1 \op P\| \\
d_2 &= \|\om A_2 \op P\| \\
d_3 &= \|\om A_3 \op P\|
\,.
\end{split}
\end{equation}
and
\begin{equation} \label{napofm1s}
d_{23} = \|\om A_2 \op A_3\| = d_3 \om d_2
\,.
\end{equation}
Then
\begin{equation} \label{napofm2}
\gamma_{d_1}^2 d_1^2 = \frac{2}
{\gamma_{d_3\om d_2}^{\phantom{O}} +1}
\gamma_{d_2}^{\phantom{O}} d_2
\gamma_{d_3}^{\phantom{O}} d_3
\,.
\end{equation}
\end{theorem}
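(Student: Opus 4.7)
My plan is to mirror the classical Euclidean proof of the tangent--secant theorem, in which similarity of the triangles $PA_1A_2$ and $PA_3A_1$ (coming from the tangent--chord angle equality) directly yields $d_1^{\,2}=d_2d_3$. Since gyrotriangles admit no similarity, I would replace the similarity step by the extended law of gyrosines, Theorem \ref{thmhfkbc}, applied separately in the two gyrotriangles $PA_1A_2$ and $PA_1A_3$, and replace the tangent--chord angle equality by a limiting form of the Inscribed Gyroangle Theorem II, obtained in exactly the spirit of the proof of Theorem \ref{hdgeb5}.

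First I would use that gyrotriangles $PA_1A_2$ and $PA_1A_3$ share the side $PA_1$ of gyrolength $d_1$ and, by gyrocollinearity of $P,A_2,A_3$, share the gyroangle $\omega=\angle A_1PA_2=\angle A_1PA_3$ at $P$. Applying the extended law of gyrosines in each and multiplying the two expressions obtained for $\gamma_{d_1}d_1$ yields, after noting that $\sin(\angle PA_2A_1)=\sin\alpha_2$ (supplementary gyroangles at $A_2$) and $\sin(\angle PA_3A_1)=\sin\alpha_3$, where $\alpha_i$ are the gyroangles of the inscribed gyrotriangle $A_1A_2A_3$, the identity
\begin{equation*}
\gamma_{d_1}^2 d_1^{\,2} = \frac{\sin\alpha_2\sin\alpha_3}{\sin\tau\,\sin(\tau+\alpha_1)}\,\gamma_{d_2}d_2\,\gamma_{d_3}d_3,
\end{equation*}
where $\tau=\angle PA_1A_2$ is the tangent--chord gyroangle and $\tau+\alpha_1=\angle PA_1A_3$.

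Second I would pin down $\tau$ via a hyperbolic tangent--chord identity. Following the limit argument used in the proof of Theorem \ref{hdgeb5}, I would slide a point $Q$ along the circumgyrocircle toward $A_1$, apply the Inscribed Gyroangle Theorem II to inscribed gyroangles subtending the gyrochord $A_1A_2$, and take $Q\to A_1$, exploiting that the defect of a degenerate gyrotriangle vanishes. This yields an explicit gyrotrigonometric expression for $\sin\tau$ in terms of $\alpha_3$ and $\delta_{A_1A_2A_3}$, and similarly for $\sin(\tau+\alpha_1)$ in terms of $\alpha_2$ and $\delta_{A_1A_2A_3}$.

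With both factors of the denominator expressed in gyrotriangle data, the remaining step is to verify the claimed identity by substituting the $AAA$ to $SSS$ Conversion Law \eqref{jsf04st} together with Theorem \ref{thmdmsk1} to rewrite every $\gamma_{ij}a_{ij}$ in its gyroangle-and-circumgyroradius form, and then to invoke the scalar Einstein identity $\gamma_{d_3\om d_2}=\gamma_{d_2}\gamma_{d_3}(1-d_2d_3/s^2)$ which is valid along the gyrosecant gyroline where $d_2,d_3$ are colinear. The main obstacle is precisely this final reduction: the defect bookkeeping across the three gyrotriangles $PA_1A_2$, $PA_1A_3$, and $A_1A_2A_3$ must be reconciled with the scalar Einstein algebra on the gyrosecant, and, as anticipated in the introduction, I expect computer algebra to be indispensable in verifying the resulting gyrotrigonometric identity cleanly.
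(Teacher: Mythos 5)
Your proposal is correct in outline but follows a genuinely different route from the paper. The paper's proof is entirely gyrobarycentric: it writes $P$ with unknown gyrobarycentric coordinates on the gyroline through $A_2,A_3$, imposes the perpendicularity of the gyrotangent and the gyroradius through the Einstein--Pythagoras identity in the right gyrotriangle $PA_1O$ to solve for those coordinates, and then computes $\gamma_{d_k}^2d_k^2$, $k=1,2,3$, explicitly as rational functions of $\gammaab,\gammaac,\gammabc$, verifying \eqref{napofm2} after squaring. You instead transplant the classical similar-triangles proof: the law of gyrosines in the two gyrotriangles $PA_1A_2$ and $PA_1A_3$ sharing the side $PA_1$, plus a hyperbolic tangent--chord angle obtained as a limit of the Inscribed Gyroangle Theorem II. Your plan works, and in fact closes more cleanly than you anticipate: the limit argument gives $\sin\tau=\sin(\alpha_3+\tfrac{\delta}{2})$ and $\sin(\tau+\alpha_1)=\sin(\alpha_2+\tfrac{\delta}{2})$ with $\delta=\delta_{A_1A_2A_3}$, the gyrotriangle equality already gives $\gamma_{d_3\om d_2}^{\phantom{O}}=\gammabc$ directly (so the scalar Einstein identity is not needed), and your quotient then reduces to the single known gamma-gyrotrigonometric identity $\gammabc+1=2\sin(\alpha_2+\tfrac{\delta}{2})\sin(\alpha_3+\tfrac{\delta}{2})/(\sin\alpha_2\sin\alpha_3)$ --- the same \cite[Eq.~(7.150)]{mybook05} the paper invokes at \eqref{dugit10} and implicitly at \eqref{fskvm6a} --- so no appeal to the $AAA$ to $SSS$ law, to Theorem \ref{thmdmsk1}, or to computer algebra is actually required. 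What each approach buys: the paper's computation also delivers the explicit gyrobarycentric representation of $P$ and the individual gyrodistances $d_1,d_2,d_3$ in terms of the side gamma factors, which are reused in later sections, while yours makes conceptually transparent where the correction factor $2/(\gammabc+1)$ comes from (it is exactly the gyrotrigonometric measure of the failure of similarity). Two points you should still nail down: (i) the tangent--chord limit needs the same continuity-of-defect justification as steps (1)--(3) of the proof of Theorem \ref{hdgeb5}, together with a harmless case check since Theorem \ref{thmfdknb2} determines the angle only up to supplement; and (ii) you should use the plain rather than the extended law of gyrosines, since the auxiliary gyrotriangles $PA_1A_2$ and $PA_1A_3$ need not possess circumgyrocircles.
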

\begin{proof}
Using the gyrotriangle index notation \eqref{indexnotation}, p.~\pageref{indexnotation},
we have, in particular,
\begin{equation} \label{napofp1}
\gammabc = \gamma_{\om A_2\op A_3}^{\phantom{O}}
         = \gamma_{\|\om A_2\op A_3\|}^{\phantom{O}}
\,.
\end{equation}
The points $P$, $A_2$ and $A_3$ are gyrocollinear, as shown in Fig.~\ref{fig304enm}.
Hence, by the gyrotriangle equality
\cite[Sect.~2.4]{mybook05},
\begin{equation} \label{napofp2}
d_2 \op \|\om A_2\op A_3\| = d_3
\,,
\end{equation}
implying
\begin{equation} \label{napofp3}
\|\om A_2\op A_3\| = d_3 \om d_2
\,,
\end{equation}
as stated in the Theorem, \eqref{napofm1s}.
Hence, by \eqref{napofp1} and \eqref{napofp3},
\begin{equation} \label{napofp4}
\gammabc = \gamma_{d_3\om d_2}^{\phantom{O}}
\,.
\end{equation}

Let $O$ be the circumgyrocenter of gyrotriangle $A_1A_2A_3$,
as shown in Fig.~\ref{fig304enm}.
Gyrosegment $PA_1$ is tangent to the circumgyrocircle of the gyrotriangle
at the tangency point $A_1$. Hence,
gyrotriangle $PA_1O$ is right gyroangled, with
$\angle PA_1O=\pi/2$, so that,
by the Einstein--Pythagoras Identity \cite[Eq.~(6.57), p.~144]{mybook05},
\index{Einstein--Pythagoras Theorem}
\begin{equation} \label{napof01}
\gamma_{\om A_1\op P}^{\phantom{O}}
\gamma_{\om A_1\op O}^{\phantom{O}}
=
\gamma_{\om P\op O}^{\phantom{O}}
\,.
\end{equation}

The circumgyrocenter $O$ of gyrotriangle $A_1A_2A_3$ in Fig.~\ref{fig304enm}
possesses the gyrobarycentric representation
\begin{equation} \label{napof02}
O = \frac{
m_{1,O} \gAa A_1 + m_{2,O} \gAb A_2 + m_{3,O} \gAc A_3
}{
m_{1,O} \gAa     + m_{2,O} \gAb     + m_{3,O} \gAc
}
\end{equation}
with respect to the gyrobarycentrically independent set $\{A_1,A_2,A_3\}$ where,
by \eqref{hfjdv07bb}, p.~\pageref{hfjdv07bb},
\begin{equation} \label{napof03}
\begin{split}
m_{1,O} &= (\phantom{-}\gammaab+\gammaac-\gammabc-1)(\gammabc-1) \\
m_{2,O} &= (\phantom{-}\gammaab-\gammaac+\gammabc-1)(\gammaac-1) \\
m_{3,O} &= (-\gammaab+\gammaac+\gammabc-1)(\gammaab-1)
\,.
\end{split}
\end{equation}

Hence, by the Gyrobarycentric Representation Gyrocovariance Theorem
\cite[Theorem 4.6, pp.~90-91]{mybook05}
with $X=\om A_1$, we have
the following two equations \eqref{napof04} and \eqref{napof05}.
The first equation is
\begin{equation} \label{napof04}
\begin{split}
\om A_1 \op O &= \om A_1 \op \frac{
m_{1,O} \gAa A_1 + m_{2,O} \gAb A_2 + m_{3,O} \gAc A_3
}{
m_{1,O} \gAa     + m_{2,O} \gAb     + m_{3,O} \gAc
}
\\[8pt] & \hspace{-1.2cm}=
\frac{
m_{1,O} \gamma_{\om A_1\op A_1}^{\phantom{O}} (\om A_1\op A_1) +
m_{2,O} \gamma_{\om A_1\op A_2}^{\phantom{O}} (\om A_1\op A_2) +
m_{3,O} \gamma_{\om A_1\op A_3}^{\phantom{O}} (\om A_1\op A_3)
}{
m_{1,O} \gamma_{\om A_1\op A_1}^{\phantom{O}} +
m_{2,O} \gamma_{\om A_1\op A_2}^{\phantom{O}} +
m_{3,O} \gamma_{\om A_1\op A_3}^{\phantom{O}}
}
\\[8pt] &= \frac{
m_{2,O} \gammaab\ab_{12} + m_{3,O} \gammaac\ab_{13}
}{
m_{1,O} + m_{2,O} \gammaab + m_{3,O} \gammaac
}
\,.
\end{split}
\end{equation}

The second equation is
\begin{equation} \label{napof05}
\gamma_{R}^{\phantom{O}} = \gamma_{\om A_1\op O}^{\phantom{O}} = \frac{
m_{1,O} + m_{2,O} \gammaab + m_{3,O} \gammaac
}{
\mO
}
\end{equation}
where $R=\|\om A_1\op O\|$ is the circumgyroradius of gyrotriangle $A_1A_2A_3$.
Here, the constant $\mO>0$
of the gyrobarycentric representation \eqref{napof02}
of $O$ is given by the equation
\begin{equation} \label{napof06}
\begin{split}
m_O^2 &= m_{1,O}^2 + m_{2,O}^2 + m_{3,O}^2 \\
&+ 2(m_{1,O}m_{2,O}\gammaab + m_{1,O}m_{3,O}\gammaac + m_{2,O}m_{3,O}\gammabc)
\,,
\end{split}
\end{equation}
as we see from the
{\it gyrobarycentric representation constant} associated with
the Gyrobarycentric Representation Gyrocovariance Theorem
\cite[Theorem 4.6, pp.~90-91]{mybook05}.
There will be no need to use the right-hand side of \eqref{napof06}
in the proof.

Let the point $P$, shown in Fig.~\ref{fig304enm}, be given by its
gyrobarycentric representation
\begin{equation} \label{napof07}
P = \frac{
m_2 \gAb A_2 + m_3 \gAc A_3
}{
m_2 \gAb + m_3 \gAc
}
\end{equation}
with respect to the gyrobarycentrically independent set $\{A_1,A_2\}$,
where the gyrobarycentric coordinates $m_2$ and $m_3$ are to be determined
in \eqref{napof18} below.
The gyrobarycentric representation \eqref{napof07} with respect to the
set $\{A_1,A_2\}$ exists since the point $P$ lies on the gyroline
that passes through the points $A_2$ and $A_3$.

By the Gyrobarycentric Representation Gyrocovariance Theorem
\cite[Theorem 4.6, pp.~90-91]{mybook05}
with $X=\om A_1$  we have
\begin{equation} \label{napof08}
\om A_1 \op P = \frac{
m_2 \gammaab\ab_{12} + m_3 \gammaac\ab_{13}
}{
m_2 \gammaab + m_3 \gammaac
}
\end{equation}
and
\begin{equation} \label{napof09}
\gamma_{\om A_1\op P}^{\phantom{O}} = \frac{
m_2\gammaab + m_3\gammaac}{\mP}
\,.
\end{equation}
Here, $\mP>0$ is the constant of the gyrobarycentric representation \eqref{napof07}
of $P$, given by the equation
\begin{equation} \label{napof10}
m_P^2 = m_2^2+m_3^2+2m_2m_3\gammabc
\,,
\end{equation}
as we see from the representation constant associated with
the Gyrobarycentric Representation Gyrocovariance Theorem
\cite[Theorem 4.6, pp.~90-91]{mybook05}.

Similarly to \eqref{napof08} and \eqref{napof09}, we have
\begin{equation} \label{napof11}
\begin{split}
\om A_2 \op P = \frac{
m_3 \gammabc\ab_{23}
}{
m_2 + m_3 \gammabc
}
\\[8pt]
\om A_3 \op P = \frac{
m_2 \gammabc\ab_{32}
}{
m_2 \gammabc + m_3
}
\end{split}
\end{equation}
and
\begin{equation} \label{napof12}
\begin{split}
\gamma_{\om A_2\op P}^{\phantom{O}} = \frac{
m_2 + m_3\gammabc}{\mP}
\\[8pt]
\gamma_{\om A_3\op P}^{\phantom{O}} = \frac{
m_2\gammabc + m_3}{\mP}
\,.
\end{split}
\end{equation}

Following \eqref{napof02} we have, by
the Gyrobarycentric Representation Gyrocovariance Theorem
\cite[Theorem 4.6, pp.~90-91]{mybook05},
\begin{equation} \label{napof13}
\om P\op O = \frac{
m_{1,O} \gamma_{\om A_1\op P}^{\phantom{O}} (\om P\op A_1) +
m_{2,O} \gamma_{\om A_2\op P}^{\phantom{O}} (\om P\op A_2) +
m_{3,O} \gamma_{\om A_3\op P}^{\phantom{O}} (\om P\op A_3)
}{
m_{1,O} \gamma_{\om A_1\op P}^{\phantom{O}} +
m_{2,O} \gamma_{\om A_2\op P}^{\phantom{O}} +
m_{3,O} \gamma_{\om A_3\op P}^{\phantom{O}}
}
\,,
\end{equation}
noting that
$\gamma_{\om P\op A_k}^{\phantom{O}}= \gamma_{\om A_k\op P}^{\phantom{O}}$,
$k=1,2,3$.\\[2pt]

Hence, by a gamma factor identity of the
Gyrobarycentric representation Gyrocovariance Theorem
\cite[Theorem 4.6, pp.~90-91]{mybook05},
applied to \eqref{napof13}, and by \eqref{napof09}, \eqref{napof11},
we have
\begin{equation} \label{napof14}
\begin{split}
\gamma_{\om P\op O}^{\phantom{O}} &= \frac{
m_{1,O} \gamma_{\om A_1\op P}^{\phantom{O}} +
m_{2,O} \gamma_{\om A_2\op P}^{\phantom{O}} +
m_{3,O} \gamma_{\om A_3\op P}^{\phantom{O}}
}{
\mO
}
\\[8pt] & \hspace{-0.6cm}= \frac{
m_{1,O}(m_2\gammaab+m_3\gammaac) +
m_{2,O}(m_2        +m_3\gammabc) +
m_{3,O}(m_2\gammabc+m_3        )
}{\mO\mP}
\,.
\end{split}
\end{equation}

Substituting \eqref{napof05}, \eqref{napof09} and \eqref{napof14}
into the Einstein--Pythagoras Identity\index{Einstein--Pythagoras Identity} \eqref{napof01},
we obtain the equation
\begin{equation} \label{napof15}
\begin{split}
& m_{1,O}(m_2\gammaab+m_3\gammaac) + m_{2,O}(m_2 +m_3\gammabc) + m_{3,O}(m_2\gammabc+m_3)
\\ &=
(m_2\gammaab+m_3\gammaac) ( m_{1,O} + m_{2,O}\gammaab + m_{3,O}\gammaac)
\,.
\end{split}
\end{equation}
The latter, in turn, can be written as
\begin{equation} \label{napof16}
\begin{split}
0 &=
m_2\{m_{2,O}(\gamma_{12}^2-1) + m_{3,O} (\gammaab\gammaac-\gammabc)\}
\\ &+
m_3\{m_{2,O}(\gammaab\gammaac-\gammabc) + m_{3,O} (\gamma_{13}^2-1)\}
\,.
\end{split}
\end{equation}
Solving \eqref{napof16} for $m_2$ and $m_3$ we obtain the equations
\begin{equation} \label{napof17}
\begin{split}
m_2 &= K\{m_{2,O}(\gammaab\gammaac-\gammabc) + m_{3,O} (\gamma_{13}^2-1) \}
\\
m_3 &=-K\{m_{2,O}(\gamma_{12}^2-1) + m_{3,O} (\gammaab\gammaac-\gammabc) \}
\end{split}
\end{equation}
for any nonzero factor $K$.

Owing to the homogeneity of gyrobarycentric coordinates, the factor $K$ is irrelevant.
Hence, we select $K=1$, obtaining the equations
\begin{equation} \label{napof18}
\begin{split}
m_2 &= m_{2,O}(\gammaab\gammaac-\gammabc) + m_{3,O} (\gamma_{13}^2-1)
\\
m_3 &=-m_{2,O}(\gamma_{12}^2-1) - m_{3,O} (\gammaab\gammaac-\gammabc)
\,,
\end{split}
\end{equation}
where $m_{2,O}$ and $m_{3,O}$ are given by \eqref{napof03}.

Equations \eqref{napof18} for the gyrobarycentric coordinates $m_2$ and $m_3$
of the point $P$ in \eqref{napof07} result from the
Einstein--Pythagoras Identity\index{Einstein--Pythagoras Identity} \eqref{napof01}.
Hence, they insure that gyrosegments $A_1P$ and $A_1O$
are perpendicular to each other, as desired.

By the first equation in \eqref{napofm1},
and by \eqref{rugh1ds}, p.~\pageref{rugh1ds}, we have
\begin{equation} \label{napof19}
\frac{1}{s^2} \gamma_{d_1}^2 d_1^2 = \gamma_{d_1}^2 - 1
= \gamma_{\om A_1\op P}^2 -1
\,,
\end{equation}
where the gamma factor $\gamma_{\om A_1\op P}^{\phantom{O}}$
is given by \eqref{napof09}, \eqref{napof10} and \eqref{napof18}.

Substituting  \eqref{napof09} into  \eqref{napof19}, we have
\begin{equation} \label{napof20}
\frac{1}{s^2} \gamma_{d_1}^2 d_1^2 = \frac{
(m_2\gammaab+m_3\gammaac)^2-m_P^2}{m_P^2}
\,,
\end{equation}
where $m_2$, $m_3$ and $m_P^2$ are given by \eqref{napof18} and \eqref{napof10}.

Hence, by \eqref{napof20}, \eqref{napof18} and
by straightforward algebra,
\begin{equation} \label{napof21a}
\frac{1}{s^2} \gamma_{d_1}^2 d_1^2 = \frac{2}{s^2D^2m_P^2}
(\gammaab-1) (\gammaac-1) (\gammabc-1)
\,,
\end{equation}
where
\begin{equation} \label{napof21b}
D = \gamma_{12}^2+\gamma_{13}^2+\gamma_{23}^2 - 2\gammaab\gammaac\gammabc
\,.
\end{equation}

Similarly to \eqref{napof19}, by \eqref{napofm1} and
\eqref{rugh1ds}, p.~\pageref{rugh1ds}, we have
\begin{equation} \label{napof22}
\frac{1}{s^2} \gamma_{d_2}^2 d_2^2 = \gamma_{d_2}^2 - 1
= \gamma_{\om A_2\op P}^2 -1
\,,
\end{equation}
where the gamma factor $\gamma_{\om A_2\op P}^{\phantom{O}}$
is given by the first equation in \eqref{napof12},
and by \eqref{napof10} and \eqref{napof18}.

Substituting the first equation in \eqref{napof12} into  \eqref{napof22}, we have
\begin{equation} \label{napof23}
\frac{1}{s^2} \gamma_{d_2}^2 d_2^2 = \frac{
(m_2+m_3\gammabc)^2-m_P^2}{m_P^2}
\,,
\end{equation}
where $m_2$, $m_3$ and $m_P^2$ are given by \eqref{napof18} and \eqref{napof10}.

Hence, by \eqref{napof23}, \eqref{napof20} and straightforward algebra,
\begin{equation} \label{napof24}
\frac{1}{s^2} \gamma_{d_2}^2 d_2^2 = \frac{1}{s^2D^2m_P^2}
(\gammaab-1)^2 (\gamma_{23}^2-1)
\,.
\end{equation}

Similarly to \eqref{napof19} and \eqref{napof22}, by \eqref{napofm1} and
\eqref{rugh1ds}, p.~\pageref{rugh1ds}, we have
\begin{equation} \label{napof25}
\frac{1}{s^2} \gamma_{d_3}^2 d_3^2 = \gamma_{d_3}^2 - 1
= \gamma_{\om A_3\op P}^2 -1
\,,
\end{equation}
where the gamma factor $\gamma_{\om A_3\op P}^{\phantom{O}}$
is given by the second equation in \eqref{napof12},
and by \eqref{napof10} and \eqref{napof18}.

Substituting the second equation in \eqref{napof12} into  \eqref{napof25}, we have
\begin{equation} \label{napof26}
\frac{1}{s^2} \gamma_{d_3}^2 d_3^2 = \frac{
(m_2\gammabc+m_3)^2-m_P^2}{m_P^2}
\,,
\end{equation}
where $m_2$, $m_3$ and $m_P^2$ are given by \eqref{napof18} and \eqref{napof10}.

Hence, by \eqref{napof26}, \eqref{napof18} and straightforward algebra,
\begin{equation} \label{napof27}
\frac{1}{s^2} \gamma_{d_3}^2 d_3^2 = \frac{1}{s^2D^2m_P^2}
(\gammaac-1)^2 (\gamma_{23}^2-1)
\,.
\end{equation}

Finally, following \eqref{napof21a}, \eqref{napof24} and \eqref{napof27}
and straightforward algebra, noting \eqref{napofp4}, we have
\begin{equation} \label{napof28}
\frac{
(\gamma_{d_1}^2 d_1^2)^2
}{
\gamma_{d_2}^2 d_2^2 \gamma_{d_3}^2 d_3^2
}
= \left( \frac{2}{\gammabc+1} \right)^2
= \left( \frac{2}{\gamma_{d_3\om d_2}^{\phantom{O}}+1} \right)^2
\,,
\end{equation}
thus verifying the result \eqref{napofm2} of the Theorem.
\end{proof}

In order to restate Theorem \ref{thmtnscnt} in a way that emphasizes
analogies with its Euclidean counterpart, we introduce the notation
\begin{equation} \label{manof01}
|AB|:=\|\om A\op B\|=\|\om B\op A\|
\hspace{1.2cm} {\rm (Hyperbolic~Geometry)}
\end{equation}
for the gyrodistance between points $A$ and $B$
of an Einstein gyrovector space $(\Rsn,\op,\od)$.
Accordingly, $|AB|$ is the gyrolength of gyrosegment $AB$.

In order to emphasize analogies we use, ambiguously,
the same notation in the context of Euclidean geometry as well,
that is,
\begin{equation} \label{manof01euc}
|AB|:=\|-A+B\|=\|-B+A\|
\hspace{1.2cm} {\rm (Euclidean~Geometry)}
\end{equation}
is the distance between points $A$ and $B$ of a Euclidean vector space $\Rn$.
Accordingly, $|AB|$ is the length of segment $AB$.
It should always be clear from the context whether $|AB|$ represents
the gyrolength of a gyrosegment in hyperbolic geometry,
or the length of a segment in Euclidean geometry.

Using the notation in \eqref{manof01}\,--\,\eqref{manof01euc},
we restate Theorem \ref{thmtnscnt} as follows,
noting that by \eqref{napofp3} and \eqref{napofm1},
\begin{equation} \label{manofi}
d_3 \om d_2 = |PA_3|\om |PA_2|=|A_2A_3|
\,.
\end{equation}

\index{gyrotangent--gyrosecant theorem, II}
\begin{theorem}\label{thmtnscns}
{\bf (The Gyrotangent--Gyrosecant Theorem, II).}
If a gyrotangent of a gyrocircle from an external point $P$ meets the
gyrocircle at $A_1$,
and a gyrosecant from $P$ meets the gyrocircle at $A_2$ and $A_3$,
as shown in Fig.~\ref{fig304enm}, then
\begin{equation} \label{manof02s}
\gamma_{|PA_1|}^2 |PA_1|^2 = \frac{2}{
\gamma_{|A_2A_3|}^{\phantom{O}} +1
}
\gamma_{|PA_2|}^{\phantom{O}} |PA_2|
\gamma_{|PA_3|}^{\phantom{O}} |PA_3|
\,.
\end{equation}
\end{theorem}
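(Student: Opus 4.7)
The plan is essentially notational: Theorem \ref{thmtnscns} is Theorem \ref{thmtnscnt} restated in the gyrodistance notation $|AB| := \|\om A\op B\|$ introduced in \eqref{manof01}, so no new geometric content is required and the proof reduces to matching symbols between the two statements. First I would observe that the hypotheses agree term for term: both theorems concern three distinct points $A_1,A_2,A_3$ on a gyrocircle together with an external point $P$ such that the gyroline through $P$ and $A_1$ is tangent at $A_1$ and the gyroline through $P$, $A_2$, $A_3$ is a gyrosecant.

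Next, under the notation \eqref{manof01}, the quantities $d_1,d_2,d_3$ defined in \eqref{napofm1} become $|PA_1|,|PA_2|,|PA_3|$ respectively. I would then invoke the gyrotriangle equality for the gyrocollinear points $P,A_2,A_3$, exactly as applied in \eqref{napofp2}\,--\,\eqref{napofp3}, to identify $d_3\om d_2$ with $|A_2A_3|$; this is the content of \eqref{manofi}. Consequently $\gamma_{d_3\om d_2}=\gamma_{|A_2A_3|}$. Substituting these equalities into the conclusion \eqref{napofm2} of Theorem \ref{thmtnscnt} produces \eqref{manof02s} verbatim, finishing the proof.

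The only point worth flagging — and it is bookkeeping rather than a genuine obstacle — is that the expression $d_3\om d_2$ implicitly assumes $A_2$ lies between $P$ and $A_3$ on the gyrosecant, so that the gyrotriangle equality \eqref{napofp2} reads $d_2\op|A_2A_3|=d_3$ with the correct sign. If instead $A_3$ lies between $P$ and $A_2$, relabeling $A_2\leftrightarrow A_3$ handles the case; the right-hand side of \eqref{manof02s} is symmetric in the two gyrosecant intersection points, so no generality is lost and the identity holds as stated.
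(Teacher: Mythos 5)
Your proposal is correct and follows exactly the paper's route: Theorem \ref{thmtnscns} is obtained from Theorem \ref{thmtnscnt} purely by the notational substitutions $d_k=|PA_k|$ and the identification $d_3\om d_2=|A_2A_3|$ via the gyrotriangle equality, which is precisely \eqref{manofi}. Your remark on the ordering of $A_2$ and $A_3$ along the gyrosecant is sensible bookkeeping and does not change the argument.
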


In the Euclidean limit, $s\rightarrow\infty$,
gyrolengths of gyrosegments tend to lengths of corresponding segments
and gamma factors tend to 1. Hence, in that limit,
the Gyrotangent--Gyrosecant Theorem \ref{thmtnscns}
reduces to the following well-known Tangent--Secant Theorem of Euclidean geometry:

\index{tangent--secant theorem}
\begin{theorem}\label{thmtnscnseuc}
{\bf (The Tangent--Secant Theorem).}
If a tangent of a circle from an external point $P$ meets the
circle at $A_1$,
and a secant from $P$ meets the circle at $A_2$ and $A_3$, then
\begin{equation} \label{manof02euc}
|PA_1|^2 = |PA_2| |PA_3|
\,.
\end{equation}
\end{theorem}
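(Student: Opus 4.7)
The plan is to derive Theorem \ref{thmtnscnseuc} as the Newtonian limit, $s\rightarrow\infty$, of the Gyrotangent--Gyrosecant Theorem \ref{thmtnscns}, in the spirit in which Theorem \ref{thmdksew} was deduced from the circumgyroradius formula \eqref{fskvm4}. As noted after \eqref{v72gs}, when $s\rightarrow\infty$ the ball $\Rsn$ expands to the whole of $\Rn$, Einstein addition $\op$ reduces to ordinary vector addition $+$, gyrolines become straight lines, gyrocircles become circles, and gyroperpendicularity becomes Euclidean perpendicularity. In particular, by Theorem \ref{hdgeb5} the hyperbolic tangency condition passes in the limit to the Euclidean one, so the external point $P$, the tangent meeting the circle at $A_1$, and the secant meeting it at $A_2$ and $A_3$ in the Euclidean statement can be viewed as the limiting configuration of the one in Fig.~\ref{fig304enm}.

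Operationally, I would hold $P$, $A_1$, $A_2$, $A_3$ fixed in $\Rn$ and apply identity \eqref{manof02s} of Theorem \ref{thmtnscns} for each admissible $s$:
\[
\gamma_{|PA_1|}^2\,|PA_1|^2 \;=\; \frac{2}{\gamma_{|A_2A_3|}^{\phantom{O}}+1}\,\gamma_{|PA_2|}^{\phantom{O}}|PA_2|\,\gamma_{|PA_3|}^{\phantom{O}}|PA_3|\,.
\]
By the definition \eqref{v72gs} of the Lorentz gamma factor, $\gamma_x\rightarrow 1$ as $s\rightarrow\infty$ whenever $\|x\|$ remains bounded. Hence $\gamma_{|PA_1|}^2$, $\gamma_{|PA_2|}$ and $\gamma_{|PA_3|}$ each tend to $1$, while the prefactor $2/(\gamma_{|A_2A_3|}+1)$ tends to $2/2=1$. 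Simultaneously each gyrolength $|A_iB|=\|\om A_i\op B\|$ tends to the corresponding Euclidean length $\|-A_i+B\|$. Passing to the limit in the displayed identity therefore collapses it to $|PA_1|^2 = |PA_2|\,|PA_3|$, which is Theorem \ref{thmtnscnseuc}.

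There is no serious obstacle here, because the analytic work has already been absorbed into the proof of Theorem \ref{thmtnscnt} and what remains is a termwise limit in which every quantity depends continuously on $s$. The only point meriting a brief line of justification is that the Euclidean tangent--secant configuration really does arise as a limit of configurations in $\Rsn$ satisfying the hypotheses of Theorem \ref{thmtnscns}; this follows at once from the convergence of the Einstein operations $\op,\om,\od$ to their Euclidean counterparts and from Theorem \ref{hdgeb5}. The theorem also admits the classical synthetic proof via the similarity $\triangle PA_1A_2\sim\triangle PA_3A_1$ (the tangent--chord angle equals the inscribed angle subtending the same arc), but the limit argument is the one consistent with the gyrolanguage methodology of the paper.
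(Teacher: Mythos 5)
Your proposal is correct and takes essentially the same route as the paper, which obtains Theorem \ref{thmtnscnseuc} precisely as the Euclidean limit $s\rightarrow\infty$ of the Gyrotangent--Gyrosecant Theorem \ref{thmtnscns}, noting that gamma factors tend to $1$ and gyrolengths tend to the corresponding Euclidean lengths. Your added remarks on the convergence of the configuration and the classical synthetic alternative go slightly beyond the paper's one-sentence justification but do not change the argument.
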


\section{The Intersecting Gyrosecants Theorem} \label{slila2}

 
\begin{figure}[t]  
 \centering         
 \psfrag{O}{$O$}
 \psfrag{P}{$P$}
 \psfrag{A1}{$A_1$}
 \psfrag{A2}{$A_2$}
 \psfrag{A3}{$A_3$}
 \psfrag{B2}{$B_2$}
 \psfrag{B3}{$B_3$}
%
 \includegraphics[width=9cm]{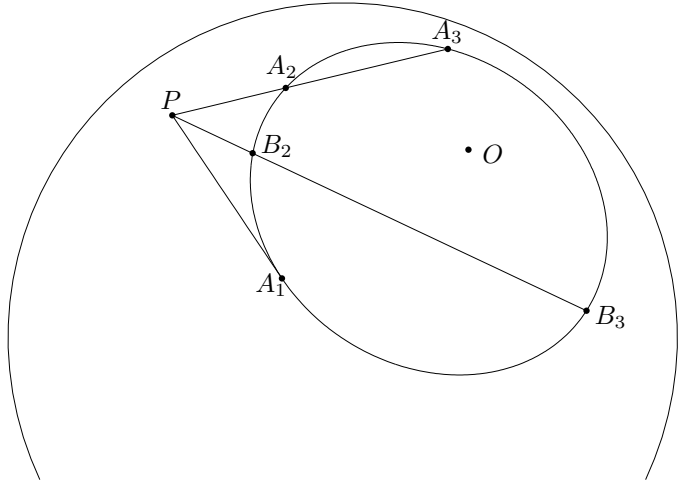}
\caption[Illustration of the Intersecting Gyrosecants Theorem]{
Illustrating the Intersecting Gyrosecants Theorem \ref{thmscscnt},
two intersecting gyrosecants $PA_3$ and $PB_3$ of a gyrocircle are shown.
They, respectively, intersect the gyrocircle at the points $A_2,A_3$
and at the points $B_2,B_3$.
\label{fig312enm}}
\end{figure}

As an obvious corollary of the Gyrotangent--Gyrosecant Theorem \ref{thmtnscns}
we have the following theorem for intersecting gyrosecants of a gyrocircle:

\index{intersecting gyrosecants theorem}
\index{gyrosecants, intersecting, theorem}
\begin{theorem}\label{thmscscnt}
{\bf (The Intersecting Gyrosecants Theorem).}
If two gyrosecants of a gyrocircle in an Einstein gyrovector space $(\Rsn,\op,\od)$,
drawn to the gyrocircle from an
external point $P$, meet the gyrocircle at points $A_2,A_3$ and
at points $B_2,B_3$, respectively, as shown in Fig.~\ref{fig312enm},
then
\begin{equation} \label{manof03}
\frac{
\gamma_{|PA_2|}^{\phantom{O}} |PA_2|
\gamma_{|PA_3|}^{\phantom{O}} |PA_3|}
{\gamma_{|A_2A_3|}^{\phantom{O}} +1}
=
\frac{
\gamma_{|PB_2|}^{\phantom{O}} |PB_2|
\gamma_{|PB_3|}^{\phantom{O}} |PB_3|}
{\gamma_{|B_2B_3|}^{\phantom{O}} +1}
\,.
\end{equation}
\end{theorem}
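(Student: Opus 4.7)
The plan is to reduce Theorem \ref{thmscscnt} directly to Theorem \ref{thmtnscns} by introducing an auxiliary gyrotangent gyroline from the external point $P$ to the gyrocircle. Since $P$ lies outside the gyrocircle (being the common vertex of two gyrosecants), one can draw a gyrotangent gyroline from $P$ to the gyrocircle, meeting it at a single tangency point, call it $T$. Existence of such a gyrotangent from an external point may be taken as a background geometric fact in the Einstein gyrovector space $(\Rsn,\op,\od)$, or justified as the limiting position of a variable gyrosecant through $P$ whose two intersection points with the gyrocircle are made to coalesce.

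With the gyrotangent gyrosegment $PT$ in hand, I would apply Theorem \ref{thmtnscns} to the configuration consisting of the gyrotangent $PT$ together with the first gyrosecant, the one through $A_2$ and $A_3$, obtaining
\begin{equation*}
\gamma_{|PT|}^2 |PT|^2 \;=\; \frac{2}{\gamma_{|A_2A_3|}+1}\,\gamma_{|PA_2|}|PA_2|\,\gamma_{|PA_3|}|PA_3|.
\end{equation*}
I would then apply the same theorem a second time, to the same gyrotangent $PT$ together with the other gyrosecant, the one through $B_2$ and $B_3$, obtaining
\begin{equation*}
\gamma_{|PT|}^2 |PT|^2 \;=\; \frac{2}{\gamma_{|B_2B_3|}+1}\,\gamma_{|PB_2|}|PB_2|\,\gamma_{|PB_3|}|PB_3|.
\end{equation*}

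Equating the two right-hand sides, since both are equal to the common pivot quantity $\gamma_{|PT|}^2|PT|^2$, and canceling the common numerical factor of $2$, yields the desired identity \eqref{manof03}. The conceptual content is simply that $\gamma_{|PT|}^2|PT|^2$ depends only on $P$ and the gyrocircle, not on the choice of gyrosecant through $P$, so any two gyrosecants through $P$ must produce equal right-hand sides in the Gyrotangent--Gyrosecant Identity. There is no genuine computational obstacle; the only point that might merit a separate remark is the existence of the auxiliary gyrotangent from the external point $P$, which is why the author frames Theorem \ref{thmscscnt} as an "obvious corollary" of Theorem \ref{thmtnscns}.
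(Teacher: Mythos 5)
Your proposal is correct and is essentially identical to the paper's own proof: the author likewise draws a gyrotangent gyrosegment from $P$ (meeting the gyrocircle at a point labeled $A_1$ in Fig.~\ref{fig312enm}) and observes that, by Theorem \ref{thmtnscns}, each side of \eqref{manof03} equals half of $\gamma_{|PA_1|}^2|PA_1|^2$. Your remark about the existence of the auxiliary gyrotangent from an exterior point is also well placed; the paper takes this for granted here and only later (Theorem \ref{gtanthm}) establishes that tangency points exist precisely when $P$ is exterior to the gyrocircle.
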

\begin{proof}
Let $PA_1$ be a gyrotangent gyrosegment of the gyrocircle drawn from $P$
and meeting the gyrocircle at $A_1$, as shown in Fig.~\ref{fig312enm}.
Then, by Theorem \ref{thmtnscns},
each of the two sides of \eqref{manof03} equals half the left-hand side
of \eqref{manof02s} thus verifying \eqref{manof03}.
\end{proof}

In the Euclidean limit, $s\rightarrow\infty$,
gyrolengths of gyrosegments tend to lengths of corresponding segments
and gamma factors tend to 1. Hence, in that limit,
the Intersecting Gyrosecants Theorem \ref{thmscscnt}
of hyperbolic geometry reduces to the
following well-known Intersecting Secants Theorem of Euclidean geometry:

\index{intersecting secants theorem}
\index{secants, intersecting, theorem}
\begin{theorem}\label{thmscscnteuc}
{\bf (The Intersecting Secants Theorem).}
If two secants of a circle in a Euclidean vector space $\Rn$,
drawn to the circle from an
external point $P$, meet the circle at points $A_2,A_3$ and
at points $B_2,B_3$, respectively,
then
\begin{equation} \label{manof03euc}
|PA_2| |PA_3| = |PB_2| |PB_3|
\,.
\end{equation}
\end{theorem}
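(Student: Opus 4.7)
The plan is to derive this theorem as the Euclidean limit $s\to\infty$ of the Intersecting Gyrosecants Theorem \ref{thmscscnt}, exactly as the paragraph preceding the statement suggests. Both theorems have the same combinatorial structure, so the argument amounts to a termwise passage to the limit in identity \eqref{manof03}.

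First, I would fix the Euclidean configuration---a circle in $\Rn$, an external point $P$, and two secants meeting the circle at $A_2,A_3$ and $B_2,B_3$---and view the same four points as lying inside $(\Rsn,\op,\od)$ for all sufficiently large $s$. For such $s$ the four points satisfy the circumgyrocircle existence condition of Theorem \ref{thmtivhvn} (gamma factors approach $1$ and the inequality reduces to the Euclidean fact that three non-collinear points determine a circle), so they lie on a common gyrocircle and Theorem \ref{thmscscnt} applies, yielding
\begin{equation*}
\frac{\gamma_{|PA_2|}|PA_2|\,\gamma_{|PA_3|}|PA_3|}{\gamma_{|A_2A_3|}+1}
=
\frac{\gamma_{|PB_2|}|PB_2|\,\gamma_{|PB_3|}|PB_3|}{\gamma_{|B_2B_3|}+1}
\end{equation*}
for every such $s$.

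Next, I would pass to the limit $s\to\infty$. For each fixed Euclidean vector $v\in\Rn$ one has $\gamma_v\to 1$, and every Einstein gyrolength reduces to the underlying Euclidean length. Hence each denominator tends to $2$ and each numerator to the ordinary product of Euclidean lengths, collapsing the identity to the desired equality $|PA_2|\,|PA_3|=|PB_2|\,|PB_3|$.

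No real obstacle is anticipated; the only point that requires any care is the justification that the four Euclidean points lie on a common gyrocircle for all sufficiently large $s$. An alternative route that bypasses this step entirely is to mimic the proof of Theorem \ref{thmscscnt} directly in $\Rn$: draw a Euclidean tangent $PA_1$ from $P$ to the circle and apply the classical Tangent--Secant Theorem \ref{thmtnscnseuc} to each of the two secants, obtaining $|PA_1|^2=|PA_2|\,|PA_3|$ and $|PA_1|^2=|PB_2|\,|PB_3|$, from which the conclusion is immediate.
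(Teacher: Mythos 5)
Your proposal matches the paper's own justification: the paper derives Theorem \ref{thmscscnteuc} precisely as the Euclidean limit $s\to\infty$ of the Intersecting Gyrosecants Theorem \ref{thmscscnt}, with gamma factors tending to $1$ and gyrolengths tending to lengths, exactly as in your main argument. Your added care about the four points lying on a common gyrocircle for large $s$, and the alternative purely Euclidean route via the Tangent--Secant Theorem, are both sound but go beyond what the paper records.
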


\section{Gyrocircle Gyrobarycentric Representation} \label{slila3}
\index{gyrocircle gyrobarycentric representation}

Let $A_1A_2A_3$ be a gyrotriangle that possesses a circumgyrocircle
in an Einstein gyrovector space $(\Rsn,\op,\od)$.
The circumgyrocenter $O$ of gyrotriangle $A_1A_2A_3$,
shown in Figs.~\ref{fig312enm} and \ref{fig313enm},
possesses the gyrobarycentric representation
\begin{equation} \label{napog02}
O = \frac{
m_{1,O} \gAa A_1 + m_{2,O} \gAb A_2 + m_{3,O} \gAc A_3
}{
m_{1,O} \gAa     + m_{2,O} \gAb     + m_{3,O} \gAc
}
\end{equation}
with respect to the gyrobarycentrically independent set $\{A_1,A_2,A_3\}$ of
the reference gyrotriangle vertices where,
by \eqref{hfjdv07bb}, p.~\pageref{hfjdv07bb},
\begin{equation} \label{napog03}
\begin{split}
m_{1,O} &= (\phantom{-}\gammaab+\gammaac-\gammabc-1)(\gammabc-1) \\
m_{2,O} &= (\phantom{-}\gammaab-\gammaac+\gammabc-1)(\gammaac-1) \\
m_{3,O} &= (-\gammaab+\gammaac+\gammabc-1)(\gammaab-1)
\,,
\end{split}
\end{equation}
as in \eqref{napof02} and \eqref{napof03}.

The constant $\mO>0$
of the gyrobarycentric representation \eqref{napog02}
of $O$ is given by \eqref{gkdnseu}\,--\,\eqref{drekc}, p.~\pageref{gkdnseu},
\begin{equation} \label{napog06}
\begin{split}
m_O^2 &= m_{1,O}^2 + m_{2,O}^2 + m_{3,O}^2 \\
&+ 2(m_{1,O}m_{2,O}\gammaab + m_{1,O}m_{3,O}\gammaac + m_{2,O}m_{3,O}\gammabc)
\\&=
D_3(D_3-H_3)
\,.
\end{split}
\end{equation}

With $\AAbt = A_1\op\rmspan\{\om A_1\op A_2,\om A_1\op A_3\} \subset\Rn$,
let $A$ be a generic point in $\AAbt\cap\Rsn$, given by its
gyrobarycentric representation
\begin{equation} \label{napog04}
A = \frac{
m_1 \gAa A_1 + m_2 \gAb A_2 + m_3 \gAc A_3
}{
m_1 \gAa + m_2 \gAb + m_3 \gAc
}
\end{equation}
with respect to the gyrobarycentrically independent set $S=\{A_1,A_2,A_3\}$.
A relationship between
the gyrobarycentric coordinates $m_1,m_2$ and $m_3$ of $A$ is to be determined
in \eqref{muramd04} below
by the condition that the point $A$ lies on the circumgyrocircle of
gyrotriangle $A_1A_2A_3$, as shown in Fig.~\ref{fig313enm}.

 
\begin{figure}[t]  
 \centering         
 \psfrag{A}{$A$}
 \psfrag{O}{$O$}
 \psfrag{A1}{$A_1$}
 \psfrag{A2}{$A_2$}
 \psfrag{A3}{$A_3$}
 \psfrag{A4}{$A_4$}
 \psfrag{d1}{$d_1$}
 \psfrag{d2}{$d_2$}
 \psfrag{d3}{$d_3$}
 \psfrag{d23}{$d_{23}$}
 \psfrag{rr}{$R$}
 \psfrag{formula01}[]{$d_1=\|\om A_1\op A\|$}
 \psfrag{formula02}[]{$d_2=\|\om A_2\op A\|$}
 \psfrag{formula03}[]{$d_3=\|\om A_3\op A\|$}
 \psfrag{formula04}[]{$d_{23}=\|\om A_2\op A_3\|=d_3\om d_2$}
 \psfrag{formula05}[]{$R=\|\om A_k\op O\|,~k=1,2,3$}
 \includegraphics[width=9cm]{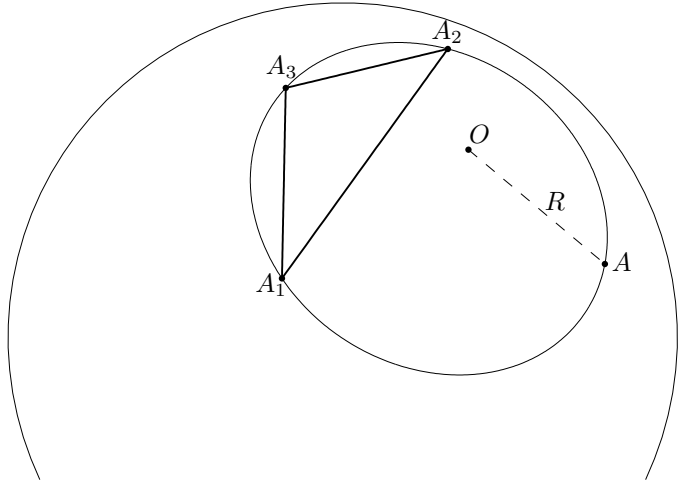}
\caption[A generic point on a gyrotriangle circumgyrocircle]{
A generic point $A$ on the
circumgyrocircle $C(A_1A_2A_3)$ of a gyrotriangle $A_1A_2A_3$
in an Einstein gyrovector space $(\Rsn,\op,\od)$.
The point $A$ lies on the circumgyrocircle $C(A_1A_2A_3)$ if and only if
it possesses the gyrobarycentric representation \eqref{napog04}
with gyrobarycentric coordinates that satisfy \eqref{muramd04}.
\label{fig313enm}}
\end{figure}

By the Gyrobarycentric Representation Gyrocovariance Theorem
\cite[Theorem 4.6, pp.~90-91]{mybook05}
with $X=\om A$ we have
\begin{equation} \label{napof04s}
\begin{split}
\om A \op O &= \om A \op \frac{
m_{1,O} \gAa A_1 + m_{2,O} \gAb A_2 + m_{3,O} \gAc A_3
}{
m_{1,O} \gAa     + m_{2,O} \gAb     + m_{3,O} \gAc
}
\\[8pt] & \hspace{-1.2cm}=
\frac{
m_{1,O} \gamma_{\om A_1\op A}^{\phantom{O}} (\om A\op A_1) +
m_{2,O} \gamma_{\om A_2\op A}^{\phantom{O}} (\om A\op A_2) +
m_{3,O} \gamma_{\om A_3\op A}^{\phantom{O}} (\om A\op A_3)
}{
m_{1,O} \gamma_{\om A_1\op A}^{\phantom{O}} +
m_{2,O} \gamma_{\om A_2\op A}^{\phantom{O}} +
m_{3,O} \gamma_{\om A_3\op A}^{\phantom{O}}
}
\end{split}
\end{equation}
and
\begin{equation} \label{napof05s}
\gamma_{d}^{\phantom{O}} = \gamma_{\om A\op O}^{\phantom{O}} = \frac{
m_{1,O} \gamma_{\om A_1\op A}^{\phantom{O}} +
m_{2,O} \gamma_{\om A_2\op A}^{\phantom{O}} +
m_{3,O} \gamma_{\om A_3\op A}^{\phantom{O}}
}{
\mO
}
\,,
\end{equation}
where $d=\|\om A\op O\|$ is the gyrodistance from $A$ to $O$,
and where the constant $\mO>0$ of the gyrobarycentric representation \eqref{napog02}
of $O$ is given by \eqref{napog06}.

We will now calculate the gamma factors
$\gamma_{\om A_k\op A}^{\phantom{O}}$, $k=1,2,3$, that appear in
\eqref{napof05s}.

Applying the Gyrobarycentric Representation Gyrocovariance Theorem
\cite[Theorem 4.6, pp.~90-91]{mybook05}
with $X=\om A_1$
to the gyrobarycentric representation \eqref{napog04} of $A$,
we have
\begin{equation} \label{manod08}
\begin{split}
\om A_1 \op A &= \om A_1 \op \frac{
m_1 \gAa A_1 + m_2 \gAb A_2 + m_3 \gAc A_3
}{
m_1 \gAa     + m_2 \gAb     + m_3 \gAc
}
\\[8pt] & \hspace{-1.2cm}=
\frac{
m_1 \gamma_{\om A_1\op A_1}^{\phantom{O}} (\om A_1\op A_1) +
m_2 \gamma_{\om A_1\op A_2}^{\phantom{O}} (\om A_1\op A_2) +
m_3 \gamma_{\om A_1\op A_3}^{\phantom{O}} (\om A_1\op A_3)
}{
m_1 \gamma_{\om A_1\op A_1}^{\phantom{O}} +
m_2 \gamma_{\om A_1\op A_2}^{\phantom{O}} +
m_3 \gamma_{\om A_1\op A_3}^{\phantom{O}}
}
\\[8pt] &= \frac{
m_2 \gammaab\ab_{12} + m_3 \gammaac\ab_{13}
}{
m_1 + m_2 \gammaab + m_3 \gammaac
}
\end{split}
\end{equation}
and
\begin{equation} \label{manod08p}
\gamma_{\om A_1\op A}^{\phantom{O}} = \frac{
m_1 + m_2 \gammaab + m_3 \gammaac
}{
\mA
}
\,,
\end{equation}
where $\mA>0$ is the constant of the gyrobarycentric representation
\eqref{napog04} of $A$, given by the equation
\begin{subequations} \label{napof06s}
\begin{equation} \label{napof06sa}
\begin{split}
m_A^2 &= m_1^2 + m_2^2 + m_3^2
+ 2(m_1m_2\gammaab + m_1m_3\gammaac + m_2m_3\gammabc)
\\&=
(m_1+m_2+m_3)^2
\\&
+2\{m_1m_2(\gammaab-1)+m_1m_3(\gammaac-1)+m_2m_3(\gammabc-1)\}
\,.
\end{split}
\end{equation}

It proves useful to write \eqref{napof06sa} as
\begin{equation} \label{napof06sb}
m_A^2=(m_1+m_2+m_3)^2 + 2K(A;A_1,A_2,A_3)
\,,
\end{equation}
where
\begin{equation} \label{napof06sc}
K=K(A;A_1,A_2,A_3) =
\sum_{\substack{i,j=1\\i<j}}^3 m_im_j(\gamma_{ij}^{\phantom{O}} - 1)
\,,
\end{equation}
\end{subequations}
where $m_k$, $k=1,2,3$, are the gyrobarycentric coordinates of $A$
in the representation \eqref{napog04} of $A$ with respect to the
vertices of gyrotriangle $A_1A_2A_3$, and
$\gamma_{ij}^{\phantom{O}} = \|\om A_i \op A_j\|$.

Similarly to \eqref{manod08}\,--\,\eqref{manod08p}, we have
\begin{equation} \label{manod09}
\om A_2 \op A = \frac{
m_1 \gammaab\ab_{21} + m_3 \gammabc\ab_{23}
}{
m_1 \gammaab + m_2 + m_3 \gammabc
}
\end{equation}
and
\begin{equation} \label{manod09p}
\gamma_{\om A_2\op A}^{\phantom{O}} = \frac{
m_1 \gammaab + m_2 + m_3 \gammabc
}{
\mA
}
\,.
\end{equation}

Similarly to \eqref{manod08}\,--\,\eqref{manod08p}
and to \eqref{manod09}\,--\,\eqref{manod09p}, we have
\begin{equation} \label{manod10}
\om A_3 \op A = \frac{
m_1 \gammaac\ab_{31} + m_2 \gammabc\ab_{32}
}{
m_1 \gammaac + m_2 \gammabc + m_3
}
\end{equation}
and
\begin{equation} \label{manod10p}
\gamma_{\om A_3\op A}^{\phantom{O}} = \frac{
m_1 \gammaac + m_2 \gammabc + m_3
}{
\mA
}
\,.
\end{equation}

Substituting the gamma factors \eqref{manod08p}, \eqref{manod09p} and \eqref{manod10p},
as well as \eqref{napog03} and \eqref{napog06},
into \eqref{napof05s}, we obtain the elegant equation
\begin{equation} \label{muramd01}
\gamma_d^2 = \frac{D_3}{D_3-H_3}
\left( \frac{m_1+m_2+m_3}{\mA}\right)^2
\,,
\end{equation}
where, as in \eqref{dethkdg}\,--\,\eqref{mizsk}, p.~\pageref{dethkdg},
\begin{subequations} \label{muramd02}
\begin{equation} \label{muramd02a}
\begin{split}
D_3 &= 1+2\gammaab\gammaac\gammabc-\gamma_{12}^2-\gamma_{13}^2-\gamma_{23}^2
\\[4pt] &=
2\left\{(\gammaab-1) (\gammaac-1) + (\gammaab-1) (\gammabc-1) + (\gammaac-1) (\gammabc-1)
\right.
\\[4pt] & \phantom{=}
\left.
+ (\gammaab-1) (\gammaac-1) (\gammabc-1)
\right\}
\\[4pt] & \phantom{=}
- \left\{(\gammaab-1)^2+(\gammaac-1)^2+(\gammabc-1)^2\right\}
\\[4pt] &=
\left|
\begin{matrix}
1 & \gammaab & \gammaac  \\[6pt]
\gammaab & 1 & \gammabc  \\[6pt]
\gammaac & \gammabc & 1
\end{matrix}
\right|
\end{split}
\end{equation}
and
\begin{equation} \label{muramd02b}
H_3 = 2(\gamma_{12}-1) (\gamma_{13}-1) (\gamma_{23}-1)
\,.
\end{equation}
\end{subequations}

As expected,
the denominator $D_3-H_3$ in \eqref{muramd01} is positive, by
the circumgyrocircle existence condition \eqref{rjksdc}, p.~\pageref{rjksdc}.

The point $A$ in \eqref{napog04} lies on the circumgyrocircle of
gyrotriangle $A_1A_2A_3$ (Fig.~\ref{fig313enm}) if and only if
$d=R$, or, equivalently, if and only if
\begin{equation} \label{vakin1}
\gamma_{d}^2 = \gamma_{R}^2
\,.
\end{equation}
Inserting into \eqref{vakin1} $\gamma_{d}^2$ from \eqref{muramd01}
and $\gamma_{R}^2$ from \eqref{fskvm2}, p.~\pageref{fskvm2},
we obtain the equation
\begin{equation} \label{vakin2}
\frac{D_3}{D_3-H_3}
\left( \frac{m_1+m_2+m_3}{\mA}\right)^2
=
\frac{D_3}{D_3-H_3}
\,,
\end{equation}
implying
\begin{equation} \label{muramd03}
m_A^2 = (m_1+m_2+m_3)^2
\,.
\end{equation}
The latter, in turn, is valid if and only
\begin{equation} \label{muramd04}
K(A;A_1,A_2,A_3) :=
\sum_{\substack{i,j=1\\i<j}}^3 m_im_j(\gamma_{ij}^{\phantom{O}} - 1) = 0
\,,
\end{equation}
as we see from \eqref{napof06s}.
Equation \eqref{muramd04} expresses the
{\it circumgyrocircle condition}\index{circumgyrocircle condition}
in the sense that it provides
a necessary and sufficient condition that the
point $A$ in \eqref{napog04} lies on
the circumgyrocircle $C(A_1A_2A_3)$ of gyrotriangle $A_1A_2A_3$.

A gyrotrigonometric version of the circumgyrocircle condition \eqref{muramd04}
follows from \cite[Eq.~(7.149), p.~188]{mybook05},
\begin{equation} \label{hermdn}
\begin{split}
m_1m_2\sin\alpha_3\sin(\alpha_3+\frac{\delta}{2})
&+
m_1m_3 \sin\alpha_2\sin(\alpha_2+\frac{\delta}{2})
\\&
+
m_2m_3 \sin\alpha_1\sin(\alpha_1+\frac{\delta}{2})
=0
\,.
\end{split}
\end{equation}

In order to parametrize the points of the
circumgyrocircle $C(A_1A_2A_3)$
of a gyrotriangle $A_1A_2A_3$ in an
Einstein gyrovector space $\Rsn$ by a real parameter, we assume $m_3\ne0$,
so that the circumgyrocircle condition\index{circumgyrocircle condition}
\eqref{muramd04} can be written as
\begin{equation} \label{muramd06}
\frac{m_1}{m_3} \frac{m_2}{m_3} (\gammaab-1) +
\frac{m_1}{m_3} (\gammaac-1)+ \frac{m_2}{m_3} (\gammabc-1) = 0
\,.
\end{equation}
Selecting $m_2/m_3=t$ as a parameter on the {\it extended real line},
$t\in\Rb\cup\{-\infty,\infty\}$, a
system of parametric gyrobarycentric coordinates of
the point $A$ in \eqref{napog04}
with respect to the gyrobarycentrically independent set $\{A_1,A_2,A_3\}$
that obeys the circumgyrocircle condition \eqref{muramd04} is obtained,
given by
\begin{equation} \label{muramd07}
\begin{split}
m_1 &= -(\gammabc-1)t
\\
m_2 &= (\gammaab-1)t^2 + (\gammaac-1)t = m_3t
\\
m_3 &= (\gammaab-1)t + (\gammaac-1)
\,.
\end{split}
\end{equation}

 
\begin{figure}[t]  
 \centering         
 \psfrag{A}{$A$}
 \psfrag{A1}{$A_1$}
 \psfrag{A2}{$A_2$}
 \psfrag{A3}{$A_3$}
 \psfrag{O}{$O$}
 \psfrag{0}{$0$}
 \psfrag{1}{$\hspace{-0.28cm}1$}
 \psfrag{2}{$\hspace{-0.28cm}2$}
 \psfrag{3}{\lower-1.40ex \hbox {$3$}}
 \psfrag{4}{\lower-1.50ex \hbox {$4$}}
 \psfrag{5}{\lower-1.50ex \hbox {$5$}}
 \psfrag{6}{\lower-1.50ex \hbox {$6$}}
 \psfrag{7}{\lower-1.60ex \hbox {$\hspace{-0.06cm}7$}}
 \psfrag{8}{\lower-1.60ex \hbox {$\hspace{-0.10cm}8$}}
 \psfrag{0}{\lower-1.60ex \hbox {$0$}}
 \psfrag{-1}{$\hspace{-0.60cm}-1$}
 \psfrag{-2}{$\hspace{-0.60cm}-2$}
 \psfrag{-3}{$\hspace{-0.60cm}-3$}
 \psfrag{-4}{$\hspace{-0.60cm}-4$}
 \psfrag{-5}{$\hspace{-0.60cm}-5$}
 \psfrag{-6}{$\hspace{-0.60cm}-6$}
 \psfrag{-7}{$\hspace{-0.60cm}-7$}
 \psfrag{-8}{\lower-0.70ex \hbox {$\hspace{-0.55cm}-8$}}
 \includegraphics[width=9cm]{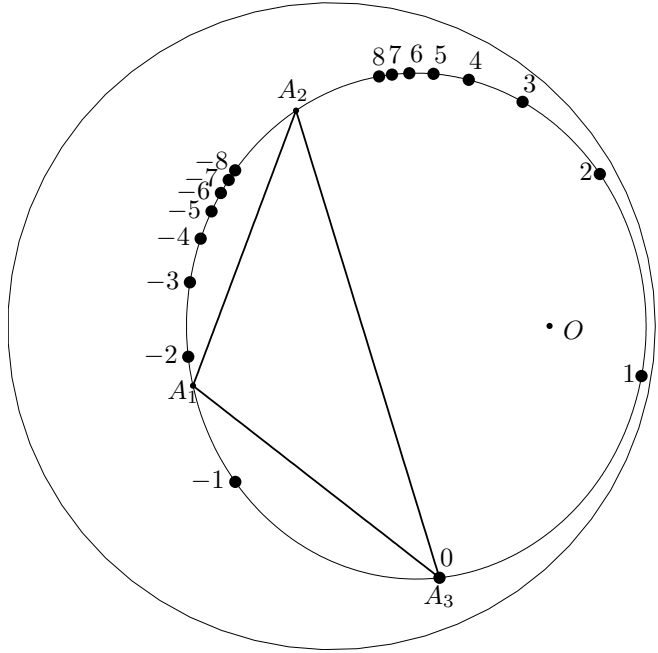}
\caption[Parametrizing the gyrotriangle circumgyrocircle, I]{
Several points on the circumgyrocircle $C(A_1A_2A_3)$ of a
gyrotriangle $A_1A_2A_3$
in an Einstein gyrovector space $(\Rsn,\op,\od)$, $n\ge2$,
determined by several values of the
gyrocircle parameter $t$
in \eqref{muramd07r}, $t=-8,-7,\ldots,-1,0,1,\ldots,7,8$, are presented.
Clearly, the parameter value $t=0$
corresponds to gyrobarycentric coordinates $m_1=m_2=0$, $m_3\ne0$,
as evidenced from \eqref{muramd07}. The latter, in turn, corresponds to the point
$A=A_3$ on the circumgyrocircle, as shown here and as evidenced from \eqref{napog04}.
When $t\rightarrow\pm\infty$,
corresponding points on the circumgyrocircle tend to $A_2$,
as indicated here and as evidenced from \eqref{muramd07}, where
$|m_1|,|m_3|<\!\!<|m_2|$ for large $|t|$.
The point $A_1$ on the circumgyrocircle corresponds to the parameter value
$t=-(\gamma_{13}-1)/(\gamma_{12}-1)$.
\label{fig316enm}}
\end{figure}

 
\begin{figure}[t]  
 \centering         
 \psfrag{A}{$A$}
 \psfrag{A1}{$A_1$}
 \psfrag{A2}{$A_2$}
 \psfrag{A3}{$\hspace{-0.02cm}A_3$}
 \psfrag{O}{$\hspace{-0.04cm}O$}
 \psfrag{0}{$0$}
 \psfrag{formula01}{$\theta=k\tfrac{2\pi}{24}$}
 \psfrag{formula02}{$k=0,1,2,\ldots.24$}
 \psfrag{formula03}{$0\le\theta\le2\pi$}
 \psfrag{1}{\lower-1.00ex \hbox {$\hspace{0.20cm}1$}}
 \psfrag{2}{\lower-1.00ex \hbox {$\hspace{0.12cm}2$}}
 \psfrag{3}{\lower-1.00ex \hbox {$\hspace{0.12cm}3$}}
 \psfrag{4}{\lower-1.00ex \hbox {$\hspace{0.12cm}4$}}
 \psfrag{5}{\lower-1.00ex \hbox {$\hspace{0.12cm}5$}}
 \psfrag{6}{\lower-1.00ex \hbox {$\hspace{0.12cm}6$}}
 \psfrag{7}{\lower-0.00ex \hbox {$\hspace{0.08cm}7$}}
 \psfrag{8}{\lower-0.00ex \hbox {$\hspace{0.08cm}8$}}
 \psfrag{9}{\lower-0.00ex \hbox {$\hspace{0.08cm}9$}}
 \psfrag{10}{\lower1.00ex \hbox {$\hspace{0.00cm}10$}}
 \psfrag{11}{\lower2.00ex \hbox {$\hspace{0.08cm}11$}}
 \psfrag{12}{$12$}
 \psfrag{13}{$13$}
 \psfrag{14}{$14$}
 \psfrag{15}{$15$}
 \psfrag{16}{\lower-1.20ex \hbox {$\hspace{0.00cm}16$}}
 \psfrag{17}{$17$}
 \psfrag{18}{$18$}
 \psfrag{19}{\lower1.40ex \hbox {$\hspace{0.04cm}19$}}
 \psfrag{20}{\lower1.40ex \hbox {$\hspace{0.04cm}20$}}
 \psfrag{21}{\lower1.40ex \hbox {$\hspace{0.08cm}21$}}
 \psfrag{22}{\lower1.60ex \hbox {$\hspace{0.08cm}22$}}
 \psfrag{23}{\lower1.80ex \hbox {$\hspace{0.18cm}23$}}
 \psfrag{24}{\lower1.20ex \hbox {$\hspace{0.48cm}24$}}
 \psfrag{0}{\lower-1.20ex \hbox {$\hspace{0.48cm}0$}}
 \includegraphics[width=9cm]{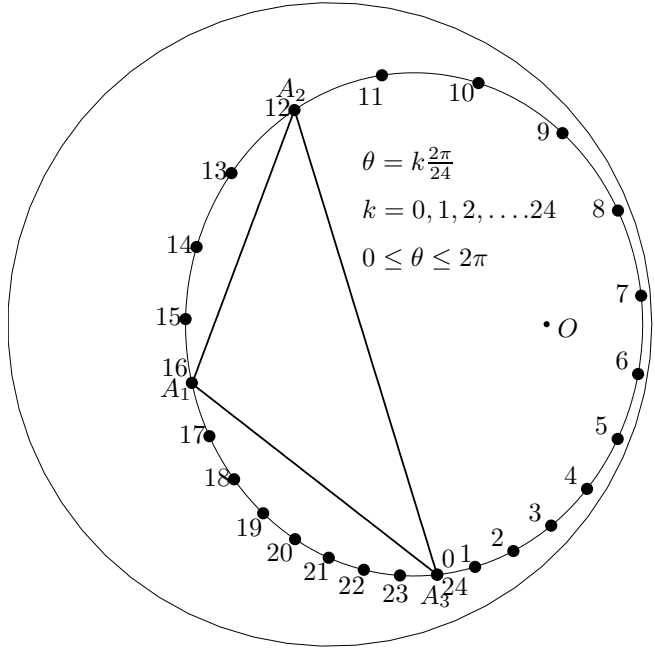}
\caption[Parametrizing the gyrotriangle circumgyrocircle, II]{
Several points on the circumgyrocircle $C(A_1A_2A_3)$
of a gyrotriangle $A_1A_2A_3$
in an Einstein gyrovector space $(\Rsn,\op,\od)$, $n\ge2$,
determined by several values of the
circumgyrocircle parameter $\theta$
in \eqref{mador02},
$\theta=k(2\pi/24)$, $k=0,1,2,\ldots,24$, are presented.
The parameter values that correspond to $k=0$ $(\theta=0)$ and
$k=24$ $(\theta=2\pi)$ represent jointly the point $A_3$ and are,
accordingly, identified.
The parameter value that corresponds to $k=12$ $(\theta=\pi)$
represents the point $A_2$.
\label{fig316aenm}}
\end{figure}

Several points $A$ of the
circumgyrocircle $C(A_1A_2A_3)$ of a gyrotriangle $A_1A_2A_3$,
which are given by \eqref{napog04} and \eqref{muramd07}
and which
correspond to the circumgyrocircle parameter values
$t=-8,-7,\ldots,-1,0,1,\ldots,7,8$,
are shown in Fig.~\ref{fig316enm}.
The parameter values $t=-\infty$ and $t=\infty$ correspond jointly
to the point $A_2$.

The two special values, $t=\pm\infty$, of the parameter $t$ are identified
in the sense that they correspond jointly to the same point, $A_2$,
on circumgyrocircle $C(A_1A_2A_3)$,
as shown in Fig.~\ref{fig316enm}.
This identification suggests the replacement of the
parameter $t\in\Rb\cup\{-\infty,\infty\}$
by the parameter $\theta$, $-\pi\le\theta\le\pi$, according to the
bijective (one-to-one) relation
\begin{equation} \label{mador01}
t=\tan\frac{\theta}{2}
\,,
\end{equation}
for which a corresponding identification is built in naturally.
Indeed, the two values $\theta=\pm\pi$ correspond to the two values
$t=\pm\infty$ and, accordingly, $\theta=-\pi$ and $\theta=\pi$
are identified.

Inserting $t$ from \eqref{mador01} in \eqref{muramd07},
and noting that
$\tan\frac{\theta}{2}=\sin\frac{\theta}{2}/\cos\frac{\theta}{2}$,
and that gyrobarycentric coordinates are homogeneous,
we obtain the following
system of parametric gyrobarycentric coordinates of
the point $A$ in \eqref{napog04}, p.~\pageref{napog04},
with respect to the set $S$,
\begin{equation} \label{mador02a}
\begin{split}
m_1 &= -(\gammabc-1)\sin\frac{\theta}{2} \cos\frac{\theta}{2}
\\
m_2 &= (\gammaab-1)\sin^2\frac{\theta}{2}
+ (\gammaac-1)\sin\frac{\theta}{2} \cos\frac{\theta}{2}
\\
m_3 &= (\gammaab-1)\sin\frac{\theta}{2} \cos\frac{\theta}{2}
+ (\gammaac-1)\cos^2\frac{\theta}{2}
\,.
\end{split}
\end{equation}

We now apply to \eqref{mador02a} well-known
trigonometric half-angle identities,
noting that gyrobarycentric coordinates are homogeneous,
and translate the parameter interval by $\pi$,
from the interval $[-\pi,\pi]$ to the interval $[0,2\pi]$,
obtaining the following elegant parametric gyrobarycentric coordinates
of $A$ in \eqref{napog04}. p.~\pageref{napog04}.
\begin{equation} \label{mador02}
\begin{split}
m_1 &= - (\gammabc-1)\sin\theta
\\
m_2 &=  \phantom{-} (\gammaac-1)\sin\theta
+ (\gammaab-1)(1-\cos\theta)
\\
m_3 &=  \phantom{-} (\gammaab-1)\sin\theta
+ (\gammaac-1)(1+\cos\theta)
\,,
\end{split}
\end{equation}
$0\le\theta\le2\pi$.
The two values of the parameter $\theta$,
$\theta=0$ and $\theta=2\pi$, are identified,
as indicated in Fig.~\ref{fig316aenm}, where several points $A$
on the circumference of circumgyrocircle $C(A_1A_2A_3)$,
which correspond to several values of the circumgyrocircle parameter
$\theta$, $0\le\theta\le2\pi$, are presented.

\section{Gyrocircle Interior and Exterior Points} \label{slilc3}
\index{circle barycentric representation}

We are now in the position to present a definition followed
by two theorems that characterize the points of the circumgyrocircle of a
given gyrotriangle, as well as interior and exterior points
of the circumgyrocircle in Einstein gyrovector spaces.

\begin{definition}\label{defintext}
{\bf (Gyrocircle\,Interior\,and\,Exterior\,Points).}
\index{gyrocircle, interior exterior}
{\it
Let $A_1A_2A_3$ be a gyrotriangle that possesses a
circumgyrocircle $C(A_1A_2A_3)$ with circumgyrocenter $O$ and
circumgyroradius $R$
in an Einstein gyrovector space $(\Rsn,\op,\od)$.
Let $P$ be a generic point in $\AAbt\cap\Rsn$,
\begin{equation} \label{camh}
P = \frac{m_1\gAa A_1 + m_2\gAb A_2 + m_3\gAc A_3}
         {m_1\gAa + m_2\gAb + m_3\gAc}
\in\AAbt\cap\Rsn
\,,
\end{equation}
where, as in \eqref{dambdy}, p.~\pageref{dambdy},
\begin{equation} \label{danady}
\AAbt = A_1\op\rmspan\{\om A_1\op A_2,\om A_1\op A_3\} \subset\Rn
\,,
\end{equation}
and let $d=\|\om P\op O\|$ be the gyrodistance of $P$ from $O$.
Then, $P$ lies
\begin{enumerate}
\item
in the interior of $C(A_1A_2A_3)$
if $d<R$;
\item
in the exterior of $C(A_1A_2A_3)$
if $d>R$; and
\item
on $C(A_1A_2A_3)$ if $d=R$.
\end{enumerate}
}
\end{definition}

\begin{theorem}\label{sveqni}
{\bf (Point\,to\,Circumgyrocenter\,Gyrodistance).}
Let $A_1A_2A_3$ be a gyrotriangle that possesses a
circumgyrocircle $C(A_1A_2A_3)$ with circumgyrocenter $O$ and
circumgyroradius $R$
in an Einstein gyrovector space $(\Rsn,\op,\od)$,
and let $A\inn\AAbt\cap\Rsn$ be a point given by its
gyrobarycentric representation
\begin{equation} \label{napog04s}
A = \frac{
m_1 \gAa A_1 + m_2 \gAb A_2 + m_3 \gAc A_3
}{
m_1 \gAa + m_2 \gAb + m_3 \gAc
}
\end{equation}
with respect to the gyrobarycentrically independent set $S=\{A_1,A_2,A_3\}$.
Furthermore, let
\begin{equation} \label{dkner1}
d=\|\om A\op O\|
\end{equation}
be the gyrodistance from $A$ to $O$.

Then,
\begin{equation} \label{dkner2}
d=\sqrt{1-\frac{
2s^2K
}{
M^2 \gamma_R^2 R^2
}}
\, R
\,,
\end{equation}
where
\begin{equation} \label{dkner3}
K =
\sum_{\substack{i,j=1\\i<j}}^3 m_im_j(\gamma_{ij}^{\phantom{O}} - 1)
\end{equation}
and
\begin{equation} \label{dkner4}
M = \sum_{k=1}^{3} m_k
\,.
\end{equation}
\end{theorem}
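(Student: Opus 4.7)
The plan is to assemble the statement from three ingredients already established in the preceding section: the formula \eqref{muramd01} for $\gamma_d^2$, the expansion \eqref{napof06s} of the representation constant $m_A^2$, and the formula \eqref{fskvm2} for $\gamma_R^2$. All the work is algebraic; there is no geometric obstacle to overcome.

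First I would recall from \eqref{muramd01}, obtained by applying the Gyrobarycentric Representation Gyrocovariance Theorem with $X=\om A$ and combining the resulting gamma factors with the circumgyrocenter coordinates \eqref{napog03} and its representation constant \eqref{napog06}, that
\begin{equation*}
\gamma_d^2 = \frac{D_3}{D_3-H_3}\left(\frac{m_1+m_2+m_3}{m_A}\right)^2 .
\end{equation*}
Since $\gamma_R^2 = D_3/(D_3-H_3)$ by \eqref{fskvm2} and $M=m_1+m_2+m_3$, this reads $\gamma_d^2 = \gamma_R^2 M^2/m_A^2$. Next, by \eqref{napof06sb}--\eqref{napof06sc}, the representation constant satisfies $m_A^2 = M^2 + 2K$, so
\begin{equation*}
\frac{1}{\gamma_d^2} = \frac{M^2+2K}{\gamma_R^2 M^2} = \frac{1}{\gamma_R^2} + \frac{2K}{\gamma_R^2 M^2} .
\end{equation*}

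Now I would invoke the basic gamma identity \eqref{rugh1ds} applied to both $d$ and $R$, which gives $d^2 = s^2(1-1/\gamma_d^2)$ and $R^2 = s^2(1-1/\gamma_R^2)$. Substituting the expression above for $1/\gamma_d^2$ yields
\begin{equation*}
d^2 = s^2\left(1-\frac{1}{\gamma_R^2}\right) - \frac{2s^2 K}{\gamma_R^2 M^2} = R^2 - \frac{2s^2 K}{\gamma_R^2 M^2} = R^2\left(1 - \frac{2s^2 K}{M^2 \gamma_R^2 R^2}\right).
\end{equation*}
Taking the positive square root gives \eqref{dkner2}, which is the desired identity.

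The only step requiring any care is confirming that the quantity under the square root is non-negative, i.e.\ that $d^2\ge 0$. This is automatic from its derivation through $\gamma_d^2 \ge 1$, but it is worth noting that $K\le M^2\gamma_R^2R^2/(2s^2)$ with equality exactly when $A=O$, and $K=0$ recovers the circumgyrocircle condition \eqref{muramd04} and the conclusion $d=R$, providing a sanity check that is consistent with the results of Section~\ref{slila3}.
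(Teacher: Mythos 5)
Your proposal is correct and follows essentially the same route as the paper's proof: both assemble \eqref{muramd01}, \eqref{fskvm2} and \eqref{napof06s} into $\gamma_d^2=\gamma_R^2M^2/m_A^2$ with $m_A^2=M^2+2K$, then convert to gyrodistances via \eqref{rugh1ds}. Your intermediate manipulation through $1/\gamma_d^2$ is slightly cleaner than the paper's explicit passage through $D_3$ and $H_3$, but the argument is the same.
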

\begin{proof}
Following
(i) \eqref{muramd01}, p.~\pageref{muramd01}, and
(ii) \eqref{fskvm2}, p.~\pageref{fskvm2}, and
(iii) \eqref{napof06s}, p.~\pageref{napof06s},
we have
\begin{equation} \label{sabon01}
\begin{split}
\gamma_d^2 &= \frac{D_3}{D_3-H_3}
\frac{M^2}{m_A^2}
\\[8pt]
\gamma_R^2 &= \frac{D_3}{D_3-H_3}
\\[8pt]
m_A^2 &= M^2+2K
\end{split}
\end{equation}
where $D_3$ and $H_3$ are given by \eqref{muramd02}.

Hence, by \eqref{sabon01} and \eqref{rugh1ds}, p.~\pageref{rugh1ds},
or \eqref{fskvm3}, p.~\pageref{fskvm3},
\begin{equation} \label{sabon02}
R^2 = s^2\frac{\gamma_R^2-1}{\gamma_R^2} = s^2\frac{H_3}{D_3}
\end{equation}
and
\begin{equation} \label{sabon02b}
d^2 = s^2 \frac{\gamma_d^2-1}{\gamma_d^2} = s^2\frac{
M^2 H_3-2K(D_3-H_3)
}{ M^2D_3 }
\,,
\end{equation}
so that, by straightforward algebra,
\begin{equation} \label{sabon03}
\frac{d^2}{R^2} = 1-\frac{2s^2K}{M^2\gamma_R^2R^2}
\,,
\end{equation}
as desired.
\end{proof}
\index{gyrocircle gyrobarycentric representation}
\begin{theorem}\label{thmsnvur}
{\bf (Gyrocircle Gyrobarycentric Representation).}
Let $A_1A_2A_3$ be a gyrotriangle that possesses a
circumgyrocircle $C(A_1A_2A_3)$ with circumgyrocenter $O$ and
circumgyroradius $R$
in an Einstein gyrovector space $(\Rsn,\op,\od)$,
and let $A\inn\AAbt\cap\Rsn$ be a point given by its
gyrobarycentric representation
\begin{equation} \label{napok04}
A = \frac{
m_1 \gAa A_1 + m_2 \gAb A_2 + m_3 \gAc A_3
}{
m_1 \gAa + m_2 \gAb + m_3 \gAc
}
\end{equation}
with respect to the gyrobarycentrically independent set $S=\{A_1,A_2,A_3\}$.
Furthermore, let
\begin{equation} \label{indk6}
\begin{split}
K(A;A_1,A_2,A_3) &=
m_1m_2(\gammaab-1)+m_1m_3(\gammaac-1)+m_2m_3(\gammabc-1)
\,,
\\
T(A;A_1,A_2,A_3) &=
m_1m_2\sin\alpha_3\sin(\alpha_3+\frac{\delta}{2})
+
m_1m_3 \sin\alpha_2\sin(\alpha_2+\frac{\delta}{2})
\\&
+
m_2m_3 \sin\alpha_1\sin(\alpha_1+\frac{\delta}{2})
\,,
\end{split}
\end{equation}
where $\alpha_k$, $k=1,2,3$, is the vertex gyroangle of vertex $A_k$
of gyrotriangle $A_1A_2A_3$,
be two scalars associated with the point $A$.
 \begin{enumerate}
\item \label{thmsnvur01}
The point $A$ lies on the
circumgyrocircle $C(A_1A_2A_3)$ of gyrotriangle $A_1A_2A_3$
if and only if the gyrobarycentric coordinates $m_1,m_2,m_3$ of $A$
in \eqref{napok04} satisfy the circumgyrocircle condition
\begin{subequations} \label{muramd05}
\begin{equation} \label{muramd05aa}
K(A;A_1,A_2,A_3) = 0
\end{equation}
or, equivalently,
the gyrotrigonometric circumgyrocircle condition
\begin{equation} \label{muramd05bb}
T(A;A_1,A_2,A_3) = 0
\,.
\end{equation}
\end{subequations}
\item \label{thmsnvur02}
The point $A$ lies in the interior of circumgyrocircle $C(A_1A_2A_3)$
if and only if
\begin{subequations} \label{muramd05a1}
\begin{equation} \label{muramd05a1a}
K(A;A_1,A_2,A_3) > 0
\end{equation}
or, equivalently,
\begin{equation} \label{muramd05a1b}
T(A;A_1,A_2,A_3) > 0
\,.
\end{equation}
\end{subequations}
\item \label{thmsnvur03}
The point $A$ lies in the exterior of circumgyrocircle $C(A_1A_2A_3)$
if and only if
\begin{subequations} \label{muramd05a2}
\begin{equation} \label{muramd05a2aa}
K(A;A_1,A_2,A_3) < 0
\end{equation}
or, equivalently,
\begin{equation} \label{muramd05a2bb}
T(A;A_1,A_2,A_3) < 0
\,.
\end{equation}
\end{subequations}

Moreover, the circumgyrocircle $C(A_1A_2A_3)$ is the
locus of the point $A$ in \eqref{napok04}, with
parametric gyrobarycentric coordinates $m_k$, $k=1,2,3$,
given
\item \label{thmsnvur04}
by the parametric equations
\begin{equation} \label{muramd07r}
\begin{split}
m_1 &= -(\gammabc-1)t
\\
m_2 &= (\gammaab-1)t^2 + (\gammaac-1)t = m_3t
\\
m_3 &= (\gammaab-1)t + (\gammaac-1)
\end{split}
\end{equation}
with the parameter $t$, $t\in\Rb\cup\{-\infty,\infty\}$,
where the two parameter values $t=-\infty$ and $t=\infty$
are identified (Fig.~\ref{fig316enm}), or, equivalently,
\item \label{thmsnvur05}
by the parametric equations
\begin{equation} \label{mador04}
\begin{split}
m_1 &= - (\gammabc-1)\sin\theta
\\
m_2 &=  \phantom{-} (\gammaac-1)\sin\theta
+ (\gammaab-1)(1-\cos\theta)
\\
m_3 &=  \phantom{-} (\gammaab-1)\sin\theta
+ (\gammaac-1)(1+\cos\theta)
\end{split}
\end{equation}
with the parameter $\theta$, $0\le\theta\le2\pi$,
where the two parameter values $\theta=0$ and $\theta=2\pi$
are identified (Fig.~\ref{fig316aenm}).
\item \label{thmsnvur06}
The circumgyrocenter, $O$, of gyrotriangle $A_1A_2A_3$ is given by
\eqref{napog02}\,--\,\eqref{napog03}, p.~\pageref{napog02}, and,
gyrotrigonometrically, by \eqref{eicksfnein}\,--\,\eqref{fjvdnw1}, p.~\pageref{fjvdnw1}.
\item \label{thmsnvur07}
The circumgyroradius, $R$, of gyrotriangle $A_1A_2A_3$ is given by
\eqref{sabon02}.
 \end{enumerate}
\end{theorem}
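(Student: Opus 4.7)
The plan is to observe that Theorem \ref{thmsnvur} is essentially a consolidation theorem: each of its seven parts has been set up in the discussion preceding it, so the proof amounts to citing the right earlier equations and pinning down the signs carefully. First, I would dispatch parts (\ref{thmsnvur06}) and (\ref{thmsnvur07}) immediately, since the circumgyrocenter formula is Theorem \ref{thmtivhvn} and the circumgyroradius formula is \eqref{sabon02}, which is a direct rewriting of \eqref{fskvm3}.

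For part (\ref{thmsnvur01}), I would start from the identity
\begin{equation*}
\gamma_d^2 = \frac{D_3}{D_3-H_3}\, \frac{M^2}{m_A^2}
\end{equation*}
of \eqref{muramd01}, combine it with the formula $\gamma_R^2 = D_3/(D_3-H_3)$ in \eqref{fskvm2}, and conclude that $d=R$ is equivalent to $m_A^2=M^2$. The expansion $m_A^2 = M^2 + 2K$ from \eqref{napof06sb} then reduces this to $K(A;A_1,A_2,A_3)=0$, which is \eqref{muramd04}. The gyrotrigonometric version $T(A;A_1,A_2,A_3)=0$ of \eqref{muramd05bb} follows from substituting the gamma-to-gyroangle conversion \cite[Eq.~(7.149), p.~188]{mybook05} into each summand of $K$, which introduces a common strictly positive factor and produces $T$ as in \eqref{hermdn}.

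For parts (\ref{thmsnvur02}) and (\ref{thmsnvur03}), I would invoke Theorem \ref{sveqni}, whose conclusion \eqref{dkner2} reads
\begin{equation*}
d^2 = \left( 1 - \frac{2s^2 K}{M^2\gamma_R^2 R^2} \right) R^2.
\end{equation*}
Because $M^2$, $\gamma_R^2$, $R^2$ and $s^2$ are all positive, the sign of $R^2-d^2$ matches the sign of $K$; hence $d<R \iff K>0$ and $d>R \iff K<0$, giving the interior/exterior characterizations by Definition \ref{defintext}. The gyrotrigonometric equivalents \eqref{muramd05a1b} and \eqref{muramd05a2bb} again follow because the gamma-to-gyroangle substitution multiplies $K$ by a strictly positive factor, so $K$ and $T$ have the same sign.

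Parts (\ref{thmsnvur04}) and (\ref{thmsnvur05}) would then be obtained by solving the circumgyrocircle condition $K=0$ explicitly. Assuming $m_3 \neq 0$, dividing \eqref{muramd04} through by $m_3^2$ and setting $t = m_2/m_3$ gives the relation \eqref{muramd06}, whose solution is \eqref{muramd07r}, with the understanding that the two values $t=\pm\infty$ correspond to the same point $A_2$. The rational substitution $t=\tan(\theta/2)$ from \eqref{mador01}, together with the standard half-angle identities and the homogeneity of gyrobarycentric coordinates, converts \eqref{muramd07r} into \eqref{mador02a}, and a translation of the parameter interval from $[-\pi,\pi]$ to $[0,2\pi]$ yields the final form \eqref{mador04}. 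The only real subtlety throughout is ensuring that the positive common factors generated by the gamma-to-gyroangle conversion never vanish under the circumgyrocircle existence condition \eqref{rjksds}; once that is observed, the theorem falls out as a bookkeeping exercise.
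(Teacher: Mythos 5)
Your proposal is correct and follows essentially the same route as the paper: the paper's proof likewise treats the theorem as a consolidation of the preceding derivations, citing \eqref{dkner2} (equivalently, \eqref{muramd01} with \eqref{fskvm2} and \eqref{napof06sb}) for the sign trichotomy in items (1)--(3), the derivation \eqref{muramd06}--\eqref{mador02} for the two parametrizations, and the earlier formulas for the circumgyrocenter and circumgyroradius. Your explicit remark that the gamma-to-gyroangle conversion multiplies $K$ by a strictly positive factor, so that $K$ and $T$ share a sign, is exactly the content of the paper's appeal to \eqref{muramd04}--\eqref{hermdn}.
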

\begin{proof}
The equivalence between
$K$ and $T$ in the Theorem is proved in \eqref{muramd04}\,--\,\eqref{hermdn}.
The proof of each item of the Theorem follows.
\begin{enumerate}
\item
It follows from \eqref{dkner2} that $K=K(A;A_1,A_2,A_3)$ vanishes
if an only if $d=R$, that is, by Def.~\ref{defintext},
if an only if the point $A$ lies on circumgyrocircle $C(A_1A_2A_3)$.
\item
It follows from \eqref{dkner2} that $K=K(A;A_1,A_2,A_3)>0$
if an only if $d<R$, that is, by Def.~\ref{defintext},
if an only if the point $A$ lies in the interior of circumgyrocircle $C(A_1A_2A_3)$.
\item
It follows from \eqref{dkner2} that $K=K(A;A_1,A_2,A_3)<0$
if an only if $d>R$, that is, by Def.~\ref{defintext},
if an only if the point $A$ lies in the exterior of circumgyrocircle $C(A_1A_2A_3)$.
\item 
Item \eqref{thmsnvur04} follows from \eqref{muramd07}.
\item 
Item \eqref{thmsnvur05} follows from \eqref{mador02}.
\item 
The proof of Item \eqref{thmsnvur06}
is given in the derivation of
\eqref{napog02}\,--\,\eqref{napog03}, p.~\pageref{napog02}, and
\eqref{eicksfnein}\,--\,\eqref{fjvdnw1}, p.~\pageref{fjvdnw1}.
\item 
The proof of Item \eqref{thmsnvur07}
is given in the derivation of \eqref{sabon02}.
\end{enumerate}
The proof of the Theorem is thus complete.
\end{proof}

\begin{example}\label{exfmv1}
Let $A$ be a point in an Einstein gyrovector space $\Rsn$, $n\ge2$, given by its
gyrobarycentric representation \eqref{napok04}
with respect to a  gyrobarycentrically independent set $\{A_1,A_2,A_3\}$, with
gyrobarycentric coordinates $m_1=0$, $m_2=0$ and $m_3\ne0$.
Then, $A=A_3$, and the gyrobarycentric coordinates of $A$ satisfy
the circumgyrocircle condition \eqref{muramd05} of Theorem \ref{thmsnvur}.
Hence, by Theorem \ref{thmsnvur}, the point $A_3$ lies on the
circumgyrocircle of gyrotriangle $\{A_1,A_2,A_3\}$, as obviously expected.
\end{example}

\begin{example}\label{exfnw1}
Let $A$ be a point in an Einstein gyrovector space $\Rsn$, $n\ge2$, given by its
gyrobarycentric representation \eqref{napok04}
with respect to a  gyrobarycentrically independent set $\{A_1,A_2,A_3\}$,
where gyrotriangle $A_1A_2A_3$ is equilateral.
Then, $A$ lies on the circumgyrocircle of gyrotriangle $A_1A_2A_3$
if and only if the gyrobarycentric coordinates $m_1,m_2,m_3$ of $A$
in \eqref{napok04} satisfy the circumgyrocircle condition
\begin{equation} \label{muramd05b}
m_1m_2+m_1m_3+m_2m_3= 0
\,.
\end{equation}
\end{example}

\section{Circle Barycentric Representation} \label{slilb3}
\index{circle barycentric representation}

Lemma \ref{lemwindk}, p.~\pageref{lemwindk}, enables Theorem \ref{thmsnvur} to be reduced
to its Euclidean counterpart, Theorem \ref{sveqnieuc}.
As an immediate application of Lemma \ref{lemwindk} we, therefore, note that
\begin{equation} \label{gfnert}
\lim_{s\rightarrow \infty} 2s^2K(A;A_1,A_2,A_3) =
m_1m_2a_{12}^2+m_1m_3a_{13}^2+m_2m_3a_{23}^2
\end{equation}
where $K=K(A;A_1,A_2,A_3)$ is given by \eqref{indk6},
and $a_{ij}=\|-A_i+A_j\|$.

\begin{definition}\label{defintexteuc}
{\bf (Circle\,Interior\,and\,Exterior\,Points).}
\index{circle, interior exterior}
{\it
Let $A_1A_2A_3$ be a triangle and let $C(A_1A_2A_3)$, $O$ and $R$ be,
respectively, the circumcircle, circumcenter and circumradius
of the triangle in a Euclidean space $\Rn$.
Let $P$ be a generic point in $\AAb_3^{euc}$,
\begin{equation} \label{camheuc}
P = \frac{m_1 A_1 + m_2 A_2 + m_3 A_3}
         {m_1 + m_2 + m_3}
\in\AAb_3^{euc}
\,,
\end{equation}
where
\begin{equation} \label{danadyeuc}
\AAb_3^{euc} = A_1+\rmspan\{-A_1 + A_2,-A_1 + A_3\} \subset\Rn
\,,
\end{equation}
and let $d=\|-P+O\|$ be the distance of $P$ from $O$.
Then, $P$ lies
\begin{enumerate}
\item
in the interior of $C(A_1A_2A_3)$
if $d<R$;
\item
in the exterior of $C(A_1A_2A_3)$
if $d>R$; and
\item
on $C(A_1A_2A_3)$ if $d=R$.
\end{enumerate}
}
\end{definition}

\begin{theorem}\label{sveqnieuc}
{\bf (Point\,to\,Circumcenter\,Distance).}
Let $A_1A_2A_3$ be a triangle and let $C(A_1A_2A_3)$, $O$ and $R$ be,
respectively, the circumcircle, circumcenter and circumradius
of the triangle in a Euclidean space $\Rn$,
and let $A\inn\AAb_3^{euc}$ be a point given by its
barycentric representation
\begin{equation} \label{napog04seuc}
A = \frac{
m_1 A_1 + m_2 A_2 + m_3 A_3
}{
m_1 + m_2 + m_3
}
\end{equation}
with respect to the barycentrically independent set $S=\{A_1,A_2,A_3\}$.
Furthermore, let
\begin{equation} \label{dkner1euc}
d=\| - A + O\|
\end{equation}
be the distance from $A$ to $O$.

Then,
\begin{equation} \label{dkner2euc}
d=\sqrt{1-\frac{
K_{euc}
}{
M^2 R^2
}}
\, R
\,,
\end{equation}
where
\begin{equation} \label{dkner3euc}
K_{euc} =
\sum_{\substack{i,j=1\\i<j}}^3 m_im_j a_{ij}^2
\,,
\end{equation}
$a_{ij}=\|-A_i+A_j\|$,
and
\begin{equation} \label{dkner4euc}
M = \sum_{k=1}^{3} m_k
\,.
\end{equation}
\end{theorem}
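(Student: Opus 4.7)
The plan is to obtain Theorem \ref{sveqnieuc} as the Euclidean limit, as $s\to\infty$, of the corresponding hyperbolic Theorem \ref{sveqni}, rather than repeating the barycentric calculation from scratch. All of the machinery has already been prepared: the hyperbolic identity \eqref{dkner2} expresses the gyrodistance from $A$ to the circumgyrocenter $O$ in terms of the scalar $K$ of \eqref{dkner3}, the quantity $M=\sum_{k} m_k$, and the gamma factor $\gamma_R$ of the circumgyroradius; and Lemma \ref{lemwindk} together with \eqref{gfnert} already tells us that $2s^{2}K(A;A_1,A_2,A_3)\to K_{euc}$ as $s\to\infty$.

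First I would note that, by Definition \ref{defintexteuc}, the quantities $R_{euc}$ and $d_{euc}=\|-A+O\|$ arise as the Euclidean limits of the gyrotriangle circumgyroradius $R$ and the gyrodistance $d=\|\om A\op O\|$, respectively, in accordance with the general principle that gyrolengths tend to lengths and Einstein addition tends to vector addition as $s\to\infty$. Similarly, all gamma factors $\gamma_{ij}$ and $\gamma_R$ tend to $1$ in this limit, while the barycentric coordinates $m_1,m_2,m_3$ and the normalizing sum $M=\sum m_k$ do not depend on $s$ and pass through the limit unchanged.

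Next I would simply take the Euclidean limit of the hyperbolic identity \eqref{dkner2} of Theorem \ref{sveqni}, writing
\begin{equation*}
d_{euc}^{2} \;=\; \lim_{s\to\infty} d^{2}
\;=\; \lim_{s\to\infty}\left(1-\frac{2s^{2}K}{M^{2}\gamma_R^{2}R^{2}}\right) R^{2}
\;=\; \left(1-\frac{K_{euc}}{M^{2}R_{euc}^{2}}\right) R_{euc}^{2},
\end{equation*}
where in the last step $\gamma_R^{2}\to 1$, $R^{2}\to R_{euc}^{2}$, and $2s^{2}K\to K_{euc}$ by \eqref{gfnert}. Taking positive square roots gives exactly \eqref{dkner2euc}. (If one prefers a self-contained proof not relying on the Euclidean limit of the hyperbolic theorem, then, after writing $A-O=\frac{1}{M}\sum_k m_k (A_k-O)$, expanding $\|A-O\|^{2}$ as a double sum, and inserting $\langle A_i-O,A_j-O\rangle=R^{2}-\tfrac12 a_{ij}^{2}$ for $i\neq j$, the identity $d_{euc}^{2}=R^{2}-K_{euc}/M^{2}$ falls out directly from $\sum_i m_i^{2}+2\sum_{i<j}m_im_j=M^{2}$.)

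The only delicate point in the limit-based approach is to justify that the passage $s\to\infty$ is legitimate inside \eqref{dkner2}, i.e.\ that the expression $1-2s^{2}K/(M^{2}\gamma_R^{2}R^{2})$ genuinely converges termwise. This, however, is immediate: each of the three factors in the fraction converges to a finite nonzero limit (namely $K_{euc}$, $1$ and $R_{euc}^{2}$, with $R_{euc}^{2}>0$ since the reference triangle is non-degenerate by gyrobarycentric independence), so the whole expression converges by elementary properties of limits. Hence what might have appeared to be an obstacle—controlling the indeterminate form $\infty\cdot 0$ in $s^{2}(\gamma_{ij}-1)$—has already been resolved once and for all by Lemma \ref{lemwindk}, leaving no serious difficulty in the argument.
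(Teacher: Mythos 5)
Your proposal is correct and follows essentially the same route as the paper: the paper's proof also obtains the theorem as the Euclidean limit $s\to\infty$ of Theorem \ref{sveqni}, citing \eqref{gfnert} (hence Lemma \ref{lemwindk}) for the convergence $2s^{2}K\to K_{euc}$ and noting that each equation is the limit of its hyperbolic counterpart. Your parenthetical direct barycentric computation via $\langle A_i-O,A_j-O\rangle=R^{2}-\tfrac12 a_{ij}^{2}$ is a nice self-contained alternative, but the main argument matches the paper's.
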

\begin{proof}
Theorem \ref{sveqnieuc} is the Euclidean counterpart of Theorem \ref{sveqni}.
The proof of Theorem \ref{sveqnieuc} from Theorem \ref{sveqni} is immediate,
noting \eqref{gfnert}, and noting that each equation in Theorem \ref{sveqnieuc}
is the Euclidean limit, $s\rightarrow\infty$, of a corresponding equation
in Theorem \ref{sveqni}.
\end{proof}

 
\begin{figure}[t]  
 \centering         
 \psfrag{A}{$A$}
 \psfrag{O}{$O$}
 \psfrag{A1}{$A_1$}
 \psfrag{A2}{$A_2$}
 \psfrag{A3}{$A_3$}
 \includegraphics[width=9cm]{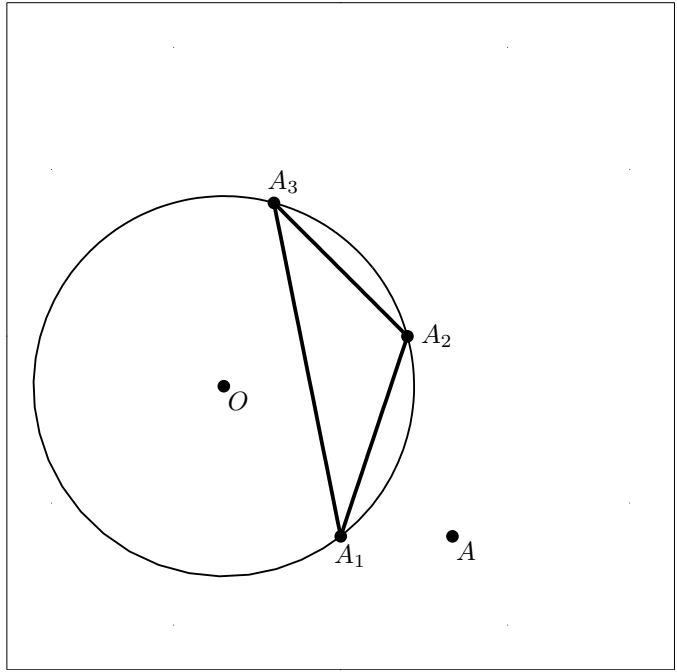}
\caption[A triangle circumcircle]{
Illustrating Theorems \ref{sveqnieuc} and \ref{thmsnvureuc}.
The circumcircle $C(A_1A_2A_3)$ of triangle $A_1A_2A_3$ in the Euclidean
plane $\Rtwo$ is the locus of the point $A$ with barycentric representation
\eqref{napok04euc}, where its parametric barycentric coordinates are
given by \eqref{muramd07reuc} or, equivalently, by \eqref{mador04euc}.
The radius of $C(A_1A_2A_3)$ is $R=\|-O+A_k\|$, $k=1,2,3$, and the distance
$d=\|-A+O\|$ from a generic point $A$ in the triangle plane to the center $O$
of $C(A_1A_2A_3)$ is given by \eqref{dkner2euc}.
The latter provides an elegant condition in Theorem \ref{thmsnvureuc} that
determines whether the point $A$ lies on the circumcircle $C(A_1A_2A_3)$,
or in its interior or exterior.
\label{fig335eucm}}
\end{figure}

The Euclidean counterpart of the
Gyrocircle Gyrobarycentric Representation Theorem \ref{thmsnvur}
is the following elegant theorem, illustrated in Fig.~\ref{fig335eucm}.

\index{circle barycentric representation}
\begin{theorem}\label{thmsnvureuc}
{\bf (Circle Barycentric Representation).}
Let $A$,
$A\in A_1+\rmspan\{-A_1+A_2,-A_1+A_3\}\subset\Rn$,
be a point given by its barycentric representation
\begin{equation} \label{napok04euc}
A = \frac{
m_1 A_1 + m_2 A_2 + m_3 A_3
}{
m_1 + m_2 + m_3
}
\end{equation}
with respect to the barycentrically independent set $S=\{A_1,A_2,A_3\}$
in a Euclidean space $\Rn$, $n\ge2$, and let
\begin{equation} \label{hkdne3}
\begin{split}
K_{euc}(A;A_1A_2A_3) &= m_1m_2 a_{12}^2 + m_1m_3 a_{13}^2 + m_2m_3 a_{23}^2
\\
T_{euc}(A;A_1A_2A_3) &= m_1m_2 \sin^2\alpha_3 + m_1m_3 \sin^2\alpha_2
+ m_2m_3 \sin^2\alpha_3
\end{split}
\end{equation}
where $\alpha_k$, $k=1,2,3$, is the vertex angle of vertex $A_k$
of triangle $A_1A_2A_3$,
be two scalars associated with $A$.
\begin{enumerate}
\item \label{eucsnvur01}
The point $A$ lies on the circumcircle $C(A_1A_2A_3)$ of triangle $A_1A_2A_3$
if and only if the barycentric coordinates $m_1,m_2,m_3$ of $A$
in \eqref{napok04euc} satisfy the circumcircle condition
\begin{subequations} \label{muramd05euc}
\begin{equation} \label{muramd05euca}
K_{euc}(A;A_1,A_2,A_3) =0
\end{equation}
or, equivalently,
the trigonometric circumcircle condition
\begin{equation} \label{muramd05eucb}
T_{euc}(A;A_1,A_2,A_3) =0
\,.
\end{equation}
\end{subequations}
\item \label{eucsnvur02}
The point $A$ lies in the interior of circumcircle $C(A_1A_2A_3)$
if and only if
\begin{subequations} \label{eucamd05a1}
\begin{equation} \label{eucamd05a1a}
K_{euc}(A;A_1,A_2,A_3) >0
\end{equation}
or, equivalently,
\begin{equation} \label{eucamd05a1b}
T_{euc}(A;A_1,A_2,A_3) >0
\,.
\end{equation}
\end{subequations}
\item \label{eucsnvur03}
The point $A$ lies in the exterior of circumcircle $C(A_1A_2A_3)$
if and only if
\begin{subequations} \label{eucamd05a2}
\begin{equation} \label{eucamd05a2a}
K_{euc}(A;A_1,A_2,A_3) <0
\end{equation}
or, equivalently,
\begin{equation} \label{eucamd05a2b}
T_{euc}(A;A_1,A_2,A_3) <0
\,.
\end{equation}
\end{subequations}

Moreover, the circumcircle $C(A_1A_2A_3)$ is the
locus of the point $A$ in \eqref{napok04euc}, with
parametric barycentric coordinates $m_k$, $k=1,2,3$,
given
\item \label{eucsnvur04}
by the parametric equations
\begin{equation} \label{muramd07reuc}
\begin{split}
m_1 &= -t\sin^2\alpha_1
\\
m_2 &= t\sin^2\alpha_2 + t^2\sin^2\alpha_3 = m_3t
\\
m_3 &= \sin^2\alpha_2 + t\sin^2\alpha_3
\end{split}
\end{equation}
with the parameter $t$, $t\in\Rb\cup\{-\infty,\infty\}$,
where the two parameter values $t=-\infty$ and $t=\infty$
are identified, or, equivalently,
\item \label{eucsnvur05}
by the parametric equations
\begin{equation} \label{mador04euc}
\begin{split}
m_1 &= - \sin^2\alpha_1 \sin\theta
\\
m_2 &= \sin^2\alpha_2 \sin\theta + \sin^2\alpha_3 (1-\cos\theta)
\\
m_3 &= \sin^2\alpha_3 \sin\theta + \sin^2\alpha_2 (1+\cos\theta)
\end{split}
\end{equation}
with the parameter $\theta$, $0\le\theta\le2\pi$,
where the two parameter values $\theta=0$ and $\theta=2\pi$
are identified.
\item \label{eucsnvur06}
The circumcenter, $O$, of triangle $A_1A_2A_3$ is given trigonometrically by
\eqref{rukdis}, p.~\pageref{rukdis}.
\item \label{eucsnvur07}
The circumradius, $R$, of triangle $A_1A_2A_3$ is given by
\eqref{adamtk06}, p.~\pageref{adamtk06}.
\end{enumerate}
\end{theorem}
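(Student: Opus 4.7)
The plan is to deduce Theorem~\ref{thmsnvureuc} as the Euclidean limit $s\rightarrow\infty$ of the Gyrocircle Gyrobarycentric Representation Theorem~\ref{thmsnvur}, using Lemma~\ref{lemwindk} and the extended law of sines \eqref{fskvm7} as the main bridging devices. First I would deduce Theorem~\ref{sveqnieuc} and its distance formula \eqref{dkner2euc} by invoking Lemma~\ref{lemwindk} in the hyperbolic identity \eqref{dkner2}: since
$2s^2 K(A;A_1,A_2,A_3) = \sum_{i<j} m_im_j\cdot 2s^2(\gamma_{ij}-1)$ tends term-by-term to $\sum_{i<j} m_im_j a_{ij}^2 = K_{euc}$ by \eqref{gfnert}, and $\gamma_R\rightarrow 1$, the hyperbolic formula
$d = R\sqrt{1-2s^2K/(M^2\gamma_R^2 R^2)}$
passes cleanly to $d = R\sqrt{1-K_{euc}/(M^2R^2)}$, with $R$ given by Theorem~\ref{thmdksew}.

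Granted the distance formula, items \eqref{eucsnvur01}--\eqref{eucsnvur03} are immediate from Definition~\ref{defintexteuc}: the point $A$ lies on, inside, or outside $C(A_1A_2A_3)$ according as $d=R$, $d<R$, or $d>R$, which by \eqref{dkner2euc} corresponds respectively to $K_{euc}=0$, $K_{euc}>0$, or $K_{euc}<0$. The equivalence between the condition on $K_{euc}$ and the trigonometric condition on $T_{euc}$ is handled by the Euclidean extended law of sines \eqref{fskvm7}: substituting $a_{23}=2R\sin\alpha_1$, $a_{13}=2R\sin\alpha_2$, $a_{12}=2R\sin\alpha_3$ into \eqref{hkdne3} yields $K_{euc}(A;A_1,A_2,A_3) = 4R^2\, T_{euc}(A;A_1,A_2,A_3)$, so the two scalars share the same sign, which simultaneously establishes the ``$T_{euc}$'' form of each of \eqref{muramd05euc}, \eqref{eucamd05a1}, \eqref{eucamd05a2}.

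For the parametric representations, items \eqref{eucsnvur04} and \eqref{eucsnvur05}, I would start from the corresponding hyperbolic formulas \eqref{muramd07r} and \eqref{mador04} and again invoke Lemma~\ref{lemwindk}: each coefficient $\gamma_{ij}-1$ is asymptotic to $a_{ij}^2/(2s^2)$, and after cancelling the common factor $1/(2s^2)$ (which is irrelevant in homogeneous barycentric coordinates) one obtains expressions in $a_{23}^2, a_{13}^2, a_{12}^2$. A second application of the law of sines converts these into $\sin^2\alpha_1, \sin^2\alpha_2, \sin^2\alpha_3$ up to a common factor $4R^2$, which may again be absorbed by homogeneity, producing exactly \eqref{muramd07reuc} and \eqref{mador04euc}. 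The endpoint identifications $t=\pm\infty$ and $\theta=0\leftrightarrow\theta=2\pi$ transfer unchanged from the hyperbolic setting.

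Items \eqref{eucsnvur06} and \eqref{eucsnvur07} require no new argument: the trigonometric circumcenter representation \eqref{rukdis} was already obtained as Corollary~\ref{vdmfcb} of the Circumgyrocenter Theorem~\ref{thmtivhvn}, and the circumradius formula \eqref{adamtk06} was proved as Theorem~\ref{thmdksew}. The main obstacle is essentially bookkeeping: I must verify that the homogeneous scalings (the various powers of $1/s^2$ and the factor $4R^2$) can be legitimately absorbed at each stage, and that the limiting parametric map is still a bijection onto $C(A_1A_2A_3)$, with the same degenerate parameter values collapsing to the same triangle vertices as in Figs.~\ref{fig316enm}--\ref{fig316aenm}. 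This is straightforward once one checks that the denominators appearing in the coordinate normalizations are bounded away from zero in the limit, which follows from the circumgyrocircle existence condition \eqref{rjksd} together with Lemma~\ref{lemwindk}.
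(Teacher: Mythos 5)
Your proposal is correct and follows essentially the same route as the paper: both obtain the theorem as the Euclidean limit $s\rightarrow\infty$ of Theorem~\ref{thmsnvur}, with Lemma~\ref{lemwindk} converting $2s^2(\gamma_{ij}^{\phantom{O}}-1)$ into $a_{ij}^2$ and absorbing the resulting common factors by homogeneity of barycentric coordinates. The only minor differences are that you establish the $K_{euc}$--$T_{euc}$ equivalence via the Euclidean law of sines (giving $K_{euc}=4R^2T_{euc}$) where the paper instead lets the gyrotriangle defect vanish in the hyperbolic $T$, and you route the interior/exterior cases explicitly through the Euclidean distance formula \eqref{dkner2euc}, which is in fact a slightly more careful justification of the strict inequalities than the paper's bare appeal to limits of the hyperbolic conditions.
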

\begin{proof}
In the Euclidean limit, $s\rightarrow\infty$, gamma factors tend to $1$.
Hence, in that limit, the
gyrobarycentric representation of $A\in\Rsn$ in \eqref{napok04}
tends to the corresponding barycentric representations of $A\in\Rn$
in \eqref{napok04euc}.

By Lemma \ref{lemwindk}, in the Euclidean limit the
circumgyrocircle condition \eqref{muramd05aa}
(where $K$ is given by \eqref{indk6})
tends to \eqref{muramd05euca}
(where $K_{euc}$ is given by \eqref{hkdne3}).

In the Euclidean limit gyrotriangle defects vanish. Hence,
in that limit,
the circumgyrocircle condition \eqref{muramd05bb}
(where $T$ is given by \eqref{indk6})
tends to \eqref{muramd05eucb}
(where $T_{euc}$ is given by \eqref{hkdne3}), as desired.

The proof of Inequalities \eqref{eucamd05a1} and \eqref{eucamd05a2}
follows, similarly, from Inequalities
\eqref{muramd05a1} and \eqref{muramd05a2}.

Similarly, the proof of \eqref{muramd07reuc}\,--\,\eqref{mador04euc}
follows from \eqref{muramd07r}\,--\,\eqref{mador04}
and from Euclidean limits that result immediately from
Lemma \ref{lemwindk}.

The proof of Item \eqref{eucsnvur06} is presented in the
derivation of \eqref{rukdis}, p.~\pageref{rukdis}, and
the proof of Item \eqref{eucsnvur07} is presented in the
derivation of \eqref{adamtk06}, p.~\pageref{adamtk06}.
\end{proof}

\section{Gyrocircle Gyroline Intersection} \label{slila4d5}
\index{gyrocircle gyroline intersection}

 
\begin{figure}[t]  
 \centering         
 \psfrag{A1}{$A_1$}
 \psfrag{A2}{$A_2$}
 \psfrag{A3}{$A_3$}
 \psfrag{P}{$P$}
 \psfrag{PP}{$P^\prime$}
 \psfrag{PPP}{$P^\pprime$}
 \includegraphics[width=9cm]{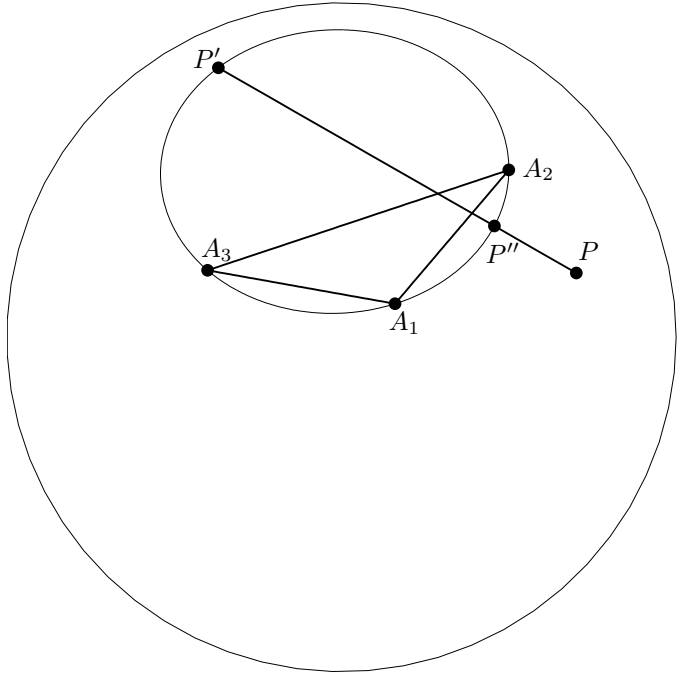}
\caption[A point in the exterior of a circumgyrocircle]{
The point $P^\prime$ lies arbitrarily on the circumgyrocircle
$C(A_1A_2A_3)$ of gyrotriangle $A_1A_2A_3$ in
an Einstein gyrovector plane $(\Rstwo,\op,\od)$, and $P,~P\ne P^\prime$,
is a point in $\Rstwo$.
Both $P$ and $P^\prime$ are given by their gyrobarycentric representation
with respect to the gyrobarycentrically independent set $S=\{ A_1,A_2,A_3\}$
of vertices of the reference gyrotriangle $A_1A_2A_3$.
The point $P^\pprime$, $P^\pprime\ne P^\prime$, that lies on the intersection of the
circumgyrocircle $C(A_1A_2A_3)$ and the gyroline
$L_{^{PP^\prime}}$ through $P$ and $P^\prime$ possesses
a gyrobarycentric representation with respect to the set $S$, with
 gyrobarycentric coordinates that are determined by those of $P$ and $P^\prime$
according to \eqref{skner02sthm}\,--\,\eqref{kfcn01thm}.
\label{fig333enm}}
\end{figure}

 
\begin{figure}[t]  
 \centering         
 \psfrag{A1}{$A_1$}
 \psfrag{A2}{$A_2$}
 \psfrag{A3}{$A_3$}
 \psfrag{P}{$P$}
 \psfrag{PP}{$P^\prime$}
 \psfrag{PPP}{$P^\pprime$}
 \includegraphics[width=9cm]{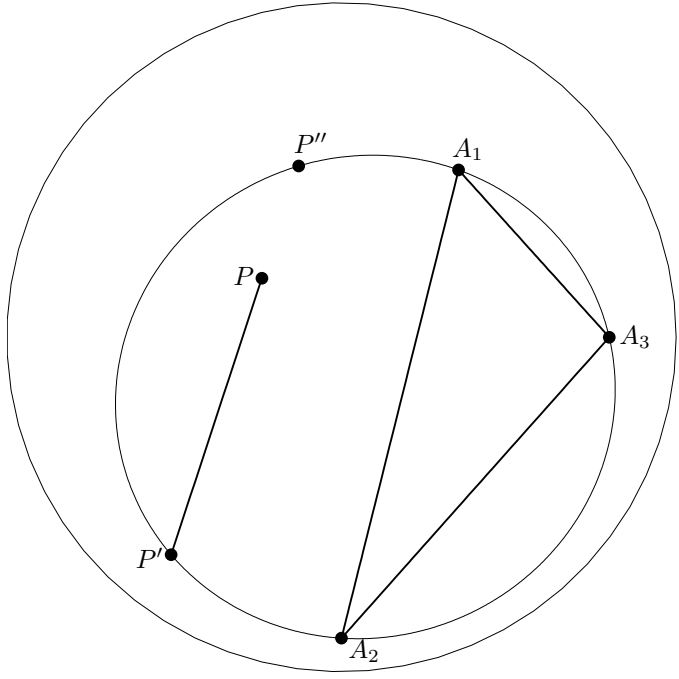}
\caption[A point in the interior of a circumgyrocircle]{
In Fig.~\ref{fig333enm} the point $P$ lies in the exterior of the
circumgyrocircle $C(A_1A_2A_3)$ of gyrotriangle $A_1A_2A_3$.
Here, in contrast, the point $P$ lies in the interior
of $C(A_1A_2A_3)$.
As in Fig.~\ref{fig333enm}, the point $P^\pprime$ lies on the
intersection of circumgyrocircle $C(A_1A_2A_3)$ and
the gyroline $L_{^{PP^\prime}}$ that passes through the points
$P$ and $P^\prime$.
The gyrobarycentric coordinates of $P^\pprime$
are determined by those of $P$ and $P^\prime$
according to \eqref{skner02sthm}\,--\,\eqref{kfcn01thm}.
Both Figs. \ref{fig333enm} and \ref{fig333aenm}
illustrate Theorem \ref{gyintersecthm}, p.~\pageref{gyintersecthm}.
\label{fig333aenm}}
\end{figure}

Let $P\in\AAbt\cap\Rsn$ be a point given by its gyrobarycentric representation
\begin{equation} \label{tkner02s}
P = \frac{m_1\gAa A_1 + m_2\gAb A_2 + m_3\gAc A_3}
         {m_1\gAa + m_2\gAb + m_3\gAc}
\end{equation}
with respect to a gyrobarycentrically independent set $S=\{ A_1,A_2,A_3\}$
in an Einstein gyrovector space $(\Rsn,\op,\od)$. Additionally, let
$P^\prime\in\Rsn$ be a point that lies arbitrarily
on the circumgyrocircle $C(A_1A_2A_3)$ of gyrotriangle $A_1A_2A_3$,
as shown in Figs.~\ref{fig333enm}\,--\,\ref{fig333aenm},
given by its parametric gyrobarycentric representation
\begin{equation} \label{jkner02s}
P^\prime = P^\prime(t)
= \frac{m_1^\prime\gAa A_1 + m_2^\prime\gAb A_2 + m_3^\prime\gAc A_3}
         {m_1^\prime\gAa + m_2^\prime\gAb + m_3^\prime\gAc}
\end{equation}
with respect to the set $S$, where the
parametric gyrobarycentric coordinates of $P^\prime$ are presented
in \eqref{muramd07d1} below.

Furthermore, let $P^\pprime\in\AAbt\cap\Rsn$ be a point given by its gyrobarycentric representation
\begin{equation} \label{skner02s}
P^\pprime = \frac{m_1^\pprime\gAa A_1 + m_2^\pprime\gAb A_2 + m_3^\pprime\gAc A_3}
         {m_1^\pprime\gAa + m_2^\pprime\gAb + m_3^\pprime\gAc}
\end{equation}
with respect to the set $S$, such that
\begin{enumerate}
\item
$P^\pprime$ lies on the gyroline $L_{^{PP^\prime}}$ that passes through the points
$P$ and $P^\prime$; and
\item
$P^\pprime$ and $P^\prime$, $P^\pprime\ne P^\prime$,
lie on the circumgyrocircle $C(A_1A_2A_3)$ of gyrotriangle $A_1A_2A_3$,
as shown in Figs.~\ref{fig333enm}\,--\,\ref{fig333aenm}.
\end{enumerate}

We will determine the gyrobarycentric coordinates of $P^\pprime$
in terms of those of $P$ and $P^\prime$ and in terms of the
reference gyrotriangle $A_1A_2A_3$ in
\eqref{kfcn01}\,--\,\eqref{kfcn02} below.

Following the condition that $P^\pprime$ lies
on circumgyrocircle $C(A_1A_2A_3)$,
by the Gyrocircle Gyrobarycentric Representation Theorem \ref{thmsnvur}, p.~\pageref{thmsnvur},
the gyrobarycentric coordinates $m_k^\pprime$ of $P^\pprime$
satisfy the circumgyrocircle condition \eqref{muramd05}. That is,
\begin{equation} \label{muramd05s}
m_1^\pprime m_2^\pprime (\gammaab-1)+m_1^\pprime m_3^\pprime (\gammaac-1)
+m_2^\pprime m_3^\pprime (\gammabc-1) = 0
\,.
\end{equation}

We wish to express, in \eqref{kfcn01} below,
the gyrobarycentric coordinates $m_k^\pprime$ of $P^\pprime$
in terms of the gyrobarycentric coordinates of $P$ and $P^\prime$.

Since the point $P^\pprime$ lies on gyroline $L_{^{PP^\prime}}$, its
gyrobarycentric coordinates $m_k^\pprime$, $k=1,2,3$, are given parametrically by
\cite[Sect.~4.10]{mybook05}, that is,
\begin{equation} \label{skner05s}
\begin{split}
m_1^\pprime &=  m_1( m_1^\prime\gammaaa + m_2^\prime\gammaab +  m_3^\prime\gammaac)
- \left\{
 \left| \begin{matrix}
m_{1} & m_{2} \\[4pt]
m_{1}^\prime & m_{2}^\prime
\end{matrix} \right| \gammaab
+\left| \begin{matrix}
m_{1} & m_{3} \\[4pt]
m_{1}^\prime & m_{3}^\prime
\end{matrix} \right| \gammaac
\right\} t_0
\,,
\\[8pt]
m_2^\pprime &=  m_{2}( m_{1}^\prime\gammaaa +  m_{2}^\prime\gammaab +  m_{3}^\prime\gammaac)
+ \left\{
 \left| \begin{matrix}
m_{1} & m_{2} \\[4pt]
m_{1}^\prime & m_{2}^\prime
\end{matrix} \right| \gammaaa
-\left| \begin{matrix}
m_{2} & m_{3} \\[4pt]
m_{2}^\prime & m_{3}^\prime
\end{matrix} \right| \gammaac
\right\} t_0
\,,
\\[8pt]
m_3^\pprime &=  m_{3}( m_{1}^\prime\gammaaa +  m_{2}^\prime\gammaab +  m_{3}^\prime\gammaac)
+ \left\{
 \left| \begin{matrix}
m_{1} & m_{3} \\[4pt]
m_{1}^\prime & m_{3}^\prime
\end{matrix} \right| \gammaaa
+\left| \begin{matrix}
m_{2} & m_{3} \\[4pt]
m_{2}^\prime & m_{3}^\prime
\end{matrix} \right| \gammaab
\right\} t_0
\,,
\end{split}
\end{equation}
$\gammaaa=1$, with the parameter $t_0\in\Rb$.

The point $P^\prime$ lies on circumgyrocircle $C(A_1A_2A_3)$
of gyrotriangle $A_1A_2A_3$.
Hence, by \eqref{muramd07r}, the gyrobarycentric coordinates $m_k^\prime$
of $P^\prime$ can be parametrized as
\begin{equation} \label{muramd07d1}
\begin{split}
m_1^\prime &= -(\gammabc-1)t
\\
m_2^\prime &= (\gammaab-1)t^2 + (\gammaac-1)t = m_3^\prime t
\\
m_3^\prime &= (\gammaab-1)t + (\gammaac-1)
\,,
\end{split}
\end{equation}
with the parameter $t\in\Rb\cup\{-\infty,\infty\}$.

Inserting \eqref{muramd07d1} in \eqref{skner05s},
and, successively, inserting the resulting gyrobarycentric coordinates
$m_k^\pprime$ in the circumgyrocircle condition \eqref{muramd05s},
we obtain a linear equation for the unknown $t_0$.
The resulting solution, $T_0$ of $t_0$ is too involved and hence is not
presented here.
Inserting $t_0=T_0$ in \eqref{skner05s}, we obtain the desired
gyrobarycentric coordinates $m_k^\pprime$, $k=1,2,3$, of $P^\pprime$.
The resulting expressions of $m_k^\pprime$ are, initially, involved.
Fortunately, however,
these involved expressions can be factorized and, owing to the
homogeneity of gyrobarycentric coordinates, nonzero common factors
are irrelevant and, hence, can be omitted.
The resulting
gyrobarycentric coordinates $m_k^\pprime$ in the gyrobarycentric representation
of $P^\pprime$ in \eqref{skner02s} turn out to be simple and elegant.
These are
\begin{equation} \label{kfcn01}
\begin{split}
m_1^\pprime &= E_1E_2
\\
m_2^\pprime &= E_0E_1(\gammaac-1)
\\
m_3^\pprime &=-E_0E_2(\gammaab-1)
\,,
\end{split}
\end{equation}
where
\begin{equation} \label{kfcn02}
\begin{split}
E_0 &= m_2-m_3t
\\
E_1 &=
m_1(\gammaac-1) + m_2(\gammabc-1) + m_1(\gammaab-1) t
\\
E_2 &=
m_1(\gammaac-1) + m_1(\gammaab-1) t + m_3(\gammabc-1) t
\,.
\end{split}
\end{equation}

Interestingly, $E_0$, $E_1$ and $E_2$ are related by the equation
\begin{equation} \label{kfcp02}
E_1-E_2=E_0(\gammabc-1)
\,.
\end{equation}

Formalizing, we obtain the following
Theorem:
\index{gyroline gyrocircle intersection, theorem}
\begin{theorem}
{\bf (The Gyroline Gyrocircle Intersection Theorem).}
\label{gyintersecthm}
\begin{enumerate}
\item
Let $A_1A_2A_3$ be a gyrotriangle that possesses a circumgyrocircle
in an Einstein gyrovector space $(\Rsn,\op,\od)$, $n\ge2$,
\item
let $P\in\AAbt\cap\Rsn$ be a point given by its gyrobarycentric representation
\begin{equation} \label{tkner02sthm}
P = \frac{m_1\gAa A_1 + m_2\gAb A_2 + m_3\gAc A_3}
         {m_1\gAa + m_2\gAb + m_3\gAc}
\end{equation}
with respect to the gyrobarycentrically independent set $S=\{ A_1,A_2,A_3\}$
of the vertices of the reference gyrotriangle $A_1A_2A_3$,
\item
let $P^\prime$ be a point that lies arbitrarily on the
circumgyrocircle $C(A_1A_2A_3)$ of gyrotriangle $A_1A_2A_3$,
$P^\prime\ne P$, and
\item
let $P^\pprime$ be the unique point that lies simultaneously
on the circumgyrocircle $C(A_1A_2A_3)$ and on the
gyroline $L_{^{PP^\prime}}$ that passes through the points
$P$ and $P^\prime$ where, in general, $P^\pprime\ne P^\prime$,
as shown in Figs.~\ref{fig333enm}\,--\,\ref{fig333aenm}.
Finally,
\item
let $P^\prime$ be parametrized by the
parameter $t$,
\begin{equation} \label{hfge}
t\in\Rb\cup\{-\infty,\infty\}
\,, 
\end{equation}
so that following
the Gyrocircle Gyrobarycentric Representation Theorem \ref{thmsnvur},
\eqref{muramd07r}, p.~\pageref{muramd07r},
$P^\prime$ possesses the  gyrobarycentric representation
\begin{equation} \label{jkner02sthm}
P^\prime = P^\prime(t)
= \frac{m_1^\prime\gAa A_1 + m_2^\prime\gAb A_2 + m_3^\prime\gAc A_3}
         {m_1^\prime\gAa + m_2^\prime\gAb + m_3^\prime\gAc}
\end{equation}
with respect to the set $S$,
where
\begin{equation} \label{muramd07d1thm}
\begin{split}
m_1^\prime &= -(\gammabc-1)t
\\
m_2^\prime &= (\gammaab-1)t^2 + (\gammaac-1)t = m_3^\prime t
\\
m_3^\prime &= (\gammaab-1)t + (\gammaac-1)
\,.
\end{split}
\end{equation}
\end{enumerate}

Then, the point $P^\pprime$ possesses the  $t$-dependent
gyrobarycentric representation
\begin{equation} \label{skner02sthm}
P^\pprime = P^\pprime(t)
= \frac{m_1^\pprime\gAa A_1 + m_2^\pprime\gAb A_2 + m_3^\pprime\gAc A_3}
         {m_1^\pprime\gAa + m_2^\pprime\gAb + m_3^\pprime\gAc}
\end{equation}
with respect to the set $S$, where
\begin{equation} \label{kfcn01thm}
\begin{split}
m_1^\pprime &= E_1E_2
\\
m_2^\pprime &= E_0E_1(\gammaac-1)
\\
m_3^\pprime &=-E_0E_2(\gammaab-1)
\,,
\end{split}
\end{equation}
and where
\begin{equation} \label{kfcn02thm}
\begin{split}
E_0 &= m_2-m_3t
\\
E_1 &=
m_1(\gammaac-1) + m_2(\gammabc-1) + m_1(\gammaab-1) t
\\
E_2 &=
m_1(\gammaac-1) + m_1(\gammaab-1) t + m_3(\gammabc-1) t
\,.
\end{split}
\end{equation}
\end{theorem}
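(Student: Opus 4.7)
The plan is to combine two conditions on $P^\pprime$: that it lies on the gyroline $L_{^{PP^\prime}}$, and that it lies on the circumgyrocircle $C(A_1A_2A_3)$. First I would invoke the parametric gyrobarycentric representation of points on a gyroline through two given points (equation \eqref{skner05s}, following \cite[Sect.~4.10]{mybook05}): since $P^\pprime$ lies on $L_{^{PP^\prime}}$, its gyrobarycentric coordinates $m_k^\pprime$ with respect to $S$ admit the form \eqref{skner05s}, determined by $(m_1,m_2,m_3)$, $(m_1^\prime,m_2^\prime,m_3^\prime)$, the three gamma factors $\gammaab,\gammaac,\gammabc$, and a single free scalar parameter $t_0\in\Rb$ still to be determined. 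Note that $t_0=0$ recovers $P^\pprime=P$, while the projective limit $t_0\to\infty$ recovers $P^\pprime=P^\prime$.

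Second, since $P^\pprime$ lies on the circumgyrocircle, by the Gyrocircle Gyrobarycentric Representation Theorem \ref{thmsnvur}, item \eqref{thmsnvur01}, the triple $(m_1^\pprime,m_2^\pprime,m_3^\pprime)$ must satisfy the circumgyrocircle condition \eqref{muramd05s}:
\begin{equation*}
m_1^\pprime m_2^\pprime(\gammaab-1)+m_1^\pprime m_3^\pprime(\gammaac-1)+m_2^\pprime m_3^\pprime(\gammabc-1)=0.
\end{equation*}
Since each $m_k^\pprime$ from \eqref{skner05s} is linear in $t_0$, this condition is a priori quadratic in $t_0$. However, the coefficient of $t_0^2$ is precisely the value of the circumgyrocircle form evaluated at $P^\prime$, which vanishes because $P^\prime\in C(A_1A_2A_3)$ by hypothesis; equivalently, the projective root $t_0=\infty$ corresponding to $P^\pprime=P^\prime$ is absorbed into the degree drop. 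What remains is a genuinely linear equation in $t_0$, whose unique solution $T_0$ determines the second intersection $P^\pprime$ of the gyroline with the gyrocircle.

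Third, I would insert the explicit $P^\prime$-parametrization \eqref{muramd07d1thm} into \eqref{skner05s}, substitute into \eqref{muramd05s}, solve the resulting linear equation for $t_0=T_0$, and plug $T_0$ back into \eqref{skner05s}. The raw expressions for $m_k^\pprime$ so obtained depend on $m_1,m_2,m_3$, the line-parameter $t$, and $\gammaab,\gammaac,\gammabc$, and are initially quite involved. The final step is algebraic: factor out a common nonzero multiplier (legitimate by the homogeneity of gyrobarycentric coordinates), and recognize in the quotient the compact packages $E_0,E_1,E_2$ defined in \eqref{kfcn02thm}, thereby arriving at the elegant form \eqref{kfcn01thm}. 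A useful consistency check along the way is the identity $E_1-E_2=E_0(\gammabc-1)$ of \eqref{kfcp02}, which can also be verified by direct substitution from \eqref{kfcn02thm}.

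The main obstacle is the algebraic simplification in the third step. A direct expansion of the circumgyrocircle constraint after substitution yields a dense polynomial in $t$ entangled with the three gamma factors and the three coordinates $m_i$; isolating the common factor and recognizing the $E_0,E_1,E_2$ structure is not transparent from the raw expression. As the author remarks early in the paper, computer algebra is essentially indispensable here to carry out the factorization without error. Once the conjectural form \eqref{kfcn01thm}--\eqref{kfcn02thm} is in hand, one may also verify it independently by two checks: (i) substituting \eqref{kfcn01thm} back into the circumgyrocircle condition \eqref{muramd05s} and confirming that it vanishes identically (a computation that is cleaner than the forward derivation, and is streamlined by \eqref{kfcp02}), and (ii) confirming that $(m_1^\pprime,m_2^\pprime,m_3^\pprime)$ as given in \eqref{kfcn01thm} arises from \eqref{skner05s} for the chosen $t_0=T_0$, so that $P^\pprime$ genuinely lies on $L_{^{PP^\prime}}$.
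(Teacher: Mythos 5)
Your overall strategy coincides with the paper's: represent $P^\pprime$ via the gyroline parametrization \eqref{skner05s} with parameter $t_0$, impose the circumgyrocircle condition \eqref{muramd05s}, solve for $t_0$, substitute back, and strip a common nonzero factor to reach \eqref{kfcn01thm}--\eqref{kfcn02thm}; your two closing consistency checks are sensible and the reliance on computer algebra matches the paper's own admission.

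The one substantive slip is in your explanation of why the condition on $t_0$ degenerates to a linear equation. In \eqref{skner05s} each coordinate has the form $m_k^\pprime = m_k c + b_k t_0$ with $c=m_1^\prime\gammaaa+m_2^\prime\gammaab+m_3^\prime\gammaac$, and expanding the determinants gives $b_k = m_k^\prime c^\prime - m_k c$, where $c^\prime=m_1\gammaaa+m_2\gammaab+m_3\gammaac$. Hence $t_0=0$ gives $P$ (as you say), but $P^\prime$ is recovered at $t_0=1$, not in the projective limit $t_0\to\infty$; that limit is the point with homogeneous coordinates $c^\prime P^\prime - cP$, a third point of the gyroline. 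Consequently the coefficient of $t_0^2$ in the circumgyrocircle condition is $Q(b)=c^2Q(P)-2cc^\prime B(P,P^\prime)$, where $Q$ is the quadratic form of \eqref{muramd05s} and $B$ its polar form, and this does not vanish for generic $P$: the equation in $t_0$ is genuinely quadratic. What makes the computation tractable is that $t_0=1$ is a known root (because $P^\prime$ lies on $C(A_1A_2A_3)$), so dividing out the factor $t_0-1$ leaves the linear equation for $T_0$ that the paper refers to, and Vieta's formula even gives $T_0$ directly as the product of the roots. With that correction your argument goes through exactly as the paper's does.
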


\section{Gyrocircle--Gyroline Tangency Points} \label{slila4d6}
\index{gyrocircle gyroline tangency points}

\index{gyrocircle gyrotangents, theorem}
\begin{theorem}
{\bf (The Gyrocircle Gyrotangents Theorem).}
\label{gtanthm}
Let $P\in\AAbt\cap\Rsn$ be a point given by its gyrobarycentric representation
\begin{equation} \label{tkner02st}
P = \frac{m_1\gAa A_1 + m_2\gAb A_2 + m_3\gAc A_3}
         {m_1\gAa + m_2\gAb + m_3\gAc}
\end{equation}
with respect to a gyrobarycentrically independent set $S=\{ A_1,A_2,A_3\}$
in an Einstein gyrovector plane $(\Rstwo,\op,\od)$,
and let $C(A_1A_2A_3)$ be the circumgyrocircle of
gyrotriangle $A_1A_2A_3$.
Then the two tangency points $P_{^{\pm}}$ of the gyrotangent gyrolines of
circumgyrocircle $C(A_1A_2A_3)$ that pass through  the point $P$,
when exist as shown in Fig.~\ref{fig334enm},
are given by their gyrobarycentric representations
\begin{equation} \label{prvks}
P_{^{\pm}} = \frac{m_1^\prime\gAa A_1 + m_2^\prime\gAb A_2 + m_3^\prime\gAc A_3}
         {m_1^\prime\gAa + m_2^\prime\gAb + m_3^\prime\gAc}
\end{equation}
with respect to $S$.
The gyrobarycentric coordinates $m_k^\prime$, $k=1,2,3$, of $P_{^{\pm}}$ are
given in terms of the gyrobarycentric coordinates $m_k$ of $P$
and the sides of gyrotriangle $A_1A_2A_3$ by the equations
\begin{equation} \label{hugd04thm}
\begin{split}
m_1^\prime &= F_0F_1(\gammabc-1)
\\[6pt]
m_2^\prime &= F_1F_2
\\[6pt]
m_3^\prime &=-F_0F_2(\gammaab-1)
\,,
\end{split}
\end{equation}
where
\begin{equation} \label{hugd05thm}
\begin{split}
F_0 &= \phantom{-}m_1(\gammaab-1) + m_3(\gammabc-1)
\\[8pt]
F_1 &= \phantom{-}m_1(\gammaab-1)(\gammaac-1) \pm\sqrt{-\Delta_1\Delta_2}
\\[8pt]
F_2 &=-m_3(\gammaac-1)(\gammabc-1) \pm\sqrt{-\Delta_1\Delta_2}
\,,
\end{split}
\end{equation}
and where
\begin{equation} \label{hugd03thm}
\begin{split}
\Delta_1 &= m_1m_2(\gammaab-1)+m_1m_3(\gammaac-1)+m_2m_3(\gammabc-1)
\\[6pt]
\Delta_2 &= (\gammaab-1)(\gammaac-1)(\gammabc-1)>0
\,.
\end{split}
\end{equation}

 
\begin{figure}[t]  
 \centering         
 \psfrag{A1}{$A_1$}
 \psfrag{A2}{$A_2$}
 \psfrag{A3}{$A_3$}
 \psfrag{P}{$P$}
 \psfrag{PTp}{$P_{^+}$}
 \psfrag{PTm}{$P_{^-}$}
 \includegraphics[width=9cm]{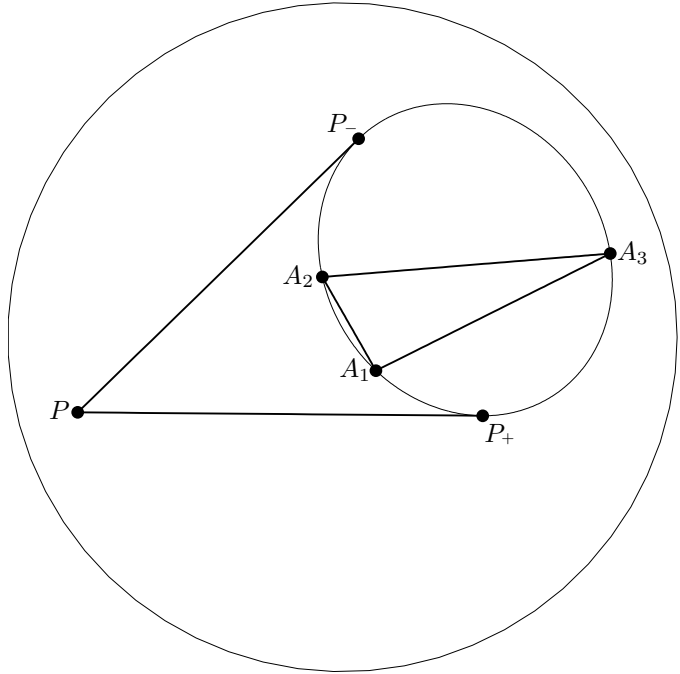}
\caption[Illustration of the Gyrocircle Tangents Theorem]{
Illustrating the Gyrocircle Tangents Theorem \ref{gtanthm}.
The two tangency points $P_{^{\pm}}$ of the circumgyrocircle $C(A_1A_2A_3)$ of
a gyrotriangle $A_1A_2A_3$ in an Einstein gyrovector plane $(\Rstwo,\op,\od)$,
which lie on the tangent lines that pass through a point $P$ that lies in the
exterior of the circumgyrocircle, are shown.
The Euclidean counterpart of this figure is presented in Fig.~\ref{fig336eucm}.
\label{fig334enm}}
\end{figure}

\begin{enumerate}
\item
Two distinct tangency points, $P_{^{\pm}}$,
exist if and only if $\Delta_1<0$ or,
equivalently, if and only if the point $P$ lies in the exterior
of the circumgyrocircle $C(A_1A_2A_3)$,
as shown in Fig.~\ref{fig334enm}.
\item
The two distinct tangency points degenerate to a single one,
$P_{^{\pm}}=P$ if and only if $\Delta_1=0$ or,
equivalently, if and only if the point $P$ lies
on the circumgyrocircle $C(A_1A_2A_3)$.
\item
There are no tangency points if and only if $\Delta_1>0$ or,
equivalently, if and only if the point $P$ lies in the interior
of the circumgyrocircle $C(A_1A_2A_3)$.
\end{enumerate}

The constant $\mPpm$ of the gyrobarycentric representation \eqref{prvks}
of $P_{\pm}$ is given by
\begin{equation} \label{hilnad1}
\mPpm = (\gammaab-1) (\gammaac-1) (\gammabc-1) E_1 \pm \sqrt{-\Delta_1\Delta_2} E_2
\,,
\end{equation}
where
\begin{equation} \label{hilnad2}
\begin{split}
E_1 &= (m_1-m_2+m_3) \{ m_1(\gammaab-1)+m_3(\gammabc-1)\} -2m_1m_3(\gammaac-1)
\\[6pt]
E_2 &= -m_1(\gammaab-1)^2+m_3(\gammabc-1)^2
+m_1(\gammaab-1) (\gammaac-1)
\\[6pt] & \phantom{=}
- m_3(\gammaac-1) (\gammabc-1)
+(m_1-m_3)(\gammaab-1) (\gammabc-1)
\,.
\end{split}
\end{equation}
\end{theorem}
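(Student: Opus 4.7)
The plan is to reduce the theorem to an application of the Gyroline Gyrocircle Intersection Theorem (Theorem \ref{gyintersecthm}). By Theorem \ref{hdgeb5}, a gyrotangent gyroline meets the circumgyrocircle in exactly one point, which is equivalent to the condition that the ``second'' intersection point $P^\pprime$ of the gyroline $L_{_{PP^\prime}}$ with the circle, as produced by Theorem \ref{gyintersecthm}, coincides with $P^\prime$ itself. Accordingly, I would let $P^\prime$ range over the circumgyrocircle via the parameter $t$ of \eqref{muramd07d1thm}, insert these $m_k^\prime$ together with the given coordinates $m_k$ of $P$ into \eqref{kfcn01thm}\,--\,\eqref{kfcn02thm} to obtain the coordinates $m_k^\pprime$ of the second intersection, and then impose the tangency condition $P^\pprime=P^\prime$ in the form of the proportionality $(m_1^\pprime:m_2^\pprime:m_3^\pprime)=(m_1^\prime:m_2^\prime:m_3^\prime)$.

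Writing $a=\gamma_{12}-1$, $b=\gamma_{13}-1$, $c=\gamma_{23}-1$, the single ratio equality $m_2^\pprime/m_1^\pprime=m_2^\prime/m_1^\prime$ becomes, after substitution of \eqref{muramd07d1thm} and \eqref{kfcn02thm} and routine cancellation of common factors, the quadratic
\begin{equation*}
a(m_1 a + m_3 c)\,t^2 + 2 m_1 a b\,t + b(m_1 b + m_2 c) \;=\; 0,
\end{equation*}
and a direct verification shows that the remaining ratio $m_3^\pprime/m_1^\pprime=m_3^\prime/m_1^\prime$ yields precisely the same quadratic, so no additional constraint arises. The discriminant of this quadratic collapses, after using the identity $(m_1a+m_3c)(m_1b+m_2c)=m_1^2 ab + m_1m_2ac + m_1m_3bc + m_2m_3c^2$, to
\begin{equation*}
4m_1^2 a^2 b^2 - 4ab(m_1 a + m_3 c)(m_1 b + m_2 c) \;=\; -4abc\,\bigl[m_1m_2a + m_1m_3b + m_2m_3c\bigr] \;=\; -4\Delta_1\Delta_2,
\end{equation*}
whence the quadratic formula gives the two roots $t_{\pm}=(-m_1ab\pm\sqrt{-\Delta_1\Delta_2})/[a(m_1a+m_3c)]$. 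Since $\Delta_2>0$ for any gyrotriangle with distinct vertices, the existence trichotomy in the theorem is immediate from the sign of $-\Delta_1$, and Theorem \ref{thmsnvur} (items \ref{thmsnvur02}\,--\,\ref{thmsnvur03}) identifies $\Delta_1<0$, $=0$, $>0$ with $P$ lying exterior to, on, or interior to $C(A_1A_2A_3)$.

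To pass from $t_{\pm}$ to the closed-form gyrobarycentric coordinates \eqref{hugd04thm}, I would substitute $t_{\pm}$ back into the parametrization \eqref{muramd07d1thm} and clear denominators by the homogeneous rescaling $m_k^\prime\to m_k^\prime\cdot a(m_1a+m_3c)^2$. The central simplification is that the combination $at_{\pm}+b$ collapses to $(m_3bc\pm\sqrt{-\Delta_1\Delta_2})/(m_1a+m_3c)$ because the $\pm m_1 a^2 b$ terms cancel; pairing this with the numerator $-m_1ab\pm\sqrt{-\Delta_1\Delta_2}$ of $t_{\pm}$ and packaging the common denominator as $F_0=m_1a+m_3c$, one reads off exactly the three-factor products $m_1^\prime=cF_0F_1$, $m_2^\prime=F_1F_2$, $m_3^\prime=-aF_0F_2$ stated in \eqref{hugd04thm}, with the $\pm$ signs in $F_1$ and $F_2$ linked consistently to the choice of root. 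Finally, the formula \eqref{hilnad1}\,--\,\eqref{hilnad2} for $\mPpm$ is obtained by inserting these $m_k^\prime$ into the representation-constant formula \eqref{napof06sa} and separating the resulting expression into its rational part (terms with no or with squared $\sqrt{-\Delta_1\Delta_2}$) and its $\sqrt{-\Delta_1\Delta_2}$-linear part.

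The main obstacle is algebraic bookkeeping rather than any conceptual difficulty. The raw substitutions produce polynomials of degree four in the $m_k$ and in the $\gamma_{ij}-1$ with signed square-root terms entering at multiple places, and the nontrivial two-factor product structure of $\mPpm$ in \eqref{hilnad1} emerges only after a rather delicate regrouping of roughly a dozen monomials, together with careful tracking of which $\pm$ sign propagates to which factor. As the introduction emphasizes, Mathematica is indispensable for certifying these identities; my role is to identify the structural hook (``tangency $\Leftrightarrow$ $P^\pprime=P^\prime$'' via Theorem \ref{gyintersecthm}) and to guide the factorizations so that the final form matches \eqref{hugd04thm}\,--\,\eqref{hilnad2}.
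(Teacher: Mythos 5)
Your proposal follows essentially the same route as the paper's own proof: tangency is imposed as the coincidence $P^\pprime=P^\prime$ in the Gyroline Gyrocircle Intersection Theorem \ref{gyintersecthm}, the two resulting ratio equations are observed to be equivalent and solved for the parameter $t$ (your quadratic's roots are exactly the paper's \eqref{hugd02}, with discriminant $-\Delta_1\Delta_2$ giving the trichotomy via Theorem \ref{thmsnvur}), and the closed forms \eqref{hugd04thm} and \eqref{hilnad1} are recovered by back-substitution into \eqref{muramd07d1thm} and the representation-constant formula after homogeneous rescaling. The only difference is expository: you display the quadratic and its discriminant explicitly, whereas the paper states only the solved root.
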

\begin{proof}

The  Gyroline Gyrocircle Intersection Theorem \ref{gyintersecthm}
enables gyrocircle--gyroline tangency points to be determined.
Indeed, in the special case when the two points $P^\prime$ and $P^\pprime$ in
Theorem \ref{gyintersecthm},
in \eqref{jkner02sthm} and in \eqref{skner02sthm},
shown in Fig.~\ref{fig333enm},
are coincident, the resulting point, $P_t:=P^\prime=P^\pprime$
is a point of tangency.
There are two distinct ways for the points $P^\prime$ and $P^\pprime$
to become coincident by approaching each other along the circumference of
the circumgyrocircle $C(A_1A_2A_3)$ of gyrotriangle $A_1A_2A_3$.
These give rise to two tangency points to the circumgyrocircle that lie
on the two gyrotangent gyrolines that pass through the point $P$,
shown in Fig.~\ref{fig334enm}.

The points $P^\prime$ and $P^\pprime$ in
\eqref{jkner02sthm} and \eqref{skner02sthm} possess
gyrobarycentric representations that involve the
parameter $t\in\Rb\cup\{-\infty,\infty\}$.
The condition $P^\prime=P^\pprime$ that gives rise to tangency points
determines the values of the parameter $t$ that correspond
to the two tangency points.

Indeed, if $P^\prime=P^\pprime$ then it follows from the
gyrobarycentric representations of $P^\prime$ and $P^\pprime$ in
\eqref{jkner02sthm} and \eqref{skner02sthm} that
\begin{equation} \label{hugd01}
\begin{split}
\frac{m_2^\prime}{m_1^\prime} &= \frac{m_2^\pprime}{m_1^\pprime}
\\[6pt]
\frac{m_3^\prime}{m_1^\prime} &= \frac{m_3^\pprime}{m_1^\pprime}
\,,
\end{split}
\end{equation}
thus obtaining two equations for the parameter $t$ that gives rise
to the tangency points.

Inserting \eqref{muramd07d1thm}
and \eqref{kfcn01thm}\,--\,\eqref{kfcn02thm}
in \eqref{hugd01}, we see that the two equations in \eqref{hugd01}
are equivalent to each other. Solving one of these equations for
the unknown parameter value $t$, we obtain
\begin{equation} \label{hugd02}
t = \frac{
-m_1 (\gammaab-1)(\gammaac-1) \pm \sqrt{-\Delta_1\Delta_2}
}{
\{m_1 (\gammaab-1) + m_3 (\gammabc-1)\} (\gammaab-1)
}
\,,
\end{equation}
where
\begin{equation} \label{hugd03}
\begin{split}
\Delta_1 &= m_1m_2(\gammaab-1)+m_1m_3(\gammaac-1)+m_2m_3(\gammabc-1)
\\[6pt]
\Delta_2 &= (\gammaab-1)(\gammaac-1)(\gammabc-1)>0
\,.
\end{split}
\end{equation}

We see from the Gyrocircle Gyrobarycentric Representation Theorem
\ref{thmsnvur}, p.~\pageref{thmsnvur}, that
\begin{enumerate}
\item
$\Delta_1=0$ if and only if $P$ lies on circumgyrocircle $C(A_1A_2A_3)$;
\item
$\Delta_1>0$
if and only if the point $P$ in \eqref{tkner02sthm} lies in the interior
of circumgyrocircle $C(A_1A_2A_3)$; and
\item
$\Delta_1<0$
if and only if the point $P$ lies in the exterior of
circumgyrocircle $C(A_1A_2A_3)$.
\end{enumerate}

We thus see from \eqref{hugd02} that a value of the real parameter $t$ that
corresponds to a tangency point $P_t$, $P_t:=P^\prime=P^\pprime$,
exists if and only if the point $P$
lies on circumgyrocircle $C(A_1A_2A_3)$ or in its exterior.
Indeed, this result is expected
since it is clear from Figs.~\ref{fig333enm}\,--\,\ref{fig333aenm} that
a gyrotangent gyroline to the circumgyrocircle that passes through the
point $P$ exists if and only if $P$ lies in the exterior
of the circumgyrocircle.
Clearly, the two distinct tangency points $P_{^{\pm}}$ of
circumgyrocircle $C(A_1A_2A_3)$ degenerate to a single point,
$P_{^{\pm}}=P$,
if and only if the point $P$ lies on the circumgyrocircle.

Inserting the parameter value of $t$ from \eqref{hugd02} into
the gyrobarycentric coordinates $m_k^\prime$, $k=1,2,3$, of
$P^\prime$ in \eqref{muramd07d1thm}, and omitting irrelevant
nonzero common factors, we obtain
the gyrobarycentric coordinates
in \eqref{hugd04thm}\,--\,\eqref{hugd03thm}, as desired.

By \cite[Eq.~(4.27), p.~90]{mybook05} and the
{\it Gyrobarycentric representation Gyrocovariance Theorem},
\cite[Theorem 4.6, pp.~90-91]{mybook05},
the constant $\mPpm$ of the
gyrobarycentric representation \eqref{prvks} of $P_{^{\pm}}$
is given by the equation
\begin{equation} \label{htetn1}
m_{P_{\pm}}^2 = (m_1^\prime)^2 + (m_2^\prime)^2 + (m_3^\prime)^2
+2(m_1^\prime m_2^\prime \gammaab+m_1^\prime m_3^\prime \gammaac
+m_2^\prime m_3^\prime \gammabc)
\,.
\end{equation}
The substitution of the gyrobarycentric coordinates $m_k^\prime$
from \eqref{hugd04thm} into \eqref{htetn1} yields
\eqref{hilnad1}, as desired.
\end{proof}

\section{Gyrocircle Gyrotangent Gyrolength} \label{slils}
\index{gyrocircle gyrotangent gyrolength}

Let $P_{^{\pm}}$ denote collectively
the two tangency points, $P_{+}$ and $P_{-}$, of the gyrotangent gyrolines
drawn from a point $P$ to the circumgyrocircle $C(A_1A_2A_3)$
of a gyrotriangle $A_1A_2A_3$ in an Einstein gyrovector space $(\Rsn,\op,\od)$,
as shown in Fig.~\ref{fig334enm} for $n=2$.
Accordingly, let
$P\in\rmspan\{\om A_1 \op A_2,~\om A_1 \op A_3\}\subset\Rn$
be a point on the gyroplane of gyrotriangle $A_1A_2A_3$
with gyrobarycentric representation
\begin{equation} \label{matos1}
P = \frac{
m_1 \gAa A_1 + m_2 \gAb A_2 + m_3 \gAc A_3
}{
m_1 \gAa + m_2 \gAb + m_3 \gAc
}
\end{equation}
so that, following Theorem \ref{gtanthm}, \eqref{prvks},
the tangency point $P_{^{\pm}}$ possesses the gyrobarycentric representation
\begin{equation} \label{matos2}
P_{^{\pm}} = \frac{m_1^\prime\gAa A_1 + m_2^\prime\gAb A_2 + m_3^\prime\gAc A_3}
         {m_1^\prime\gAa + m_2^\prime\gAb + m_3^\prime\gAc}
\end{equation}
with gyrobarycentric coordinates $m_k^\prime$, $k=1,2,3$, given by
\eqref{hugd04thm}\,--\,\eqref{hugd03thm}.

Hence, by \cite[Sect.~4.9]{mybook05} with $N=3$,
\begin{equation} \label{matos3}
\begin{split}
 \gamma_{\om P\op P_{^{\pm}}}^{\phantom{O}} &= \frac{1}{\mP \mPpm}\left\{
(m_1m_2^\prime + m_1^\prime m_2) \gammaab +
(m_1m_3^\prime + m_1^\prime m_3) \gammaac \right.
\\[6pt] & \hspace{1.6cm}
+(m_2m_3^\prime + m_2^\prime m_3) \gammabc
+
\left.
m_1m_1^\prime + m_2m_2^\prime + m_3m_3^\prime \right\}
\,,
\end{split}
\end{equation}
where by \cite[Eq.~(4.27), p.~90]{mybook05} and the
{\it Gyrobarycentric representation Gyrocovariance Theorem},
\cite[Theorem 4.6, pp.~90-91]{mybook05},
the constant $\mP$ of the gyrobarycentric representation
\eqref{matos1} of $P$ is given by
\begin{equation} \label{skner06s}
m_P^2 = (m_1+m_2+m_3)^2 + 2(m_1m_2(\gammaab-1)+m_1m_3(\gammaac-1)+m_2m_3(\gammabc-1))
\end{equation}
and where $\mPpm$ is the constant of the gyrobarycentric representation
\eqref{matos2} of $P_{^{\pm}}$, given by \eqref{hilnad1}.

Inserting
$m_k^\prime$, $k=1,2,3$, from \eqref{hugd04thm}\,--\,\eqref{hugd03thm}
into \eqref{matos3}, we obtain the elegant equation
\begin{equation} \label{harfg}
\gamma_{\om P \op P_{\pm}}^{\phantom{O}} = \frac{
m_1+m_2+m_3 }{ m_P }
\,.
\end{equation}

The gyrolength of the gyrotangent gyrosegment $PP_{\pm}$
can readily be obtained from \eqref{harfg}.
Indeed, following \eqref{harfg}
and \eqref{rugh1ds}, p.~\pageref{rugh1ds}, we have
\begin{equation} \label{harfh}
\|\om P \op P_{\pm}\| = s \frac{
\sqrt{\gamma_{\om P \op P_{\pm}}^2-1}
}{
\gamma_{\om P \op P_{\pm}}^{\phantom{O}}
}
\,.
\end{equation}

Unlike the gyrobarycentric coordinates $m_k^\prime$ and the
constant $\mPpm$ in \eqref{matos3}, the gamma factor in \eqref{harfg}
is free of the square root in \eqref{hugd05thm} and \eqref{hilnad1}.
This implies the gyrodistance equality
\begin{equation} \label{harfi}
\|\om P \op P_{+}\| = \|\om P \op P_{-}\|
\,.
\end{equation}

Following \eqref{skner06s}\,--\,\eqref{harfh} we have
\begin{equation} \label{harfha1}
\begin{split}
\|\om P \op P_{\pm}\|
&= s \frac{
\sqrt{\gamma_{\om P \op P_{\pm}}^2-1}
}{
\gamma_{\om P \op P_{\pm}}^{\phantom{O}}
}
\\[8pt] &
= s \frac{
\sqrt{
-2(m_1m_2(\gammaab-1) + m_1m_3(\gammaac-1) + m_2m_3(\gammabc-1))
}
}{
m_1+m_2+m_3
}
\end{split}
\end{equation}

Note that by Theorem \ref{thmsnvur}, p.~\pageref{thmsnvur},
the radicand on the extreme right-hand side of
\eqref{harfha1} is positive (zero) if and only if the point $P$
lies in the exterior of of circumgyrocircle $C(A_1A_2A_3)$
(if and only if the point $P$ lies on $C(A_1A_2A_3)$).

Formalizing results of this section, we obtain the
following theorem.
\index{gyrocircle gyrotangent gyrosegment gyrolength}
\begin{theorem}\label{harfk} 
{\bf (Gyrocircle Gyrotangent Gyrolength).}
\begin{enumerate}
\item \label{itemgd1}
Let $A_1A_2A_3$ be a gyrotriangle that possesses a circumgyrocircle
$C(A_1A_2A_3)$
in an Einstein gyrovector space $(\Rsn,\op,\od)$,
\item \label{itemgd2}
let
$P\in\rmspan\{\om A_1 \op A_2,~\om A_1 \op A_3\}\subset\Rn$
be a point that lies in the exterior of $C(A_1A_2A_3)$,
given by its gyrobarycentric representation
\begin{equation} \label{tkner02w}
P = \frac{m_1\gAa A_1 + m_2\gAb A_2 + m_3\gAc A_3}
         {m_1\gAa + m_2\gAb + m_3\gAc}
\end{equation}
with respect to the set $S=\{ A_1,A_2,A_3\}$, and
\item \label{itemgd3}
let $P_{^{\pm}}$ represent collectively the two tangency points, $P_{+}$ and $P_{-}$,
of the two gyrotangent gyrolines
drawn from $P$ to circumgyrocircle $C(A_1A_2A_3)$, as shown in
Fig.~\ref{fig334enm}.
\end{enumerate}
Then, we have the following results:
\begin{enumerate}
\item \label{itemge1}
The tangency points $P_{^{\pm}}$ possess the gyrobarycentric representations
\begin{equation} \label{matos2s}
P_{^{\pm}} = \frac{m_1^\prime\gAa A_1 + m_2^\prime\gAb A_2 + m_3^\prime\gAc A_3}
         {m_1^\prime\gAa + m_2^\prime\gAb + m_3^\prime\gAc}
\end{equation}
with gyrobarycentric coordinates $m_k^\prime$, $k=1,2,3$, given by
\eqref{hugd04thm}\,--\,\eqref{hugd03thm}.
\item \label{itemge2}
The gyrolengths of the two gyroline gyrosegments $PP_{+}$ and $PP_{-}$
are equal,
\begin{equation} \label{wkdm1}
\|\om P \op P_{+}\| = \|\om P \op P_{-}\|
\,.
\end{equation}
\item \label{itemge3}
The gamma factor of each of the two gyrosegments $PP_{+}$ and $PP_{-}$
is given by the equations
\begin{equation} \label{harfkm}
\gamma_{\om P \op P_{\pm}}^{\phantom{O}}
=
\frac{m_1+m_2+m_3 }{\mP}
\,,
\end{equation}
where 
$\mP$ is the constant of the gyrobarycentric representation of $P$
with respect to $S$.
\item \label{itemge4}
The gyrolength of each of the 
two gyrotangent gyrosegments $PP_{+}$ and $PP_{-}$ is given by
\begin{equation} \label{harfdkb}
\begin{split}
\|\om P \op P_{\pm}\|
&= s \frac{
\sqrt{\gamma_{\om P \op P_{\pm}}^2-1}
}{
\gamma_{\om P \op P_{\pm}}^{\phantom{O}}
}
\\[8pt] & \hspace{-0.8cm}
= s \frac{
\sqrt{
-2\{ m_1m_2(\gammaab-1) + m_1m_3(\gammaac-1) + m_2m_3(\gammabc-1) \}
}
}{
m_1+m_2+m_3
}
\,.
\end{split}
\end{equation}
\end{enumerate}
\end{theorem}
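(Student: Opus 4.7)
The plan is to assemble the four claims by combining the Gyrocircle Gyrotangents Theorem \ref{gtanthm} with the standard gamma-factor formula for a left gyrotranslation of a point given in gyrobarycentric form. Item \eqref{itemge1} is immediate: since $P$ lies in the exterior of $C(A_1A_2A_3)$, we have $\Delta_1<0$ by Theorem \ref{gtanthm}, so two distinct tangency points $P_{\pm}$ exist and are given by the gyrobarycentric representations \eqref{hugd04thm}\,--\,\eqref{hugd03thm}. I would simply cite Theorem \ref{gtanthm} here.

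The heart of the proof is item \eqref{itemge3}, the formula $\gamma_{\om P\op P_{\pm}}=(m_1+m_2+m_3)/\mP$. I would start from the identity for the gamma factor of the gyrodistance between two points given by gyrobarycentric representations with respect to a common set, namely
\begin{equation*}
\mP\,\mPpm\,\gamma_{\om P\op P_{\pm}}^{\phantom{O}}
 = \sum_{i,j} m_i m_j^\prime \gamma_{ij}^{\phantom{O}},
\end{equation*}
which is the content of \eqref{matos3}. The plan is then to substitute the explicit values of $m_k^\prime$ from \eqref{hugd04thm}\,--\,\eqref{hugd03thm}, together with $\mPpm$ from \eqref{hilnad1}, into this sum. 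The factors $F_0,F_1,F_2$ carry the sign $\pm\sqrt{-\Delta_1\Delta_2}$; the crucial observation is that after expanding the triple sum the $\pm$-terms regroup so as to match the $\pm$-contribution appearing inside $\mPpm$ in \eqref{hilnad1}, leaving a factor that cancels cleanly with the denominator. What survives is proportional to $(m_1+m_2+m_3)\mP$, divided by $\mP^2$, yielding the advertised $(m_1+m_2+m_3)/\mP$. This step is the main obstacle: it is a lengthy polynomial identity in the gamma factors and the $m_k$ whose verification is essentially bookkeeping but is tedious enough that, as the author notes in the introduction, a computer algebra system (Mathematica) is indispensable.

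Once \eqref{harfkm} is established, items \eqref{itemge2} and \eqref{itemge4} are immediate. Item \eqref{itemge2} follows because the right-hand side of \eqref{harfkm} does not depend on the choice of sign, so $\gamma_{\om P\op P_{+}}=\gamma_{\om P\op P_{-}}$, and the gyrolength is a monotonic function of the gamma factor via \eqref{rugh1ds}. For item \eqref{itemge4}, apply \eqref{rugh1ds} in the form $\|v\|=s\sqrt{\gamma_v^2-1}/\gamma_v$ to $v=\om P\op P_{\pm}$ and substitute \eqref{harfkm} together with the expression \eqref{skner06s} for $\mP^2$. A direct computation gives
\begin{equation*}
\gamma_{\om P\op P_{\pm}}^2-1
 = \frac{\mP^2-(m_1+m_2+m_3)^2(-1)\text{ terms}}{\mP^2},
\end{equation*}
which, using \eqref{skner06s}, reduces to $-2\Delta_1/\mP^2$ with $\Delta_1$ as in \eqref{hugd03}; since $P$ is exterior, $\Delta_1<0$ and the radicand is positive. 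Dividing by $\gamma_{\om P\op P_{\pm}}^{\phantom{O}}=(m_1+m_2+m_3)/\mP$ and multiplying by $s$ yields exactly \eqref{harfdkb}, completing the proof.
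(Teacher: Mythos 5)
Your proposal is correct and follows essentially the same route as the paper: cite Theorem \ref{gtanthm} for the representations of $P_{\pm}$, substitute \eqref{hugd04thm}\,--\,\eqref{hugd03thm} into the two-point gamma-factor formula \eqref{matos3} and simplify (by computer algebra) to the sign-independent expression \eqref{harfkm}, from which the equality of the two gyrolengths and the explicit formula \eqref{harfdkb} follow via \eqref{rugh1ds} and \eqref{skner06s} exactly as you describe.
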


\section{Circle--Line Tangency Points} \label{slila4d7}
\index{circle-line tangency points}

The Euclidean counterpart of the
Gyrocircle Gyrotangents Theorem \ref{gtanthm}
is the following elegant theorem, illustrated in Fig.~\ref{fig336eucm}:

 
\begin{figure}[t]  
 \centering         
 \psfrag{A1}{$A_1$}
 \psfrag{A2}{$A_2$}
 \psfrag{A3}{$A_3$}
 \psfrag{P}{$P$}
 \psfrag{Pp}{$P_{^+}$}
 \psfrag{Pm}{$P_{^-}$}
 \includegraphics[width=9cm]{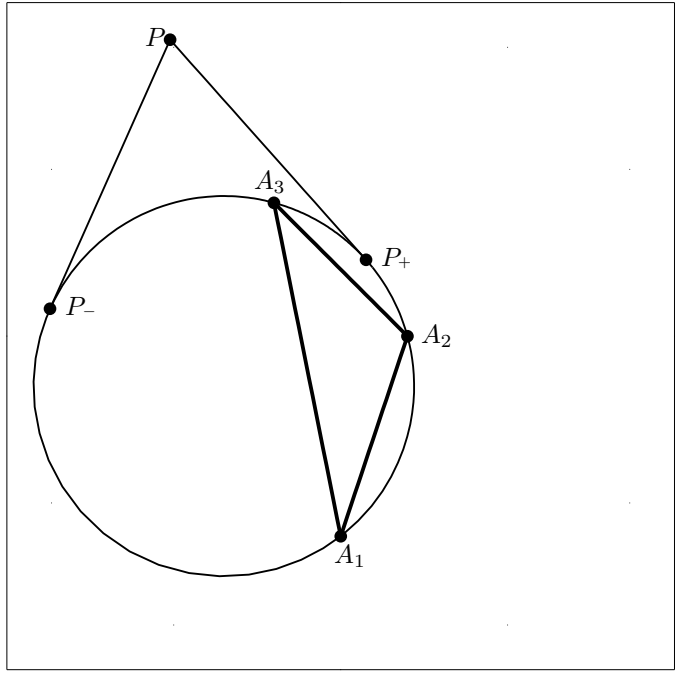}
\caption[Illustration of the Circle Tangents Theorem]{
Illustrating the Circle Tangents Theorem \ref{gtanthmeuc}.
The two tangency points $P_{^{\pm}}$ of the circumcircle $C(A_1A_2A_3)$ of
a triangle $A_1A_2A_3$ in a Euclidean plane $\Rtwo$,
which lie on the tangent lines that pass through a point $P$ that lies
in the exterior of the circumcircle, are shown.
This figure is the Euclidean counterpart of Fig.~\ref{fig334enm}.
\label{fig336eucm}}
\end{figure}

\index{circle tangents, theorem}
\begin{theorem}
{\bf (The Circle Tangents Theorem).}
\label{gtanthmeuc}
Let $P\in\Rtwo$ be a point given by its barycentric representation
\begin{equation} \label{tkner02steuc}
P = \frac{m_1 A_1 + m_2 A_2 + m_3 A_3}
         {m_1 + m_2 + m_3}
\end{equation}
with respect to a barycentrically independent set $S=\{ A_1,A_2,A_3\}$
in a Euclidean plane $\Rtwo$,
and let $C(A_1A_2A_3)$ be the circumcircle of
triangle $A_1A_2A_3$.
Then the two tangency points $P_{^{\pm}}$ of the tangent lines of
circumcircle $C(A_1A_2A_3)$ that pass through  the point $P$,
when exist, are given by their barycentric representations
\begin{equation} \label{prvkseuc}
P_{^{\pm}} = \frac{m_1^\prime A_1 + m_2^\prime A_2 + m_3^\prime A_3}
         {m_1^\prime + m_2^\prime + m_3^\prime}
\,,
\end{equation}
shown in Fig.~\ref{fig336eucm}.
The barycentric coordinates of $P_{^{\pm}}$ are
given by the equations
\begin{equation} \label{hugd04thmeuc}
\begin{split}
m_1^\prime &= \phantom{-}F_0F_1 \sin^2\alpha_1
\\[6pt]
m_2^\prime &= \phantom{-}F_1F_2
\\[6pt]
m_3^\prime &=-F_0F_2\sin^2\alpha_3
\,,
\end{split}
\end{equation}
where
\begin{equation} \label{hugd05thmeuc}
\begin{split}
F_0 &= m_1 \sin^2\alpha_3 + m_3 \sin^2\alpha_1
\\[8pt]
F_1 &= \{m_1 \sin^2\alpha_2 \sin^2\alpha_3
\pm\sin\alpha_1\sin\alpha_2\sin\alpha_3 \sqrt{-\Delta}\} \sin^2\alpha_1
\\[8pt]
F_2 &= \{m_3 \sin^2\alpha_1 \sin^2\alpha_2
\mp\sin\alpha_1\sin\alpha_2\sin\alpha_3 \sqrt{-\Delta}\} \sin^2\alpha_3
\,,
\end{split}
\end{equation}
and where
\begin{equation} \label{hugd03thmeuc}
\Delta = 
m_1m_2 \sin^2\alpha_3 + m_1m_3 \sin^2\alpha_2 + m_2m_3 \sin^2\alpha_1
\,.
\end{equation}
\begin{enumerate}
\item
Two distinct tangency points, $P_{^{\pm}}$,
exist if and only if $\Delta<0$ or,
equivalently, if and only if the point $P$ lies in the exterior
of the circumcircle $C(A_1A_2A_3)$,
as shown in Fig.~\ref{fig336eucm}.
\item
The two distinct tangency points degenerate to a single one,
$P_{^{\pm}}=P$ if and only if $\Delta=0$ or,
equivalently, if and only if the point $P$ lies
on the circumcircle $C(A_1A_2A_3)$.
\item
There are no tangency points if and only if $\Delta>0$ or,
equivalently, if and only if the point $P$ lies in the interior
of the circumcircle $C(A_1A_2A_3)$.
\item
The lengths of the two tangent segments
$\|-P+P_{+}\|$ and $\|-P+P_{+}\|$
are equal, given by
\begin{equation} \label{rkdn1}
\|-P+P_{^{\pm}}\| = \frac{ \sqrt{
-\{ m_1m_2 a_{12}^2 + m_1m_3 a_{13}^2 + m_2m_3 a_{23}^2\}
}}{
m_1+m_2+m_3
}
\,.
\end{equation}
\end{enumerate}
\end{theorem}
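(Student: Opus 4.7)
The plan is to derive Theorem \ref{gtanthmeuc} as the Euclidean limit $s\to\infty$ of the Gyrocircle Gyrotangents Theorem \ref{gtanthm}, exactly analogously to how Theorem \ref{thmsnvureuc} was derived from Theorem \ref{thmsnvur} and Theorem \ref{thmdksew} was derived from Theorem \ref{thmtivhvn}. The key bridge is Lemma \ref{lemwindk}, which gives $\lim_{s\to\infty} s^2(\gamma_{ij}-1) = \tfrac12 a_{ij}^2$. Since barycentric coordinates are homogeneous, I may multiply the gyrobarycentric coordinates $m_1^\prime,m_2^\prime,m_3^\prime$ in \eqref{hugd04thm} by any nonzero $s$-dependent factor before taking the limit; this freedom is what makes the passage clean.

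First, I rescale the three quantities in \eqref{hugd03thm}. Multiplying $\Delta_1$ by $2s^2$ yields, in the limit, $m_1m_2 a_{12}^2 + m_1m_3 a_{13}^2 + m_2m_3 a_{23}^2$; multiplying $\Delta_2$ by $8s^6$ yields $a_{12}^2 a_{13}^2 a_{23}^2$. Hence $-\Delta_1\Delta_2$, after multiplying by $16 s^8$, tends to $-a_{12}^2a_{13}^2a_{23}^2 \bigl(m_1m_2 a_{12}^2 + m_1m_3 a_{13}^2 + m_2m_3 a_{23}^2\bigr)$, and the sign of $\Delta_1$ matches the sign of the Euclidean quantity $\Delta$ in \eqref{hugd03thmeuc} after the further substitution via the law of sines $a_{12}=2R\sin\alpha_3$, $a_{13}=2R\sin\alpha_2$, $a_{23}=2R\sin\alpha_1$ (Theorem \ref{thmhfkbc} in the Euclidean limit). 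Thus the three trichotomy statements about $\Delta$ (exterior/on/interior) follow directly from the corresponding statements about $\Delta_1$ in Theorem \ref{gtanthm} together with Theorem \ref{thmsnvureuc}.

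Next, I scale $F_0$, $F_1$, $F_2$ in \eqref{hugd05thm} by appropriate powers of $s^2$ so that each tends to a finite, nonzero limit. Concretely, $2s^2 F_0 \to m_1 a_{12}^2 + m_3 a_{23}^2$; $(2s^2)^2 F_1 \to m_1 a_{12}^2 a_{13}^2 \pm 4s^4\sqrt{-\Delta_1\Delta_2}$, and the radical limit was computed in the previous paragraph; similarly for $F_2$. Substituting $a_{ij}=2R\sin\alpha_k$ (opposite side/angle correspondence) and pulling out the common factor $(2R)^{2k}$ from each coordinate yields, after canceling the resulting overall scalar (again by homogeneity), precisely the expressions \eqref{hugd05thmeuc} and \eqref{hugd03thmeuc}. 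The $\pm$ signs survive intact because they come from the same $\pm\sqrt{-\Delta_1\Delta_2}$ branch choice in both formulas.

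Finally, the length equality \eqref{rkdn1} is obtained by taking the Euclidean limit of \eqref{harfdkb}: writing the factor $s$ outside the square root as $\sqrt{s^2}$ and moving it under the radical gives $\|\om P\op P_{\pm}\|=\sqrt{-2s^2\{m_1m_2(\gammaab-1)+m_1m_3(\gammaac-1)+m_2m_3(\gammabc-1)\}}/(m_1+m_2+m_3)$, and Lemma \ref{lemwindk} immediately reduces the radicand to $-\{m_1m_2 a_{12}^2 + m_1m_3 a_{13}^2 + m_2m_3 a_{23}^2\}$. The equality $\|{-P+P_+}\|=\|{-P+P_-}\|$ is inherited from \eqref{harfi} in the limit, or seen directly from the fact that the length formula is independent of the $\pm$ branch. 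The main obstacle I anticipate is purely bookkeeping: tracking the correct rescaling powers of $s$ in each of $F_0,F_1,F_2$ so that the surviving constants in the limit reassemble into the trigonometric form \eqref{hugd05thmeuc} via the Euclidean law of sines, rather than producing a scaled but formally different expression; homogeneity of barycentric coordinates is what ultimately absorbs all the extraneous common constants.
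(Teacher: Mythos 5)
Your proposal follows essentially the same route as the paper: the paper also obtains the theorem as the Euclidean limit of Theorem \ref{gtanthm} by exploiting the homogeneity of (gyro)barycentric coordinates to rescale $F_0,F_1,F_2,\Delta_1,\Delta_2$ before passing to the limit via Lemma \ref{lemwindk} (the paper normalizes by powers of $\gamma_{12}^{\phantom{1}}-1$ where you normalize by powers of $s^2$, an immaterial difference), and it derives \eqref{rkdn1} from \eqref{harfdkb} exactly as you do. The only caveat is the bookkeeping you yourself flag — matching the rescaled limits to the trigonometric form \eqref{hugd05thmeuc} via the law of sines — which the paper likewise compresses into a ``one can readily show,'' so your proposal is faithful to the paper's argument.
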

\begin{proof}
We will show that each result of this theorem is the
Euclidean limit of a corresponding result of its
hyperbolic counterpart, Theorem \ref{gtanthm}.
Noting that in the Euclidean limit, $s\rightarrow\infty$,
gamma factors tend to $1$, the Euclidean limit of
the gyrobarycentric coordinates
$m_k^\prime$, $k=1,2,3$, in \eqref{prvks}\,--\,\eqref{hugd03thm}
is trivial. The resulting barycentric coordinates
vanish, giving  rise to an indeterminate barycentric representation.
Thus, a straightforward application of the Euclidean limit
to the gyrobarycentric coordinates $m_k^\prime$ 
in \eqref{prvks}\,--\,\eqref{hugd03thm} in an attempt
to recover the corresponding barycentric coordinates $m_k^\prime$
in \eqref{prvkseuc}\,--\,\eqref{hugd03thmeuc}
is destined to fail.

However, owing to their homogeneity,
the gyrobarycentric coordinates $m_k^\prime$
in \eqref{prvks}\,--\,\eqref{hugd03thm}
can be written in a form that admits a Euclidean limit
to a viable barycentric coordinate system. Specifically,
the following equivalent form, in which
each gyrobarycentric coordinate is multiplied by a
common nonzero factor, is what we need, in which we replace
$m_k^\prime$ by $m_k^\pprime$:
\begin{equation} \label{matok01}
\begin{split}
m_1^\pprime &= \frac{m_1^\prime}{(\gammaab-1)^4}
=
\frac{F_0}{\gammaab-1}
\frac{F_1}{(\gammaab-1)^2}
\frac{\gammabc-1}{\gammaab-1}
\\[8pt]
m_2^\pprime &= \frac{m_2^\prime}{(\gammaab-1)^4}
=
\frac{F_1}{(\gammaab-1)^2}
\frac{F_2}{(\gammaab-1)^2}
\\[8pt]
m_3^\pprime &= \frac{m_3^\prime}{(\gammaab-1)^4}
=
-\frac{F_0}{\gammaab-1}
\frac{F_1}{(\gammaab-1)^2}
\frac{\gammaab-1}{\gammaab-1}
\,,
\end{split}
\end{equation}
where, by \eqref{hugd05thm},
\begin{equation} \label{matok02}
\begin{split}
\frac{F_0}{\gammaab-1} &= m_1 + m_3 \frac{\gammabc-1}{\gammaab-1}
:= F_0^\prime
\\[8pt]
\frac{F_1}{(\gammaab-1)^2} &= m_1 \frac{\gammaac-1}{\gammaab-1}
\pm \left\{
\frac{-\Delta_1}{\gammaab-1}
\frac{ \Delta_2}{(\gammaab-1)^3}
\right\}^{\half}
:= F_1^\prime
\\[8pt]
\frac{F_2}{(\gammaab-1)^2} &= - m_3 \frac{\gammaac-1}{\gammaab-1}
\frac{\gammabc-1}{\gammaab-1}
\pm \left\{
\frac{-\Delta_1}{\gammaab-1}
\frac{ \Delta_2}{(\gammaab-1)^3}
\right\}^{\half}
:= F_2^\prime
\,,
\end{split}
\end{equation}
and where
\begin{equation} \label{matok03}
\begin{split}
\frac{\Delta_1}{\gammaab-1} &= m_1m_2
+ m_1m_3 \frac{\gammaac-1}{\gammaab-1}
+ m_2m_3 \frac{\gammabc-1}{\gammaab-1}
:= \Delta_1^\prime
\\[8pt]
\frac{\Delta_2}{(\gammaab-1)^3} &=
\frac{\gammaac-1}{\gammaab-1} \frac{\gammabc-1}{\gammaab-1}
:= \Delta_2^\prime
\,.
\end{split}
\end{equation}

\phantom{O}

Hence, by \eqref{matok01}\,--\,\eqref{matok03},
\begin{equation} \label{matok04}
\begin{split}
m_1^\pprime &= F_0^\prime F_1^\prime  \frac{\gammabc-1}{\gammaab-1}
\\[8pt]
m_2^\pprime &= F_1^\prime F_2^\prime
\\[8pt]
m_3^\pprime &=-F_0^\prime F_2^\prime
\,.
\end{split}
\end{equation}

The gyrobarycentric coordinates $m_k^\pprime$, $k=1,2,3$,
in \eqref{matok01}\,--\,\eqref{matok04}
appear in a form that admits a nontrivial Euclidean limit, $s\rightarrow\infty$,
by means of Lemma \ref{lemwindk}.
Indeed, one can readily show that the Euclidean limit of the
gyrobarycentric coordinate system $m_k^\pprime$ in \eqref{matok04}
turns out to be the
barycentric coordinate system
\ref{hugd04thmeuc}\,--\,\eqref{hugd03thmeuc},
in which we rename $m_k^\pprime$ as $m_k^\prime$ $(k=1,2,3)$,
and $F_k^\prime$ as $F_k$ $(k=0,1,2)$.

Finally, it remains to prove \eqref{rkdn1}.
The length $\|-P+P_{^{\pm}}\|$ of the tangent segment
$PP_{^{\pm}}$, shown in Fig.~\ref{fig336eucm},
is the Euclidean limit, $s\rightarrow\infty$,
of the gyrolength $\|\om P \op P_{^{\pm}}\|$ of the gyrotangent gyrosegment
$PP_{^{\pm}}$, shown in Fig.~\ref{fig334enm},
\begin{equation} \label{mkdem1}
\lim_{s\rightarrow \infty} \|\om P \op P_{^{\pm}}\| = \|-P+P_{^{\pm}}\|
\,.
\end{equation}

Indeed, by \eqref{mkdem1}, \eqref{harfdkb} and
Lemma \ref{lemwindk}, p.~\pageref{lemwindk},
we have
\begin{equation} \label{mkdem3}
\begin{split}
\| -P+P_{\pm}\| &= \lim_{s\rightarrow\infty} \|\om P \op P_{\pm}\|
\\[8pt] & \hspace{-1.2cm}=
\lim_{s\rightarrow\infty} \frac{ \sqrt{
-2s^2\{m_1m_2(\gammaab-1) + m_1m_3(\gammaac-1) + m_2m_3(\gammabc-1)\}
}}{
m_1+m_2+m_3
}
\\[8pt] &=
\frac{ \sqrt{
-\{m_1m_2 a_{12}^2 + m_1m_3 a_{13}^2 + m_2m_3 a_{23}^2 \}
}}{
m_1+m_2+m_3
}
\,,
\end{split}
\end{equation}
where $a_{ij}=\|-A_i+A_j\|$,
as desired.
\end{proof}

\section{Circumgyrocevians} \label{slila4}
\index{circumgyrocevian}

 
\begin{figure}[t]  
 \centering         
 \psfrag{O}{$O$}
 \psfrag{P1}{$P_1$}
 \psfrag{P2}{$P_2$}
 \psfrag{P3}{$P_3$}
 \psfrag{P4}{$P_4$}
 \psfrag{P5}{$P_5$}
 \psfrag{P6}{$P_6$}
 \psfrag{P7}{$P_7$}
 \psfrag{P8}{\lower 0.60ex \hbox {$\hspace{-0.08cm}P_8$}}
 \psfrag{Q1}{$Q_1$}
 \psfrag{Q2}{$Q_2$}
 \psfrag{Q3}{$Q_3$}
 \psfrag{Q4}{$Q_4$}
 \psfrag{Q5}{$Q_5$}
 \psfrag{Q6}{$Q_6$}
 \psfrag{Q7}{$Q_7$}
 \psfrag{Q8}{$\hspace{-0.2cm}Q_8$}
 \psfrag{A1}{$A_1$}
 \psfrag{A2}{$A_2$}
 \psfrag{A3}{$A_3$}
 \psfrag{A4}{$A_4$}
%
 \includegraphics[width=9cm]{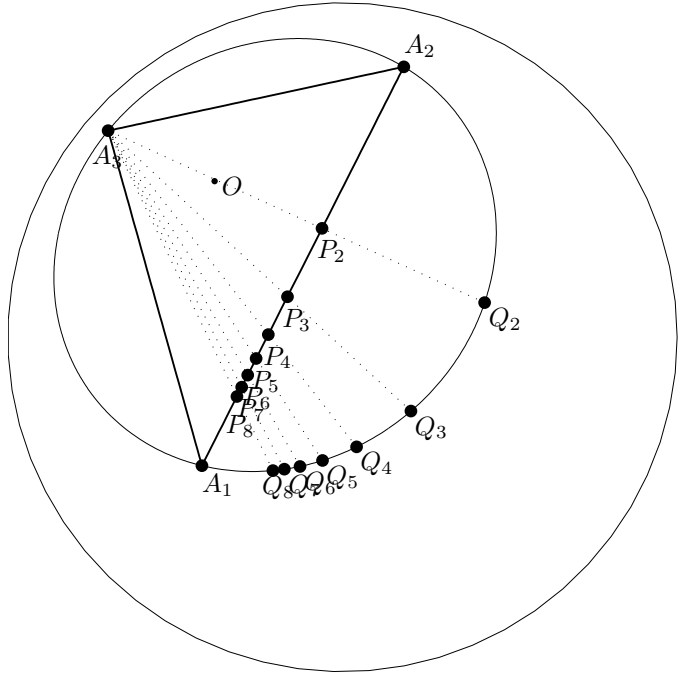}
\caption[circumgyrocevians]{
Circumgyrocevians.
A circumgyrocevian, $A_3Q_k$, $k=2,3,\ldots,8$, is a gyrocevian $A_3P_k$ of a
gyrotriangle $A_1A_2A_3$ in an Einstein gyrovector space,
extended to the point $Q_k$ where it meets the gyrotriangle circumgyrocircle.
The $k$th circumgyrocevian $A_3Q_k$ in this figure corresponds
to the parameter value
$t_1=1/k$ of the parameter $0<t_1<1$ in \eqref{yaglm01}.
The point $A_1$ corresponds to $t_1=0$, and
the point $A_2$ corresponds to $t_1=1$.
\label{fig317enm}}
\end{figure}

\begin{definition}\label{circevian}
{\bf (Circumgyrocevians).}
\index{circumgyrocevian, def.}
{\it
A circumgyrocevian of a gyrotriangle is a gyrocevian of the
gyrotriangle extended to the point where it meets the gyrotriangle circumgyrocircle,
as shown in Fig.~\ref{fig317enm}.
}
\end{definition}

The concept of the circumgyrocevian, together with its associated
Theorem \ref{gcircev}, p.~\pageref{gcircev}, is employed in Sect.~\ref{slila5}
and, consequently, proves useful in the study of the
Intersecting Gyrochords Theorem \ref{thmchord}, p.~\pageref{thmchord}.

In this section we determine the circumgyrocevian associated with
a given gyrocevian in an Einstein gyrovector space $(\Rsn,\op,\od)$.

Let $P$ be a generic point of a gyrosegment $A_1A_2$ in an
Einstein gyrovector space $(\Rsn,\op,\od)$, as shown in Fig.~\ref{fig318enm}.
Then, $P$ possesses the gyrobarycentric representation
\begin{equation} \label{yaglm01}
P =\frac{
(1-t_1)\gAa A_1 + t_1\gAb A_2
}{
(1-t_1)\gAa + t_1\gAb
}
\end{equation}
for any value of the parameter $t_1$, $0 \le t_1 \le 1$.
The parameter $t_1$ determines the location of the point $P$ on gyrosegment $A_1A_2$,
as shown in Fig.~\ref{fig317enm}.
Thus, in particular, $t_1=0$ gives $P=A_1$, $t_1=1$ gives $P=A_2$, and
$t=1/2$ gives the gyromidpoint $M_{^{A_1A_2}}$
of $A_1$ and $A_2$.

Applying the Gyrobarycentric Representation Gyrocovariance Theorem
\cite[Theorem 4.6, pp.~90-91]{mybook05}
with $X=\om A_1$, we have
\begin{equation} \label{yaglm02}
\begin{split}
\om A_1 \op P &= \om A_1 \op \frac{
(1-t_1)\gAa A_1 + t_1\gAb A_2
}{
(1-t_1)\gAa + t_1\gAb
}
\\[8pt] & =
\frac{
(1-t_1)\gamma_{\om A_1\op A_1}^{\phantom{O}} (\om A_1\op A_1)
+
t_1\gamma_{\om A_1\op A_2}^{\phantom{O}} (\om A_1\op A_2)
}{
(1-t_1)\gamma_{\om A_1\op A_1}^{\phantom{O}}
+
t_1\gamma_{\om A_1\op A_2}^{\phantom{O}}
}
\\[8pt] & = \frac{
t_1\gammaab\ab_{12}
}{
1+(\gammaab-1)t_1
}
\,,
\end{split}
\end{equation}
where we use the gyrotriangle index notation \eqref{indexnotation}.

Let $A_3P$ be the gyroray that emanates from the point $A_3$ and
passes through the point $P$ in the Einstein gyrovector space,
as shown in Fig.~\ref{fig318enm}.
Gyrolines and gyrorays in Einstein gyrovector spaces coincide with
Euclidean segments, enabling methods of linear algebra to be
applied for solving intersection problems for gyrolines and gyrorays.

Let us consider the gyroray
\begin{equation} \label{yaglm02d6}
(\om A_1\op A_3)(\om A_1\op P) = \ab_{13}(\om A_1\op P)
\end{equation}
that emanates from $(\om A_1\op A_3) = \ab_{13}$ and passes through the
point $\om A_1\op P$. This ray is the left gyrotranslation by $\om A_1$
of the gyroray $A_3P$.
By methods of linear algebra, the points $A(t)$ of the gyroray
\eqref{yaglm02d6}, parametrized by $t$, $t\ge0$,
are given by
\begin{equation} \label{yaglm03}
A(t) = \ab_{13}(1-t) + (\om A_1 \op P) t
\,.
\end{equation}

Equivalently, by means of \eqref{yaglm02}, \eqref{yaglm03} can be
written as
\begin{equation} \label{yaglm04}
\begin{split}
A(t) &= \ab_{13}(1-t) + \frac{
t_1\gammaab\ab_{12}
}{
1+(\gammaab-1)t_1
} t
\\[8pt] &=
m_2\gammaab\ab_{12} + m_3\gammaac\ab_{13}
\,,
\end{split}
\end{equation}
where
\begin{equation} \label{yaglm05}
\begin{split}
m_2 &= \frac{t_1t}{1+(\gammaab-1)t_1}
\\[8pt]
m_3 &= \frac{1-t}{\gammaac}
\,.
\end{split}
\end{equation}

The set of points $A(t)$, $0\le t<\infty$, is the set of all points
of the left gyrotranslated gyroray \eqref{yaglm02d6} by $\om A_1$.
Hence, the set of points
$A_1\op A(t)$, $0\le t<\infty$, is the set of all points
of the original gyroray
\begin{equation} \label{yaglm5d5}
A_3P = \{ A_1\op A(t):~0\le t<\infty\}
\,,
\end{equation}
which is of interest as indicated in
Figs.~\ref{fig317enm}\,--\,\ref{fig318eum}.

Noting that $\om A_1\op A_1=\zerb$ and $\gamma_{\zerb}^{\ph}=1$,
\eqref{yaglm04} can be written as
\begin{equation} \label{yaglm06}
A(t)=\frac{
m_1\gamma_{\om A_1\op A_1}^{\phantom{O}}(\om A_1\op A_1)
+
m_2\gamma_{\om A_1\op A_2}^{\phantom{O}}(\om A_1\op A_2)
+
m_3\gamma_{\om A_1\op A_3}^{\phantom{O}}(\om A_1\op A_3)
}{
m_1\gamma_{\om A_1\op A_1}^{\phantom{O}}
+
m_2\gamma_{\om A_1\op A_2}^{\phantom{O}}
+
m_3\gamma_{\om A_1\op A_3}^{\phantom{O}}
}
\,,
\end{equation}
where $m_1$ is determined by the condition
\begin{equation} \label{yaglm07}
m_1+m_2\gammaab+m_3\gammaac = 1
\end{equation}
that insures that \eqref{yaglm06} and \eqref{yaglm04} are identically
equal.

Solving \eqref{yaglm07} for $m_1$, where $m_2$ and $m_3$ are given
by \eqref{yaglm05}, we have
\begin{equation} \label{yaglm08}
m_1 = \frac{(1-t_1)t}{1+(\gammaab-1)t_1}
\,.
\end{equation}

By means of the Gyrobarycentric Representation Gyrocovariance Theorem
\cite[Theorem 4.6, pp.~90-91]{mybook05}
with $X=\om A_1$, \eqref{yaglm06} can be written as
\begin{equation} \label{yaglm09}
A(t) = \om A_1 \op \frac{
m_1\gAa A_1 + m_2\gAb A_2 + m_3\gAc A_3
}{
m_1\gAa + m_2\gAb + m_3\gAc
}
\,,
\end{equation}
where $m_1,m_2$ and $m_3$ are given by \eqref{yaglm08} and \eqref{yaglm05}.

Following \eqref{yaglm09} we have, by a left cancellation,
\begin{equation} \label{yaglm10}
Q(t):= A_1\op A(t) = \frac{
m_1\gAa A_1 + m_2\gAb A_2 + m_3\gAc A_3
}{
m_1\gAa + m_2\gAb + m_3\gAc
}
\,,
\end{equation}
where the set of points $Q(t)=A_1\op A(t)$, $0\le t<\infty$, forms the gyroray
$A_3P$, \eqref{yaglm5d5}, shown in Fig.~\ref{fig318enm}.

We have thus obtained in \eqref{yaglm10} the gyrobarycentric representation
with respect to $\{A_1,A_2,A_3\}$
of each point $Q(t)$ of the gyroray $A_3P$.
We now wish to determine the unique point of the gyroray $A_3P$, other than
$A_3$, that lies on the circumgyrocircle of gyrotriangle $A_1A_2A_3$
in Fig.~\ref{fig318enm}.

\index{gyrocircle gyrobarycentric representation theorem}
By the Gyrocircle Gyrobarycentric Representation Theorem \ref{thmsnvur},
p.~\pageref{thmsnvur},
a point $Q(t)$ of the gyroray $A_3P$ in \eqref{yaglm10} lies on the
circumgyrocircle of a gyrotriangle $A_1A_2A_3$ if and only if the
gyrobarycentric coordinates $m_k$, $k=1,2,3$, of $Q(t)$ in \eqref{yaglm10}
satisfy the circumgyrocircle condition \eqref{muramd05}.

Accordingly, we substitute $m_1.m_2.m_3$ from \eqref{yaglm08} and \eqref{yaglm05}
into \eqref{muramd05}, obtaining a linear equation for the unknown $t$, the
solution $t=t_0$ of which is substituted into \eqref{yaglm08} and \eqref{yaglm05},
obtaining gyrobarycentric coordinates $m_1.m_2.m_3$ for the point $Q=Q(t_0)$ that
lies on the circumgyrocircle of gyrotriangle $A_1A_2A_3$,
shown in
Figs.~\ref{fig317enm}\,--\,\ref{fig318eum}.
Finally, by removing
an irrelevant nonzero common factor of the gyrobarycentric coordinates, we obtain
the gyrobarycentric coordinates of $Q$,
\begin{equation} \label{muramd11}
\begin{split}
m_1 &= \{\gammaac-1+(\gammabc-\gammaac)t_1\}(1-t_1)
\\
m_2 &= \{\gammaac-1+(\gammabc-\gammaac)t_1\}t_1
\\
m_3 &=-(\gammaab-1)(1-t_1)t_1
\,,
\end{split}
\end{equation}
where $t_1$, $0<t_1<1$, is a parameter that determines by \eqref{yaglm01} the
location of the point $P$ on the gyrochord $A_1A_2$ of
the circumgyrocircle of gyrotriangle $A_1A_2A_3$ in Fig.~\ref{fig318enm}.

The point $Q$, given by \eqref{yaglm10} with gyrobarycentric coordinates
given by \eqref{muramd11} is thus the only point of gyroray $A_3P$,
other than $A_3$, that lies on
the circumgyrocircle of gyrotriangle $A_1A_2A_3$ in Fig.~\ref{fig318enm}.
Formalizing, we thus obtain the following theorem:

 
\begin{figure}[t]  
 \centering         
 \psfrag{O}{$O$}
 \psfrag{P}{$P$}
 \psfrag{Q}{$Q=A_4=B_2$}
 \psfrag{A1}{$A_1$}
 \psfrag{A2}{$A_2$}
 \psfrag{A3}{$\hspace{0.2cm}A_3=B_1$}
 \psfrag{A4}{$A_4$}
%
 \includegraphics[width=9cm]{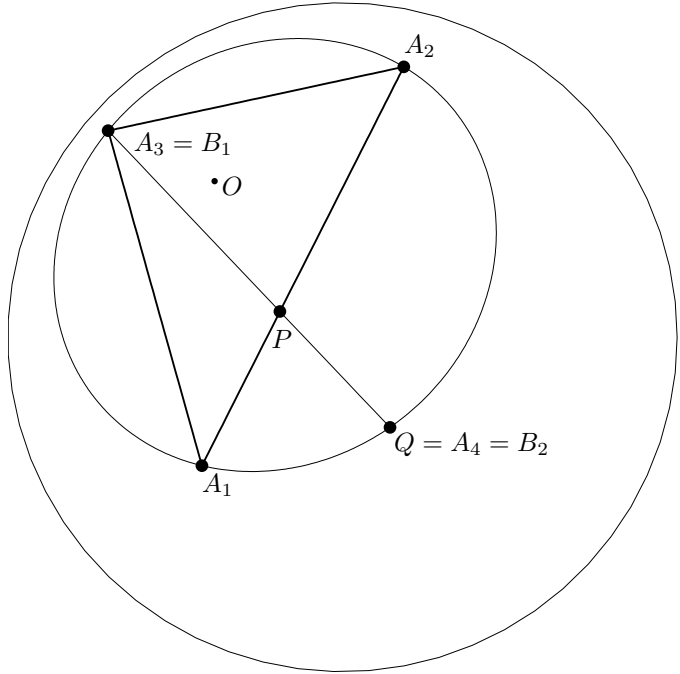}
\caption[Illustration of the Circumgyrocevian Theorem]{
Illustrating the Circumgyrocevian Theorem \ref{gcircev}, p.~\pageref{gcircev},
and
the Intersecting Gyrochords Theorem \ref{thmchord}, p.~\pageref{thmchord}.
A circumgyrocevian, $A_3Q$, is shown. It is a gyrocevian $A_3P$ of a
gyrotriangle $A_1A_2A_3$ in an Einstein gyrovector space $(\Rsn,\op,\od)$, $n\ge2$,
extended to the point $Q$ where it meets the
gyrotriangle circumgyrocircle.
With $A_4=Q$, this figure presents two gyrochords, $A_1A_2$ and
$A_3A_4=B_1B_2$, of
a gyrocircle, intersecting at a point $P$ inside the gyrocircle.
Here, the point $Q\in\Rsn$ is determined by its
gyrobarycentric representation \eqref{yagln03} in Theorem \ref{gcircev}.
This figure is the hyperbolic counterpart of Fig.~\ref{fig318eum}.
\label{fig318enm}}
\end{figure}

\index{circumgyrocevian theorem}
\begin{theorem}
{\bf (The Circumgyrocevian Theorem).}
\label{gcircev}
Let $A_3P$ be a gyrocevian of a gyrotriangle $A_1A_2A_3$
in an Einstein gyrovector space $(\Rsn,\op,\od)$, $n\ge2$,
and let $A_3Q$ be its corresponding circumgyrocevian,
as shown in Fig.~\ref{fig318enm}.
Furthermore, let
\begin{equation} \label{yagln01}
P =\frac{
(1-t_1)\gAa A_1 + t_1\gAb A_2
}{
(1-t_1)\gAa + t_1\gAb
}
\end{equation}
be the gyrobarycentric representation of $P$
with respect to the gyrobarycentrically independent set $\{A_1,A_2\}$
where the location of $P$ on the circumgyrocircle gyrochord $A_1A_2$ is determined
by the parameter $t_1$,
\begin{equation} \label{yagln02}
0 \le t_1 \le 1
\,.
\end{equation}
Then, the point $Q$ possesses the gyrobarycentric representation
\begin{equation} \label{yagln03}
Q = \frac{
m_1 \gAa A_1 + m_2 \gAb A_2 + m_3 \gAc A_3
}{
m_1 \gAa + m_2 \gAb + m_3 \gAc
}
\end{equation}
with respect to the gyrobarycentrically independent set $\{A_1,A_2,A_3\}$, where
the gyrobarycentric coordinates $m_1,m_2,m_3$ are given by
\begin{equation} \label{yagln04}
\begin{split}
m_1 &= \{\gammaac-1+(\gammabc-\gammaac)t_1\}(1-t_1)
\\
m_2 &= \{\gammaac-1+(\gammabc-\gammaac)t_1\}t_1
\\
m_3 &=-(\gammaab-1)(1-t_1)t_1
\,.
\end{split}
\end{equation}
\end{theorem}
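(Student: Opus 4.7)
The plan is to construct the gyroray $A_3P$ in the Einstein gyrovector space, parametrize its points as a one-parameter family of gyrobarycentric combinations with respect to $\{A_1,A_2,A_3\}$, and then impose the circumgyrocircle condition from Theorem \ref{thmsnvur} to pin down the unique point $Q$ of the gyroray (other than $A_3$) that lies on the circumgyrocircle $C(A_1A_2A_3)$.

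First I would left-gyrotranslate by $\om A_1$. Applying the Gyrobarycentric Representation Gyrocovariance Theorem to \eqref{yagln01} gives
\begin{equation*}
\om A_1\op P = \frac{t_1\gammaab\,\ab_{12}}{1+(\gammaab-1)t_1},
\end{equation*}
as in \eqref{yaglm02}. Since gyrolines and gyrorays in an Einstein gyrovector space coincide with Euclidean segments and rays, the left-gyrotranslated gyroray from $\ab_{13}=\om A_1\op A_3$ through $\om A_1\op P$ can be written in standard linear form $A(t)=(1-t)\ab_{13}+t(\om A_1\op P)$. Rewriting this as $m_2\gammaab\ab_{12}+m_3\gammaac\ab_{13}$, I read off $m_2=t_1t/[1+(\gammaab-1)t_1]$ and $m_3=(1-t)/\gammaac$, as in \eqref{yaglm05}.

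Next I would recognize $A(t)$ as the image under left gyrotranslation by $\om A_1$ of a gyrobarycentric combination with respect to $\{A_1,A_2,A_3\}$, with a third coefficient $m_1$ forced by the normalization condition $m_1+m_2\gammaab+m_3\gammaac=1$, yielding $m_1=(1-t_1)t/[1+(\gammaab-1)t_1]$ as in \eqref{yaglm08}. Undoing the left gyrotranslation by left-cancelling $A_1$, I obtain the gyrobarycentric representation \eqref{yaglm10} of the generic point $Q(t)=A_1\op A(t)$ of the original gyroray $A_3P$, with coefficients $(m_1,m_2,m_3)$ depending on the two parameters $t_1$ and $t$.

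Finally, to select the point $Q$ that lies on $C(A_1A_2A_3)$, I would impose the circumgyrocircle condition \eqref{muramd05aa} of Theorem \ref{thmsnvur}, namely
\begin{equation*}
m_1m_2(\gammaab-1)+m_1m_3(\gammaac-1)+m_2m_3(\gammabc-1)=0.
\end{equation*}
Substituting the above $(m_1,m_2,m_3)$ gives a linear equation in $t$ (the quadratic terms in $t$ cancel because the trivial root $t=0$, corresponding to $Q=A_3$, is already accounted for by the product structure). Solving for the non-trivial root $t=t_0$, substituting back into $(m_1,m_2,m_3)$, and discarding the overall nonzero factor permitted by the homogeneity of gyrobarycentric coordinates, yields precisely the coordinates in \eqref{yagln04}. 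The main obstacle is the last algebraic step: carrying out the substitution cleanly and verifying that after clearing the common denominator $1+(\gammaab-1)t_1$ and the factor coming from $t_0$, the remaining expressions factor into the form displayed in \eqref{yagln04}; this is a mechanical but delicate computation best checked with computer algebra, as noted in Section \ref{secint}.
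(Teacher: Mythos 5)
Your proposal follows essentially the same route as the paper's own derivation in Sect.~\ref{slila4}: left gyrotranslation by $\om A_1$, linear parametrization of the translated gyroray through $\ab_{13}$ and $\om A_1\op P$, recovery of the third coefficient from the normalization $m_1+m_2\gammaab+m_3\gammaac=1$, left cancellation, and imposition of the circumgyrocircle condition \eqref{muramd05aa}, whose nontrivial root gives \eqref{yagln04} after discarding a common factor. One small wording correction: the quadratic terms in $t$ do not cancel; rather, the constant term of the quadratic vanishes (since $t=0$ gives $Q=A_3$, which already lies on the gyrocircle), so factoring out $t$ leaves the linear equation that determines the nontrivial intersection point.
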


\begin{example}\label{ekhkd1}
For $t_1=0$ the gyrobarycentric coordinates in \eqref{yagln04} specialize
to $m_1=\gammaac-1>0$ and $m_2=m_3=0$ implying, by \eqref{yagln03},
$Q=A_1$.
\end{example}

\begin{example}\label{ekhkd2}
For $t_1=1$ the gyrobarycentric coordinates in \eqref{yagln04} specialize
to $m_2=\gammabc-1>0$ and $m_1=m_3=0$ implying, by \eqref{yagln03},
$Q=A_2$.
\end{example}

Exploring the Euclidean limit of
the Circumgyrocevian Theorem \ref{gcircev}, we obtain the
following Circumcevian Theorem:

  
\begin{figure}[t]  
 \centering         
\psfrag{A1}{$A_1$}
\psfrag{A2}{$A_2$}
\psfrag{B1}{$A_3$}
\psfrag{B2}{$Q$}
\psfrag{P}{$P$}
 \includegraphics[width=9cm]{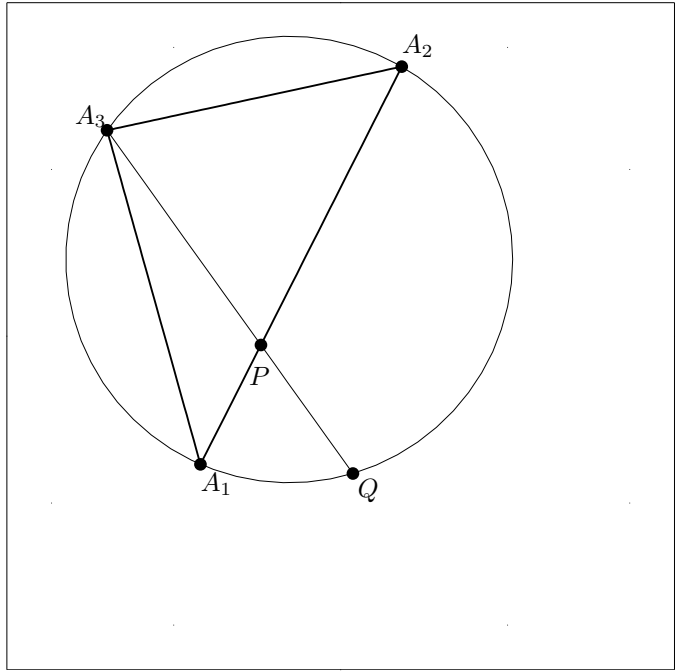}
\caption[Illustration of the Circumcevian Theorem]{
Illustrating the Circumcevian Theorem \ref{gcirceveuc}, p.~\pageref{gcirceveuc},
and the Intersecting Chords Theorem \ref{thmchordeuc}, p.~\pageref{thmchordeuc}.
A circular cevian, $A_3Q$, is shown. It is a cevian $A_3P$ of a
triangle $A_1A_2A_3$ in a Euclidean space $\Rn$, $n\ge2$,
extended to the point $Q$ where it meets the
triangle circumcircle.
This figure presents two chords, $A_1A_2$ and
$A_3Q$, of
a circle, intersecting at a point $P$ inside the circle.
Here, the point $Q\in\Rn$ is determined by its
barycentric representation \eqref{yagln03euc} in Theorem \ref{gcirceveuc}.
This figure is the Euclidean counterpart of Fig.~\ref{fig318enm}.
\label{fig318eum}}
\end{figure}

\index{circumcevian theorem}
\begin{theorem}
{\bf (The Circumcevian Theorem).}
\label{gcirceveuc}
Let $A_3P$ be a Cevian of a triangle $A_1A_2A_3$
in a Euclidean space $\Rn$, $n\ge2$,
and let $A_3Q$ be its corresponding circumcevian,
as shown in Fig.~\ref{fig318eum}.
Furthermore, let
\begin{equation} \label{yagln01euc}
P = (1-t_1) A_1 + t_1 A_2
\end{equation}
be the barycentric representation of $P$
with respect to the barycentrically independent set $\{A_1,A_2\}$,
so that the location of $P$ on the circumcircle chord $A_1A_2$ is determined
by the parameter $t_1$,
\begin{equation} \label{yagln02euc}
0 \le t_1 \le 1
\,.
\end{equation}
Then, the point $Q$ possesses the barycentric representation
\begin{equation} \label{yagln03euc}
Q = \frac{
m_1 A_1 + m_2 A_2 + m_3 A_3
}{
m_1 + m_2 + m_3
}
\end{equation}
with respect to the barycentrically independent set $\{A_1,A_2,A_3\}$, where
the barycentric coordinates $m_1,m_2,m_3$ are given by
\begin{equation} \label{yagln04uec}
\begin{split}
m_1 &= \{\sin^2\alpha_2 + t_1\sin(\alpha_1-\alpha_2)\sin(\alpha_3\}(1-t_1)
\\
m_2 &= \{\sin^2\alpha_2 + t_1\sin(\alpha_1-\alpha_2)\sin(\alpha_3\}t_1
\\
m_3 &=-(1-t_1)t_1 \sin^2\alpha_3
\,,
\end{split}
\end{equation}
where $\alpha_k$, $k=1,2,3$, are the corresponding angles of
triangle $A_1A_2A_3$.
\end{theorem}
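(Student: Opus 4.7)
The plan is to derive Theorem \ref{gcirceveuc} as the Euclidean limit, $s\to\infty$, of the Circumgyrocevian Theorem \ref{gcircev}, mirroring the strategy used throughout the chapter to pass from hyperbolic results to their Euclidean counterparts. Under this limit the gyrobarycentric representation \eqref{yagln03} of $Q$ collapses to an ordinary barycentric representation \eqref{yagln03euc}, and the barycentric coordinate expression \eqref{yagln01} for $P$ reduces to \eqref{yagln01euc} since all gamma factors tend to $1$. So the only real content is converting the gyrobarycentric coordinates \eqref{yagln04} into the trigonometric form \eqref{yagln04uec} in the limit, using the homogeneity of (gyro)barycentric coordinates to absorb irrelevant common factors at each step.

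First I would multiply each of the three coordinates in \eqref{yagln04} by the common nonzero factor $2s^2$. By Lemma \ref{lemwindk}, p.~\pageref{lemwindk},
\[
\lim_{s\to\infty} 2s^2(\gamma_{ij}-1) = a_{ij}^2,
\]
so the gyrobarycentric coordinates in \eqref{yagln04} converge to the barycentric coordinates
\begin{align*}
m_1 &= \{a_{13}^2 + (a_{23}^2 - a_{13}^2)t_1\}(1-t_1),\\
m_2 &= \{a_{13}^2 + (a_{23}^2 - a_{13}^2)t_1\}t_1,\\
m_3 &= -(1-t_1)t_1\, a_{12}^2
\end{align*}
of the point $Q$ with respect to $\{A_1,A_2,A_3\}$ in $\Rn$.

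Next I would convert side lengths into angles via the (Euclidean) extended law of sines, $a_{ij}/\sin\alpha_k = 2R$, where $\alpha_k$ is the angle opposite side $a_{ij}$. Specifically, $a_{23}=2R\sin\alpha_1$, $a_{13}=2R\sin\alpha_2$, $a_{12}=2R\sin\alpha_3$, so dividing all three coordinates by the common nonzero factor $4R^2$ gives
\begin{align*}
m_1 &= \{\sin^2\alpha_2 + (\sin^2\alpha_1 - \sin^2\alpha_2)t_1\}(1-t_1),\\
m_2 &= \{\sin^2\alpha_2 + (\sin^2\alpha_1 - \sin^2\alpha_2)t_1\}t_1,\\
m_3 &= -(1-t_1)t_1 \sin^2\alpha_3.
\end{align*}
The final step, which is where the specific trigonometric form of \eqref{yagln04uec} emerges, is to rewrite $\sin^2\alpha_1-\sin^2\alpha_2$ using the identity $\sin^2 A-\sin^2 B = \sin(A-B)\sin(A+B)$. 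Since the angles of a triangle satisfy $\alpha_1+\alpha_2+\alpha_3=\pi$, we have $\sin(\alpha_1+\alpha_2)=\sin(\pi-\alpha_3)=\sin\alpha_3$, yielding
\[
\sin^2\alpha_1 - \sin^2\alpha_2 = \sin(\alpha_1-\alpha_2)\sin\alpha_3,
\]
which transforms the coordinates into precisely the form stated in \eqref{yagln04uec}.

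There is essentially no obstacle here; the proof is a direct application of the Euclidean limit machinery already established. The only subtle point is noting that the transition from \eqref{yagln04} to \eqref{yagln04uec} requires, in the final rewriting, the Euclidean triangle-angle-sum identity $\alpha_1+\alpha_2+\alpha_3=\pi$, which has no gyro-counterpart because gyrotriangles have positive defect; this is a concrete instance of point \eqref{avigd3} in the discussion following Corollary \ref{vdmfcb}, according to which trigonometric barycentric identities in Euclidean geometry may embody the Euclidean angle-sum condition and therefore need not survive in hyperbolic geometry.
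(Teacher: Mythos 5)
Your proof is correct and follows essentially the same route as the paper: both pass to the Euclidean limit of the Circumgyrocevian Theorem, use the homogeneity of (gyro)barycentric coordinates to rescale by a common factor so the limit is not indeterminate, and invoke Lemma \ref{lemwindk}. The only difference is cosmetic — you normalize by $2s^2$ and land on squared side lengths before converting to angles via the extended law of sines and the identity $\sin^2\alpha_1-\sin^2\alpha_2=\sin(\alpha_1-\alpha_2)\sin(\alpha_1+\alpha_2)$, whereas the paper normalizes by $1/(\gamma_{12}^{\ph}-1)$ and states the resulting limits \eqref{trkdn1} directly; your version usefully makes explicit the trigonometric steps (including the reliance on $\alpha_1+\alpha_2+\alpha_3=\pi$) that are hidden inside the paper's claimed limits.
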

\begin{proof}
In the Euclidean limit, $s\rightarrow\infty$, the
gyrobarycentric representation of $P\in\Rsn$ in \eqref{yagln01}\,--\,\eqref{yagln02}
reduces to the corresponding barycentric representations of $P\in\Rn$
in \eqref{yagln01euc}\,--\,\eqref{yagln02euc}.

Obviously, in that Euclidean limit, the
gyrobarycentric representation of $Q\in\Rsn$ in \eqref{yagln03} reduces to its
corresponding barycentric representation of $Q\in\Rn$ in \eqref{yagln03euc}.

Hence, it remains to show that the gyrobarycentric coordinates
$m_k$, $k=1,2,3$, of $Q\in\Rsn$ in \eqref{yagln04} tend to their
corresponding barycentric coordinates $m_k$, $k=1,2,3$, of $Q\in\Rn$
in \eqref{yagln04uec}.

In the Euclidean limit, $s\rightarrow\infty$, gamma factors tend to 1.
Hence,
straightforward Euclidean limits of $m_k$, $k=1,2,3$, in \eqref{yagln04},
as $s\rightarrow\infty$, give vanishing barycentric coordinates,
$m_k=0$, $k=1,2,3$, resulting in an
indeterminate barycentric representation of type $0/0$.

However, owing to the homogeneity of gyrobarycentric coordinates,
the gyrobarycentric coordinates $m_k=0$, $k=1,2,3$, of $Q$
in \eqref{yagln03}\,--\,\eqref{yagln04}
can be written in the following form, which admits the
Euclidean limit as $s\rightarrow\infty$,
\begin{equation} \label{yagln04s}
\begin{split}
m_1 &= \left\{ \frac{\gammaac-1}{\gammaab-1}
+ \frac{\gammabc-\gammaac}{\gammaab-1}t_1
\right\} (1-t_1)
\\
m_2 &= \left\{ \frac{\gammaac-1}{\gammaab-1}
+ \frac{\gammabc-\gammaac}{\gammaab-1}t_1
\right\} t_1
\\
m_3 &=-(1-t_1)t_1
\,.
\end{split}
\end{equation}

Indeed, following Euclidean limits that can be derived from
Lemma \ref{lemwindk},
we have the Euclidean limits
\begin{equation} \label{trkdn1}
\begin{split}
\lim_{s\rightarrow\infty}
\frac{\gammaac-1}{\gammaab-1}
&=
\frac{\sin^2\alpha_2}{\sin^2\alpha_3}
\\[8pt]
\lim_{s\rightarrow\infty}
\frac{\gammabc-\gammaac}{\gammaab-1}
&=
\frac{ \sin(\alpha_1-\alpha_2)}{\sin\alpha_3}
\,.
\end{split}
\end{equation}

Hence, in the Euclidean limit, $s\rightarrow\infty$, the
gyrobarycentric coordinates $m_k$, $k=1,2,3$, of $Q\in\Rsn$
in \eqref{yagln04s} tend to the following barycentric coordinates
of $Q\in\Rn$,
\begin{equation} \label{yagln04s1}
\begin{split}
m_1 &= \left\{ \frac{\sin^2\alpha_2}{\sin^2\alpha_3} +
 \frac{\sin(\alpha_1-\alpha_2)}{\sin\alpha_3} t_1
\right\} (1-t_1)
\\[8pt]
m_2 &= \left\{ \frac{\sin^2\alpha_2}{\sin^2\alpha_3} +
 \frac{\sin(\alpha_1-\alpha_2)}{\sin\alpha_3} t_1
\right\} t_1
\\[8pt]
m_3 &=-(1-t_1)t_1
\,.
\end{split}
\end{equation}

Owing to their homogeneity, the barycentric coordinates \eqref{yagln04s1}
of $Q\in\Rn$ can be written as \eqref{yagln04uec},
as desired.
\end{proof}

\section{Gyrodistances Related to the Gyrocevian} \label{slila5}

In order to prepare the stage for the
Intersecting Gyrochords Theorem \ref{thmchord}, p.~\pageref{thmchord},
in this section we calculate the gyrodistance
between $A_k$ and the gyrocevian foot $P$, $k=1,2,3$,
of a gyrotriangle $A_1A_2A_3$, shown in Fig.~\ref{fig318enm}.

Let $A_1A_2A_3$ be a gyrotriangle that possesses a circumgyrocircle
in an Einstein gyrovector space $(\Rsn,\op,\od)$,
and let $P$, given by \eqref{yagln01}, be a generic point of the interior
of chord $A_1A_2$ of the circumgyrocircle, as shown in Fig.~\ref{fig318enm}.

Applying the Gyrobarycentric Representation Gyrocovariance Theorem
\cite[Theorem 4.6, pp.~90-91]{mybook05}
to the gyrobarycentric representation \eqref{yagln01} of $P$,
using the gyrotriangle index notation \eqref{indexnotation}, p.~\pageref{indexnotation},
we have
\begin{equation} \label{halof01}
\begin{split}
\om A_1 \op P &= \frac{t_1\gammaab\ab_{12}}{(1-t_1)+t_1\gammaab}
\\[4pt]
\om A_2 \op P &= \frac{(1-t_1)\gammaab\ab_{21}}{(1-t_1)\gammaab+t_1}
\\[4pt]
\om A_3 \op P &= \frac{
(1-t_1)\gammaac\ab_{31} + t_1\gammabc\ab_{23}
}{
(1-t_1)\gammaac + t_1\gammabc
}
\end{split}
\end{equation}
and
\begin{equation} \label{halof02}
\begin{split}
\gamma_{|A_1P|}^{\phantom{O}} &= \gamma_{\om A_1\op P}^{\phantom{O}}
=\frac{(1-t_1)+t_1\gammaab}{\mP}
\\[4pt] 
\gamma_{|A_2P|}^{\phantom{O}} &= \gamma_{\om A_2\op P}^{\phantom{O}}
=\frac{(1-t_1)\gammaab+t_1}{\mP}
\\[4pt]
\gamma_{|A_3P|}^{\phantom{O}} &= \gamma_{\om A_3\op P}^{\phantom{O}}
=\frac{(1-t_1)\gammaac+t_1\gammabc}{\mP}
\,,
\end{split}
\end{equation}
where $\mP>0$ is the constant of the gyrobarycentric representation \eqref{yagln01}
of $P$ with respect to the set $\{A_1,A_2\}$, given by
\begin{equation} \label{halof03}
m_P^2 = (1-t_1)^2+t_1^2+2(1-t_1)t_1\gammaab
= 1+2(\gammaab-1)(1-t_1)t_1
\,.
\end{equation}

By Identity \eqref{rugh1ds}, p.~\pageref{rugh1ds}, and \eqref{halof02} and straightforward algebra,
we have
\begin{equation} \label{halof04}
\begin{split}
\frac{1}{s^2} \gamma_{|A_1P|}^2 |A_1P|^2 &= \gamma_{|A_1P|}^2 - 1 = \frac{
(\gamma_{12}^2-1)t_1^2}{1+2(\gammaab-1)(1-t_1)t_1}
\\[8pt]
\frac{1}{s^2} \gamma_{|A_2P|}^2 |A_2P|^2 &= \gamma_{|A_2P|}^2 - 1 = \frac{
(\gamma_{12}^2-1)(1-t_1)^2}{1+2(\gammaab-1)(1-t_1)t_1}
\\[8pt]
\frac{1}{s^2} \gamma_{|A_3P|}^2 |A_3P|^2 &= \gamma_{|A_3P|}^2 - 1 = \frac{
\{\gamma_{13}(1-t_1) + \gammabc t_1\}^2}
{1+2(\gammaab-1)(1-t_1)t_1} - 1
\,.
\end{split}
\end{equation}
The equations in \eqref{halof04} prove useful in the proof of
Theorem \ref{thmchord}, p.~\pageref{thmchord}.

Finally, by means of Identity \eqref{rugh1ds}, p.~\pageref{rugh1ds},
it follows from \eqref{halof04} that the squared gyrodistances
$|A_kP|^2=\|\om A_k\op P\|^2$ between $A_k$ and $P$,
$k=1,2,3$, are given by the equations
\begin{equation} \label{halof05}
\begin{split}
\frac{1}{s^2} \|\om A_1\op P\|^2 &= \frac{
(\gamma_{12}^2-1)t_1^2
}{
\{(\gammaab-1)t_1+1\}^2
}
\\[8pt]
\frac{1}{s^2} \|\om A_2\op P\|^2 &= \frac{
(\gamma_{12}^2-1)(1-t_1)^2
}{
\{\gammaab(1-t_1)+t_1\}^2
}
\\[8pt]
\frac{1}{s^2} \|\om A_3\op P\|^2 &= 1 - \frac{
2(\gammaab-1)(1-t_1)t_1+1
}{
\{\gammaac(1-t_1)+\gammabc t_1\}^2
}
\,.
\end{split}
\end{equation}

It is clear from the first and the second equations in \eqref{halof05} that
$|A_1P|=0$ when $t_1=0$ and $|A_2P|=0$ when $t_1=1$.
Hence, $P=A_1$ when $t_1=0$ and $P=A_2$ when $t_1=1$,
as indicated in Fig.~\ref{fig318enm}.

\section{A Gyrodistance Related to the Circumgyrocevian} \label{slila6}

In order to prepare the stage for the
Intersecting Gyrochords Theorem \ref{thmchord}, p.~\pageref{thmchord},
in this section we calculate the gyrodistance $|PQ|=\|\om P\op Q\|$
between
the gyrocevian foot $P$ and the circumgyrocevian foot $Q$
of a gyrotriangle $A_1A_2A_3$,
shown in Fig.~\ref{fig318enm}, p.~\pageref{fig318enm}.

By Theorem \ref{gcircev}, the point $Q$, shown in Fig.~\ref{fig318enm},
is given by its  gyrobarycentric representation, \eqref{yagln03},
\begin{subequations} \label{yaglf01}
\begin{equation} \label{yaglf01a}
Q = \frac{
m_1 \gAa A_1 + m_2 \gAb A_2 + m_3 \gAc A_3
}{
m_1 \gAa + m_2 \gAb + m_3 \gAc
}
\end{equation}
with respect to the set $\{A_1,A_2,A_3\}$, where, \eqref{yagln04},
\begin{equation} \label{yaglf01b}
\begin{split}
m_1 &= \{\gammaac-1+(\gammabc-\gammaac)t_1\}(1-t_1)
\\
m_2 &= \{\gammaac-1+(\gammabc-\gammaac)t_1\}t_1
\\
m_3 &=-(\gammaab-1)(1-t_1)t_1
\end{split}
\end{equation}
\end{subequations}
and the point $P$ is given by its  gyrobarycentric representation,
\eqref{yagln01},
\begin{subequations} \label{yaglf02}
\begin{equation} \label{yaglf02a}
P =\frac{
m_1^\prime \gAa A_1 + m_2^\prime \gAb A_2
}{
m_1^\prime \gAa + m_2^\prime \gAb
}
\,,
\end{equation}
where
\begin{equation} \label{yaglf02b}
\begin{split}
m_1^\prime &= 1-t_1
\\[4pt]
m_2^\prime &= t_1
\,,
\end{split}
\end{equation}
$0<t_1<1$.
\end{subequations}

Hence, by \cite[Sect.~4.9]{mybook05} with $N=3$,
\begin{equation} \label{sadmon3}
\begin{split}
 \gamma_{\om P\op Q}^{\phantom{O}} &= \frac{1}{\mP \mQ}\left\{
(m_1m_2^\prime + m_1^\prime m_2) \gammaab +
(m_1m_3^\prime + m_1^\prime m_3) \gammaac \right.
\\[6pt] & \hspace{1.6cm}
+(m_2m_3^\prime + m_2^\prime m_3) \gammabc
+
\left.
m_1m_1^\prime + m_2m_2^\prime + m_3m_3^\prime \right\}
\\[6pt] &
\hspace{-1.2cm}
= \frac{1}{\mP \mQ}\left\{
(m_1m_2^\prime + m_1^\prime m_2) \gammaab +
m_1^\prime m_3 \gammaac +
m_2^\prime m_3 \gammabc +
m_1m_1^\prime + m_2m_2^\prime \right\}
\end{split}
\end{equation}
noting that $m_3^\prime=0$ in the gyrobarycentric representation
of $P$ in \eqref{yaglf02}.

As we see from the
{\it gyrobarycentric representation constant} associated with
the Gyrobarycentric Representation Gyrocovariance Theorem
\cite[Theorem 4.6, pp.~90-91]{mybook05},
and by \eqref{yaglf01b} and \eqref{yaglf02b}, $\mQ>0$ and $\mP>0$
are given by
\begin{equation} \label{sadmon4}
\begin{split}
m_Q^2 &= m_1^2 + m_2^2 + m_3^2 + 2(
m_1m_2\gammaab + m_1m_3\gammaac + m_2m_3\gammabc)
\\[6pt] 
&= \{(\gammaac-1)(1-t_1)+(\gammabc-1)t_1-(\gammaab-1)(1-t_1)t_1\}^2
\\[6pt] 
m_{P}^2 &= (m_1^\prime)^2 + (m_2^\prime)^2 +
2 m_1^\prime m_2^\prime \gammaab
\\[6pt]
&=1+2(\gammaab-1)(1-t_1)t_1
\,,
\end{split}
\end{equation}
noting that $m_P^2>0$ for all $0<t_1<1$.
Also $m_Q^2>0$ for all $0<t_1<1$ according to the
Circumgyrocenter Theorem\index{circumgyrocenter theorem}
\ref{thmtivhvn}, p.~\pageref{thmtivhvn}, since
gyrotriangle $A_1A_2A_3$ possesses a circumgyrocircle on which the
point $Q$ lies, as shown in Fig.~\ref{fig318enm}.

In \eqref{sadmon4},
$m_Q^2$ is a {\it perfect square}\index{perfect square}
in the sense that it appears in \eqref{sadmon4} as a squared polynomial function.
Since $\mQ>0$, we have
\begin{equation} \label{sadmon4s}
\mQ = (\gammaac-1)(1-t_1)+(\gammabc-1)t_1-(\gammaab-1)(1-t_1)t_1
~>~0\,,
\end{equation}
where, indeed, $\mQ>0$ for all $0\le t_1\le1$.

Substituting \eqref{yaglf01b} and \eqref{yaglf02b}
into \eqref{sadmon3} and squaring, we obtain
$\gamma_{\om P\op Q}^{\,2}$ and, hence, $\gamma_{\om P\op Q}^{\,2}-1$
 expressed in terms of gamma factors
and the parameter $t_1$, $0<t_1<1$,
\begin{equation} \label{dfsvh1t}
\begin{split}
\frac{1}{s^2} & \gamma_{|PQ|}^{\,2} |PQ|^2 =
\gamma_{\om P\op Q}^{\,2} -1
\\ &=\frac{
(\gammaab-1)^2 \{[\gammaac(1-t_1)+\gammabc t_1]^2-2(\gammaab-1)(1-t_1)t_1-1\}
(1-t_1)^2t_1^2
}{
\{2(\gammaab-1)(1-t_1)t_1+1\}
\{(\gammaab-1)(1-t_1)t_1-\gammaac(1-t_1)-\gammabc t_1 +1\}^2
}
\,,
\end{split}
\end{equation}
noting \eqref{rugh1ds}, p.~\pageref{rugh1ds}.
The extreme sides of \eqref{dfsvh1t} prove useful in the proof of
Theorem \ref{thmchord}, p.~\pageref{thmchord}.

Substituting $\gamma_{\om P\op Q}^{\,2}$ from \eqref{dfsvh1t}
into Identity, \eqref{rugh1ds}, p.~\pageref{rugh1ds},
\begin{equation} \label{sadmon5}
|PQ|^2 :=
\|\om P \op Q \|^2 = s^2 \frac{
\gamma_{\om P\op Q}^{\,2} - 1
}{
\gamma_{\om P\op Q}^{\,2}
}
\,,
\end{equation}
we obtain the desired gyrodistance,
\begin{subequations} \label{dfsvh}
\begin{equation} \label{dfsvh1}
\frac{1}{s^2} \|\om P \op Q \|^2
=\frac{
(\gammaab-1)^2 \{[\gammaac(1-t_1)+\gammabc t_1]^2-2(\gammaab-1)(1-t_1)t_1-1\}
(1-t_1)^2t_1^2
}{D^2}
\,,
\end{equation}
where
\begin{equation} \label{dfsvh2}
\begin{split}
D &= \gammaac-1 + (\gammaab-1)\gammaac t_1
-2(\gammaab-1)(1-t_1)t_1 - (\gammaab-1)\gammabc t_1^2
\\ &+
(\gammaac-\gammabc)\{-\gammaab+(\gammaab-1)(1-t_1)^2\} t_1
\,.
\end{split}
\end{equation}
\end{subequations}

\section{Circumgyrocevian Gyrolength} \label{slila7}

In order to prepare the stage for the
Intersecting Gyrochords Theorem \ref{thmchord}, p.~\pageref{thmchord},
in this section we calculate the gyrodistance $|A_3Q|=\|\om A_3\op Q\|$
between
vertex $A_3$ and its opposing circumgyrocevian foot $Q$
of a gyrotriangle $A_1A_2A_3$,
shown in Fig.~\ref{fig318enm}, p.~\pageref{fig318enm}.

\index{circumgyrocevian}
Let $A_3Q$ be a circumgyrocevian of a gyrotriangle $A_1A_2A_3$
in an Einstein gyrovector space $(\Rsn,\op,\od)$,
as shown in Fig.~\ref{fig318enm}, so that the gyrobarycentric representation of $Q$
with respect to the gyrobarycentrically independent set $\{A_1,A_2,A_3\}$
is given by \eqref{yaglf01}.

Applying the Gyrobarycentric Representation Gyrocovariance Theorem
\cite[Theorem 4.6, pp.~90-91]{mybook05}
with $X=\om A_3$
to the gyrobarycentric representation \eqref{yaglf01a} of $Q$,
and using the gyrotriangle index notation, \eqref{indexnotation}, p.~\pageref{indexnotation},
we have
\begin{equation} \label{trsum1}
\begin{split}
\om A_3 \op Q &= \frac{
m_1\gamma_{\om A_3\op A_1}^{\phantom{O}}(\om A_3\op A_1)
+
m_2\gamma_{\om A_3\op A_2}^{\phantom{O}}(\om A_3\op A_2)
+
m_3\gamma_{\om A_3\op A_3}^{\phantom{O}}(\om A_3\op A_3)
}{
m_1\gamma_{\om A_3\op A_1}^{\phantom{O}}
+
m_2\gamma_{\om A_3\op A_2}^{\phantom{O}}
+
m_3\gamma_{\om A_3\op A_3}^{\phantom{O}}
}
\\[8pt] &=
\frac{
m_1\gammaac\ab_{31} + m_2\gammabc\ab_{32}
}{
m_1\gammaac + m_2\gammabc + m_3
}
\end{split}
\end{equation}
and
\begin{equation} \label{trsum2}
\gamma_{\om A_3\op Q}^{\phantom{O}} = \frac{
m_1\gammaac + m_2\gammabc + m_3}{\mQ}
\,,
\end{equation}
where $\mQ>0$ is the constant of the gyrobarycentric representation \eqref{yaglf01}
of $Q$, given by \eqref{sadmon4s}.

Substituting $m_1,m_2,m_3$ from \eqref{yaglf01b} and $\mQ$ from \eqref{sadmon4s}
into \eqref{trsum2}, we obtain the equation
\begin{equation} \label{trsum3}
\gamma_{\om A_3 \op Q}^{\phantom{O}} = \frac{
\gammaac(\gammaac-1)(1-t_1) + \gammabc(\gammabc-1)t_1
-\{\gammaab-1 + (\gammaac-\gammabc)^2\}(1-t_1)t_1
}{
(\gammaac-1)(1-t_1) + (\gammabc-1)t_1 - (\gammaab-1)(1-t_1)t_1
}
\,.
\end{equation}
Equation \eqref{trsum3} proves useful in the proof of
Theorem \ref{thmchord} in Sect.~\ref{slila8}.

\index{circumgyrocevian}
Finally, the gyrolength $\|\om A_3 \op Q\|$
of the circumgyrocevian $A_3Q$ of gyrotriangle $A_1A_2A_3$ in Fig.~\ref{fig318enm}
is obtained from \eqref{trsum3}
by means of \eqref{rugh1ds}, p.~\pageref{rugh1ds},
\begin{equation} \label{trsum4}
\begin{split}
\frac{1}{s^2}&\|\om A_3 \op Q \|^2 = \frac{
\gamma_{\om A_3\op Q}^{\,2} - 1
}{
\gamma_{\om A_3\op Q}^{\,2}
}
\\[8pt] &= \frac{
\{\gamma_{13}^2t_2+\gamma_{23}^2t_1-1-[2(\gammaab-1)+(\gammabc-\gammaac)^2]t_2t_1\}
\{\gammaac-1 + (\gammabc-\gammaac)t_1\}^2
}{
\{
\gammaac(\gammaac-1)t_2 + \gammabc(\gammabc-1)t_1
-\{\gammaab-1 + (\gammaac-\gammabc)^2\}t_2t_1
\}^2
}
\,,
\end{split}
\end{equation}
where $t_2=1-t_1$, $0<t_1<1$.

\section{The Intersecting Gyrochords Theorem} \label{slila8}

Following Sects.~\ref{slila5}\,--\,\ref{slila7}, we are now in the position
to prove the Intersecting Gyrochords Theorem.

\index{intersecting gyrochords theorem}
\index{gyrochords, intersecting, theorem}
\begin{theorem}\label{thmchord}
{\bf (The Intersecting Gyrochords Theorem).}
If two gyrochords, $A_1A_2$ and $B_1B_2$, of a gyrocircle
in an Einstein gyrovector space $(\Rsn,\op,\od)$ intersect
at a point $P$, as shown in Fig.~\ref{fig318enam}.
then
\begin{equation} \label{manot01}
\frac{
\gamma_{|PA_1|}^{\phantom{O}} |PA_1|
\gamma_{|PA_2|}^{\phantom{O}} |PA_2|}
{\gamma_{|A_1A_2|}^{\phantom{O}} +1}
=
\frac{
\gamma_{|PB_1|}^{\phantom{O}} |PB_1|
\gamma_{|PB_2|}^{\phantom{O}} |PB_2|}
{\gamma_{|B_1B_2|}^{\phantom{O}} +1}
\,.
\end{equation}
\end{theorem}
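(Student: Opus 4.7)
My plan is to reduce the theorem to the framework developed in Sections \ref{slila5}--\ref{slila7} by an appropriate choice of reference gyrotriangle. Given two intersecting gyrochords $A_1A_2$ and $B_1B_2$ of a gyrocircle, meeting at the interior point $P$, I relabel $B_1=A_3$ and treat $A_1A_2A_3$ as the reference gyrotriangle. Since $A_3$ lies on the gyrocircle and the gyroray $A_3P$ meets the gyrocircle a second time at $B_2$, the fourth point $B_2$ is exactly the circumgyrocevian foot $Q$ associated with the gyrocevian $A_3P$, as in Fig.~\ref{fig318enm}. Thus the configuration in the theorem coincides with the configuration analyzed in Sections \ref{slila5}--\ref{slila7}, and I may use all the explicit formulas derived there.

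Writing $P$ in the gyrobarycentric form \eqref{yagln01} with parameter $t_1\in(0,1)$, I would compute the left-hand side of \eqref{manot01} directly from \eqref{halof04}. Setting $\tau=(\gammaab-1)(1-t_1)t_1$ and using $\gamma_{12}^2-1=(\gammaab-1)(\gammaab+1)$, I obtain
\begin{equation} \label{planlhs}
\frac{\gamma_{|PA_1|}|PA_1|\,\gamma_{|PA_2|}|PA_2|}{\gamma_{|A_1A_2|}+1}
= \frac{s^2(\gammaab-1)(1-t_1)t_1}{1+2\tau}.
\end{equation}
For the right-hand side I would use the third line of \eqref{halof04} together with \eqref{dfsvh1t} to evaluate the product $\gamma_{|PA_3|}|PA_3|\cdot\gamma_{|PQ|}|PQ|$. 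Setting $K=\gammaac(1-t_1)+\gammabc t_1$, both squared quantities factor neatly and yield
\begin{equation} \label{planprod}
\gamma_{|PA_3|}|PA_3|\,\gamma_{|PQ|}|PQ|
= \frac{s^2(\gammaab-1)(K^2-1-2\tau)(1-t_1)t_1}{(1+2\tau)M_Q},
\end{equation}
where $M_Q=(\gammaac-1)(1-t_1)+(\gammabc-1)t_1-(\gammaab-1)(1-t_1)t_1$ is the positive quantity in \eqref{sadmon4s}.

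The crux of the argument is then to show that $\gamma_{|A_3Q|}+1=(K^2-1-2\tau)/M_Q$. Starting from the explicit formula \eqref{trsum3} for $\gamma_{|A_3Q|}$, I would add $1$ by putting everything over the common denominator $M_Q$ and simplifying the numerator, for which the key algebraic identity I plan to establish is
\begin{equation} \label{planK2}
(1-t_1)(\gamma_{13}^2-1)+t_1(\gamma_{23}^2-1)=K^2-1+(1-t_1)t_1(\gammaac-\gammabc)^2,
\end{equation}
obtained by direct expansion. This collapses the numerator of $\gamma_{|A_3Q|}+1$ to exactly $K^2-1-2\tau$. Dividing \eqref{planprod} by this expression cancels $K^2-1-2\tau$ and $M_Q$ simultaneously, yielding the same right-hand side as \eqref{planlhs}, which establishes \eqref{manot01}.

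The main obstacle is not conceptual but bookkeeping: the identity \eqref{planK2} and the verification that all radicands are positive under the intersecting-chord hypothesis (so that the square roots taken when passing from $\gamma^2|{\cdot}|^2$ to $\gamma|{\cdot}|$ in \eqref{planprod} carry the correct sign) must be checked carefully. Positivity of $K^2-1-2\tau$ follows because $P$ lies in the interior of the circumgyrocircle, so by Theorem \ref{thmsnvur}, item \eqref{thmsnvur02}, the corresponding scalar $K(P;A_1,A_2,A_3)$ is strictly positive; combined with the formula for $\gamma_{|PA_3|}^2|PA_3|^2$ in \eqref{halof04}, this guarantees $K^2>1+2\tau$. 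With the signs settled, the algebraic identification of the two sides of \eqref{manot01} is a routine simplification.
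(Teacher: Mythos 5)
Your proposal is correct and follows essentially the same route as the paper: identify $B_1=A_3$ and $B_2=Q$ with the circumgyrocevian configuration of Fig.~\ref{fig318enm}, then combine the formulas \eqref{halof04}, \eqref{dfsvh1t} and \eqref{trsum3} from Sects.~\ref{slila5}--\ref{slila7}. The only difference is that where the paper verifies the resulting identity by (computer-assisted) ``straightforward algebra,'' you isolate the intermediate identity $\gamma_{|A_3Q|}+1=(K^2-1-2\tau)/M_Q$ via \eqref{planK2}, which makes the cancellation transparent and hand-checkable.
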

\begin{proof}
As a matter of notation we have
\begin{equation} \label{hdru01}
\begin{split}
|A_1A_2| &= \|\om A_1 \op A_2\| = \gammaab
\\[6pt]
|B_1B_2| &= |A_3Q|
\,,
\end{split}
\end{equation}
where we use the notation in Fig.~\ref{fig318enm} in which, in particular,
$B_1=A_3$ and $B_2=Q$.

Hence, noting that each side of \eqref{manot01} is positive,
it can be written equivalently as
\begin{equation} \label{hdru02}
\left(
\frac{
\gamma_{|PA_1|}^{\phantom{O}} |PA_1|
\gamma_{|PA_2|}^{\phantom{O}} |PA_2|}
{\gammaab+1}
\right)^2
=
\left(
\frac{
\gamma_{|PA_3|}^{\phantom{O}} |PA_3|
\gamma_{|P  Q|}^{\phantom{O}} |P  Q|}
{\gamma_{|A_3Q|}^{\phantom{O}} +1}
\right)^2
\,,
\end{equation}
where $|PA_1|=|A_1P|$ is the gyrolength $\|\om P \op A_1\|$
of gyrosegment $PA_1$, {\it etc.}

Substituting into \eqref{hdru02}
\begin{enumerate}
\item
$(\gamma_{|PA_k|}^{\phantom{O}} |PA_k|)^2$, $k=1,2,3$,
from \eqref{halof04} in Sect.~\ref{slila5}, and
\item
$(\gamma_{|PQ|}^{\phantom{O}} |PQ|)^2$ from \eqref{dfsvh1t}
in Sect.~\ref{slila6}, and
\item
$\gamma_{|A_3Q|}^{\phantom{O}}$ from \eqref{trsum3}
in Sect.~\ref{slila7},
\end{enumerate}
we  find that Identity \eqref{hdru02} is valid, each side of which
being equal to the right-hand side of each of
the two equations in \eqref{hdru02s} below,
\begin{equation} \label{hdru02s}
\begin{split}
\left(
\frac{
\gamma_{|PA_1|}^{\phantom{O}} |PA_1|
\gamma_{|PA_2|}^{\phantom{O}} |PA_2|}
{\gammaab+1}
\right)^2
&=
s^4
\left(
\frac{
(\gamma_{12}^2-1)(1-t_1)t_1
}{
2(\gamma_{12}^2-1)(1-t_1)t_1 +1
}
\right)^2
\\[8pt]
\left(
\frac{
\gamma_{|PA_3|}^{\phantom{O}} |PA_3|
\gamma_{|P  Q|}^{\phantom{O}} |P  Q|}
{\gamma_{|A_3Q|}^{\phantom{O}} +1}
\right)^2
&= s^4
\left(
\frac{
(\gamma_{12}^2-1)(1-t_1)t_1
}{
2(\gamma_{12}^2-1)(1-t_1)t_1 +1
}
\right)^2
\,.
\end{split}
\end{equation}

The two equations in \eqref{hdru02s} share the right-hand sides, so that the
proof of \eqref{hdru02} and, hence, of \eqref{manot01}, is complete.
\end{proof}

\begin{figure}[t]  
 \sidebyside {       
 \psfrag{O}{$C$}
 \psfrag{P}{$P$}
 \psfrag{A1}{$A_1$}
 \psfrag{A2}{$A_2$}
 \psfrag{B1}{$\hspace{0.2cm}B_1$}
 \psfrag{B2}{$B_2$}
 \includegraphics[width=0.5\textwidth]{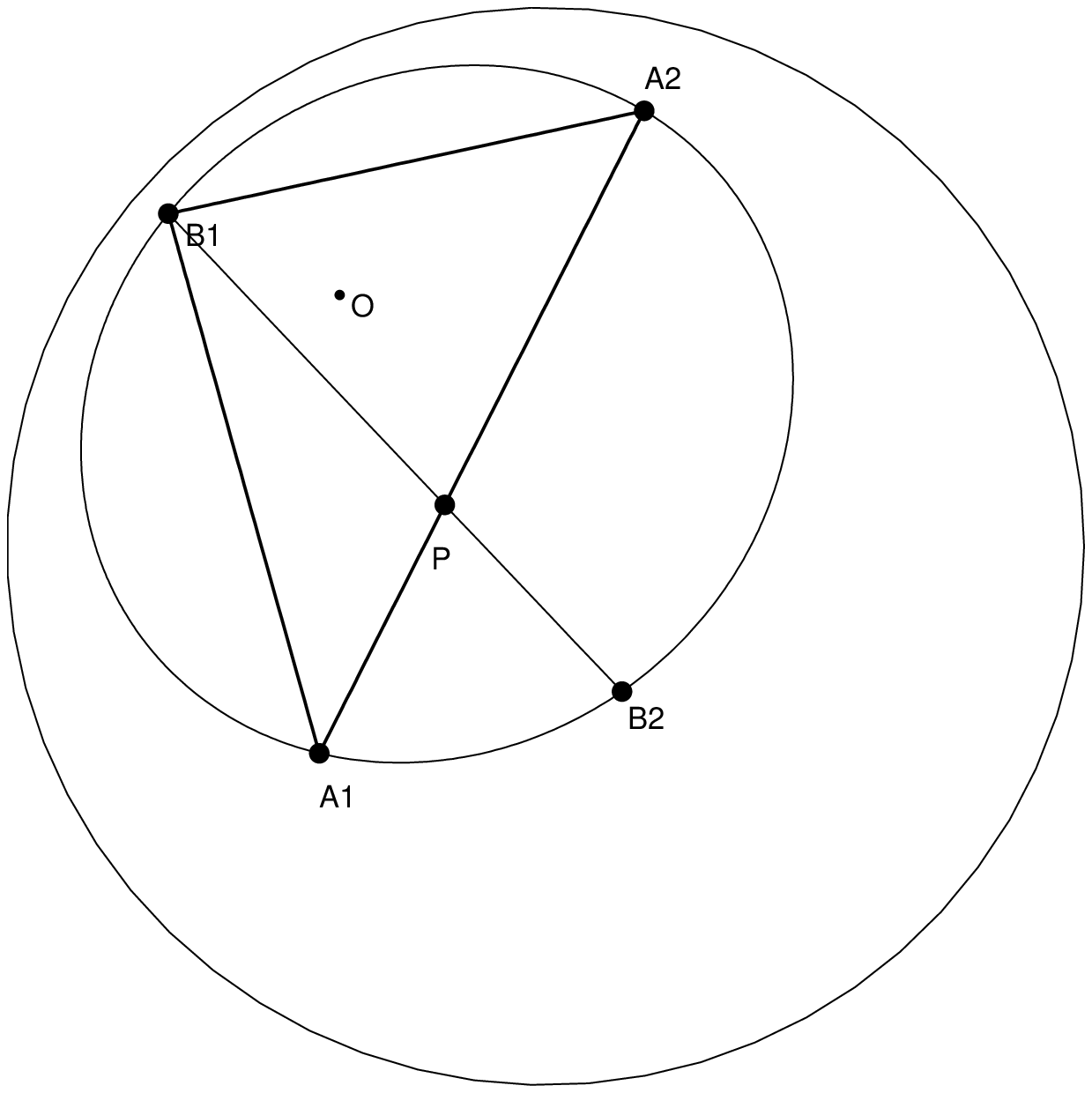}
\caption[Intersecting gyrochords]{
The Intersecting Gyrochords Theorem \ref{thmchord}
in the hyperbolic plane $(\Rstwo,\op,\od)$.
Two intersecting gyrochords of a gyrocircle, $A_1A_2$ and $B_1B_2$, are shown
in this hyperbolic counterpart of Fig.~\ref{fig318euam}.
\label{fig318enam}}}
  {
 \psfrag{O}{$C$}
 \psfrag{P}{$P$}
 \psfrag{A1}{$A_1$}
 \psfrag{A2}{$A_2$}
 \psfrag{B1}{$\hspace{0.2cm}B_1$}
 \psfrag{B2}{$B_2$}
 \includegraphics[width=0.5\textwidth]{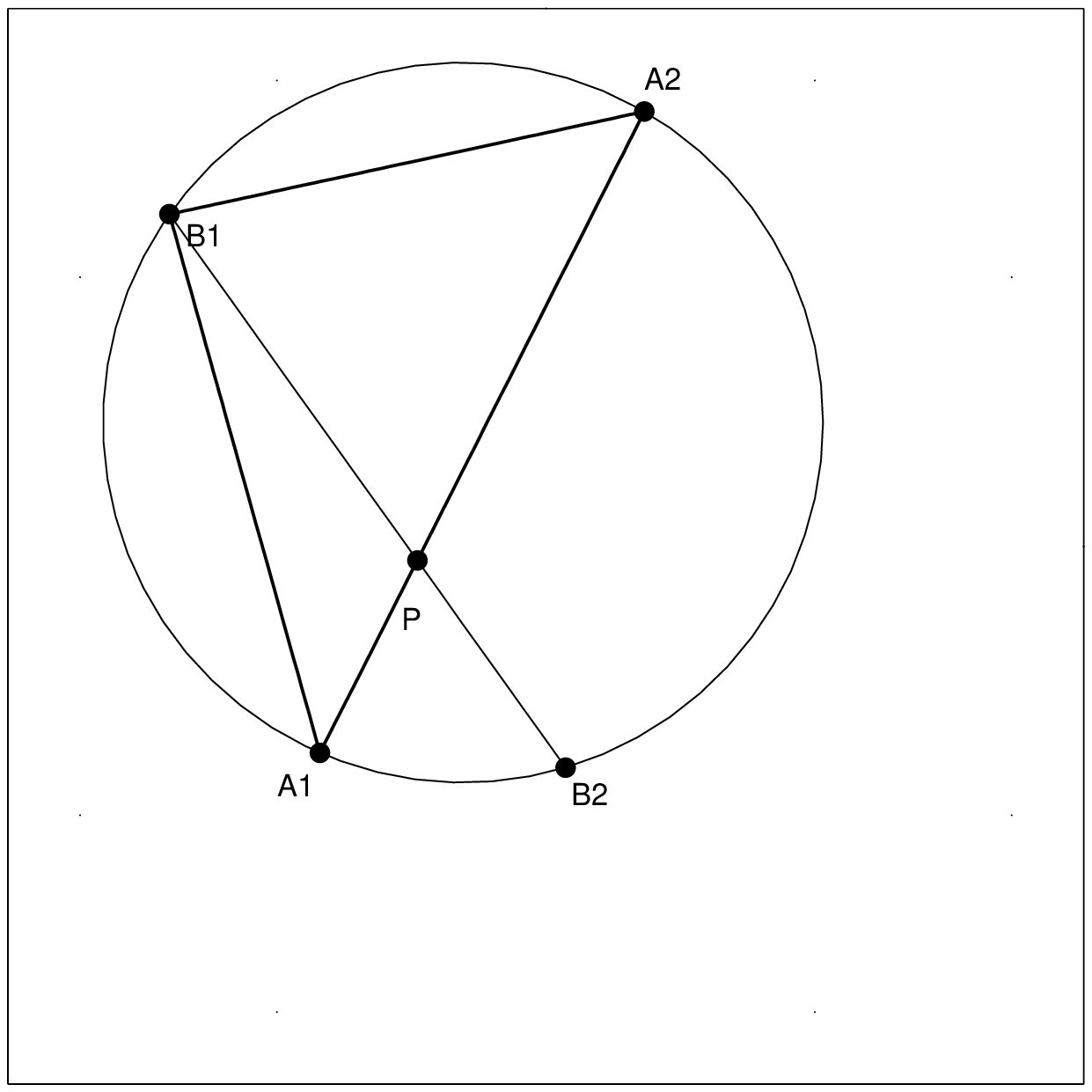}
\caption[Intersecting chords]{
The Intersecting Chords Theorem \ref{thmchordeuc} in the Euclidean plane $\Rtwo$.
Two intersecting chords of a circle, $A_1A_2$ and $B_1B_2$, of a circle are shown.
This figure is the Euclidean counterpart of Fig.~\ref{fig318enam}.
\label{fig318euam}} }
\end{figure}

In the Euclidean limit, $s\rightarrow\infty$,
gyrolengths of gyrosegments tend to lengths of corresponding segments
and gamma factors tend to 1. Hence, in that limit,
the Intersecting Gyrochords Theorem \ref{thmchord}
reduces to the following well-known
Intersecting Chords Theorem of Euclidean geometry:

\index{intersecting chords theorem}
\index{chords, intersecting, theorem}
\begin{theorem}\label{thmchordeuc}
{\bf (The Intersecting Chords Theorem).}
If two chords, $A_1A_2$ and $B_1B_2$, of a circle
in a Euclidean space $\Rn$ intersect
at a point $P$, as shown in Fig.~\ref{fig318euam}, then
\begin{equation} \label{manot01euc}
|PA_1| |PA_2| = |PB_1| |PB_2|
\,.
\end{equation}
\end{theorem}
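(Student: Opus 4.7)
The plan is to derive the Intersecting Chords Theorem as the Euclidean limit $s \to \infty$ of the Intersecting Gyrochords Theorem \ref{thmchord}, following the recurring pattern of this paper in which Euclidean results are recovered by letting the speed-of-light-like parameter $s$ tend to infinity. I will start from the identity \eqref{manot01} and analyze the limiting behavior of each factor on both sides.

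First, I would invoke the standard facts used throughout the paper: as $s \to \infty$, every gamma factor $\gamma_{|XY|}$ tends to $1$ (since $|XY|/s \to 0$), while the gyrolength $\|\om X \op Y\|$ tends to the Euclidean length $\|-X+Y\|$. Applying these limits termwise to the left-hand side of \eqref{manot01} gives
\begin{equation*}
\lim_{s\to\infty} \frac{\gamma_{|PA_1|}^{\phantom{O}} |PA_1|\, \gamma_{|PA_2|}^{\phantom{O}} |PA_2|}{\gamma_{|A_1A_2|}^{\phantom{O}} +1}
= \frac{|PA_1|\,|PA_2|}{1+1} = \tfrac{1}{2}\,|PA_1|\,|PA_2|\,,
\end{equation*}
and analogously the right-hand side tends to $\tfrac{1}{2}\,|PB_1|\,|PB_2|$. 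Cancelling the common factor $\tfrac{1}{2}$ yields \eqref{manot01euc}.

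The only subtlety (and it is not really an obstacle) is ensuring that the limiting picture is geometrically faithful: the hyperbolic configuration in Figure \ref{fig318enam} — two gyrochords of a gyrocircle meeting at an interior point $P$ — must tend to the Euclidean configuration in Figure \ref{fig318euam}. This follows because the circumgyrocircle of the auxiliary gyrotriangle used in Theorem \ref{thmchord} tends, under $s\to\infty$, to the corresponding Euclidean circumcircle (as already established for the circumgyroradius in Theorem \ref{thmdksew}) and gyrolines through fixed points tend to the Euclidean lines through those points, so the intersection point $P$ and the four chord endpoints persist in the limit.

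Alternatively, one could bypass the limiting argument and give a direct Euclidean proof: draw the two chords $A_1A_2$ and $B_1B_2$ meeting at $P$, observe that $\angle A_1PB_1 = \angle A_2PB_2$ (vertical angles) and that $\angle PA_1B_1 = \angle PB_2A_2$ (inscribed angles subtending the same arc $A_2B_1$), so that triangles $PA_1B_1$ and $PB_2A_2$ are similar by AA. The proportion $|PA_1|/|PB_2| = |PB_1|/|PA_2|$ then yields \eqref{manot01euc}. However, since the paper's methodology is to obtain Euclidean statements as contractions of their gyro-counterparts, the limiting proof is the natural one to record here.
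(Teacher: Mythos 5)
Your proposal is correct and takes essentially the same route as the paper, which justifies Theorem \ref{thmchordeuc} precisely by taking the Euclidean limit $s\rightarrow\infty$ of the Intersecting Gyrochords Theorem \ref{thmchord}, letting gamma factors tend to $1$ and gyrolengths tend to lengths so that both sides of \eqref{manot01} acquire a common factor $\tfrac{1}{2}$ that cancels. Your additional remarks on the persistence of the geometric configuration in the limit, and the alternative similar-triangles proof, go beyond what the paper records but are consistent with it.
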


Our ambiguous notation that emphasizes analogies between
hyperbolic and Euclidean geometry should raise no confusion.
Thus, in particular, one should note that
\begin{enumerate}
\item
in the Intersecting Gyrochords Theorem \ref{thmchord},
$|PA_1|$ is the gyrolength,
\begin{equation} \label{hdnkw01}
|PA_1|=\|\om P\op A_1\|
\,,
\end{equation}
of gyrosegment $PA_1$, {\it etc.},
while, in contrast,
\item
in the Intersecting Chords Theorem \ref{thmchordeuc},
$|PA_1|$ is the length,
\begin{equation} \label{hdnkw02}
|PA_1|=\|-P+A_1\|
\,,
\end{equation}
of segment $PA_1$, {\it etc}.
\end{enumerate}

\end{document}